\documentclass[journal]{IEEEtran}
\ifCLASSINFOpdf
\else
\fi

\hyphenation{op-tical net-works semi-conduc-tor}
\hyphenation{op-tical net-works semi-conduc-tor}
\usepackage{epsfig}\usepackage{epsfig}
\usepackage{amssymb}
\usepackage{amsthm}
\usepackage{amsmath}
\usepackage{cite}
\usepackage{mathrsfs}
\usepackage{cases}
\usepackage{multirow}
\usepackage{makecell}
\usepackage{url}
\usepackage{algorithm}%
\usepackage{algorithmicx}%
\usepackage{algpseudocode}%
\usepackage{subeqnarray}
\usepackage{subfigure}
\graphicspath{ {figure/} }
\usepackage{pifont,bbm,amsfonts,mathrsfs,amsmath,stmaryrd,amssymb}
\usepackage{stfloats}
\usepackage{amscd,graphicx,array,dsfont,texdraw,tikz,footnote,verbatim}%
\usepackage[normalem]{ulem}
\usepackage{mathtools}
\usepackage{bbm,bbding,amssymb,pifont,mathrsfs,amsfonts,graphicx,mathtools,color}%
\usepackage{amscd,array,enumerate,dsfont,texdraw,tikz,multicol,bbm}

\usetikzlibrary{arrows,shapes,snakes,shadows,positioning,automata,patterns}
\usetikzlibrary{trees,decorations.pathmorphing,decorations.markings}
\usepackage{setspace} 
\usepackage{tikz}
\usepackage{pgf}
\usepackage{amsmath}

\usepackage{amsmath, amsfonts, amssymb, amsthm}
\usepackage{pgfplots}
\pgfplotsset{compat=newest}
\usepackage{xxcolor}
\usetikzlibrary{arrows, decorations.markings}
\usetikzlibrary{arrows,shadows,petri}
%                              wg. tokens

%\listfiles

\newcommand{\beq}{\begin{equation}}
\newcommand{\eeq}{\end{equation}}
\newcommand{\bqa}{\begin{eqnarray}}
\newcommand{\eqa}{\end{eqnarray}}

%

%%Useful colours for editing

\definecolor{maroon}{rgb}{0.7,0,0}

\definecolor{ngreen}{rgb}{0.3,0.7,0.3}

\definecolor{golden}{rgb}{0.8,0.6,0.1}

\newtheorem{theorem}{\indent Theorem}
\newtheorem{lemma}{\indent Lemma}
\newtheorem{corollary}{\indent Corollary}
\newtheorem{definition}{\indent Definition}
\newtheorem{assumption}{\indent Assumption}
\newtheorem{myremark}{\indent Remark}

\newenvironment{remark}{\begin{myremark}\normalfont}
{\end{myremark}}

\newcommand{\col}{\mathrm{col}}
\begin{document}
\title{Local Differential Privacy for Distributed Stochastic Aggregative Optimization with Guaranteed Optimality}

\author{Ziqin Chen and Yongqiang Wang, \textit{Senior Member, IEEE}
\thanks{This work was supported by the National Science Foundation under Grant ECCS-1912702, Grant CCF-2106293, Grant CCF-2215088, Grant CNS-2219487, Grant CCF-2334449, and Grant CNS-2422312. (Corresponding author: Yongqiang Wang, email:yongqiw@clemson.edu).}
\vspace{-0.5em}
\thanks{Ziqin Chen and Yongqiang Wang are with the Department of Electrical and Computer Engineering, Clemson University, Clemson, SC 29634 USA.}}
%\thanks{H. Vincent Poor is with the Department of Electrical and Computer Engineering, Princeton University, Princeton, NJ 08544 USA (e-mail:poor@princeton.edu).}}% <-this % stops a space

\maketitle
\begin{abstract}
Distributed aggregative optimization underpins many cooperative optimization and multi-agent control systems, where each agent's objective function depends both on its local optimization variable and an aggregate of all agents' optimization variables. Existing distributed aggregative optimization approaches typically require access to accurate gradients of the objective functions, which, however, are often hard to obtain in real-world applications. For example, in machine learning, gradients are commonly contaminated by two main sources of noise: the randomness inherent in sampled data, and the additional variability introduced by mini-batch computations. In addition to the issue of relying on accurate gradients, existing distributed aggregative optimization approaches require agents to share explicit information, which could breach the privacy of participating agents. We propose an algorithm that can solve both problems with existing distributed aggregative optimization approaches: not only can the proposed algorithm guarantee mean-square convergence to an exact optimal solution when the gradients are subject to noise, it also simultaneously ensures rigorous differential privacy, with the cumulative privacy budget guaranteed to be finite even when the number of iterations tends to infinity. To the best of our knowledge, this is the first algorithm able to guarantee both accurate convergence and rigorous differential privacy in distributed aggregative optimization. Besides characterizing the convergence rates under nonconvex/convex/strongly convex conditions, we also rigorously quantify the cost of differential privacy in terms of convergence rates. Experimental results on personalized machine learning using benchmark datasets confirm the efficacy of the proposed algorithm.
\end{abstract}

% Note that keywords are not normally used for peerreview papers.
\begin{IEEEkeywords}
Distributed stochastic aggregative optimization, local differential privacy, accurate convergence.
\end{IEEEkeywords}

\IEEEpeerreviewmaketitle

\section{Introduction}
Distributed aggregative optimization addresses cooperative optimization and control problems where each agent's objective function depends on both its local optimization variable and an aggregate of all agents' optimization variables. {\color{blue}Mathematically, the problem of distributed aggregative optimization can be formulated as follows:
\begin{equation}
\min_{\boldsymbol{x}\in\Omega}~ F(\boldsymbol{x})\!=\!\sum_{i=1}^{m}f_{i}(x_{i}, g(\boldsymbol{x})),~~g(\boldsymbol{x})\!=\!\frac{1}{m}\sum_{i=1}^{m}g_{i}(x_{i}).\label{primal}
\end{equation}
Here, $m$ denotes the number of agents, $\boldsymbol{x}=\col(x_{1},\cdots,x_{m})$ denotes the concatenated optimization variable, where $x_{i}\in\Omega_{i}$ with $\Omega_{i}\subseteq\mathbb{R}^{n_{i}}$ denoting the constraint set. $g(\boldsymbol{x}):\Omega\mapsto\mathbb{R}^{r}$ is an aggregative function with $\Omega$ denoting the Cartesian product of all $\Omega_{i}$ (i.e., $\Omega=\Omega_{1}\times\cdots\times\Omega_{m}$) and each $g_{i}(x_{i}):\Omega_{i}\mapsto\mathbb{R}^{r}$ denoting the agent $i$'s private function. $f_{i}(x_{i}, g(\boldsymbol{x})): \Omega_{i}\times\mathbb{R}^{r}\mapsto\mathbb{R}$ is the local objective function of agent $i$.} Problem~\eqref{primal} has found various applications, with typical examples including warehouse placement optimization~\cite{lixiuxian1}, multi-vehicle charging coordination~\cite{EVcharging,truthfulness}, cooperative robot surveillance~\cite{Carnevale1}, and personalized machine learning~\cite{personalized1}.

To date, plenty of distributed aggregative optimization approaches have been proposed, for both time-invariant~\cite{lixiuxian1,yipeng,Carnevale2,wenguanghui2,wang2024momentum,cai2024distributed,mengmin,zhou2025distributed,zhang2025distributed,li2025distributed,huang2025distributed} and time-varying~\cite{Carnevale1,personalized1,yang2024class} objective functions. These approaches typically require participating agents to have access to accurate gradients of the objective functions, which, however, may not be feasible in practical applications. For example, in machine learning applications involving massive datasets, gradients are typically computed using randomly sampled data, which inherently introduces noise. Moreover, the use of mini-batch computations adds further randomness, increasing the variability in the gradient estimates available to the agents~\cite{onlinestochatsic1,onlinestochatsic2,onlinestochatsic3,control}. These challenges necessitate a reevaluation of traditional approaches that assume accurate gradient information, thereby motivating the development of distributed stochastic aggregative optimization algorithms.

However, developing an effective distributed stochastic algorithm for aggregative optimization is nontrivial due to~the unique compositional structure of the objective functions. Specifically, unlike traditional distributed stochastic optimization~\cite{onlinestochatsic1,onlinestochatsic2,onlinestochatsic3,control}, the objective functions in distributed aggregative optimization explicitly depend on {\color{blue}an inner-level aggregative function $g(\boldsymbol{x})$ (see~\eqref{primal})}, which naturally arises in many practical applications. For example, in multi-vehicle charging coordination~\cite{EVcharging}, each user's cost (objective) function typically relies on electricity pricing, which usually varies as a function of the aggregated charging demand. This compositional dependence makes the gradient computation of objective functions inseparable from the estimation of the aggregative function. 
In the presence of noise, this inseparability leads to tangled errors in computed gradients and aggregative function estimates, substantially complicating the analysis. Moreover, in practice, computations must be conducted on sampled data, which introduces additional complexity. This is because at each iteration, both the aggregative function and the objective functions are defined by the sampled data, making them inherently time-varying and interdependent across iterations. Consequently, this creates significant challenges for algorithm design and convergence analysis.
Although some progress has been made to address noise in distributed aggregative optimization~\cite{lixiuxian2,wenguanghui1}, these results only consider the special case where the objective function is stochastic and ignores noise in aggregative function estimation, which significantly simplifies the convergence analysis. In fact, when there is noise in aggregative function estimation, the existing approach in~\cite{lixiuxian2} will be subject to steady-state optimization errors. To the best of our knowledge, no results are available that can address stochasticity in both the objective functions and the aggregative function in distributed aggregative optimization.

Another potential limitation of existing distributed aggregative optimization algorithms in~\cite{lixiuxian1,yipeng,Carnevale2,wenguanghui2,wang2024momentum,cai2024distributed,mengmin,zhou2025distributed,zhang2025distributed,li2025distributed,huang2025distributed,Carnevale1,personalized1,yang2024class,lixiuxian2,wenguanghui1} is that they require participating agents to share explicit variables/gradient estimates at every iteration, which may breach the privacy of participating agents when sensitive information is involved~\cite{leakage1}. In fact, recent studies have shown that even without direct access to raw data, external adversaries can accurately reconstruct raw data from intercepted gradient estimates~\cite{leakage1,leakage2,leakage3}. To address privacy protection in distributed optimization and learning, a variety of privacy mechanisms have been proposed~\cite{homomorphic1,homomorphic2,homomorphic3,noisesinject1,noisesinject2,noisesinject3,stepsizes,statedecomposition,quantization}. Among these mechanisms, differential privacy (DP) has emerged as the de facto standard for privacy protection owing to its rigorous mathematical foundations, ease of implementation, and robustness to postprocessing~\cite{dwork2014}. Plenty of DP-based approaches have been proposed for distributed optimization and learning (see, e.g.,~\cite{huang,nozari2016, dingtie,Tailoring,Assumption3topology,shilin,xie2025,he2025,DOLA,xiong,lu2020,zijiGT,zhangjifeng1,zhangjifeng2}). However, most of these results have to sacrifice convergence accuracy for privacy~\cite{huang,nozari2016, dingtie,Tailoring,Assumption3topology,shilin,xie2025,he2025,DOLA,xiong,lu2020}. Only recently have our works~\cite{Tailoring,Assumption3topology,zijiGT} and others~\cite{zhangjifeng1,zhangjifeng2} achieved both accurate convergence and rigorous DP with a finite cumulative privacy budget in the infinite time horizon. Nevertheless, these methods are tailored for basic distributed optimization without the aggregative function, and therefore, cannot be directly applied to distributed aggregative optimization to ensure both rigorous DP and accurate convergence. This is because, unlike basic distributed optimization, distributed aggregative optimization requires an additional inner-loop tracking operation to estimate the aggregative function. This leads to a faster growth in the privacy budget under commonly used DP-noise models, ultimately resulting in an unbounded cumulative privacy budget as iterations proceed. When greater DP noise is used, the additional tracking operation fast accumulates errors in the aggregative term and leads to significant optimization errors, as evidenced by our experimental results in Fig.~\ref{mnist} and Fig.~\ref{cifar10}. In fact, to the best of our knowledge, no existing results are able to achieve both rigorous DP and accurate convergence in distributed aggregative optimization even in the absence of stochasticity. 

In this article, we propose an approach for distributed stochastic aggregative optimization that can achieve both accurate convergence and rigorous DP with a finite cumulative privacy budget even when the number of iterations tends to infinity. Our basic idea is a new algorithmic design that departs from the framework initially proposed in~\cite{lixiuxian1} and subsequently adopted by all existing distributed aggregative optimization algorithms~\cite{personalized1,Carnevale1,yipeng,Carnevale2,wenguanghui2,wang2024momentum,cai2024distributed,mengmin,zhou2025distributed,zhang2025distributed,li2025distributed,huang2025distributed,lixiuxian2,wenguanghui1,yang2024class}, which was unfortunately vulnerable to DP noise. The main contributions are summarized as follows:

\begin{itemize}
\item By designing a new algorithmic architecture for distributed aggregative optimization, we achieve both rigorous DP and accurate convergence to an exact optimal solution, even when only noisy gradient estimates are available. This is in sharp contrast to existing DP solutions for distributed optimization~\cite{huang,nozari2016, dingtie,shilin,xie2025,he2025,DOLA,xiong,lu2020} that have to trade convergence accuracy for privacy. Our algorithm is also fundamentally different from existing DP approaches in~\cite{Tailoring,Assumption3topology,zijiGT,zhangjifeng1,zhangjifeng2} for conventional distributed optimization, which, while achieving rigorous DP and accurate convergence, are inapplicable to the setting of distributed aggregative optimization due to the additional difficulties brought by the nested loops of gradient estimation and intertwined dynamics of optimization and aggregative function tracking. It is worth noting that, to ensure privacy in a fully distributed setting, we employ the local differential privacy (LDP) framework~\cite{kasiviswanathan2011,zhangjiaojiaoLDP,zijiGT}, in which random perturbations are performed locally by each agent, thereby protecting privacy at the agent level. To the best of our knowledge, this is the first time that both accurate convergence and rigorous LDP are ensured in distributed aggregative optimization.

\item To enhance the robustness against gradient noise and DP noise, we propose a new algorithm structure that fundamentally differs from all existing distributed aggregative optimization algorithms, which are susceptible to noise \cite{lixiuxian1,yipeng,Carnevale2,wenguanghui2,wang2024momentum,cai2024distributed,mengmin,zhou2025distributed,zhang2025distributed,li2025distributed,huang2025distributed,Carnevale1,personalized1,yang2024class,lixiuxian2,wenguanghui1}. Specifically, our approach utilizes two robust tracking operations followed by a dynamic average-consensus-based gradient descent step. This design not only eliminates the accumulation of DP-noise variances in both aggregative function and gradient estimations, but it also ensures diminishing algorithmic sensitivity, thereby ensuring a finite cumulative privacy budget in the infinite time horizon. In the meantime,~by leveraging the unique streaming data patterns and judiciously designing the decaying rates of stepsizes and DP-noise variances, we eliminate the effects of both noisy gradients and DP noise on optimization accuracy, thereby ensuring mean-square convergence to the exact optimal solution.

\item We systematically characterize the convergence rates of our proposed algorithm under nonconvex/convex/strongly convex global objective functions. This is different from existing distributed aggregative optimization results in~\cite{personalized1,Carnevale1,lixiuxian1,wenguanghui2,wang2024momentum,mengmin,zhou2025distributed,li2025distributed,huang2025distributed}, which focus exclusively on strongly convex objective functions and~\cite{yipeng,cai2024distributed,yang2024class,zhang2025distributed,lixiuxian2,wenguanghui1}, which only consider the convex case. Our result also differs from the sole existing nonconvex study for distributed aggregative optimization in~\cite{Carnevale2}, which only proves asymptotic convergence in the continuous-time domain and does not provide explicit convergence rates.

\item Even without taking privacy into consideration, the proposed algorithm and its theoretical derivations are of independent interest. Given that all existing distributed aggregative optimization approaches require a noise-free gradient to ensure accurate convergence, our approach is the first to guarantee convergence in distributed aggregative optimization under stochasticity in both the objective functions and the aggregative function. Note that 
by considering a stochastic aggregative function, our approach broadens the scope of aggregative optimization to a much wider range of practical domains, such as personalized machine learning~\cite{personalized2,personalized3}, which are not addressable by existing methods~\cite{lixiuxian1,yipeng,Carnevale2,wenguanghui2,wang2024momentum,cai2024distributed,mengmin,zhou2025distributed,zhang2025distributed,li2025distributed,huang2025distributed,Carnevale1,personalized1,yang2024class,lixiuxian2,wenguanghui1}. In addition, our result stands in sharp contrast to existing online results in~\cite{lixiuxian2,yang2024class,personalized1,Carnevale1}, which cannot guarantee convergence to the exact solution.
\item We evaluate our algorithm on a distributed personalized machine learning problem using the benchmark ``MNIST" dataset and the ``CIFAR-10" dataset, respectively. The results confirm the efficacy of our approach in real-world applications.
\end{itemize}

The rest of the paper is organized as follows. Sec. \ref{problemstatement}~presents some preliminaries and the problem formulation. Sec. \ref{algorithmdesign} introduces our algorithm. Sec.~\ref{convergenceanalysis} analyzes the optimization accuracy and convergence rates. Sec.~\ref{privacy} establishes a rigorous LDP guarantee. Sec.~\ref{experiment} presents experimental results. Sec.~\ref{conclusion} concludes the paper.

\textit{Notations:} We use $\mathbb{R}^{n}$ to denote the Euclidean space of dimension $n$ and $\mathbb{N}$ ($\mathbb{N}^{+}$) to denote the set of non-negative (positive) integers. We use $\boldsymbol{1}_{m}$ and $\boldsymbol{0}_{m}$ to denote $m$-dimensional column vectors of all ones and all zeros, respectively. We let $W^{\top}$ denote the transpose of a matrix $W$. We use $\langle\cdot,\cdot\rangle$ to denote the inner product of two vectors and $\|\cdot\|$ to denote the Euclidean norm of a vector. We write $\text{col}(x_{1},\cdots,x_{m})$ for the stacked column vector of $x_{1},\cdots,x_{m}$. We denote $\mathbb{P}[\mathcal{A}]$ as the probability of an event $\mathcal{A}$ and $\mathbb{E}[x]$ as the expected value of a random variable $x$. $\Pi_{\Omega}$ denotes the Euclidean projection of a vector $x\in\mathbb{R}^{n}$ onto a compact set $\Omega$, i.e., $\Pi_{\Omega}:= \arg\min_{y\in\Omega}\|x-y\|$. We use $\text{Lap}(\nu_{i})$ to denote the Laplace distribution with a parameter $\nu_{i}>0$, featuring a probability density function $\frac{1}{2\nu_{i}}e^{\frac{-|x|}{\nu_{i}}}$. $\text{Lap}(\nu_{i})$ has a mean of zero and a variance of $2 \nu_{i}^2$. We add an overbar to a letter to denote the average of $m$ agents, e.g., $\bar{y}^{t}=\frac{1}{m}\sum_{i=1}^{m}y_{i}^{t}$, and use bold font to represent stacked
vectors of $m$ agents, e.g., $\boldsymbol{y}^{t}=\col(y_{1}^{t},\cdots,y_{m}^{t})$.
\section{Preliminaries and Problem Formulation}\label{problemstatement}
\subsection{Distributed stochastic aggregative optimization}
{\color{blue}We consider distributed aggregative optimization problem in~\eqref{primal}.} In machine learning applications, the functions $f_{i}$ and $g_{i}$ are typically given by 
\begin{equation}
	\begin{aligned}
		f_{i}(x_{i}, g(\boldsymbol{x}))&=\mathbb{E}_{\varphi_{i}\sim\mathcal{P}_{i}^{f}}[h(x_{i},g(\boldsymbol{x});\varphi_{i})],\\
		g_{i}(x_{i})&=\mathbb{E}_{\xi_{i}\sim \mathcal{P}_{i}^{g}}[l(x_{i};\xi_{i})],\label{primal2}
	\end{aligned}
\end{equation}
respectively. In~\eqref{primal2}, $\varphi_{i}$ and $\xi_{i}$ represent random data points which usually follow unknown and heterogeneous distributions $\mathcal{P}_{i}^{f}$ and $\mathcal{P}_{i}^{g}$, respectively, across different agents. An optimal solution to~\eqref{primal} is denoted as $\boldsymbol{x}^{*}=\col(x_{1}^{*},\cdots,x_{m}^{*})$.

We use the following standard assumptions on $f_{i}$ and $g_{i}$:
\begin{assumption}\label{A1}
	The constraint set $\Omega_{i}$ is nonempty, compact, and convex. In addition, 
	
	(i) $f_{i}(x,y)$ is $L_{f,1}$- and $L_{f,2}$-Lipschitz continuous with respect to $x$ and $y$, respectively. $\nabla f_{i}(x,y)$ is $\bar{L}_{f,1}$- and $\bar{L}_{f,2}$-Lipschitz continuous with respect to $x$ and $y$, respectively; 
	
	(ii) $g_{i}(x)$ and $\nabla g_{i}(x)$ are $L_{g}$- and $\bar{L}_{g}$-Lipschitz continuous, respectively.
\end{assumption}
{\color{blue}In Assumption~\ref{A1}, the Lipschitz continuity of the objective function is a desirable property in machine learning applications, as it often helps to improve the robustness and generalization of the learned model~\cite{robust}. Various techniques have been developed to enforce this property, such as spectral normalization~\cite{PLip1}, spectral norm regularization~\cite{PLip2}, and gradient penalty methods~\cite{PLip3}. Separately, the Lipschitz continuity of gradients is a standard assumption for convergence analysis in existing distributed aggregative optimization works (see, e.g.,~\cite{lixiuxian1,yipeng,Carnevale2,wenguanghui2,wang2024momentum,cai2024distributed,mengmin,zhou2025distributed,zhang2025distributed,li2025distributed,huang2025distributed,Carnevale1,personalized1,yang2024class,lixiuxian2,wenguanghui1}). Furthermore, when the constraint set~$\Omega_{i}$ is compact, many loss functions and their corresponding gradients in machine learning satisfy this assumption, with typical examples including the RoBoSS loss function \cite{RoBoss} and the widely used cross-entropy loss and its variants \cite{Smooth}.} In addition, unlike existing distributed aggregative optimization results, in, e.g.,~\cite{personalized1,Carnevale1,lixiuxian1,wenguanghui2,wang2024momentum,mengmin,zhou2025distributed,li2025distributed,huang2025distributed,yipeng,cai2024distributed,yang2024class,zhang2025distributed,lixiuxian2,wenguanghui1} that require the objective functions to be strongly convex or convex, we allow both $f_{i}$ and $g_{i}$ to be nonconvex.

We also need the following assumption, which is commonly used in distributed stochastic optimization~\cite{onlinestochatsic2,onlinestochatsic3,lixiuxian2}:
\begin{assumption}\label{A2}
\vspace{-0.2em}
The data points $\{\varphi^{t}\}$ and $\{\xi^{t}\}$ are independent and identically distributed across iterations, respectively. In addition, the stochastic oracles $l(x;\xi)$, $\nabla l(x;\xi)$, and $\nabla h(x,y;\varphi)$ are unbiased, with variances bounded by $\sigma_{g,0}^2$, $\sigma_{g,1}^2$, and $\sigma_{f}^2$, respectively.
\end{assumption}

We consider the distributed aggregative optimization problem~\eqref{primal} in a fully distributed setting. Namely, no agent has direct access to the aggregative function $g(\boldsymbol{x})$. Instead, each agent has to construct a local estimate of the aggregate through local interactions with its neighbors. We 
describe the local interaction using a matrix $W=\{w_{ij}\}$, where $w_{ij}>0$ if agents $i$ and $j$ can communicate
with each other, and $w_{ij}\!=\!0$ otherwise. 
For an agent $i\!\in\![m]$, we represent its neighboring set as $\mathcal{N}_{i}=\{j\in[m]|w_{ij}>0\}$ and let $w_{ii}=-\sum_{j\in\mathcal{N}_{i}}w_{ij}$. Furthermore, we make the following assumption on $W$:
{\color{blue}\begin{assumption}\label{A3}
The communication graph is undirected and connected. Moreover, the matrix $I+W$ is doubly stochastic and satisfies $\|I+W-\frac{\boldsymbol{1}_{m}\boldsymbol{1}_{m}^{\top}}{m}\|<1$.
\end{assumption}}
According to Lemma 4 in~\cite{Assumption3topology}, we have $\|I_{m}+W-\frac{\boldsymbol{1}_{m}\boldsymbol{1}_{m}^{\top}}{m}\|<1-|\rho_{2}|$, where $\rho_{2}$ is $W$'s second largest eigenvalue.

Since the local objective functions $f_{i}$ and $g_{i}$ are defined as expectations over random data $\varphi_{i}$ and $\xi_{i}$ sampled from unknown distributions $\mathcal{P}_{i}^{f}$ and $\mathcal{P}_{i}^{g}$, respectively, they are inaccessible in practice and an analytical solution to problem~\eqref{primal} is unattainable. To solve this issue, we focus on addressing the following empirical risk minimization (ERM) problem with sequentially arriving data:
\begin{equation}
\!\!\!\min_{\boldsymbol{x}\in\Omega} F^{t}(\boldsymbol{x})\!=\!\sum_{i=1}^{m}f_{i}^{t}(x_{i}, g^{t}(\boldsymbol{x})),~ g^{t}(\boldsymbol{x})\!=\!\frac{1}{m}\sum_{i=1}^{m}g_{i}^{t}(x_{i}),\label{primal3}
\end{equation}
where $g_{i}^{t}$ and $f_{i}^{t}$ are given by $g_{i}^{t}(x_{i})=\frac{1}{t+1}\sum_{k=0}^{t}l(x_{i};\xi_{i}^{k})$ and $f_{i}^{t}(x_{i},g^{t}(\boldsymbol{x}))=\frac{1}{t+1}\sum_{k=0}^{t}h(x_{i},g^{t}(\boldsymbol{x});\varphi_{i}^{k})$, respectively.

In~\eqref{primal3}, it is clear that sequentially arriving data $\xi_{i}^{t}$ cause the aggregative function $g^{t}(\boldsymbol{x})$ to vary with time. This is different from existing distributed stochastic aggregative optimization results~\cite{lixiuxian2,wenguanghui1}, which only consider time-varying $f_{i}^{t}(x_{i},g(\boldsymbol{x}))$ while assuming $g(\boldsymbol{x})$ to be time-invariant, which makes convergence analysis much easier.

Problem~\eqref{primal3} serves as an approximation to the original problem~\eqref{primal}. According to the law of large numbers~\cite{largelaw}, we have $\lim_{t\rightarrow\infty}g^{t}(\boldsymbol{x})=g(\boldsymbol{x})$ and $\lim_{t\rightarrow\infty}f_{i}^{t}(x_{i},g^{t}(\boldsymbol{x}))=f_{i}(x_{i},g(\boldsymbol{x}))$ (which can be obtained rigorously following the line of reasoning in Lemma 2 of our prior work~\cite{zijiGT} or Section 5.1.2 of~\cite{largelaw}). In fact, this is an intrinsic property of our ERM problem setting, which gradually eliminates the influence of noise in gradient estimations on convergence accuracy as time $t$ tends to infinity.
\vspace{-0.4em}

\subsection{Local differential privacy (LDP)}
We consider the LDP framework to protect each participating agent's privacy in the most severe scenario, where all communication channels
can be compromised and no agents
are trustworthy. To this end, we first denote the dataset $\mathcal{D}_{i}^{t}$ accessible to agent $i$ at each time $t$ as $\mathcal{D}_{i}^{t}=\{\xi_{i}^{1},\cdots,\xi_{i}^{t}\}$. Based on this, the concept of adjacency between the local datasets of agent $i$ is given as follows:
\begin{definition}[Adjacency\cite{zhangjifeng1}]\label{definition1} For any $t\in\mathbb{N}^{+}$ and any agent $i\in[m]$, given two local datasets  $\mathcal{D}_{i}^{t}=\{\xi_{i}^{1},\cdots,\xi_{i}^{k},\cdots,\xi_{i}^{t}\}$ and $\mathcal{D}_{i}^{\prime t}=\{\xi_{i}^{\prime 1},\cdots,\xi_{i}^{\prime k},\cdots,\xi_{i}^{\prime t}\}$, $\mathcal{D}_{i}^{t}$ is said to be adjacent to $\mathcal{D}_{i}^{\prime t}$ if there exists a time instant $k\in[1,t]$ such that $\xi_{i}^{k}\neq \xi_{i}^{\prime k}$ while $\xi_{i}^{p}=\xi_{i}^{\prime p}$ for all $p\in[1,t]$ and $p\neq k$.
\end{definition}

Definition~\ref{definition1} implies that two local datasets $\mathcal{D}_{i}^{t}$ and $\mathcal{D}_{i}^{\prime t}$ are adjacent if and only if they differ in one entry while all other entries are the same. It is worth noting that in problem~\eqref{primal}, each agent $i$ has two local datasets, i.e., $\mathcal{D}_{i,f}^{t}$ and $\mathcal{D}_{i,g}^{t}$, and hence, we allow both $\mathcal{D}_{i,f}^{t}$ and $\mathcal{D}_{i,g}^{t}$ to have one different entry from their respective adjacent datasets for any $t\in\mathbb{N}^{+}$. This introduces additional complexity to privacy analysis and LDP mechanism design compared to existing DP results for conventional distributed optimization~\cite{huang,nozari2016, dingtie,Tailoring,Assumption3topology,shilin,xie2025,he2025,DOLA,xiong,lu2020,zijiGT,zhangjifeng1,zhangjifeng2}, where each agent has only one local dataset. 

Now, we are in a position to present the definition of LDP:

\begin{definition}[Local differential privacy~\cite{dwork2014,kasiviswanathan2011}]\label{D2}
\vspace{-0.2em}
Let $\mathcal{A}_{i}(\mathcal{D}_{i},\theta_{-i})$ be an implementation of a distributed algorithm by agent $i$, which takes agent $i$'s dataset $\mathcal{D}_{i}$ and all received information $\theta_{-i}$ as input. Then, agent $i$'s implementation $\mathcal{A}_{i}$ is $\epsilon_{i}$-locally differentially private if for any adjacent datasets $\mathcal{D}_{i}$ and $\mathcal{D}_{i}^{\prime}$, and the set of all possible observations $\mathcal{O}_{i}$, the following inequality holds:
\begin{equation}
\vspace{-0.2em}
\mathbb{P}[\mathcal{A}_{i}(\mathcal{D}_{i},\theta_{-i})\in \mathcal{O}_{i}] \leq e^{\epsilon_{i}} \mathbb{P}[\mathcal{A}_{i}(\mathcal{D}_{i}^{\prime},\theta_{-i})\in \mathcal{O}_{i}].\nonumber
\end{equation}
\end{definition}
The definition of $\epsilon$-LDP ensures that no third party can infer agent $i$'s private data from shared messages. It can be seen that a smaller privacy budget $\epsilon_{i}$ means a higher level of privacy protection. It is also worth noting that for each agent $i$'s implementation $\mathcal{A}_{i}$, each data access will consume some budget $\epsilon_{i}^{t}$. Therefore, to maintain LDP over $T$ iterations, including the case of $T\rightarrow\infty$, the cumulative privacy budget must not exceed some preset finite value $\epsilon_{i}$, i.e., $\sum_{t=1}^{T}\epsilon_{i}^{t}\leq \epsilon_{i}$. Clearly, if $\sum_{t=1}^{T}\epsilon_{i}^{t}$ grows to infinity as the iteration number $T$ tends to infinity, privacy will be lost eventually in the infinite-time horizon.
\vspace{-0.4em}

\section{Locally Differentially Private Distributed Aggregative Optimization Algorithm}\label{algorithmdesign}
We propose Algorithm~\ref{algorithm1} to solve the stochastic aggregative optimization problem~\eqref{primal} while ensuring rigorous $\epsilon_{i}$-LDP with a finite cumulative privacy budget in the infinite time horizon.
\begin{algorithm}[H]
\caption{LDP design for distributed stochastic aggregative optimization (from agent $i$'s perspective)}
\label{algorithm1} 
\begin{algorithmic}[1]
\State {\bfseries Input:} Random initialization $\mathbf{x}_{i}^{0}\in \Omega$, $y_{i}^{0}\in{\mathbb{R}^{r}}$, and $z_{i}^{0}\in{\mathbb{R}^{r}}$; matrix $R_{i}\!=\![\boldsymbol{0}_{n_{i}\times n_{1}},\cdots, I_{n_{i}},\cdots, \boldsymbol{0}_{n_{i}\times n_{m}}]\!\in\!\mathbb{R}^{n_{i}\times n}$; stepsizes $\lambda_{x}^{t}=\frac{\lambda_{x}^{0}}{(t+1)^{v_{x}}}$,  $\lambda_{y}^{t}=\frac{\lambda_{y}^{0}}{(t+1)^{v_{y}}}$, and $\lambda_{z}^{t}=\frac{\lambda_{z}^{0}}{(t+1)^{v_{z}}}$ with $\lambda_{x}^{0},\lambda_{y}^{0},\lambda_{z}^{0}>0$ and $v_{x},v_{y},v_{z}\in(0,1)$; DP noises $\chi_{i}^{t}\in\mathbb{R}^{r}$, $\zeta_{i}^{t}\in\mathbb{R}^{r}$, and $\boldsymbol{\vartheta}_{i}^{t}\in\mathbb{R}^{n}$ (with $n=\sum_{i=1}^{m}n_{i}$; here, we use bold font to represent stacked vectors of $m$ agents) satisfying Assumption~\ref{A4}. 
\For{$t=0,1,\cdots,T-1$}
\State Acquire current data samples $\varphi_{i}^{t}\sim\mathcal{P}_{i}^{f}$ and $\xi_{i}^{t}\sim\mathcal{P}_{i}^{g}$;
\State $y_{i}^{t+1}=y_{i}^{t}+\sum_{j\in\mathcal{N}_{i}}w_{ij}(y_{j}^{t}+\chi_{j}^{t}-y_{i}^{t})+\lambda_{y}^{t}g_{i}^{t}(R_{i}\mathbf{x}_{i}^{t})$;
\State $z_{i}^{t+1}=z_{i}^{t}+\sum_{j\in\mathcal{N}_{i}}w_{ij}(z_{j}^{t}+\zeta_{j}^{t}-z_{i}^{t})$
\Statex \hspace*{4em} $+\lambda_{z}^{t}\nabla_{y} f_{i}^{t}(R_{i}\mathbf{x}_{i}^{t},\tilde{y}_{i}^{t})$, with $\tilde{y}_{i}^{t}\triangleq\textstyle\frac{1}{\lambda_{y}^{t}}(y_{i}^{t+1}-y_{i}^{t})$;
\State $\mathbf{x}_{i}^{t+1}=\Pi_{\Omega}\Big\{\mathbf{x}_{i}^{t}+\sum_{j\in\mathcal{N}_{i}}w_{ij}(\mathbf{x}_{j}^{t}+\boldsymbol{\vartheta}_{j}^{t}-\mathbf{x}_{i}^{t})$
\Statex \hspace*{4em} $-\lambda_{x}^{t}R_{i}^{\top}\big(\nabla_{x}f_{i}^{t}(R_{i}\mathbf{x}_{i}^{t},\tilde{y}_{i}^{t})+\nabla g_{i}^{t}(R_{i}\mathbf{x}_{i}^{t})\tilde{z}_{i}^{t}\big)\Big\}$,
\Statex \hspace*{4em} with $\tilde{z}_{i}^{t}\triangleq\textstyle \frac{1}{\lambda_{z}^{t}}(z_{i}^{t+1}-z_{i}^{t})$;
\State Send obscured parameters  $y_{i}^{t+1}\!+\!\chi_{i}^{t+1}$, $z_{i}^{t+1}\!+\!\zeta_{i}^{t+1}$, and $\mathbf{x}_{i}^{t+1}\!+\boldsymbol{\vartheta}_{i}^{t+1}$ to its neighbors $j\in\mathcal{N}_{i}$.
\EndFor
\end{algorithmic}
\end{algorithm}
The injected DP noises satisfy the following assumption:
\begin{assumption}\label{A4}
\vspace{-0.5em}
For each agent $i\in[m]$ and at any time $t\in\mathbb{N}$, each element of DP-noise vectors $\boldsymbol{\vartheta}_{i}^{t}$, $\chi_{i}^{t}$, and $\zeta_{i}^{t}$ follows Laplace distributions  $\text{Lap}(\frac{\sigma_{i,x}}{\sqrt{2}(t+1)^{\varsigma_{i,x}}})$, $\text{Lap}(\frac{\sigma_{i,y}}{\sqrt{2}(t+1)^{\varsigma_{i,y}}})$, and $\text{Lap}(\frac{\sigma_{i,z}}{\sqrt{2}(t+1)^{\varsigma_{i,z}}})$, respectively, where  $\sigma_{i,x}$, $\sigma_{i,y}$, and $\sigma_{i,z}$ are positive constants and the rates of noise variances satisfy
{\color{blue}\begin{equation}
\varsigma_{i,x}<v_{x}-v_{z},\quad \varsigma_{i,y}<v_{y},\quad\text{and}\quad\varsigma_{i,z}<v_{z},~\forall i\in[m],\nonumber
\vspace{-0.3em}
\end{equation}
with $v_{x}$, $v_{y}$, and $v_{z}$ being the decaying rates of the stepsizes $\lambda_{x}^{t}$, $\lambda_{y}^{t}$ and $\lambda_{z}^{t}$, respectively, in Algorithm~\ref{algorithm1}.}
\end{assumption}
{\color{blue}\begin{remark}
\vspace{-0.5em}
The stepsizes in Algorithm~\ref{algorithm1}, i.e., $\lambda_{x}^{t}$, $\lambda_{y}^{t}$, and $\lambda_{z}^{t}$, can be hard-coded offline in all agents' programs, and hence, do not require any adjustment
or coordination among agents during algorithmic implementation. Furthermore, since the stepsizes can be hard-coded prior to algorithmic implementation, heterogeneity in stepsizes across agents can be fully avoided and therefore does not to be considered in our design. In addition, we note that all noise parameters in Algorithm~\ref{algorithm1}, i.e.,~$\sigma_{i,x}$, $\sigma_{i,y}$, $\sigma_{i,z}$, $\varsigma_{i,x}$, $\varsigma_{i,y}$, and $\varsigma_{i,z}$, can be locally selected by each agent $i$ based on its own practical privacy needs. Hence, our algorithm is inherently robust to heterogeneous noise parameters across agents.
\vspace{-0.5em}
\end{remark}}
In line 6 of Algorithm \ref{algorithm1}, we use dynamic average-consensus-based gradient descent to enable each agent $i$ to estimate the optimization variables of all other agents, i.e., $\mathbf{x}_{i}^{t}\!=\!\col\{{\rm x}_{i(1)}^{t},\cdots,x_{i}^{t},\cdots,{\rm x}_{i(m)}^{t}\}$, where $x_{i}^{t}$ denotes agent~$i$'s true optimization variable and ${\rm x}_{i(j)}^{t}$ denotes its estimate of agent $j$'s optimization variable for all $j\!\neq\!i$ and $j\!\in\![m]$. {\color{blue}(Here, each agent $i$'s estimate of its own optimization variable coincides with the variable itself, i.e., ${\rm x}_{i(i)}^{t}\!=\!x_{i}^{t}$. Hence, we directly use $x_{i}^{t}$ to represent agent $i$'s estimate of its own optimization variable throughout the paper.)} This estimation of neighbors' optimization variables is an important novelty in our design (absent in all existing distributed aggregative optimization algorithms~\cite{lixiuxian1, personalized1,Carnevale1,yipeng,Carnevale2,wenguanghui2,wang2024momentum,cai2024distributed,mengmin,zhou2025distributed,zhang2025distributed,li2025distributed,huang2025distributed,lixiuxian2,wenguanghui1,yang2024class}), and it is key to ensuring a finite cumulative privacy budget in the infinite time horizon. Specifically, by employing average consensus on $\mathbf{x}_{i}^{t}$, we can ensure a diminishing algorithmic sensitivity (see Eq.~\eqref{4T121}), which, combined with meticulously designed diminishing DP-noise variances, can ensure a finite cumulative privacy budget in the infinite time horizon (see Eq.~\eqref{4T19}). In contrast, in existing algorithms, the difference $\|x_{i}^{t}-x_{i}^{\prime t}\|$ of state trajectories under adjacent datasets will grow over time, which will in turn lead to a growing sensitivity, and hence, an exploding cumulative privacy budget as iterations proceed.

In Algorithm~\ref{algorithm1}, the matrix $R_{i}$ is used to extract the $i$th local variable $x_i^{t}$ from the variable $\mathbf{x}_{i}^{t}$, i.e., $R_{i}\mathbf{x}_{i}^{t}=x_{i}^{t}$. In this case, line 4 enables agent $i$ to track the weighted aggregative function $\frac{\lambda_{y}^{t}}{m}\sum_{i=1}^{m}g_{i}^{t}(x_{i}^{t})$ and line 5 allows agent $i$ to track the weighted average gradient $\frac{\lambda_{z}^{t}}{m}\sum_{i=1}^{m}\nabla_{y}f_{i}^{t}(x_{i}^{t},\tilde{y}_{i}^{t})$. It is clear that we use robust tracking approaches in line 4 and line 5 to update $y_{i}^{t}$ and $z_{i}^{t}$, respectively, which is different from existing distributed aggregative optimization algorithms~\cite{lixiuxian1,personalized1,Carnevale1,yipeng,Carnevale2,wenguanghui2,wang2024momentum,cai2024distributed,mengmin,zhou2025distributed,zhang2025distributed,li2025distributed,huang2025distributed,lixiuxian2,wenguanghui1,yang2024class} which use conventional gradient tracking techniques. This modification is necessary because the injected Laplace noise will accumulate over time in conventional gradient-tracking-based algorithms. {\color{blue}More specifically, we use the classical gradient-tracking-based distributed aggregative optimization algorithm in~\cite{lixiuxian2} as an example to illustrate the idea. In the absence of DP noises, the algorithm in~\cite{lixiuxian2} is given as follows:
\begin{equation}
\begin{aligned}
x_{i}^{t+1}&\textstyle=\Pi_{\Omega_{i}}\Big\{x_{i}^{t}-\alpha_{t}\left(\nabla_{x}f_{i}^{t}(x_{i}^{t},y_{i}^{t})+\nabla g_{i}(x_{i}^{t})z_{i}^{t}\right)\Big\},\\
y_{i}^{t+1}&\textstyle=\sum_{j=1}^{m}a_{ij}y_{j}^{t}+g_{i}(x_{i}^{t+1})-g_{i}(x_{i}^{t}),\\
z_{i}^{t+1}&\textstyle=\sum_{j=1}^{m}a_{ij}z_{j}^{t}+\nabla_{y}f_{i}^{t+1}(x_{i}^{t+1},y_{i}^{t+1})-\nabla_{y}f_{i}^{t}(x_{i}^{t},y_{i}^{t}),\nonumber
\end{aligned}
\end{equation}
where $\sum_{j=1}^{m}a_{ij}=1$ and $\sum_{i=1}^{m}a_{ij}=1$ hold for all $i,j\in[m]$ according to Assumption 2 in~\cite{lixiuxian2} and the stepsize $\alpha_{t}$ is chosen as $\alpha_{t}=\frac{1}{\sqrt{t}}$ for all $t\geq 1$ following Eq. (8) in~\cite{lixiuxian2}.

Using initialization $y_{i}^{0}\!=\!g_{i}(x_{i}^{0})$ and $z_{i}^{0}\!=\!\nabla_{y}f_{i}^{0}(x_{i}^{0},y_{i}^{0})$, we obtain $\bar{y}^{t}=\frac{1}{m}\sum_{i=1}^{m}g_{i}(x_{i}^{t})$ and $\bar{z}^{t}=\frac{1}{m}\sum_{i=1}^{m}\nabla_{y}f_{i}^{t}(x_{i}^{t},y_{i}^{t})$,
which means that ensuring the consensus of $y_{i}^{t}$ and $z_{i}^{t}$, i.e., $y_{i}^{t}=\bar{y}^{t}$ and $z_{i}^{t}=\bar{z}^{t}$, is sufficient for each agent to accurately track the aggregative function and the global gradient, respectively, i.e., achieving $y_{i}^{t}=\bar{y}^{t}=\frac{1}{m}\sum_{i=1}^{m}g_{i}(x_{i}^{t})$ and $z_{i}^{t}=\bar{z}^{t}=\frac{1}{m}\sum_{i=1}^{m}\nabla_{y}f_{i}^{t}(x_{i}^{t},y_{i}^{t})$, respectively.

When noises are added to both shared variables $y_{i}^{t}$ and $z_{i}^{t}$ to ensure $\epsilon_{i}$-LDP, the algorithm in~\cite{lixiuxian2} becomes
\begin{equation}
\vspace{-0.2em}
\begin{aligned}
x_{i}^{t+1}&\textstyle\!\!=\!\Pi_{\Omega_{i}}\Big\{x_{i}^{t}-\alpha_{t}\left(\nabla_{x}f_{i}^{t}(x_{i}^{t},y_{i}^{t})+\nabla g_{i}(x_{i}^{t})z_{i}^{t}\right)\Big\},\\
y_{i}^{t+1}&\textstyle\!\!=\!\sum_{j=1}^{m}a_{ij}(y_{j}^{t}+\chi_{j}^{t})+g_{i}(x_{i}^{t+1})-g_{i}(x_{i}^{t}),\\
z_{i}^{t+1}&\textstyle\!\!=\!\sum_{j=1}^{m}a_{ij}(z_{j}^{t}\!+\!\zeta_{j}^{t})\!+\!\nabla_{y}f_{i}^{t+1}(x_{i}^{t+1}\!,y_{i}^{t+1})\!-\!\nabla_{y}f_{i}^{t}(x_{i}^{t},y_{i}^{t}),\nonumber
\end{aligned}
\end{equation}
where $\chi_{j}^{t}\in\mathbb{R}^{r}$ and $\zeta_{j}^{t}\in\mathbb{R}^{r}$ represent the DP noises injected on variables $y_{j}^{t}$ and $z_{j}^{t}$, respectively.

It can be seen that under the same initialization $y_{i}^{0}=g_{i}(x_{i}^{0})$ and $z_{i}^{0}=\nabla_{y}f_{i}^{0}(x_{i}^{0},y_{i}^{0})$, the following relations hold true by induction:
\begin{equation}
\vspace{-0.2em}
\begin{aligned}
\bar{y}^{t}&\textstyle=\frac{1}{m}\sum_{i=1}^{m}\left(g_{i}(x_{i}^{t})+\sum_{k=0}^{t-1}\chi_{i}^{k}\right);\\
\bar{z}^{t}&\textstyle=\frac{1}{m}\sum_{i=1}^{m}\left(\nabla_{y}f_{i}^{t}(x_{i}^{t},y_{i}^{t})+\sum_{k=0}^{t-1}\zeta_{i}^{k}\right),\label{accumu}
\end{aligned}
\vspace{-0.2em}
\end{equation}
which implies that the DP noises accumulate over time in the estimates of both the aggregative function and the global gradient. Therefore, when the variables $y_{i}^{t}$ and $z_{i}^{t}$ are directly fed into the update of the model parameter $x_{i}^{t+1}$, learning accuracy will be compromised.} In fact, our experimental results in Fig. \ref{mnista} and Fig.~\ref{cifara} have shown that existing distributed aggregative optimization algorithms in~\cite{Carnevale1} and~\cite{lixiuxian2} (both of which rely on conventional gradient tracking techniques) exhibit divergent behaviors under Laplace DP noise.

{\color{blue}In contrast, in Algorithm~\ref{algorithm1}, we use the increments $y_{i}^{t+1}-y_{i}^{t}$ and $z_{i}^{t+1}-z_{i}^{t}$ in the update of the model parameter $x_{i}^{t+1}$. Because the two increments can be verified to satisfy (by using Line 4 and Line 5 of Algorithm~\ref{algorithm1} and Assumption~\ref{A3}):
\begin{equation}
\vspace{-0.2em}
\begin{aligned}
\bar{y}^{t+1}-\bar{y}^{t}&\textstyle=\frac{\lambda_{y}^{t}}{m}\sum_{i=1}^{m}\left(g_{i}^{t}(R_{i}\mathbf{x}_{i}^{t})+\sum_{j\in\mathcal{N}_{i}}w_{ij}\chi_{j}^{t}\right);\\
\bar{z}^{t+1}-\bar{z}^{t}&\textstyle=\frac{\lambda_{z}^{t}}{m}\sum_{i=1}^{m}\left(\nabla_{y}f_{i}^{t}(R_{i}\mathbf{x}_{i}^{t},\tilde{y}_{i}^{t})+\sum_{j\in\mathcal{N}_{i}}w_{ij}\zeta_{j}^{t}\right),\nonumber
\end{aligned}
\end{equation}
this design effectively circumvents the noise accumulation problem discussed above in the existing algorithm of~\cite{lixiuxian2}.} In addition, it is worth noting that since we use robust tracking approaches, our algorithmic structure (i.e., two robust tracking steps followed by a dynamic average-consensus gradient descent) is fundamentally different from that of existing distributed aggregative optimization algorithms in \cite{lixiuxian1,personalized1,Carnevale1,yipeng,Carnevale2,wenguanghui2,wang2024momentum,cai2024distributed,mengmin,zhou2025distributed,zhang2025distributed,li2025distributed,huang2025distributed,lixiuxian2,wenguanghui1,yang2024class}, which makes it impossible to use proof techniques for existing results.

Another key reason for our algorithm to achieve both accurate convergence and rigorous $\epsilon_{i}$-LDP is the use of all data available at time $t$ to compute the gradient. This strategy can effectively reduce the variance in the gradient estimates, and hence, leads to accurate convergence to an optimal solution. This is in sharp contrast to the existing stochastic solution for distributed aggregative optimization in~\cite{lixiuxian2}, which cannot ensure convergence to an exact solution.
{\color{blue}\begin{remark}
In Algorithm~\ref{algorithm1}, although agent $i$ estimates the optimization variables of other agents, it cannot reconstruct the private data of others. Specifically, for any finite number of iterations, i.e., $T\!<\!\infty$, each agent $i$ does not have access to the exact optimization variables of others and only receives noise-corrupted variables from neighboring agents, which prevent accurate recovery of the exact optimization variables or inference of the private data of others. Furthermore, in the asymptotic case, i.e., $T\!=\!\infty$, each agent $i$ can indeed accurately know the optimal solutions of other agents. However, the optimal solution, i.e.,  $\mathbf{x}_{i}^{\infty}\!=\!\boldsymbol{x}^{*}$, is only a single point in the decision space $\mathbb{R}^{n}$ and is insufficient to reconstruct the objective functions or the underlying private data of other agents. In fact, revealing such a single point $\boldsymbol{x}^{*}$ corresponds to disclosing an event of measure zero with respect to the data-generating distribution, which does not affect the probabilities of outputs and therefore does not violate the $\epsilon_{i}$-LDP guarantees established based on output probabilities.
\end{remark}
\begin{remark}
\vspace{-0.5em}
Our algorithm can, in principle, be extended to directed graphs by leveraging the push-sum method~\cite{direct1,direct2,direct3}, which introduces normalization factors to compensate for network imbalances. Since the updates of these normalization factors are independent of the optimization variables, integrating the push-sum mechanism with Algorithm~\ref{algorithm1} is conceptually feasible. However, rigorously incorporating these normalization factors would require introducing additional communication variables and substantially more involved technical derivations, leading to a significantly lengthier convergence analysis and privacy analysis. We leave a systematic treatment of this extension to future work.
\end{remark}
\begin{remark}
\vspace{-0.5em}
Our work has many natural applications in practical control systems. For example, \cite{applicationDC} shows that in DC microgrid control systems, the objective of voltage regulation under current-sharing and economic operation requirements can be naturally formulated as a distributed aggregative optimization problem, thereby explicitly casting the control design task within an aggregative optimization framework (see Eq. (3) in \cite{applicationDC}). Similarly, \cite{huang2025distributed} shows that multi-agent formation control problems, e.g., the target surrounding problem,  
can be formulated as a distributed aggregative optimization problem, making Algorithm~\ref{algorithm1} applicable to these types of control tasks.
\end{remark}}
\vspace{-0.5em}
\section{Convergence Accuracy and Rate Analysis}\label{convergenceanalysis}
In this section, we systematically analyze the convergence rates of Algorithm~\ref{algorithm1} under nonconvex, general convex, and strongly convex global objective functions, respectively. 

For the sake of notational simplicity, we denote the lowest decaying rates of DP-noise variances among agents as $\varsigma_{x}\!=\!\min_{i\in[m]}\{\varsigma_{i,x}\}$, $\varsigma_{y}\!=\!\min_{i\in[m]}\{\varsigma_{i,y}\}$, and $\varsigma_{z}\!=\!\min_{i\in[m]}\{\varsigma_{i,z}\}$.

\begin{theorem}\label{T1}
	Under Assumptions~\ref{A1}-\ref{A4}, if the decaying rates of stepsizes satisfy $1>v_{x}>v_{z}$ and $\frac{1}{2}>v_{z}>v_{y}>0$, then for any $T\in\mathbb{N}^{+}$, we have the following results for Algorithm~\ref{algorithm1}:
	
	(i) if $F(\boldsymbol{x})$ is strongly convex with the rates of DP-noise variances satisfying $\varsigma_{x}>\max\{v_{y}-\varsigma_{y}, v_{z}-\varsigma_{z}, \frac{v_{x}}{2}\}$, $\varsigma_{y}>v_{y}-\frac{1}{2}$, and $\varsigma_{z}>v_{z}-\frac{1}{2}$, then we have 
	\begin{equation}
		\mathbb{E}[\|\mathbf{x}_{i}^{T}-\boldsymbol{x}^*\|^2]\leq \mathcal{O}(T^{-\beta}),\label{Theorem1results1}
	\end{equation}
	with $\beta=\min\{2\varsigma_{x}-v_{x},\frac{1}{2}-v_{y}+\varsigma_{y},\frac{1}{2}-v_{z}+\varsigma_{z}\};$
	
	(ii) if $F(\boldsymbol{x})$ is general convex with the rates of DP-noise variances satisfying $\varsigma_{x}>\frac{1}{2}$, $\varsigma_{y}>v_{y}-\frac{1}{2}+(1-v_{x})$, and $\varsigma_{z}>v_{z}-\frac{1}{2}+(1-v_{x})$, then we have 
	\begin{equation}
		\textstyle\frac{1}{T+1}\sum_{t=0}^{T}\mathbb{E}[F(\mathbf{x}_{i}^{t})-F(\boldsymbol{x}^{*})]\leq \mathcal{O}(T^{-(1-v_{x})});\label{Theorem1results2}
	\end{equation}
	
	(iii) if $F(\boldsymbol{x})$ is nonconvex with $\Omega=\mathbb{R}^{n}$ and the rates of DP-noise variances satisfy $\varsigma_{x}>\frac{1}{2}$, $\varsigma_{y}\!>\!v_{y}-\frac{1}{2}+(1-v_{x})$, and $\varsigma_{z}\!>\!v_{z}-\frac{1}{2}+(1-v_{x})$, then we have 
	\vspace{-0.2em}
	\begin{equation}
		\textstyle\frac{1}{T+1}\sum_{t=0}^{T}\mathbb{E}[\|\nabla F(\mathbf{x}_{i}^{t})\|^2]\leq\mathcal{O}(T^{-(1-v_{x})}).\label{Theorem1results3}
	\end{equation}
\end{theorem}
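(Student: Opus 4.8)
\textbf{Proof proposal for Theorem~\ref{T1}.}

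The plan is to build a system of coupled recursions tracking the errors in the three dynamics and then close it using the carefully chosen stepsize/noise-decay hierarchy. First I would introduce the aggregated (average-over-agents) quantities $\bar{\boldsymbol{x}}^t$, $\bar y^t$, $\bar z^t$ and the consensus-error quantities $\|\boldsymbol{x}_i^t - \bar{\boldsymbol{x}}^t\|$, $\|y_i^t - \bar y^t\|$, $\|z_i^t - \bar z^t\|$, and define the central optimality error $\mathbb{E}[\|\boldsymbol{x}^t - \boldsymbol{x}^*\|^2]$ (case (i)), the function-value gap (case (ii)), or the gradient norm (case (iii)). The robust tracking steps in lines 4--5, because they use the spectral contraction from Assumption~\ref{A3} (i.e.\ $\|I_m + W - \tfrac{\mathbf 1\mathbf 1^\top}{m}\| < 1$), contract the consensus errors geometrically each step while being driven by the stepsize-scaled signals $\lambda_y^t g_i^t$, $\lambda_z^t \nabla_y f_i^t$ plus the Laplace DP noise of variance $\mathcal{O}(\lambda_y^{t2}\cdot(t+1)^{-2\varsigma_{i,y}})$ etc. The key point I would exploit is that $\tilde y_i^t = \tfrac{1}{\lambda_y^t}(y_i^{t+1}-y_i^t)$ and $\tilde z_i^t = \tfrac{1}{\lambda_z^t}(z_i^{t+1}-z_i^t)$ recover, up to a tracking error that I must bound, the true aggregative value $g^t(\boldsymbol{x}^t)$ and the true average gradient $\tfrac1m\sum_i \nabla_y f_i^t$ — this is the dynamic-average-consensus identity, and the scaling by $1/\lambda_y^t$, $1/\lambda_z^t$ is exactly what prevents DP-noise variance accumulation (the injected variance in $y$ is $\mathcal{O}(\lambda_y^{t2})$ smaller than in gradient tracking, so after dividing by $\lambda_y^t$ it stays controlled). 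I would quantify these tracking errors via a perturbed linear recursion, obtaining bounds of the form $\mathbb{E}[\|\tilde y_i^t - g^t(\boldsymbol{x}^t)\|^2] = \mathcal{O}(\text{consensus error} + \lambda_y^{t} + (t+1)^{-2\varsigma_y}/\lambda_y^{t2} \cdot \lambda_y^{t2})$ — i.e.\ terms that are summable-with-weights under the stated rate inequalities.

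Next I would substitute these tracking-error bounds into the descent step (line 6). Using the projection nonexpansiveness and the Lipschitz/smoothness bounds of Assumption~\ref{A1}, expand $\mathbb{E}[\|\bar{\boldsymbol{x}}^{t+1}-\boldsymbol{x}^*\|^2]$ (or $\mathbb{E}[F^t(\boldsymbol{x}^{t+1})]$ in the nonconvex/convex cases) and collect: (a) a descent term $-2\lambda_x^t(\text{gap or }\|\nabla F\|^2)$ from the correct gradient direction $\nabla_x f_i^t - \nabla g_i^t \cdot (\text{aggregate dual})$, where I use the chain-rule structure $\nabla F(\boldsymbol{x}) = \sum_i R_i^\top(\nabla_x f_i + \nabla g_i \cdot \tfrac1m\sum_j \nabla_y f_j)$; (b) cross terms linear in the tracking/consensus errors, handled by Young's inequality against the descent term; (c) the gradient-noise variance term $\mathcal{O}(\lambda_x^{t2}(\sigma_f^2 + \sigma_{g,0}^2 + \sigma_{g,1}^2))$, which is where the ERM averaging matters — because $f_i^t, g_i^t$ average $t+1$ samples, the effective variance is $\mathcal{O}(\tfrac{1}{t+1})$, making the whole term $\mathcal{O}(\lambda_x^{t2}/(t+1))$ and hence summable; (d) the DP-noise variance term $\mathcal{O}((t+1)^{-2\varsigma_x})$ from $\boldsymbol{\vartheta}^t$, plus the propagated DP noise from $y,z$. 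Assembling (a)--(d) yields a recursion $a_{t+1} \le (1-c\lambda_x^t) a_t + \mathcal{O}(\lambda_x^t \cdot r_t)$ in the strongly convex case (with $r_t$ the residual rate) and $a_{t+1} \le a_t - 2\lambda_x^t(\text{gap})_t + \mathcal{O}(\lambda_x^{t2}/(t+1)) + \mathcal{O}(\text{residuals})$ in the other cases. I would then invoke a standard deterministic-sequence lemma: for (i) the contraction-with-vanishing-perturbation lemma giving $a_T = \mathcal{O}(T^{-\beta})$ with $\beta$ the min of the three exponents as stated; for (ii)/(iii) summing over $t$, dividing by $\sum_t \lambda_x^t \sim T^{1-v_x}$, and bounding the constant residuals, yielding the $\mathcal{O}(T^{-(1-v_x)})$ rate once the noise-decay inequalities $\varsigma_x > \tfrac12$, $\varsigma_y > v_y - \tfrac12 + (1-v_x)$, $\varsigma_z > v_z - \tfrac12 + (1-v_x)$ ensure all residual sums are $o(T^{1-v_x})$ or convergent.

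The main obstacle, as the paper itself flags, is the \emph{intertwined} nature of the three dynamics: the $x$-update feeds into both the $g^t$ evaluation in line 4 and the $\nabla f^t$ evaluation in line 5, while $\tilde y^t$ feeds into line 5 and both $\tilde y^t, \tilde z^t$ feed into line 6 — so the tracking errors, consensus errors, and optimality error form a genuinely coupled vector recursion rather than a triangular one. The delicate part is choosing the Young's-inequality weights so that the coupling matrix of this vector recursion is (eventually) a contraction in the appropriate weighted norm, which forces the hierarchy $v_x > v_z > v_y$ and $v_z < \tfrac12$: one needs $\lambda_y^t$ fast enough relative to $\lambda_z^t$ so that $y$-tracking settles before $z$ uses it, and $\lambda_z^t$ fast relative to $\lambda_x^t$ for the analogous reason, while $v_z < \tfrac12$ keeps the $z$-dynamics' own noise summable after the $1/\lambda_z^t$ rescaling. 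A secondary technical nuisance is that the ERM functions $f_i^t, g_i^t$ change with $t$, so terms like $\|g^{t+1}(\boldsymbol{x}) - g^t(\boldsymbol{x})\|$ appear as additional drift in the tracking recursions; these are $\mathcal{O}(\tfrac{1}{t+1})$ by the telescoping structure of the running average (each new sample perturbs the average by $\mathcal{O}(\tfrac{1}{t+1})$ under the bounded-variance Assumption~\ref{A2} and compactness of $\Omega$), so they are benign but must be tracked carefully. Once the coupled recursion is shown to contract, the three cases differ only in which standard sequence lemma is applied to the scalar $a_t$ recursion at the end.
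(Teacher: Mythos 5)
Your overall skeleton (consensus-error recursions driven by the contraction in Assumption~\ref{A3}, a descent recursion for the averaged iterate, then standard sequence lemmas, with the ERM averaging supplying the $\mathcal{O}(1/(t+1))$ variance reduction) matches the paper's architecture, but there is a genuine gap at the step where you bound the tracking errors of $\tilde{y}_i^t$ and $\tilde{z}_i^t$ in mean square and then absorb the resulting cross terms "by Young's inequality against the descent term." The DP noise $\chi_j^t$ in line 4 (and $\zeta_j^t$ in line 5) is injected on the shared states at full scale, \emph{not} pre-multiplied by $\lambda_y^t$; your parenthetical claim that the injected variance in $y$ is $\mathcal{O}((\lambda_y^t)^2)$ smaller is a misreading of the algorithm. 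Consequently, after the division by $\lambda_y^t$ in $\tilde{y}_i^t=\frac{1}{\lambda_y^t}(y_i^{t+1}-y_i^t)$, the noise contribution has standard deviation of order $(t+1)^{v_y-\varsigma_y}$, and the theorem's hypotheses only require $\varsigma_y>v_y-\frac{1}{2}$ (Corollary~\ref{complexity} even takes $\varsigma_y=\delta<v_y=2\delta$), so $\mathbb{E}[\|\tilde{y}_i^t-g^t(\boldsymbol{x}^t)\|^2]$ is not decaying and can diverge; the same holds for $\tilde{z}_i^t$. If you feed such a bound through Young's inequality, the perturbation in the strongly convex recursion is of order $\lambda_x^t(t+1)^{2(v_y-\varsigma_y)}$, which with the admissible parameters yields an exponent worse than $v_x$ and hence no convergence at the claimed rate $\beta$ containing $\frac{1}{2}-v_y+\varsigma_y$ and $\frac{1}{2}-v_z+\varsigma_z$ — those $\tfrac12$'s cannot come from a second-moment argument.

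The paper closes this gap in two ways that your proposal does not contain. First, it never bounds the levels of the $y,z$ consensus errors or of $\tilde y,\tilde z$ themselves; it bounds only the increments $\mathbb{E}[\|\boldsymbol{\hat y}^{t+1}-\boldsymbol{\hat y}^{t}\|^2]$ and $\mathbb{E}[\|\boldsymbol{\hat z}^{t+1}-\boldsymbol{\hat z}^{t}\|^2]$ (Lemmas~\ref{AppLemma2}--\ref{consensuslemma}), which is all that enters the other updates. Second, and decisively, in Lemmas~\ref{firstlemma} and~\ref{secondlemma} the cross terms $\mathbb{E}[\langle\boldsymbol{\bar x}^t-\boldsymbol{x}^*,\cdot\rangle]$ are handled by exploiting that the dominant part of the $\tilde y,\tilde z$ error is zero-mean Laplace noise independent of earlier iterates: the noise contribution is unrolled over time (cf.~\eqref{1TApp13}--\eqref{1TApp15} and~\eqref{2TApp38}--\eqref{2TApp40}), the terms correlating noise with $\boldsymbol{\bar x}^{t-1}-\boldsymbol{x}^*$ vanish in expectation, and what remains is controlled by Cauchy--Schwarz against iterate \emph{increments} $\boldsymbol{\bar x}^{t}-\boldsymbol{\bar x}^{t-1}$ (or $\Xi_x^t-\Xi_x^{t-1}$), which decay like $(t+1)^{-\varsigma_x}$. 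This martingale-type bookkeeping is exactly what produces the exponents $\varsigma_x-v_y+\varsigma_y$, $\frac12-v_y+\varsigma_y$, etc., in Theorem~\ref{T1}; without it (i.e., with your plain Young's-inequality treatment) the stated rates — and under the permitted parameter ranges even convergence — cannot be established.
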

\begin{proof}
	See Appendix B.
	\vspace{-0.5em}
\end{proof}
Theorem~\ref{T1} proves that the optimization error of Algorithm~\ref{algorithm1} decreases over time at rates of $\mathcal{O}(T^{-\beta})$, $\mathcal{O}(T^{-(1-v_{x})})$, and $\mathcal{O}(T^{-(1-v_{x})})$, respectively, for strongly convex, general convex, and nonconvex $F(\boldsymbol{x})$. It is more comprehensive than existing distributed aggregative optimization results, in, e.g.,~\cite{lixiuxian1,yipeng,wenguanghui2,wang2024momentum,cai2024distributed,mengmin,zhou2025distributed,zhang2025distributed,li2025distributed,huang2025distributed,Carnevale1,personalized1,yang2024class,lixiuxian2,wenguanghui1}, which only consider strongly convex or convex objective functions. Moreover, it is also different from the only existing nonconvex result in~\cite{Carnevale2} for distributed aggregative optimization, which proves asymptotic convergence in the continuous-time domain and does not give explicit convergence rates.  

Next, we provide an example of parameter selection that satisfies the conditions given in Theorem~\ref{T1}:
\vspace{-0.5em}
\begin{corollary}\label{values}
Under the conditions in Theorem~\ref{T1}, we have the following results for Algorithm~\ref{algorithm1}:

(i) For a strongly convex $F(\boldsymbol{x})$ and an arbitrarily small $\delta>0$, if we set 
$v_{x}=\frac{1}{2}+7\delta$, $v_{z}=3\delta$, $v_{y}=2\delta$, $\varsigma_{x}=\frac{1}{2}+3\delta$, $\varsigma_{y}=\delta$, and $\varsigma_{z}=2\delta$, then we have $\beta=\frac{1}{2}-\delta$, which implies that the convergence rate of Algorithm~\ref{algorithm1} is arbitrarily close to $\mathcal{O}(T^{-0.5})$.

(ii) For a convex $F(\boldsymbol{x})$ and an arbitrarily small $\delta>0$, if we set 
$v_{x}=\frac{1}{2}+5\delta$, $v_{z}=3\delta$, $v_{y}=2\delta$, $\varsigma_{x}=\frac{1}{2}+\delta$, $\varsigma_{y}=\delta$, and $\varsigma_{z}=2\delta$, then we have $1-v_{x}=\frac{1}{2}-5\delta$, which implies that the convergence rate of Algorithm~\ref{algorithm1} is arbitrarily close to $\mathcal{O}(T^{-0.5})$.

(iii) For a nonconvex $F(\boldsymbol{x})$ and an arbitrarily small $\delta>0$, if we set 
$v_{x}=\frac{1}{2}+5\delta$, $v_{z}=3\delta$, $v_{y}=2\delta$, $\varsigma_{x}=\frac{1}{2}+\delta$, $\varsigma_{y}=\delta$, and $\varsigma_{z}=2\delta$, then we have $1-v_{x}=\frac{1}{2}-5\delta$, which implies that the convergence rate of Algorithm~\ref{algorithm1} is arbitrarily close to $\mathcal{O}(T^{-0.5})$.
\end{corollary}
\begin{proof}
The results follow naturally from Theorem~\ref{T1}.
\end{proof}
Corollary~\ref{values} proves that the convergence rate of our algorithm is arbitrarily close to $\mathcal{O}(T^{-0.5})$ for nonconvex/convex/strongly convex objective functions even in the presence of Laplace noise. This is consistent with the existing result in~\cite{wenguanghui1}, which established a convergence rate of $\mathcal{O}(T^{-0.5})$ for the special case of convex $F(\boldsymbol{x})$ and noise-free aggregative function $g(\boldsymbol{x})$. {\color{blue}Moreover, Corollary~\ref{values} provides a simple way to select parameters that satisfy the conditions in Theorem~\ref{T1}. Specifically, it implies that for any $\delta\in(0,\frac{1}{14})$, the parameters $v_{x}$, $v_{y}$, $v_{z}$, $\varsigma_{x}$, $\varsigma_{y}$, and $\varsigma_{z}$, chosen according to the specified rules, will satisfy the conditions in Theorem 1. Hence, in real-world applications, we do not need to tune six parameters $v_{x}$, $v_{y}$, $v_{z}$, $\varsigma_{x}$, $\varsigma_{y}$, and $\varsigma_{z}$, but instead only need to choose one parameter $\delta$.
	
\begin{corollary}\label{complexity}
	Under Assumption~\ref{A1}, for any $T\in\mathbb{N}^{+}$ and $i\in[m]$, the communication and computational complexities of Algorithm~\ref{algorithm1} are on the order of $\mathcal{O}((2r+n)d_{i}T)$ and $\mathcal{O}(n_{i}rT^2)$, respectively, where $d_{i}$ denotes the number of agent~$i$'s neighbors.
\end{corollary}
\begin{proof}
	See Appendix C.
\end{proof}
Compared with the conventional distributed aggregative optimization algorithms that do not consider privacy in, e.g.,~\cite{lixiuxian1,lixiuxian2}, where the dimension of exchanged messages is $2r$, Algorithm~\ref{algorithm1} requires exchanging messages of dimension $2r+n$. This increased communication overhead arises from the imposed LDP constraints. Specifically, the noise added to satisfy the LDP requirements inevitably reduces the amount of useful information conveyed in each shared variable. As a result, more variable exchanges are required to transmit sufficient information to ensure accurate convergence, compared with the conventional distributed aggregative optimization algorithms that ignore privacy. We regard this increased communication overhead as a necessary price for achieving accurate convergence under rigorous $\epsilon_{i}$-LDP.

In addition, we note that using historical data to compute gradients increases the computational complexity compared with the strategy of using one data-point per iteration. However, for continuous loss functions, our prior work~\cite[Section III-B]{LDPziji1} shows that, by employing Lagrange interpolation, the increased computational complexity at each iteration can be made independent of the iteration number. As a result, the computational complexity of Algorithm 1 over $T$ iterations can be reduced from $\mathcal{O}(n_{i}rT^2)$ to $\mathcal{O}(n_{i}rT)$, which is consistent with the order of computational complexity of noise-free distributed aggregative optimization algorithms in, e.g.,~\cite{lixiuxian2}.}

\section{Privacy analysis}\label{privacy}
In this section, we prove that besides accurate convergence, Algorithm~\ref{algorithm1} can simultaneously ensure rigorous $\epsilon_{i}$-LDP for each agent $i\!\in\![m]$, with a finite cumulative privacy budget even in the infinite time horizon. To this end, we first introduce the definition of sensitivity for agent $i$'s implementation at time $t$:

\begin{definition}\label{D3}
\vspace{-0.2em}
(Sensitivity~\cite{zhangjifeng1}) We denote the adjacency relationship between $\mathcal{D}_{i}^{t}$ and $\mathcal{D}_{i}^{\prime t}$ by $\text{Adj}(\mathcal{D}_{i}^{t},\mathcal{D}_{i}^{\prime t})$. The sensitivity of each agent $i$'s implementation $\mathcal{A}_{i}$ at time $t$ is defined as
\vspace{-0.1em}
\begin{equation}
\Delta_{i}^{t+1}=\max_{\text{Adj}(\mathcal{D}_{i}^{t},\mathcal{D}_{i}^{\prime t})}\|\mathcal{A}_{i}(\mathcal{D}_{i}^{t},\theta_{-i}^{t})-\mathcal{A}_{i}(\mathcal{D}_{i}^{\prime t},\theta_{-i}^{t})\|_{1},\nonumber
\vspace{-0.3em}
\end{equation}
where $\theta_{-i}^{t}$ represents all received information acquired by agent $i$ at time $t$.
\vspace{-0.2em}
\end{definition}
According to Definition~\ref{D3}, under Algorithm~\ref{algorithm1}, agent $i$'s implementation involves three sensitivities: $\Delta_{i,x}^{t}$, $\Delta_{i,y}^{t}$, and $\Delta_{i,z}^{t}$, which correspond to the three shared variables $\mathbf{x}_{i}^{t}+\boldsymbol{\vartheta}_{i}^{t}$, $y_{i}^{t}+\chi_{i}^{t}$, and
$z_{i}^{t}+\zeta_{i}^{t}$, respectively. With this understanding, we present the following lemma, which delineates a sufficient condition for $\epsilon_{i}$-LDP over any time horizon $T\in\mathbb{N}^{+}$:
\begin{lemma}[Lemma 2 in~\cite{huang}]\label{LDPlemma1}
For any given $T\in\mathbb{N}^{+}$ (which includes the case of $T\rightarrow\infty$), if agent $i$ injects to each of its shared variables $\mathbf{x}_{i}^{t}$,
$y_{i}^{t}$, and $z_{i}^{t}$ at each iteration $t\in\{1,\cdots, T\}$ noise vectors $\boldsymbol{\vartheta}_{i}^{t}$, $\chi_{i}^{t}$, and $\zeta_{i}^{t}$ consisting of $n$, $r$, and $r$ independent Laplace noises with parameters $\nu_{i,x}^{t}$, $\nu_{i,y}^{t}$, and $\nu_{i,z}^{t}$, respectively, then agent $i$'s implementation $\mathcal{A}_{i}$ is $\epsilon_{i}$-locally differentially private with the cumulative privacy budget from $t=1$ to $t=T$ upper bounded by $\sum_{t=1}^{T}(\frac{\Delta_{i,x}^{t}}{\nu_{i,x}^{t}}+\frac{\Delta_{i,y}^{t}}{\nu_{i,y}^{t}}+\frac{\Delta_{i,z}^{t}}{\nu_{i,z}^{t}})$.
\end{lemma}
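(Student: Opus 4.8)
\textbf{Proof plan for Lemma~\ref{LDPlemma1}.}

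The plan is to reduce the claim to the classical Laplace-mechanism argument applied iteratively, tracking how the privacy loss accumulates across the $T$ iterations and across the three shared variables. First I would fix an agent $i$ and two adjacent datasets $\mathcal{D}_{i}^{T}$ and $\mathcal{D}_{i}^{\prime T}$ (in the sense of Definition~\ref{definition1}, differing in exactly one entry at some time $k$), together with all received information $\theta_{-i}$, which we condition on throughout. The observable output of agent $i$'s implementation $\mathcal{A}_{i}$ over the horizon is the full sequence of transmitted messages $\{\boldsymbol{x}_{i}^{t}+\boldsymbol{\vartheta}_{i}^{t},\, y_{i}^{t}+\chi_{i}^{t},\, z_{i}^{t}+\zeta_{i}^{t}\}_{t=1}^{T}$. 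The key observation is that, conditioned on the received information and on the messages sent before time $t$, the internal states $\boldsymbol{x}_{i}^{t}$, $y_{i}^{t}$, $z_{i}^{t}$ are deterministic functions of the dataset, so the only randomness entering the $t$-th batch of outgoing messages is the fresh Laplace noise $(\boldsymbol{\vartheta}_{i}^{t},\chi_{i}^{t},\zeta_{i}^{t})$, which is independent across $t$ and across the three variables.

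Next I would write the likelihood ratio of the entire observation sequence under $\mathcal{D}_{i}^{T}$ versus $\mathcal{D}_{i}^{\prime T}$ as a telescoping product over $t$ and over the three channels, using the chain rule for densities. For each fixed $t$ and each channel, say the $x$-channel, the ratio of the conditional density of $\boldsymbol{x}_{i}^{t}+\boldsymbol{\vartheta}_{i}^{t}$ is a ratio of two shifted Laplace densities with common scale parameter $\nu_{i,x}^{t}$; since the Laplace density $\frac{1}{2\nu}e^{-|s|/\nu}$ satisfies $e^{-|s-a|/\nu}/e^{-|s-b|/\nu}\le e^{|a-b|/\nu}$ for all $s$, and since the noise is injected component-wise with $n$ independent coordinates, the product over coordinates gives the bound $\exp(\|\boldsymbol{x}_{i}^{t}-\boldsymbol{x}_{i}^{\prime t}\|_{1}/\nu_{i,x}^{t})\le \exp(\Delta_{i,x}^{t}/\nu_{i,x}^{t})$ by Definition~\ref{D3}. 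The same bound with $\Delta_{i,y}^{t}/\nu_{i,y}^{t}$ and $\Delta_{i,z}^{t}/\nu_{i,z}^{t}$ holds for the $y$- and $z$-channels. Multiplying these bounds over all $t\in\{1,\dots,T\}$ and all three channels yields that the full likelihood ratio is bounded by $\exp\!\big(\sum_{t=1}^{T}(\tfrac{\Delta_{i,x}^{t}}{\nu_{i,x}^{t}}+\tfrac{\Delta_{i,y}^{t}}{\nu_{i,y}^{t}}+\tfrac{\Delta_{i,z}^{t}}{\nu_{i,z}^{t}})\big)$; integrating over any measurable set $\mathcal{O}_{i}$ of observations then gives exactly the $\epsilon_{i}$-LDP inequality of Definition~\ref{D2} with $\epsilon_{i}=\sum_{t=1}^{T}(\tfrac{\Delta_{i,x}^{t}}{\nu_{i,x}^{t}}+\tfrac{\Delta_{i,y}^{t}}{\nu_{i,y}^{t}}+\tfrac{\Delta_{i,z}^{t}}{\nu_{i,z}^{t}})$. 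Since the argument is uniform in $T$, it covers $T\to\infty$ as well.

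The subtle point — the step I expect to need the most care — is the conditioning structure that makes the telescoping valid: one must argue that, given $\theta_{-i}^{t}$ and the history of agent $i$'s own transmitted messages up to time $t$, the pre-noise quantities $\boldsymbol{x}_{i}^{t},y_{i}^{t},z_{i}^{t}$ are measurable (deterministic) functions of the dataset alone, so that the only source of randomness in the $t$-th output is the current Laplace draw, and that this draw is independent of everything prior. This is what lets the joint density factor into per-iteration conditional densities whose ratios are individually controlled by the per-iteration sensitivities $\Delta_{i,x}^{t},\Delta_{i,y}^{t},\Delta_{i,z}^{t}$. This is essentially Lemma~2 of~\cite{huang} adapted to the present algorithm with three shared variables instead of one, so I would invoke that result directly, noting only that the presence of two local datasets $\mathcal{D}_{i,f}^{t}$ and $\mathcal{D}_{i,g}^{t}$ is already absorbed into the definition of adjacency and the three separate sensitivities, and the additive composition over the three independent noise channels follows from the independence asserted in the hypothesis.
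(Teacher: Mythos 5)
Your proposal is correct, and it reconstructs exactly the argument that the paper relies on implicitly: the paper does not prove this lemma at all, but imports it as Lemma~2 of~\cite{huang}, so there is no in-paper proof to diverge from. Your route — conditioning on the received information $\theta_{-i}$ so that the pre-noise states $\boldsymbol{x}_{i}^{t},y_{i}^{t},z_{i}^{t}$ are deterministic functions of the dataset (note that in Algorithm~\ref{algorithm1} agent $i$'s own updates use only its neighbors' noisy messages and its own un-noised state, so its own past noise never enters its state), then bounding each per-iteration, per-channel likelihood ratio by $\exp(\|\cdot\|_{1}/\nu)\le\exp(\Delta/\nu)$ for the coordinate-wise Laplace noise and composing multiplicatively over $t$ and over the three channels — is the standard Laplace-mechanism/sequential-composition argument underlying the cited result, and your adaptation to three shared variables and to the two local datasets (absorbed into the three sensitivities of Definition~\ref{D3}) is consistent with how the paper uses the lemma in Theorem~\ref{T4}.
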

For privacy analysis, we also need the following assumption, which is commonly used in DP design for server-assisted stochastic learning with an aggregative term~\cite{smooth2,smooth3}.
\begin{assumption}\label{A5}
The stochastic oracles 
$h(x,y;\varphi_{i})$, $l(x;\xi_{i})$, $\nabla h(x,y;\varphi_{i})$, and $\nabla l(x;\xi_{i})$ are $L_{h}$-, $L_{l}$-, $\bar{L}_{h}$-, and $\bar{L}_{l}$-Lipschitz continuous, respectively.
\end{assumption}

Without loss of generality, we consider adjacent datasets $\mathcal{D}_{i,f}^{t}$ (resp. $\mathcal{D}_{i,g}^{t}$) and $\mathcal{D}_{i,f}^{\prime t}$ (resp. $\mathcal{D}_{i,g}^{\prime t}$) that differ in the $k$th element. For the sake of clarity, the variables corresponding to $\mathcal{D}_{i,f}^{t}$ and $\mathcal{D}_{i,g}^{t}$ are denoted as $\mathbf{x}_{i}^{t}$, $y_{i}^{t}$, and $z_{i}^{t}$. Similarly, the variables corresponding to $\mathcal{D}_{i,f}^{\prime t}$ and $\mathcal{D}_{i,g}^{\prime t}$ are denoted as $\mathbf{x}_{i}^{\prime t}$, $y_{i}^{\prime t}$, and $z_{i}^{\prime t}$. 

\begin{theorem}\label{T4}
Under Assumptions~\ref{A2}-\ref{A5} and the conditions in Theorem~\ref{T1}, $\mathbf{x}_{i}^{t}$ (resp. $F(\mathbf{x}_{i}^{t})$ in the general convex case and $\nabla F(\mathbf{x}_{i}^{t})$ in the nonconvex case) in Algorithm~\ref{algorithm1} converges in mean-square to the optimal solution $\boldsymbol{x}^*$ to problem~\eqref{primal} (resp. in mean to $F(\boldsymbol{x}^*)$ and in mean-square to zero). Furthermore,

(i) for any $T\in\mathbb{N}^{+}$ (which includes the case of $T\rightarrow\infty$), agent $i$'s implementation of Algorithm~\ref{algorithm1} is locally differentially private with a cumulative privacy budget bounded by $\epsilon_{i}=\epsilon_{i,x}+\epsilon_{i,y}+\epsilon_{i,z}$, where $\epsilon_{i,x}$, $\epsilon_{i,y}$, and $\epsilon_{i,z}$ are given by $\epsilon_{i,x}\leq \sum_{t=1}^{T}\frac{\sqrt{2}C_{i,x}}{\sigma_{i,x}(t+1)^{1+v_{x}-v_{z}-\varsigma_{i,x}}}$, $\epsilon_{i,y}\leq\sum_{t=1}^{T}\frac{\sqrt{2}C_{i,y}}{\sigma_{i,y}(t+1)^{1+v_{y}-\varsigma_{i,y}}}$, and $\epsilon_{i,z}\leq\sum_{t=1}^{T}\frac{\sqrt{2}C_{i,z}}{\sigma_{i,z}(t+1)^{1+v_{z}-\varsigma_{i,z}}}$ with the positive constants $C_{i,x}$, $C_{i,y}$, and $C_{i,z}$ given in~\eqref{4T16},~\eqref{4T14}, and~\eqref{4T17}, respectively;

(ii) the cumulative privacy budget $\epsilon_{i}$ is finite even when the number of iterations $T$ tends to infinity.
\end{theorem}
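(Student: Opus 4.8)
The plan is to split the argument into two parts, matching the two claims of the theorem. The convergence statement is already delivered by Theorem~\ref{T1}, since the hypotheses here include the conditions of Theorem~\ref{T1}; the only work in the first part is to observe that the DP-noise rate conditions imposed in the various cases of Theorem~\ref{T1} are compatible with Assumption~\ref{A4}, so nothing new is needed. The substance is therefore the LDP part. I would begin from Lemma~\ref{LDPlemma1}: for a fixed agent $i$, its implementation is $\epsilon_i$-LDP with cumulative budget bounded by $\sum_{t=1}^{T}(\Delta_{i,x}^{t}/\nu_{i,x}^{t}+\Delta_{i,y}^{t}/\nu_{i,y}^{t}+\Delta_{i,z}^{t}/\nu_{i,z}^{t})$, where by Assumption~\ref{A4} the Laplace parameters are $\nu_{i,x}^{t}=\sigma_{i,x}/(\sqrt{2}(t+1)^{\varsigma_{i,x}})$, and analogously for $y,z$. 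Thus everything reduces to bounding the three sensitivities $\Delta_{i,x}^{t+1}$, $\Delta_{i,y}^{t+1}$, $\Delta_{i,z}^{t+1}$ as functions of $t$.

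The core step is the sensitivity bound, which I would prove by an induction on $t$ that simultaneously tracks the deviations $\|\boldsymbol{x}_i^{t}-\boldsymbol{x}_i^{\prime t}\|$, $\|y_i^{t}-y_i^{\prime t}\|$, $\|z_i^{t}-z_i^{\prime t}\|$ of the two trajectories run on adjacent datasets $\mathcal{D}_{i}^{t}$ and $\mathcal{D}_{i}^{\prime t}$ (differing only in the $k$th sample). The key observations are: (a) the two runs see identical received messages $\theta_{-i}^{t}$ (the adversary's view is what we perturb), so the only driving discrepancy enters through agent $i$'s own local gradient/function evaluations; (b) by Assumption~\ref{A5}, the stochastic oracles $l,\nabla l,\nabla h$ are Lipschitz in their data argument only through the averaged quantities $g_i^t$, $\nabla g_i^t$, $\nabla_x f_i^t$, $\nabla_y f_i^t$, each of which is a $\frac{1}{t+1}\sum_{k=0}^{t}$ empirical mean — so swapping a single sample changes each such empirical quantity by at most $\mathcal{O}(1/(t+1))$; and (c) the dynamic-average-consensus/robust-tracking updates in lines 4--6 are contractive in the consensus direction thanks to $\|I_m+W-\tfrac{1}{m}\mathbf 1_m\mathbf 1_m^\top\|<1$ (Assumption~\ref{A3}), and the projection $\Pi_\Omega$ is nonexpansive. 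Combining (a)--(c), one gets a coupled linear recursion of the form $\|\Delta^{t+1}\|\le \rho\|\Delta^{t}\|+\mathcal{O}(\lambda^t/(t+1))$, whose solution (using the decaying stepsizes $\lambda_x^t,\lambda_y^t,\lambda_z^t$) yields $\Delta_{i,x}^{t+1}=\mathcal{O}(\lambda_x^t/(t+1))=\mathcal{O}((t+1)^{-1-v_x})$ — but crucially, because $\tilde z_i^t$ is obtained by dividing the $z$-increment by $\lambda_z^t$, the relevant scaling for the $x$-update carries an extra $(t+1)^{v_z}$ factor, giving the stated $C_x/(t+1)^{1+v_x-v_z}$ form; similarly $\Delta_{i,y}^{t+1}=\mathcal{O}((t+1)^{-1-v_y})$ and $\Delta_{i,z}^{t+1}=\mathcal{O}((t+1)^{-1-v_z})$, where the constants $C_x,C_y,C_z$ are the explicit combinations of $L_h,L_l,\bar L_h,\bar L_l$, the Lipschitz constants of Assumption~\ref{A1}, the stepsize initials, and the spectral gap referenced in \eqref{4T16}, \eqref{4T14}, \eqref{4T17}. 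This is the diminishing-sensitivity phenomenon flagged around \eqref{4T121}.

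Given those sensitivity bounds, part (i) is immediate: $\Delta_{i,x}^{t}/\nu_{i,x}^{t}\le \sqrt{2}\,C_x/(\sigma_{i,x}(t+1)^{1+v_x-v_z-\varsigma_{i,x}})$ and the analogous expressions for $y,z$, and summing over $t$ from $1$ to $T$ gives the claimed $\epsilon_{i,x},\epsilon_{i,y},\epsilon_{i,z}$. For part (ii), I would invoke Assumption~\ref{A4}, which guarantees $\varsigma_{i,x}<v_x-v_z$, $\varsigma_{i,y}<v_y$, $\varsigma_{i,z}<v_z$; hence each summand is bounded by a constant times $(t+1)^{-1-\alpha}$ for some $\alpha>0$, and the $p$-series with exponent $>1$ converges, so $\lim_{T\to\infty}\epsilon_i<\infty$. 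I expect the main obstacle to be step (c)/(b) done carefully: the $x$-update feeds back on $\tilde y_i^t$ and $\tilde z_i^t$, which are themselves increments of the tracking variables, so the sensitivity recursions for $x,y,z$ are genuinely coupled and one must set up the joint induction with the right stepsize-weighted norms (and verify the division by $\lambda_y^t,\lambda_z^t$ does not destroy the decay) — handling this intertwined dynamics, rather than any single estimate, is where the real care is needed, and it is precisely the difficulty the introduction warns is absent in prior conventional-distributed-optimization DP work.
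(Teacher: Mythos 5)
Your proposal is correct and follows essentially the same route as the paper's proof: reduce the privacy claim to per-iteration sensitivities via Lemma~\ref{LDPlemma1}, derive coupled contraction-plus-$\mathcal{O}(\lambda^{t}/(t+1))$ recursions for $\Delta_{i,x}^{t},\Delta_{i,y}^{t},\Delta_{i,z}^{t}$ (yielding exactly the paper's decay exponents $1+v_{x}-v_{z}$, $1+v_{y}$, $1+v_{z}$, with the $(t+1)^{v_z}$ loss in the $x$-sensitivity coming from the division by $\lambda_{z}^{t}$), and then sum the resulting $p$-series using Assumption~\ref{A4} to get a finite budget. The only refinements worth noting are that in the paper the contraction of the sensitivity recursion comes from the self-weight bound $|1+w_{ii}|\leq 1-\bar{w}$ (because the neighbors' perturbed messages coincide under adjacent datasets), not directly from the consensus-direction norm $\|I_{m}+W-\tfrac{1}{m}\boldsymbol{1}_{m}\boldsymbol{1}_{m}^{\top}\|<1$, and that the single differing sample is controlled by uniform boundedness of the oracles on the compact set $\Omega$ (and boundedness of $z_{i}^{t}$) rather than Lipschitzness in the data argument.
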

\begin{proof}
The convergence results follow naturally from Theorem~\ref{T1}.

(i) To prove the statement on privacy, we first analyze the sensitivities of agent $i$'s implementation under Algorithm~\ref{algorithm1}. 

According to Definition~\ref{D3}, we have 
$\mathbf{x}_{j}^{t}+\boldsymbol{\vartheta}_{j}^{t}=\mathbf{x}_{j}^{\prime t}+\boldsymbol{\vartheta}_{j}^{\prime t}$,
$y_{j}^{t}+\chi_{j}^{t}=y_{j}^{\prime t}+\chi_{j}^{\prime t}$, and $z_{j}^{t}+\zeta_{j}^{t}=z_{j}^{\prime t}+\zeta_{j}^{\prime t}$ for all $t\in\mathbb{N}$ and $j\in{\mathcal{N}_{i}}$. Since we assume that only the $k$th data point is different between
$\mathcal{D}_{i,f}^{t}$ (resp. $\mathcal{D}_{i,g}^{t}$) and $\mathcal{D}_{i,f}^{\prime t}$ (resp. $\mathcal{D}_{i,g}^{\prime t}$), we have $\mathbf{x}_{i}^{p}=\mathbf{x}_{i}^{\prime p}$, $y_{i}^{p}=y_{i}^{\prime p}$, and $z_{i}^{p}=z_{i}^{\prime p}$ for all $p\leq k\leq t$. However, since the difference in loss functions kicks in at iteration $k$, i.e., $h(x,y;\varphi_{i}^{k})\neq h(x,y;\varphi_{i}^{\prime k})$ and $l(x;\xi_{i}^{k})\neq l(x;\xi_{i}^{\prime k})$, we have
$\mathbf{x}_{i}^{t}\neq \mathbf{x}_{i}^{\prime t}$, $y_{i}^{t}\neq y_{i}^{\prime t}$, and $z_{i}^{t}\neq z_{i}^{\prime t}$ for all $t>k$. Hence, for the dynamics of $y_{i}^{t}$ in Algorithm~\ref{algorithm1}, we have
\begin{equation}
\begin{aligned}
&\textstyle\|y_{i}^{t+1}-y_{i}^{\prime t+1}\|_{1}=\|(1+w_{ii})(y_{i}^{t}-y_{i}^{\prime t})\\
&\textstyle\quad+\lambda_{y}^{t}(g_{i}^{t}(R_{i}\mathbf{x}_{i}^{t})-g_{i}^{\prime t}(R_{i}\mathbf{x}_{i}^{\prime t}))\|_{1},\nonumber
\end{aligned}
\end{equation}
where we have used the definition $w_{ii}=-\sum_{j\in\mathcal{N}_{i}}w_{ij}$.

By using the relationship $\xi_{i}^{p}=\xi_{i}^{\prime p}$ for any $p\neq k$, the sensitivity $\Delta_{i,y}^{t+1}$ satisfies
\begin{equation}
\begin{aligned}
\textstyle\Delta_{i,y}^{t+1}&\textstyle\leq (1-|w_{ii}|)\Delta_{i,y}^{t}+\frac{\lambda_{y}^{t}}{t+1}\|l(R_{i}\mathbf{x}_{i}^{t};\xi_{i}^{k})-l(R_{i}\mathbf{x}_{i}^{\prime t};\xi_{i}^{\prime k})\|_{1}\nonumber\\
&\textstyle\quad+\frac{\sqrt{r}\lambda_{y}^{t}}{t+1}\sum_{p\neq k}^{t}\|l(R_{i}\mathbf{x}_{i}^{t};\xi_{i}^{p})-l(R_{i}\mathbf{x}_{i}^{\prime t};\xi_{i}^{p})\|_{2},\nonumber
\end{aligned}
\end{equation}
where we have used $\|x\|_{2}\leq \|x\|_{1}\leq \sqrt{r}\|x\|_{2}$ for any $x\in\mathbb{R}^{r}$.

Based on the fact $\|R_{i}\|=1$ and the $L_{l}$-Lipschitz continuity of $l(x;\xi_{i})$ from Assumption~\ref{A5}, we have
\begin{equation}
\textstyle\Delta_{i,y}^{t+1}\leq (1-|w_{ii}|)\Delta_{i,y}^{t}+L_{l}\sqrt{r}\lambda_{y}^{t}\Delta_{i,x}^{t}+\frac{2d_{i,l}\lambda_{y}^{t}}{t+1},\label{4T3}
\end{equation}
where we have used the fact that 
the optimization variable $\mathbf{x}_{i}^{t}$ is always constrained in the compact set $\Omega$, i.e., $\max_{\mathbf{x}_{i}\in\Omega}\{\|l(R_{i}\mathbf{x}_{i};\xi_{i})\|_{1}\}\leq d_{i,l}$ holds for some $d_{i,l}>0$.

Furthermore, we use the projection inequality from Lemma 1 in~\cite{projectionlemma} and the dynamics of $\mathbf{x}_{i}^{t}$ to obtain
\begin{flalign}
&\textstyle\|\mathbf{x}_{i}^{t+1}-\mathbf{x}_{i}^{\prime t+1}\|_{1}\leq (1-|w_{ii}|)\|\mathbf{x}_{i}^{t}-\mathbf{x}_{i}^{\prime t}\|_{1}\nonumber\\
&\textstyle+\frac{\lambda_{x}^{t}}{t+1}\sum_{p=0}^{t}\|\nabla_{x}h(R_{i}\mathbf{x}_{i}^{t},\tilde{y}_{i}^{t};\varphi_{i}^{p})-\nabla_{x}h(R_{i}\mathbf{x}_{i}^{\prime t},\tilde{y}_{i}^{\prime t};\varphi_{i}^{\prime p})\|_{1}\nonumber\\
&\textstyle+\frac{\lambda_{x}^{t}}{t+1}\sum_{p=0}^{t}\|\nabla l(R_{i}\mathbf{x}_{i}^{t};\xi_{i}^{p})\tilde{z}_{i}^{t}-\nabla l(R_{i}\mathbf{x}_{i}^{\prime t};\xi_{i}^{\prime p})\tilde{z}_{i}^{\prime t}\|_{1}.\label{4T5}
\end{flalign}
Since the relationship $\varphi_{i}^{p}=\varphi_{i}^{\prime p}$ holds for all $p\neq k$, the second term on the right hand side of~\eqref{4T5} can be decomposed as:
\begin{flalign}
&\textstyle\frac{\lambda_{x}^{t}}{t+1}\sum_{p=0}^{t}\|\nabla_{x}h(R_{i}\mathbf{x}_{i}^{t},\tilde{y}_{i}^{t};\varphi_{i}^{p})-\nabla_{x}h(R_{i}\mathbf{x}_{i}^{\prime t},\tilde{y}_{i}^{\prime t};\varphi_{i}^{\prime p})\|_{1}\nonumber\\
&\textstyle\leq\!\! \sqrt{n_{i}}\bar{L}_{h}\lambda_{x}^{t}\Delta_{i,x}^{t}\!+\!\frac{\sqrt{n_{i}}\bar{L}_{h}\lambda_{x}^{t}}{\lambda_{y}^{t}}(\Delta_{i,y}^{t+1}\!+\!\Delta_{i,y}^{t})\!+\!\frac{2\sqrt{n_{i}}L_{h}\lambda_{x}^{t}}{t+1},\label{4T6}
\end{flalign}
where in the derivation we have used the relationship $\|\tilde{y}_{i}^{t}-\tilde{y}_{i}^{\prime t}\|_{2}\leq \frac{1}{\lambda_{y}^{t}}(\|y_{i}^{t+1}-y_{i}^{\prime t+1}\|_{1}+\|y_{i}^{t}-y_{i}^{\prime t}\|_{1})$.

The third term on the right hand side of~\eqref{4T5} satisfies
\begin{equation}
\begin{aligned}
&\textstyle\frac{\lambda_{x}^{t}}{t+1}\sum_{p=0}^{t}\|\nabla l(R_{i}\mathbf{x}_{i}^{t};\xi_{i}^{p})\tilde{z}_{i}^{t}-\nabla l(R_{i}\mathbf{x}_{i}^{\prime t};\xi_{i}^{\prime p})\tilde{z}_{i}^{\prime t}\|_{1}\\
&\textstyle\leq \frac{\sqrt{n_{i}}\lambda_{x}^{t}}{t+1}\sum_{p=0}^{t}\|\nabla l(R_{i}\mathbf{x}_{i}^{t};\xi_{i}^{p})\|_{2}\|\tilde{z}_{i}^{t}-\tilde{z}_{i}^{\prime t}\|_{1}\\
&\textstyle\quad+\frac{\sqrt{n_{i}}\lambda_{x}^{t}}{t+1}\sum_{p\neq k}^{t}\|\nabla l(R_{i}\mathbf{x}_{i}^{t};\xi_{i}^{p})-\nabla l(R_{i}\mathbf{x}_{i}^{\prime t};\xi_{i}^{p})\|_{2}\|\tilde{z}_{i}^{\prime t}\|_{2}\\
&\textstyle\quad+\frac{\sqrt{n_{i}}\lambda_{x}^{t}}{t+1}\|\nabla l(R_{i}\mathbf{x}_{i}^{t};\xi_{i}^{k})-\nabla l(R_{i}\mathbf{x}_{i}^{\prime t};\xi_{i}^{\prime k})\|_{2}\|\tilde{z}_{i}^{\prime t}\|_{2}\\
&\textstyle\leq \frac{\sqrt{n_{i}}L_{l}\lambda_{x}^{t}}{\lambda_{z}^{t}}(\Delta_{i,z}^{t+1}+\Delta_{i,z}^{t})\!+\!\frac{\sqrt{n_{i}}\bar{L}_{l}d_{i,z}\lambda_{x}^{t}}{\lambda_{z}^{t}}\Delta_{i,x}^{t}\!+\!\frac{2\sqrt{n_{i}}L_{l}d_{i,z}\lambda_{x}^{t}}{\lambda_{z}^{t}(t+1)},\label{xxx}
\end{aligned}
\end{equation}
where we have used the fact that the convergence of Algorithm~\ref{algorithm1} ensures $\max_{t>0}\{\|z_{i}^{t}\|_{2}\}\leq d_{i,z}$ for some $d_{i,z}>0$.

Substituting~\eqref{4T6} and~\eqref{xxx} into~\eqref{4T5}, we arrive at
\begin{flalign}
&\textstyle\Delta_{i,x}^{t+1}\leq \left(1-|w_{ii}|+\sqrt{n_{i}}\bar{L}_{h}\lambda_{x}^{t}+\frac{\sqrt{n_{i}}\bar{L}_{l}d_{i,z}\lambda_{x}^{t}}{\lambda_{z}^{t}}\right)\Delta_{i,x}^{t}\nonumber\\
&\textstyle\quad+\frac{\sqrt{n_{i}}\bar{L}_{h}\lambda_{x}^{t}}{\lambda_{y}^{t}}(\Delta_{i,y}^{t+1}+\Delta_{i,y}^{t})+\frac{\sqrt{n_{i}}L_{l}\lambda_{x}^{t}}{\lambda_{z}^{t}}(\Delta_{i,z}^{t+1}+\Delta_{i,z}^{t})\nonumber\\
&\textstyle\quad+\frac{2\sqrt{n_{i}}L_{h}\lambda_{x}^{t}}{t+1}+\frac{2\sqrt{n_{i}}d_{i,z}L_{l}\lambda_{x}^{t}}{\lambda_{z}^{t}(t+1)}.\label{4T9}
\end{flalign}

Next, we characterize the sensitivity $\Delta_{i,z}^{t}$. According to the dynamics of $z_{i}^{t}$, we have
\begin{flalign}
&\|z_{i}^{t+1}-z_{i}^{\prime t+1}\|_{1}\leq(1-|w_{ii}|)\|z_{i}^{t}-z_{i}^{\prime t}\|_{1}\nonumber\\
&\textstyle+\frac{\sqrt{r}\lambda_{z}^{t}}{t+1}\sum_{p\neq k}^{t}\|\nabla_{y}h(R_{i}\mathbf{x}_{i}^{t},\tilde{y}_{i}^{t};\varphi_{i}^{p})\!-\!\nabla_{y}h(R_{i}\mathbf{x}_{i}^{\prime t},\tilde{y}_{i}^{\prime t};\varphi_{i}^{p})\|_{2}\nonumber\\
&\textstyle+\frac{\sqrt{r}\lambda_{z}^{t}}{t+1}\|\nabla_{y}h(R_{i}\mathbf{x}_{i}^{t},\tilde{y}_{i}^{t};\varphi_{i}^{k})\!-\!\nabla_{y}h(R_{i}\mathbf{x}_{i}^{\prime t},\tilde{y}_{i}^{\prime t};\varphi_{i}^{\prime k})\|_{2}.\nonumber
\end{flalign}
Using an argument similar to the derivation of~\eqref{4T6} yields
\begin{equation}
\begin{aligned}
&\Delta_{i,z}^{t+1}\textstyle\leq (1-|w_{ii}|)\Delta_{i,z}^{t}+\sqrt{r}\bar{L}_{h}\lambda_{z}^{t}\Delta_{i,x}^{t}\\
&\textstyle\quad+\frac{\sqrt{r}\bar{L}_{h}\lambda_{z}^{t}}{\lambda_{y}^{t}}(\Delta_{i,y}^{t+1}+\Delta_{i,y}^{t})+\frac{2\sqrt{r}L_{h}\lambda_{z}^{t}}{t+1}.\label{4T11}
\end{aligned}
\end{equation}

Summing both sides of~\eqref{4T3},~\eqref{4T9}, and~\eqref{4T11}, we obtain
\begin{flalign}
&\textstyle\Delta_{i,x}^{t+1}\!+\!\left(1\!-\!\frac{\sqrt{n_{i}}\bar{L}_{h}\lambda_{x}^{t}}{\lambda_{y}^{t}}\!-\!\frac{\sqrt{r}\bar{L}_{h}\lambda_{z}^{t}}{\lambda_{y}^{t}}\right)\Delta_{i,y}^{t+1}\!+\!\left(1\!-\!\frac{\sqrt{n_{i}}L_{l}\lambda_{x}^{t}}{\lambda_{z}^{t}}\right)\Delta_{i,z}^{t+1}\nonumber\\
&\textstyle\leq \left(1\!-\!\bar{w}\!+\!\sqrt{n_{i}}\bar{L}_{h}\lambda_{x}^{t}+\frac{\sqrt{n_{i}}\bar{L}_{l}d_{i,z}\lambda_{x}^{t}}{\lambda_{z}^{t}}\!+\!\sqrt{r}(L_{l}\lambda_{y}^{t}\!+\!\bar{L}_{h}\lambda_{z}^{t})\right)\Delta_{i,x}^{t}\nonumber\\
&\textstyle+\left(1\!-\!\bar{w}\!+\!\frac{\sqrt{n_{i}}\bar{L}_{h}\lambda_{x}^{t}}{\lambda_{y}^{t}}\!+\!\frac{\sqrt{r}\bar{L}_{h}\lambda_{z}^{t}}{\lambda_{y}^{t}}\right)\!\!\Delta_{i,y}^{t}\!+\!\left(1\!-\!\bar{w}\!+\!\frac{\sqrt{n_{i}}L_{l}\lambda_{x}^{t}}{\lambda_{z}^{t}}\right)\!\Delta_{i,z}^{t}\nonumber\\
&\textstyle+\frac{2d_{i,l}\lambda_{y}^{t}}{t+1}+\frac{2\sqrt{n_{i}}L_{h}\lambda_{x}^{t}}{t+1}+\frac{2\sqrt{n_{i}}d_{i,z}L_{l}\lambda_{x}^{t}}{\lambda_{z}^{t}(t+1)}+\frac{2\sqrt{r}L_{h}\lambda_{z}^{t}}{t+1}.
\end{flalign}
Since the decaying rates of stepsizes $\lambda_{x}^{t}$, $\lambda_{y}^{t}$, and $\lambda_{z}^{t}$ satisfy $1\!>\!v_{x}\!>\!v_{z}\!>\!v_{y}\!>\!0$, we can choose proper initial stepsizes such that the following inequality always holds:
\begin{flalign}
\textstyle\Delta_{i,x}^{t+1}+\Delta_{i,y}^{t+1}+\Delta_{i,z}^{t+1}&\textstyle\leq \big(1-\frac{|w_{ii}|}{2}\big)\left(\Delta_{i,x}^{t}+\Delta_{i,y}^{t}+\Delta_{i,z}^{t}\right)\nonumber\\
&\textstyle\quad+\frac{C_{i,0}}{(t+1)^{\min\{1+v_x-v_{z},1+v_{y}\}}},\label{tt1}
\end{flalign}
with $C_{i,0}\!=\!2(d_{i,l}\lambda_{y}^{0}+\sqrt{n_{i}}\lambda_{x}^{0}(L_{h}+d_{i,z}L_{l}(\lambda_{z}^{0})^{-1})+\sqrt{r}L_{h}\lambda_{z}^{0})$.

By applying Lemma 11 in~\cite{zijiGT} to~\eqref{tt1} ({\color{blue}here, in~\eqref{tt1}, $\Delta_{i,x}^{t}+\Delta_{i,y}^{t}+\Delta_{i,z}^{t}$ corresponds to $\psi_{t}$ in Lemma 11 in~\cite{zijiGT}}) and using the relation $\Delta_{i,x}^{0}\!=\!\Delta_{i,y}^{0}\!=\!\Delta_{i,z}^{0}\!=\!0$, we have
\begin{equation}
\textstyle\Delta_{i,x}^{t}+\Delta_{i,y}^{t}+\Delta_{i,z}^{t}\leq \frac{C_{i,1}}{(t+1)^{\min\{1+v_x-v_{z},1+v_{y}\}}},\label{4T121}
\vspace{-0.2em}
\end{equation}
with $C_{i,1}\!=\!\frac{4C_{i,0}}{|w_{ii}|}(\frac{4\min\{1+v_{x}-v_{z},1+v_{y}\}}{e\ln(\frac{4}{4-|w_{ii}|})})^{\min\{1+v_{x}-v_{z},1+v_{y}\}}$.

Substituting $\Delta_{i,x}^{t}\leq C_{i,1}(t+1)^{-\min\{1+v_x-v_{z},1+v_{y}\}}$ into~\eqref{4T3} and using Lemma 11 in~\cite{zijiGT}, we obtain
\begin{equation}
\textstyle \Delta_{i,y}^{t}\leq \frac{C_{i,y}}{(t+1)^{1+v_{y}}}~\text{with}~{\color{blue}C_{i,y}=\frac{8C_{i,2}}{|w_{ii}|}(\frac{1+v_{y}}{e\ln(\frac{2}{2-|w_{ii}|})})^{1+v_{y}}},\label{4T14}
\vspace{-0.2em} 
\end{equation}
where $C_{i,2}=(C_{i,1}L_{l}\sqrt{r}+2d_{i,l})\lambda_{y}^{0}$ and $C_{i,1}$ is given in~\eqref{4T121}.

Substituting $\Delta_{i,z}^{t}\!\leq\! C_{i,1}(t+1)^{-\min\{1+v_x-v_{z},1+v_{y}\}}$ and \eqref{4T14} into~\eqref{4T9}, we have
\begin{equation}
\textstyle\Delta_{i,x}^{t+1}\leq\big(1-\frac{|w_{ii}|}{2}\big)\Delta_{i,x}^{t}+\frac{C_{i,3}}{(t+1)^{1+v_{x}-v_{z}}},\label{tt2}
\vspace{-0.2em}
\end{equation}
with $C_{i,3}=\frac{2C_{i,y}\sqrt{n_{i}}\bar{L}_{h}\lambda_{x}^{0}}{\lambda_{y}^{0}}+\frac{2(C_{i,1}+d_{i,z})\sqrt{n_{i}}L_{l}\lambda_{x}^{0}}{\lambda_{z}^{0}}+2\sqrt{n_{i}}L_{h}\lambda_{x}^{0}$. 

By applying Lemma 11 in~\cite{zijiGT} to~\eqref{tt2}, we obtain
\begin{equation}
\vspace{-0.2em}
\textstyle\Delta_{i,x}^{t}\!\leq\!\! \frac{C_{i,x}}{(t+1)^{1+v_{x}-v_{z}}}~\text{with}~{\color{blue}C_{i,x}\!=\!\frac{16C_{i,3}}{|w_{ii}|}\!(\frac{1+v_{x}-v_{z}}{e\ln(\frac{4}{4-|w_{ii}|})}\!)^{1+v_{x}-v_{z}}}\!.\label{4T16}
\vspace{-0.2em}
\end{equation}

Similarly, substituting~\eqref{4T14} and~\eqref{4T16} into~\eqref{4T11} and using Lemma 11 in~\cite{zijiGT}, we have
\begin{equation}
\vspace{-0.2em}
\textstyle \Delta_{i,z}^{t}\leq \frac{C_{i,z}}{(t+1)^{1+v_{z}}}~\text{with}~{\color{blue}C_{i,z}=\frac{8C_{i,4}}{|w_{ii}|}(\frac{1+v_{z}}{e\ln(\frac{2}{2-|w_{ii}|})})^{1+v_{z}}},\label{4T17}
\end{equation}
with $C_{i,4}=C_{i,x}\sqrt{r}\bar{L}_{h}\lambda_{z}^{0}+2C_{i,y}\sqrt{r}\bar{L}_{h}\lambda_{z}^{0}(\lambda_{y}^{0})^{-1}+2\sqrt{r}L_{h}\lambda_{z}^{0}$.

Combining~\eqref{4T14},~\eqref{4T16}, and~\eqref{4T17} with Lemma~\ref{LDPlemma1}, we arrive at
\begin{flalign}
&\textstyle\sum_{t=1}^{T}\left(\frac{\Delta_{i,x}^{t}}{\nu_{i,x}}+\frac{\Delta_{i,y}^{t}}{\nu_{i,y}}+\frac{\Delta_{i,z}^{t}}{\nu_{i,z}}\right)\leq \sum_{t=1}^{T}\left(\frac{\sqrt{2}C_{i,x}}{\sigma_{i,x}(t+1)^{1+v_{x}-v_{z}-\varsigma_{i,x}}}\right.\nonumber\\
&\left.\textstyle\quad+\frac{\sqrt{2}C_{i,y}}{\sigma_{i,y}(t+1)^{1+v_{y}-\varsigma_{i,y}}}+\frac{\sqrt{2}C_{i,z}}{\sigma_{i,z}(t+1)^{1+v_{z}-\varsigma_{i,z}}}\right),\label{4T19}
\end{flalign}
where $\nu_{i,x}$, $\nu_{i,y}$, and $\nu_{i,z}$ represent Laplace-noise variances, which are given in Assumption~\ref{A4}.

(ii) Inequality~\eqref{4T19} implies that the cumulative privacy budget $\epsilon_{i}$ is finite even when $T\rightarrow\infty$ since $v_{x}-v_{z}>\varsigma_{i,x}$, $v_{y}>\varsigma_{i,y}$, and $v_{z}>\varsigma_{i,z}$ always hold based on Assumption~\ref{A4}.
\vspace{-0.5em}
\end{proof}
{\color{blue}Note that for any given $\epsilon_{i}>0$, the required initial noise of agent $i$ depends on its weighted degree $|w_{ii}|\!=\sum_{j\in\mathcal{N}_{i}}w_{ij}$. Specifically, according to our analysis in~\eqref{TC2}, for any given $\epsilon_{i}\!>\!0$, the initial DP-noises of agent $i$ can be chosen as $\sigma_{i,x}\!=\!\frac{3\sqrt{2}C_{x}}{(v_{x}-v_{z}-\varsigma_{i,x})\epsilon_{i}}$, $\sigma_{i,y}\!=\!\frac{3\sqrt{2}C_{y}}{(v_{y}-\varsigma_{i,y})\epsilon_{i}}$,~and $\sigma_{i,z}\!=\!\frac{3\sqrt{2}C_{z}}{(v_{z}-\varsigma_{i,z})\epsilon_{i}}$, where $v_{x}$, $v_{y}$, $v_{z}$, $\varsigma_{i,x}$, $\varsigma_{i,y}$, and $\varsigma_{i,z}$ are preset parameters based on Corollary~\ref{values}, and $C_{i,x}$, $C_{i,y}$, and $C_{i,z}$ are given in~\eqref{4T16},~\eqref{4T14}, and~\eqref{4T17}, respectively. According to~\eqref{4T16},~\eqref{4T14}, and~\eqref{4T17}, the constants $C_{i,x}$, $C_{i,y}$, and $C_{i,z}$ are inversely proportional to the weighted degree $|w_{ii}|$. Consequently, a smaller weighted degree implies that agent $i$ requires larger initial noise to ensure a desired cumulative privacy budget $\epsilon_{i}$.}

We would like to point out that although ensuring accurate convergence, the achievement of rigorous $\epsilon_{i}$-LDP of Algorithm~\ref{algorithm1} does pay a price in convergence rates. We quantify the tradeoff in the following corollary:
\begin{corollary}\label{tradeoff}
For any given cumulative privacy budget $\epsilon_{i}>0$ and $i\in[m]$, the convergence rates of Algorithm~\ref{algorithm1} are $\mathcal{O}\left(\frac{T^{-\beta}}{\min_{i\in[m]}\{\epsilon_{i}^2\}}\right)$ for strongly convex $F(\boldsymbol{x})$ and $\mathcal{O}\left(\frac{T^{-(1-v_{x})}}{\min_{i\in[m]}\{\epsilon_{i}^2\}}\right)$ for general convex or nonconvex $F(\boldsymbol{x})$.
\end{corollary}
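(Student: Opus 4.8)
The plan is to combine the privacy bound of Theorem~\ref{T4} with a quantitative version of the convergence bound of Theorem~\ref{T1} in which the dependence on the DP-noise magnitudes $\sigma_{i,x},\sigma_{i,y},\sigma_{i,z}$ is made explicit. First I would revisit Appendix~B and track how the injected Laplace noises enter the Lyapunov/error recursions: each DP noise affects a shared variable only through its variance, which by Assumption~\ref{A4} equals $\sigma_{i,x}^2(t+1)^{-2\varsigma_{i,x}}$ on the $\boldsymbol{x}$-channel (and analogously for the $y$- and $z$-channels), so after the consensus contraction these terms contribute additively and linearly in $\sigma^2$. Writing $S:=\sum_{i=1}^{m}(\sigma_{i,x}^2+\sigma_{i,y}^2+\sigma_{i,z}^2)$ and propagating these contributions through the same estimates used to prove Theorem~\ref{T1}, one gets that the right-hand sides of~\eqref{Theorem1results1}--\eqref{Theorem1results3} can be written as $(A+BS)T^{-\beta}$ and $(A+BS)T^{-(1-v_x)}$, respectively, where $A,B>0$ depend only on the problem constants, the stepsize rates, and the fixed gradient-noise variances in Assumption~\ref{A2}, but not on $T$ or on the $\sigma_{i,\cdot}$.

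Second, I would invoke Theorem~\ref{T4}(i). Because $\varsigma_{i,x}<v_x-v_z$, $\varsigma_{i,y}<v_y$, and $\varsigma_{i,z}<v_z$ (Assumption~\ref{A4}), the three series bounding $\epsilon_{i,x},\epsilon_{i,y},\epsilon_{i,z}$ have exponents strictly larger than one and hence converge, yielding $\epsilon_{i,x}\le\bar C_x/\sigma_{i,x}$, $\epsilon_{i,y}\le\bar C_y/\sigma_{i,y}$, and $\epsilon_{i,z}\le\bar C_z/\sigma_{i,z}$ for $T$-independent constants $\bar C_x,\bar C_y,\bar C_z$. Hence, for a prescribed budget $\epsilon_i>0$, taking $\sigma_{i,x}=3\bar C_x/\epsilon_i$, $\sigma_{i,y}=3\bar C_y/\epsilon_i$, and $\sigma_{i,z}=3\bar C_z/\epsilon_i$ gives $\epsilon_{i,x}+\epsilon_{i,y}+\epsilon_{i,z}\le\epsilon_i$, so Algorithm~\ref{algorithm1} is $\epsilon_i$-LDP for every agent $i$ over any horizon $T$ (including $T\to\infty$).

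Third, I would substitute this choice into the bound from the first step. With $\sigma_{i,x},\sigma_{i,y},\sigma_{i,z}=\Theta(1/\epsilon_i)$ we have $S=\Theta\bigl(\sum_{i=1}^{m}\epsilon_i^{-2}\bigr)=\mathcal{O}\bigl(1/\min_{i\in[m]}\{\epsilon_i^2\}\bigr)$, so $A+BS=\mathcal{O}\bigl(1/\min_{i\in[m]}\{\epsilon_i^2\}\bigr)$. Therefore $\mathbb{E}[\|\boldsymbol{x}^{T}-\boldsymbol{x}^*\|^2]=\mathcal{O}\bigl(T^{-\beta}/\min_{i\in[m]}\{\epsilon_i^2\}\bigr)$ for strongly convex $F(\boldsymbol{x})$, and $\textstyle\frac{1}{T+1}\sum_{t=0}^{T}\mathbb{E}[F(\boldsymbol{x}^t)-F(\boldsymbol{x}^*)]$ and $\textstyle\frac{1}{T+1}\sum_{t=0}^{T}\mathbb{E}[\|\nabla F(\boldsymbol{x}^t)\|^2]$ are both $\mathcal{O}\bigl(T^{-(1-v_x)}/\min_{i\in[m]}\{\epsilon_i^2\}\bigr)$ for general convex / nonconvex $F(\boldsymbol{x})$, which is exactly the assertion of the corollary.

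The main obstacle is the first step: verifying that in Appendix~B the DP-noise variances really enter only linearly and additively, with no cross term multiplying $\sigma_{i,\cdot}^2$ by a factor that itself depends on the $\sigma$'s. The delicate point is that several a priori bounds used in the analysis (e.g.\ the uniform bounds $d_z$ on $\|z_i^t\|$ and $d_l$ appearing already in the privacy proof, and their analogues in Appendix~B) are obtained from convergence and thus depend on the noise levels; I would argue that, since the DP-noise variances are summable in $t$, these bounds stay finite with at most polynomial dependence on the $\sigma_{i,\cdot}$, so they only inflate the $T$-independent constant $B$ and leave the rates $T^{-\beta}$ and $T^{-(1-v_x)}$ intact. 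A minor secondary point is that the equal three-way split of $\epsilon_i$ can be replaced by any fixed split, which changes only $\bar C_x,\bar C_y,\bar C_z$.
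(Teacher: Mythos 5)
Your proposal follows essentially the same route as the paper's Appendix C: the paper also sums the sensitivity-over-noise series from~\eqref{4T19} (via an integral comparison, yielding~\eqref{TC2}) to get $\epsilon_{i}\lesssim \bar{C}/\sigma_{i,\cdot}$, inverts it with the same three-way split $\sigma_{i,\cdot}\propto 1/\epsilon_{i}$, and then notes that the constants in~\eqref{strongconvexresult},~\eqref{2tApp15}, and~\eqref{nonconvexresult} (namely $C_{x}$, $c_{3}c_{4}$, $c_{6}+c_{7}$, $c_{9}+c_{10}$) grow with the DP-noise variances, which are proportional to $1/\epsilon_{i}^{2}$. On the obstacle you flag: the paper does not establish the exact affine form $A+BS$ either — it only asserts that these constants are ``positively correlated'' with $\sigma_{x}^{2},\sigma_{y}^{2},\sigma_{z}^{2}$, and in fact its Appendix-B constants contain cross products of noise levels (e.g., the $c_{x2}(\sigma_{x}^{t})^{2}$ term in Lemma~\ref{AppLemma1}, which carries $\sigma_{x}^{2}\sigma_{z}^{2}$, and $\sqrt{d_{1}d_{2}}$-type factors) — so your step 1 as stated is stronger than what the paper verifies, and your proof reaches the corollary at the same (not lower) level of rigor as the paper's own argument.
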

\begin{proof}
See Appendix D.
\end{proof}
From Corollary~\ref{tradeoff}, it is clear that a higher level of local differential privacy, i.e., a smaller cumulative privacy budget $\epsilon_{i}$, leads to a reduced convergence rate.
\begin{figure*}
\centering
\subfigure[Comparison of different algorithms]{\label{mnista}
\includegraphics[width=0.24\linewidth]{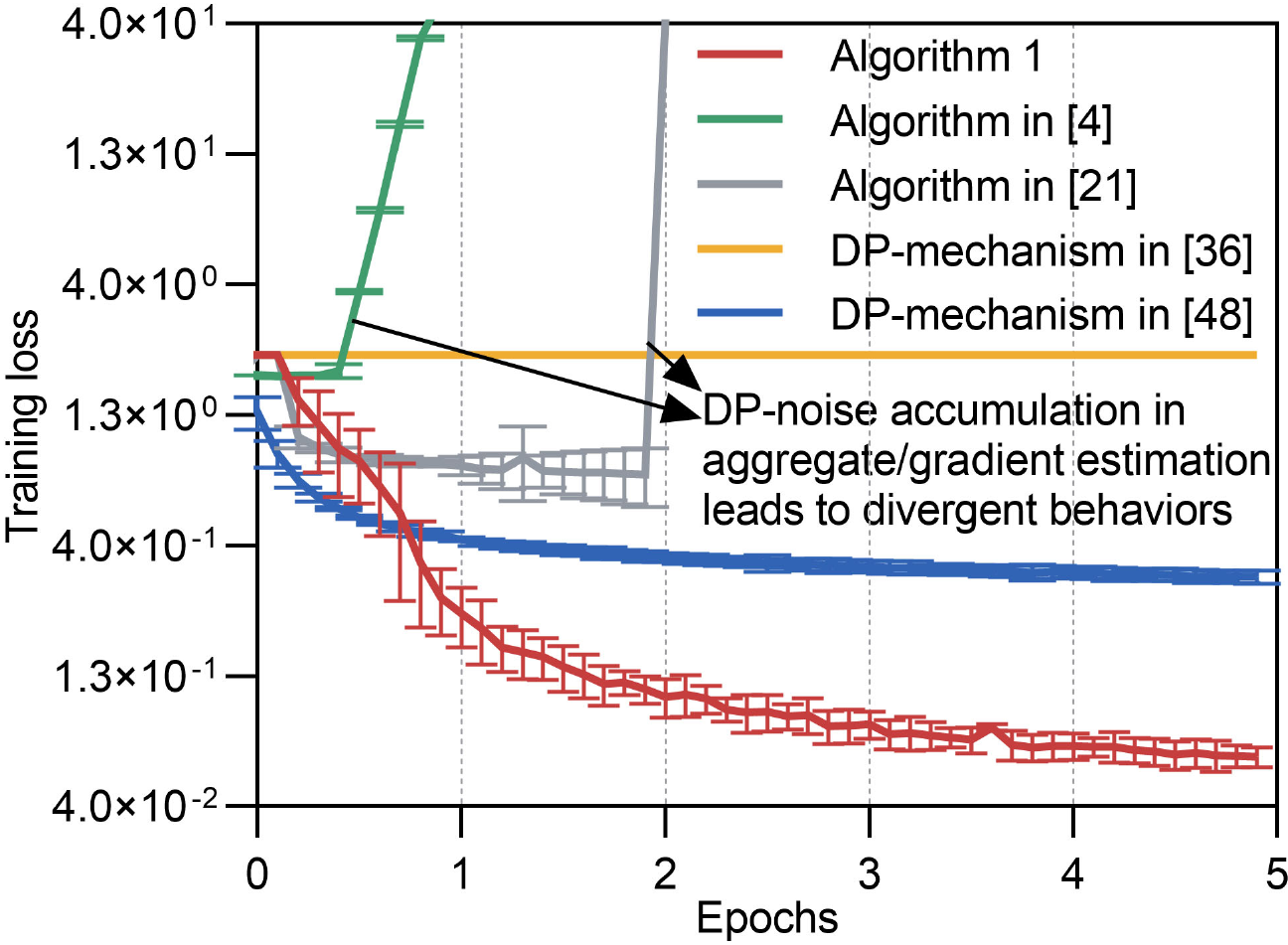}}\!
\subfigure[Cumulative privacy budgets]{\label{mnistb}
\includegraphics[width=0.23\linewidth]{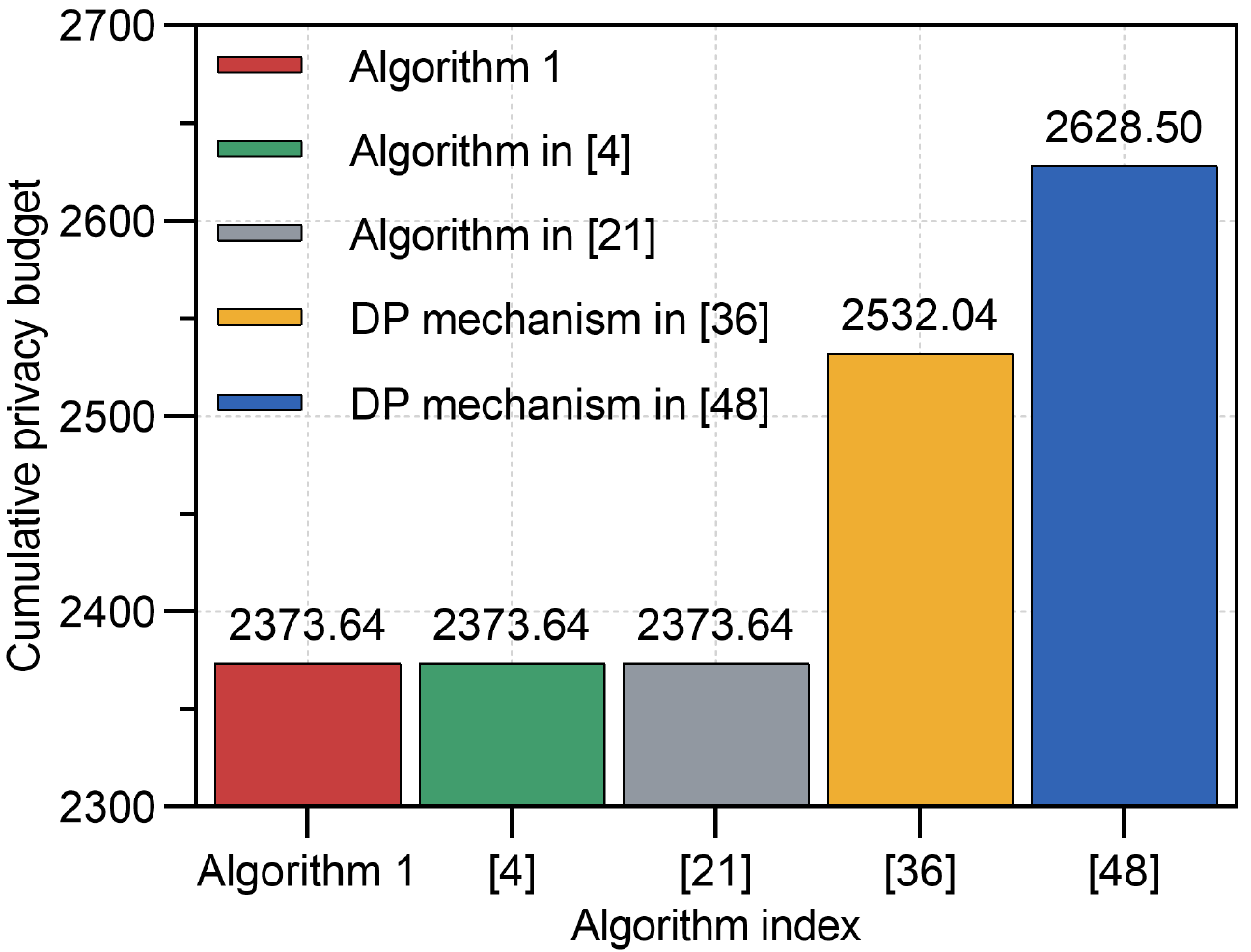}}\!
\subfigure[{\color{blue}Comparison under different $\delta$}]{\label{mnistc}
\includegraphics[width=0.24\linewidth]{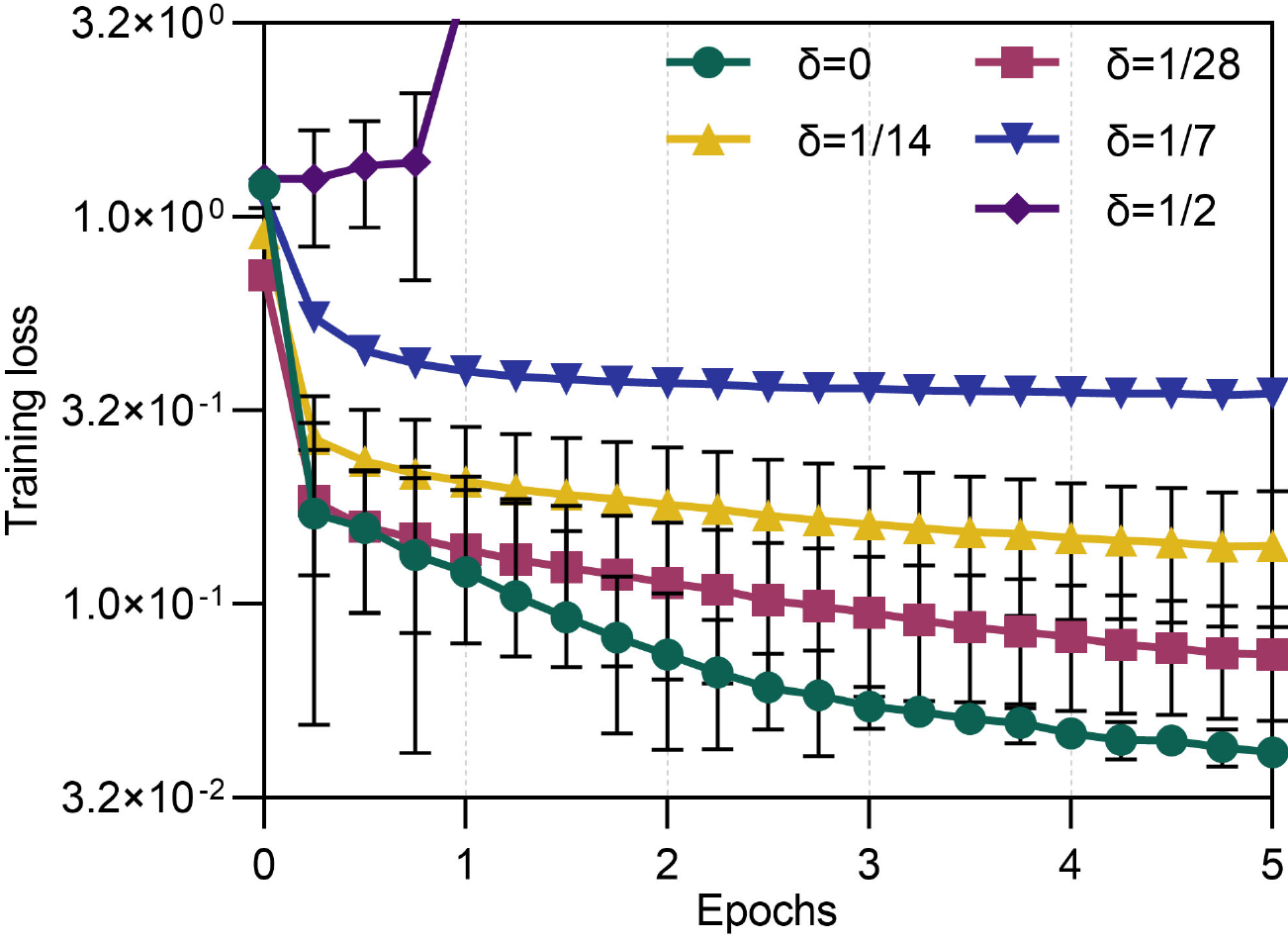}}\!
\subfigure[{\color{blue}Comparison under different $\lambda$}]{\label{mnistd}
\includegraphics[width=0.24\linewidth]{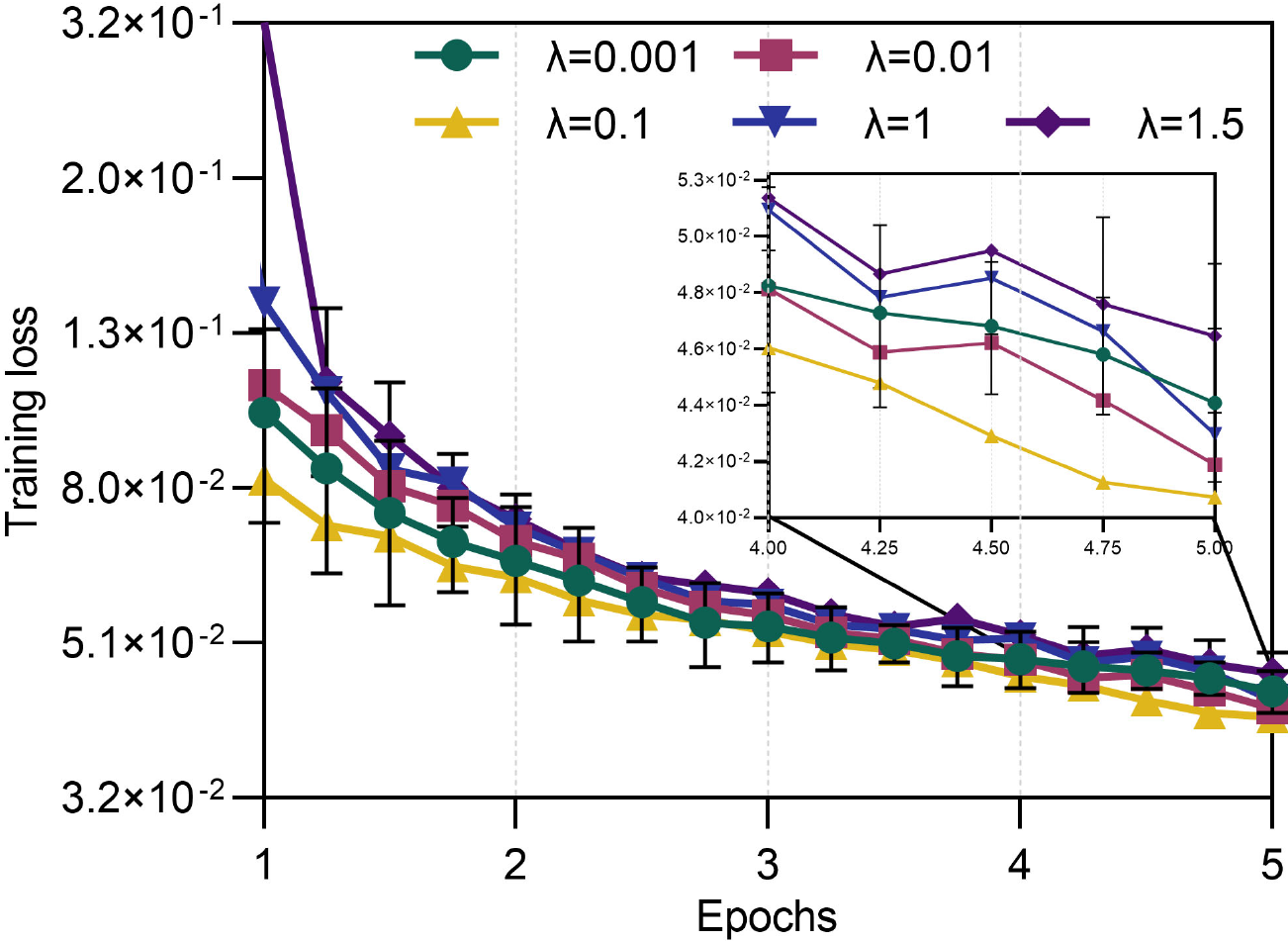}}
\caption{{\color{blue}Experimental results on CNN training using the ``MNIST" dataset.} (a) and (b) Training loss and cumulative privacy budget comparison of Algorithm~\ref{algorithm1} with existing distributed aggregative optimization algorithms in~\cite{Carnevale1} and~\cite{lixiuxian2} under the same cumulative privacy budget, and existing DP approaches in~\cite{huang}  and~\cite{zhangjifeng1} for distributed optimization. {\color{blue}(c) Training losses of Algorithm~\ref{algorithm1} under different $\delta$ ($\delta$ is given in Corollary~\ref{values}). (d) Training losses of Algorithm~\ref{algorithm1} under different $\lambda$ ($\lambda$ is defined in~\eqref{experimentloss}).} The error bars represent the standard deviation.}
\label{mnist}
\vspace{-0.8em}
\end{figure*}
\section{Numerical Experiments}\label{experiment}
\subsection{Experimental setups for personalized machine learning}
We considered the applications of Algorithm~\ref{algorithm1} in distributed personalized machine learning using the ``MNIST" and ``CIFAR-10" datasets, respectively. Following~\cite{personalized2,personalized3}, the personalized machine learning problem can be formulated as follows:
\begin{flalign}
	\vspace{-0.2em}
	\min_{\boldsymbol{x}\in\Omega}~&\textstyle \sum_{i=1}^{m}F_{i}^{t}(x_{i},G^{t}(\boldsymbol{x})),~G^{t}(\boldsymbol{x})=\frac{1}{m}\sum_{i=1}^{m}G_{i}^{t}(x_{i}),\nonumber\\
	\text{where}~&\textstyle ~~F_{i}^{t}(x_{i},G^{t}(\boldsymbol{x}))=\frac{1}{|\mathcal{D}_{i}^{t}|}\sum_{j=1}^{|\mathcal{D}_{i}^{t}|}h(x_{i},G^{t}(\boldsymbol{x});\xi_{j}),\nonumber\\
	&\textstyle h(x_{i},G^{t}(\boldsymbol{x});\xi_{j})=\mathcal{L}(x_{i};\xi_{j})\!+\!\lambda\|\mathcal{L}(x_{i};\xi_{j})\!-\!G^{t}(\boldsymbol{x})\|^2,\nonumber\\
	&\textstyle\quad\quad\quad~~ G_{i}^{t}(x_{i})=\frac{1}{|\mathcal{D}_{i}^{t}|}\sum_{j=1}^{|\mathcal{D}_{i}^{t}|}\mathcal{L}(x_{i};\xi_{j}).\label{experimentloss}
	\vspace{-0.2em}
\end{flalign}
Here, $\lambda\geq 0$ is a penalty parameter, $x_{i}\in \Omega_{i}$ is the local model of agent $i$, $\boldsymbol{x}=\col(x_{1},\cdots,x_{m})\in\Omega$ denotes the collection of local models from all agents, and $\mathcal{L}(\cdot)$ denotes the cross entropy loss. Moreover, $\mathcal{D}_{i}^{t}$ represents the training dataset available to agent $i$ at iteration $t$ with $|\mathcal{D}_{i}^{t}|$ denoting its size, and 
the data point $\xi_{j}$ is randomly sampled from $\mathcal{D}_{i}^{t}$. The goal of personalized machine learning is to obtain a collection of optimal local models $\boldsymbol{x}^*=\col(x_{1}^{*},\cdots,x_{m}^{*})$, where each $x_i^*$ is optimized to balance fidelity to agent $i$'s local dataset and the aggregation of information from all agents~\cite{personalized2,personalized3}.

According to~\eqref{experimentloss}, it can be seen that when $\lambda=0$, problem~\eqref{experimentloss} reduces to $\min_{\boldsymbol{x}\in\Omega} \sum_{i=1}^{m}F_{i}^{t}(x_{i})$ with $F_{i}^{t}(x_{i})=\frac{1}{|\mathcal{D}_{i}^{t}|}\sum_{j=1}^{|\mathcal{D}_{i}^{t}|}\mathcal{L}(x_{i};\xi_{j})$. In this case, each agent $i$ independently learns an optimal local model $x_{i}^{*}$ based solely on its own dataset $\mathcal{D}_{i}$. However, the local dataset $\mathcal{D}_{i}$ is typically not rich enough for the resulting model to be useful~\cite{personalized2}. In order to learn a better model, it is necessary to take into account information from other agents. This is achieved by introducing a penalty term $\lambda\|\cdot\|$ in~\eqref{experimentloss}, which promotes collaboration and enhances the consistency of local models across agents~\cite{personalized2}.

In each experiment, we conducted two types of comparisons: 1) we compared our algorithm with the state-of-the-art distributed aggregative optimization algorithms in~\cite{Carnevale1} and~\cite{lixiuxian2} under the same cumulative privacy budget; and 2) to compare with existing DP approaches for distributed optimization, we also implemented the DP approach PDOP in~\cite{huang} (which uses geometrically
decreasing stepsizes and DP-noise variances to ensure a finite cumulative
privacy budget) and the DP approach in~\cite{zhangjifeng1} (which uses time-varying sample sizes to ensure a finite cumulative privacy budget) within our algorithmic framework. {\color{blue}Note that the first type of comparison evaluates the performance of our novel distributed aggregative optimization algorithm structure, whereas the second type of comparison assesses the performance of our proposed LDP mechanism.} For both ``MNIST" and ``CIFAR-10" datasets, we considered heterogeneous data distributions, where $60\%$ of the data from the $i$th and the $(2i)$th classes was assigned to agent $i$, while the remaining $40\%$ was evenly distributed among the other agents. {\color{blue}The parameter $\lambda$ in~\eqref{experimentloss} was set to $\lambda\!=\!0.5$ for the ``MNIST" experiment and to $\lambda\!=\!0.1$ for the ``CIFAR-10" experiment.} The projection set in Algorithm~\ref{algorithm1} and the comparison algorithms was set to $\Omega_{i}=\{x\in\mathbb{R}^{n}|10^{-6}\!\leq\! x_{j}\!\leq\! 10^{6}, \forall j\!=\!1,\cdots,n\}$ (e.g., with $n=784$ in the ``MNIST" experiment). The interaction
pattern contains $5$ agents connected in a circle, where each
agent can only communicate with its two immediate neighbors. For the weight matrix $W$, we set $w_{ij}\!=\!0.3$ if agents $i$ and $j$ are neighbors, and $w_{ij}\!=\!0$ otherwise. 
\begin{figure*}
	\centering
	\subfigure[Comparison of different algorithms]{\label{cifara}
		\includegraphics[width=0.24\linewidth]{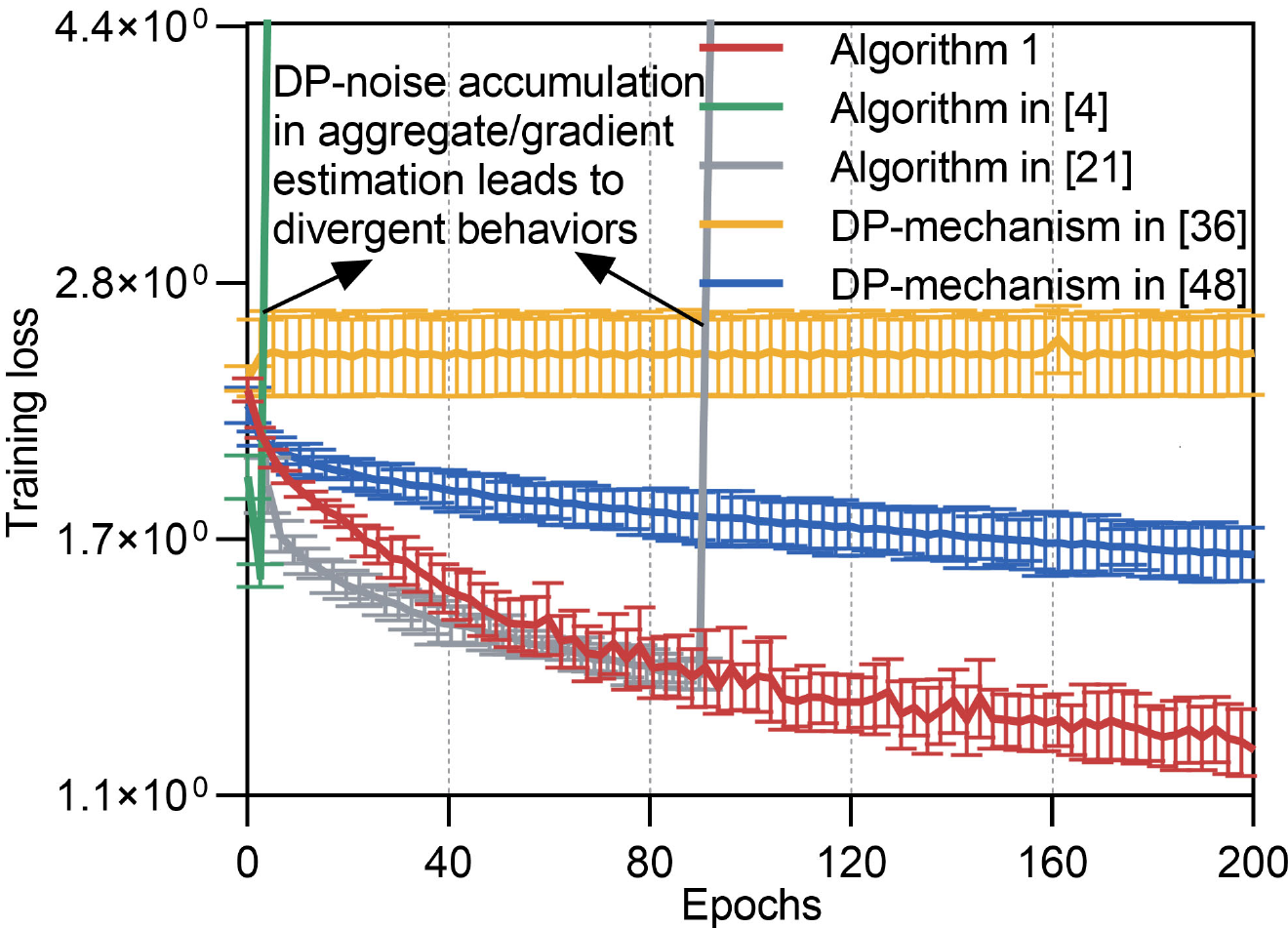}}\!
	\subfigure[Cumulative privacy budgets]{\label{cifarb}
		\includegraphics[width=0.23\linewidth]{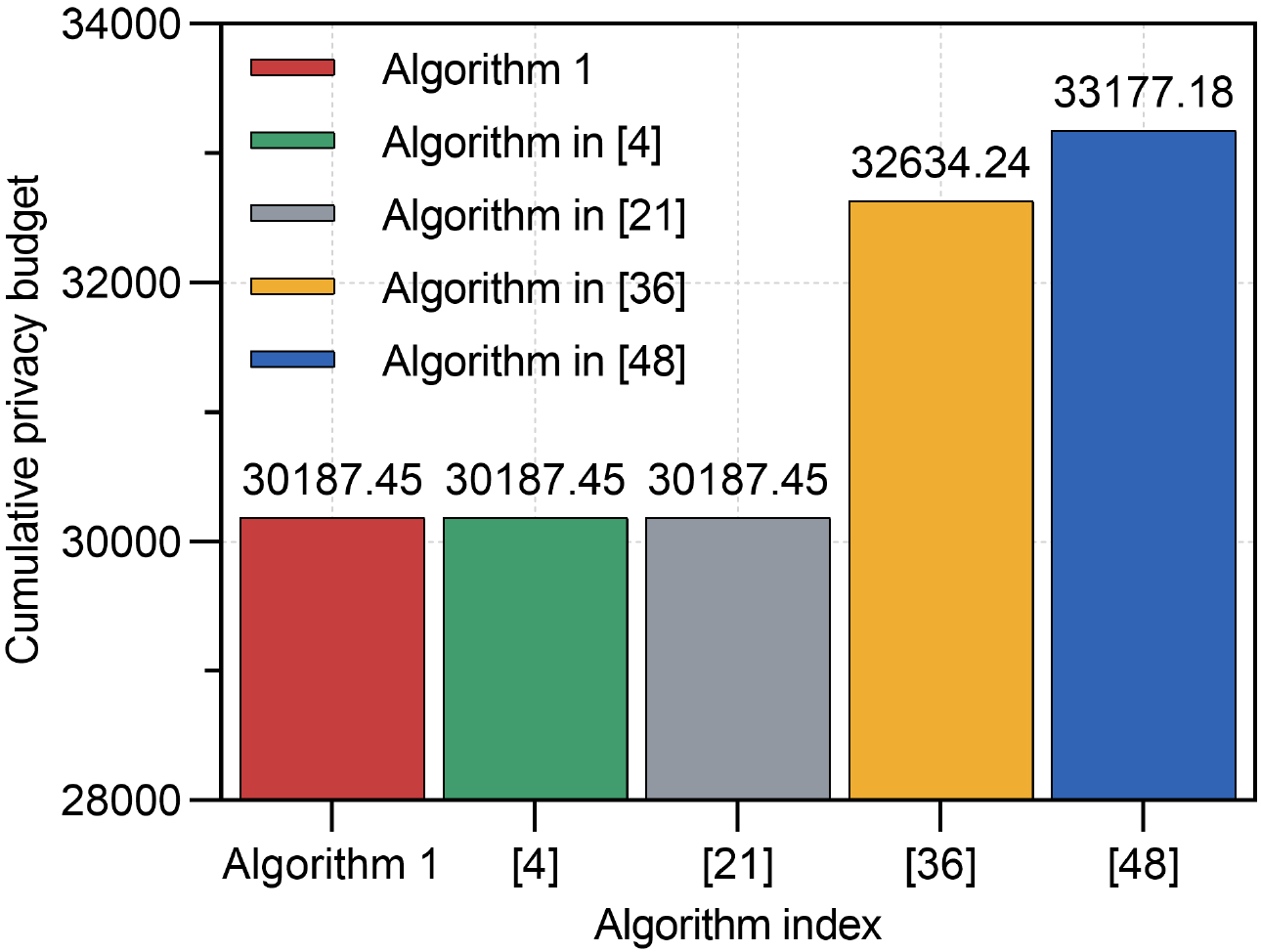}}\!
	\subfigure[{\color{blue}Comparison under different $\delta$}]{\label{cifarc}
		\includegraphics[width=0.24\linewidth]{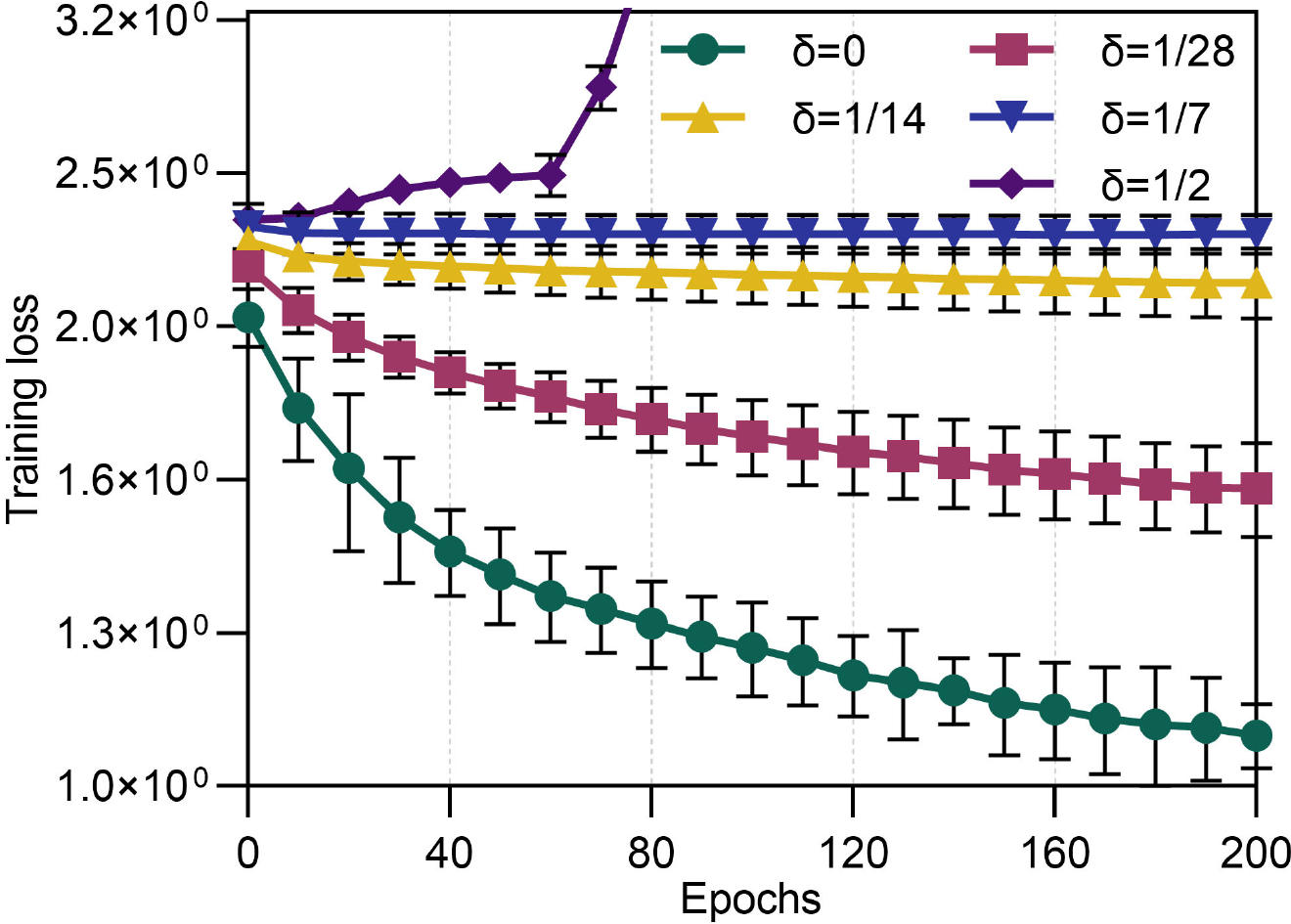}}\!
	\subfigure[{\color{blue}Comparison under different $\lambda$}]{\label{cifard}
		\includegraphics[width=0.24\linewidth]{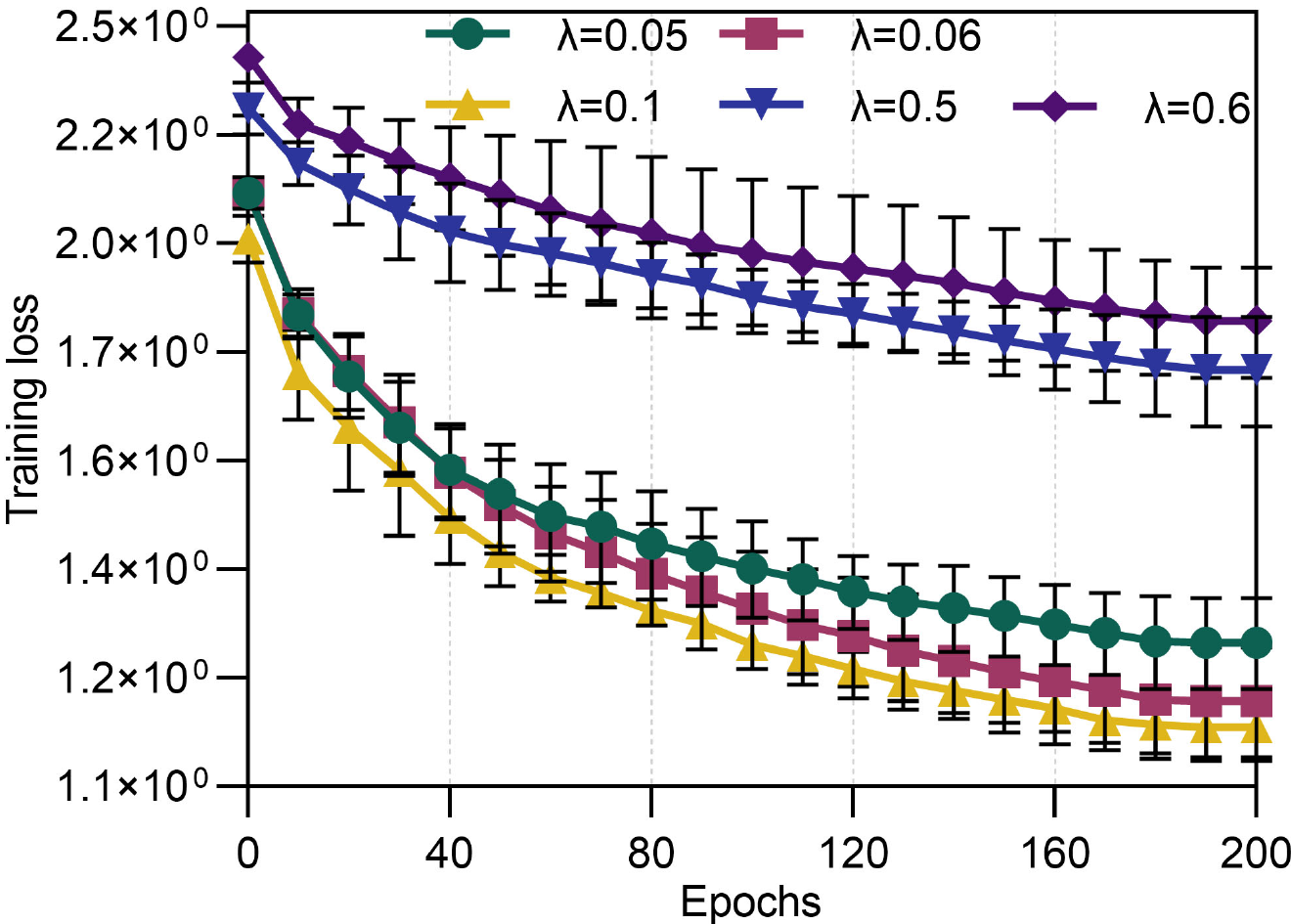}}
	\caption{{\color{blue}Experimental results on CNN training using the ``CIFAR-10" dataset.} (a) and (b) Training loss and cumulative privacy budget comparison of Algorithm~\ref{algorithm1} with existing distributed aggregative optimization algorithms in~\cite{Carnevale1} and~\cite{lixiuxian2} under the same cumulative privacy budget, and existing DP approaches in~\cite{huang}  and~\cite{zhangjifeng1} for distributed optimization. {\color{blue}(c) Training losses of Algorithm~\ref{algorithm1} under different $\delta$ ($\delta$ is given in Corollary~\ref{values}). (d) Training losses of Algorithm~\ref{algorithm1} under different $\lambda$ ($\lambda$ is defined in~\eqref{experimentloss}).} The error bars represent the standard deviation.}
	\label{cifar10}
	\vspace{-0.5em}
\end{figure*}

\subsection{Evaluation results on the ``MNIST" dataset}
In this experiment, we tested the effectiveness of Algorithm~\ref{algorithm1} using the ``MNIST" dataset~\cite{MNIST}, which contains $60,000$ training images and $10,000$ testing images of handwritten digits. Each image is a $28 \times 28$ grayscale image representing a single digit from $0$ to $9$. All agents were equipped with the same two-layer convolutional neural network (CNN), consisting of two convolutional layers with $16$ and $32$ filters, with each followed by a max pooling layer. Finally, a fully connected dense layer maps the extracted features to $10$ output classes. We set the stepsizes
and DP-noise variances in our algorithm as $\lambda_{x}^{t}=(t+1)^{-0.55}$, $\lambda_{y}^{t}=(t+1)^{-0.02}$, $\lambda_{z}^{t}=(t+1)^{-0.03}$, $\sigma_{i,x}^{t}=0.01(t+1)^{-0.505+0.001i}$, $\sigma_{i,y}^{t}=0.01(t+1)^{-0.005+0.001i}$, and $\sigma_{i,z}^{t}=0.01(t+1)^{-0.015+0.001i}$ for $i=1,\cdots,5$, respectively, which satisfy all conditions in Theorem~\ref{algorithm1}. In our comparison, for the algorithm in~\cite{Carnevale1}, we set its stepsize as $\alpha=1$ and its parameter as $\delta=0.5$, and for the algorithm in~\cite{lixiuxian2}, we set its stepsize as $\lambda_{x}^{t}=(t+1)^{-0.5}$, in line with the guidelines provided in~\cite{Carnevale1} and~\cite{lixiuxian2}, respectively (see Section IV in~\cite{Carnevale1} and Eq. (8) in~\cite{lixiuxian2} for details). Furthermore, the stepsizes and DP-noise variances for the DP approach PDOP in~\cite{huang} were set to $\lambda_{x}^{t}=0.1(0.94)^{t}$, $\lambda_{y}^{t}=0.1(0.98)^{t}$, $\lambda_{z}^{t}=0.1(0.97)^{t}$, $\sigma_{x}^{t}=0.07(0.99)^{t}$, $\sigma_{y}^{t}=0.008(0.99)^{t}$ and $\sigma_{z}^{t}=0.008(0.99)^{t}$, respectively. For the DP mechanism in~\cite{zhangjifeng1}, the stepsizes, DP-noise variances, decaying factor $\beta^{t}$, and the time-varying sampling number $\gamma_t$ were set to $\lambda_{x}^{t}=\frac{0.05}{(t+1)^{0.76}}$, $\lambda_{y}^{t}=\frac{0.05}{(t+1)^{0.11}}$, $\lambda_{z}^{t}=\frac{0.05}{(t+1)^{0.65}}$, $\sigma_{x}^{t}=0.1(t+1)^{0.01}$, $\sigma_{y}^{t}=0.01(t+1)^{0.01}$, $\sigma_{z}^{t}=0.01(t+1)^{0.01}$, $\beta^{t}=\frac{0.05}{(t+1)^{0.51}}$, and $\gamma^{t}=\lceil (t+1)^{1.4} \rceil$, respectively. The parameter settings for~\cite{huang} and~\cite{zhangjifeng1} lead to a larger cumulative privacy budget (and hence, weaker privacy protection) than our algorithm, making the optimization-accuracy comparison favorable to these baselines (see the yellow and blue blocks in Fig.~\ref{mnistb}).

The experimental results are summarized in Fig.~\ref{mnist}. It is clear
that the proposed algorithm has much lower training loss than existing counterparts under LDP constraints. In particular, Fig.~\ref{mnista} shows that Laplace noise indeed accumulates
in the conventional gradient-tracking-based distributed aggregative optimization algorithms in~\cite{Carnevale1} and~\cite{lixiuxian2}, leading to
divergent behaviors (see the green and gray curves in Fig.~\ref{mnista}). Furthermore, the yellow curve in Fig.~\ref{mnista} indicate that PDOP, which uses geometrically decaying
stepsizes and DP-noise variances to ensure a finite cumulative privacy budget in the infinite time horizon, is unable to train the complex CNN model. These comparisons corroborate the advantage of the proposed Algorithm~\ref{algorithm1}. {\color{blue} Fig.~\ref{mnistc} shows that our algorithm is robust even when $\delta$ slightly deviates from the range $\delta\in(0,\frac{1}{14})$ derived in Corollary~\ref{values}. Moreover, a smaller $\delta$ leads to lower training loss. Fig.~\ref{mnistd} shows that an overly small or large $\lambda$ results in high training loss, whereas a moderate $\lambda$ improves performance.}
\vspace{-0.5em}

\subsection{Evaluation results on the ``CIFAR-10" dataset}
In the second experiment, we evaluated the performance of Algorithm~\ref{algorithm1} by training a four-layer CNN on the ``CIFAR-10" dataset~\cite{CIFAR10}, which offers greater diversity and complexity compared to the ``MNIST" dataset. The dataset contains $50,000$ training images and $10,000$ testing images, with each image being a $32\!\times\!32$ color image belonging to one of $10$ different classes. The CNN consists of four convolutional layers. Max pooling is applied after the second and fourth layers to reduce spatial dimensions. After global average pooling, the features are fed into a fully connected layer for $10$-class prediction. In this experiment, the stepsizes in our algorithm were set to be the same as those used in the previous ``MNIST" experiment, and the DP-noise variances were set to $\sigma_{i,x}^{t}=0.001(t+1)^{-0.505+0.001i}$, $\sigma_{i,y}^{t}=0.001(t+1)^{-0.005+0.001i}$, and $\sigma_{i,z}^{t}=0.001(t+1)^{-0.015+0.001i}$ for all agents, which satisfy all the conditions in Theorem~\ref{T1}. For the comparison algorithms, the noise variances were reduced by a factor of $10$ compared to those used in the ``MNIST'' experiment, while all other parameters remained unchanged.

The results are summarized in Fig.~\ref{cifar10}. It can be seen that the proposed algorithm has much better robustness to DP noise compared to existing distributed aggregative optimization algorithms in~\cite{Carnevale1} and~\cite{lixiuxian2}. Moreover, compared with the existing DP approaches in~\cite{huang} and~\cite{zhangjifeng1} for distributed optimization, our Algorithm~\ref{algorithm1} achieves lower training loss even under a stronger privacy guarantee. {\color{blue}The trends in the sensitivity analysis shown in Fig.~\ref{cifarc} and Fig.~\ref{cifard} closely resemble those in Fig.~\ref{mnistc} and Fig.~\ref{mnistd}, respectively.}

\section{Conclusions}\label{conclusion}
In this paper, we proposed a distributed stochastic aggregative optimization algorithm under the constraints of local differential privacy and noisy gradients. We proved that the proposed algorithm achieves accurate convergence with explicit rates under nonconvex, general convex, and strongly convex global objective functions, respectively. Simultaneously, we also proved that our algorithm guarantees rigorous $\epsilon_{i}$-local differential privacy with a finite cumulative privacy budget even when the number of iterations tends to infinity. To the best of our knowledge, this is the first algorithm to simultaneously achieve both accurate convergence and rigorous local differential privacy, with a guaranteed finite cumulative privacy budget, in distributed aggregative optimization. This is significant since even in simpler deterministic settings where accurate gradients are accessible to agents, such results have not been reported to the best of our knowledge. Experimental results on personalized machine learning using benchmark datasets
confirmed the efficiency of
our algorithm.

\section*{Appendix}
To simplify notations, we use~$\bar{\mathrm{x}}_{(i)}^{t}=\frac{1}{m}(\sum_{l\neq i}\mathrm{x}_{l(i)}^{t}+x_{i}^{t})$ to denote the average of agent $i$'s optimization variable $x_{i}^{t}$ and other agents' estimates $\mathrm{x}_{l(i)}^{t}$ of this variable for all $l\neq i$. We further define the assembly of the $i$th optimization variable as $\mathbf{x}_{(i)}^{t}\!\!=\!\!\col(\mathrm{x}_{1(i)}^{t},\cdots,\mathrm{x}_{m(i)}^{t})\!\in\!\mathbb{R}^{mn_{i}}$. To further simplify the notations, we define $f(\boldsymbol{x},\boldsymbol{y})=\sum_{i=1}^{m}f_{i}(x_{i},y_{i})\in\mathbb{R}$, $g(\boldsymbol{x})=\frac{1}{m}\sum_{i=1}^{m}g_{i}(x_{i})\in\mathbb{R}^{r}$, $\nabla_{x}\boldsymbol{f}(\boldsymbol{x},\boldsymbol{y})\!=\!\col(\nabla_{x}f_{1}(x_{1},y_{1}),\cdots,\nabla_{x}f_{m}(x_{m},y_{m}))\in\mathbb{R}^{n}$, $\nabla_{y}\boldsymbol{f}(\boldsymbol{x},\boldsymbol{y})\!=\!\col(\nabla_{y}f_{1}(x_{1},y_{1}),\cdots,\nabla_{y}f_{m}(x_{m},y_{m}))\in\mathbb{R}^{mr}$, $\boldsymbol{g}(\boldsymbol{x})\!=\!\col(g_{1}(x_{1}),\cdots,g_{m}(x_{m}))\in\mathbb{R}^{mr}$, and $\nabla \boldsymbol{g}(\boldsymbol{x})=\text{blkdiag}(\nabla g_{1}(x_{1}),\cdots,\nabla g_{m}(x_{m}))\in\mathbb{R}^{n\times mr}$. Moreover, we introduce some additional notations: $u_{i}^{t}\!=\!\nabla_{x}f_{i}^{t}(x_{i}^{t},\tilde{y}_{i}^{t})+\nabla g_{i}^{t}(x_{i}^{t})\tilde{z}_{i}^{t}$, $\hat{x}_{(i)}^{t}=x_{i}^{t}-\bar{\mathrm{x}}_{(i)}^{t}$, $\hat{y}_{i}^{t}=y_{i}^{t}-\bar{y}^{t}$, $\hat{z}_{i}^{t}=z_{i}^{t}-\bar{z}^{t}$, $L_{f}=\max\{L_{f,1},L_{f,2}\}$, and $\bar{L}_{f}=\max\{\bar{L}_{f,1},\bar{L}_{f,2}\}$.

\subsection{Auxiliary lemmas}
\begin{lemma}[Lemma 2  in~\cite{truthfulness}]\label{lemmaLF}
Under Assumption~\ref{A1}, $\nabla F(\boldsymbol{x})$ is $L_{F}$-Lipschitz continuous, i.e., for any $\boldsymbol{x}_{1},\boldsymbol{x}_{2}\in\Omega$, we have $\|\nabla F(\boldsymbol{x}_{1})-\nabla F(\boldsymbol{x}_{2})\|\leq L_{F}\|\boldsymbol{x}_{1}-\boldsymbol{x}_{2}\|,$
with $L_{F}=\bar{L}_{f,1}+L_{f}\bar{L}_{g}+L_{g}(\bar{L}_{f,1}+\bar{L}_{f,2}+L_{g}\bar{L}_{f,2})$.
\end{lemma}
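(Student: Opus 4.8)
The plan is to obtain an explicit formula for $\nabla F(\boldsymbol{x})$ via the chain rule and then bound its increment block by block, exploiting the averaging structure of $g$ to keep the constant independent of $m$. Writing $x_{k,j}$ for the $j$-th block of $\boldsymbol{x}_k$ and recalling that the derivative of $g(\boldsymbol{x})=\frac1m\sum_{i=1}^m g_i(x_i)$ in the $j$-th block is $\frac1m\nabla g_j(x_j)$, differentiating $F(\boldsymbol{x})=\sum_{i=1}^m f_i(x_i,g(\boldsymbol{x}))$ gives, for each $j\in[m]$,
\begin{equation}
\nabla_{x_j}F(\boldsymbol{x})=\nabla_x f_j(x_j,g(\boldsymbol{x}))+\frac1m\,\nabla g_j(x_j)\sum_{i=1}^m\nabla_y f_i(x_i,g(\boldsymbol{x})),\nonumber
\end{equation}
and $\nabla F(\boldsymbol{x})=\col(\nabla_{x_1}F(\boldsymbol{x}),\dots,\nabla_{x_m}F(\boldsymbol{x}))$.

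First I would record three elementary consequences of Assumption~\ref{A1}: (a) using the $L_g$-Lipschitz continuity of each $g_i$ and convexity of $\|\cdot\|^2$ (Jensen's inequality) applied to the average defining $g$, $\|g(\boldsymbol{x}_1)-g(\boldsymbol{x}_2)\|\le\frac{L_g}{\sqrt m}\|\boldsymbol{x}_1-\boldsymbol{x}_2\|$; (b) $\|\nabla g_i(x)\|\le L_g$ and $\|\nabla g_i(x)-\nabla g_i(x')\|\le\bar L_g\|x-x'\|$; and (c) $\|\nabla_y f_i(x,y)\|\le L_{f,2}\le L_f$, while $\nabla_x f_i$ and $\nabla_y f_i$ are each $\bar L_{f,1}$- and $\bar L_{f,2}$-Lipschitz in $x$ and $y$ (a component's norm being dominated by that of $\nabla f_i$).

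Next, for a fixed $j$ I would split $\nabla_{x_j}F(\boldsymbol{x}_1)-\nabla_{x_j}F(\boldsymbol{x}_2)$ into three groups by adding and subtracting intermediate terms: $\nabla_x f_j(x_{1,j},g(\boldsymbol{x}_1))-\nabla_x f_j(x_{2,j},g(\boldsymbol{x}_2))$, bounded via (c) and (a); $\frac1m\nabla g_j(x_{1,j})\sum_i\big(\nabla_y f_i(x_{1,i},g(\boldsymbol{x}_1))-\nabla_y f_i(x_{2,i},g(\boldsymbol{x}_2))\big)$, bounded via (b), (c), and (a); and $\frac1m\big(\nabla g_j(x_{1,j})-\nabla g_j(x_{2,j})\big)\sum_i\nabla_y f_i(x_{2,i},g(\boldsymbol{x}_2))$, bounded via (b) and (c). Squaring, summing over $j$, and applying Minkowski's inequality together with $\sum_j\|x_{1,j}-x_{2,j}\|^2=\|\boldsymbol{x}_1-\boldsymbol{x}_2\|^2$ and Cauchy--Schwarz ($\sum_i\|x_{1,i}-x_{2,i}\|\le\sqrt m\,\|\boldsymbol{x}_1-\boldsymbol{x}_2\|$) then yields $\|\nabla F(\boldsymbol{x}_1)-\nabla F(\boldsymbol{x}_2)\|\le L_F\|\boldsymbol{x}_1-\boldsymbol{x}_2\|$ with $L_F=\bar L_{f,1}+L_f\bar L_g+L_g(\bar L_{f,1}+\bar L_{f,2}+L_g\bar L_{f,2})$: the term $L_f\bar L_g$ comes from the third group, $L_g\bar L_{f,1}+L_g^2\bar L_{f,2}$ from the second, and $\bar L_{f,1}+L_g\bar L_{f,2}$ from the first.

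The only real difficulty is a bookkeeping one, and it is exactly where the averaging structure matters: because $\nabla_y f_i$ of \emph{every} agent enters \emph{every} block $\nabla_{x_j}F$, a naive bound picks up a factor $\sqrt m$ (or $m$) when passing from per-block estimates to $\|\boldsymbol{x}_1-\boldsymbol{x}_2\|$. This is cancelled precisely by the $\frac1m$ weights in $g$ and in $\nabla_{x_j}F$, provided one uses the sharp estimate $\|g(\boldsymbol{x}_1)-g(\boldsymbol{x}_2)\|\le\frac{L_g}{\sqrt m}\|\boldsymbol{x}_1-\boldsymbol{x}_2\|$ from step (a) rather than the looser $L_g\|\boldsymbol{x}_1-\boldsymbol{x}_2\|$; tracking these $m$-dependences carefully is the step most prone to error, though once done the claimed $m$-free constant falls out immediately.
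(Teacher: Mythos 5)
Your derivation is correct, and it is worth noting that the paper itself does not prove this lemma at all: it simply imports it as Lemma~2 of the cited reference~\cite{truthfulness}, so there is no internal proof to compare against. Your self-contained argument is the natural one and it does recover the exact constant. The chain-rule formula $\nabla_{x_j}F(\boldsymbol{x})=\nabla_x f_j(x_j,g(\boldsymbol{x}))+\frac1m\nabla g_j(x_j)\sum_{i=1}^m\nabla_y f_i(x_i,g(\boldsymbol{x}))$ is right, the three-way splitting is the standard add-and-subtract, and the bookkeeping works out exactly as you claim: the first group contributes $\bar L_{f,1}\|x_{1,j}-x_{2,j}\|+\tfrac{L_g\bar L_{f,2}}{\sqrt m}\|\boldsymbol{x}_1-\boldsymbol{x}_2\|$, the second $\tfrac{L_g\bar L_{f,1}+L_g^2\bar L_{f,2}}{\sqrt m}\|\boldsymbol{x}_1-\boldsymbol{x}_2\|$ (using that each coordinate block of $\nabla f_i$ inherits the Lipschitz constants of the full gradient), and the third $L_{f,2}\bar L_g\|x_{1,j}-x_{2,j}\|\le L_f\bar L_g\|x_{1,j}-x_{2,j}\|$ since the $1/m$ cancels the $m$ terms of the sum with $\|\nabla_y f_i\|\le L_{f,2}$. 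Taking the $\ell^2$ norm over blocks via Minkowski, the block-diagonal part gives $(\bar L_{f,1}+L_f\bar L_g)\|\boldsymbol{x}_1-\boldsymbol{x}_2\|$ and the $m$ identical $1/\sqrt m$-scaled parts give $L_g(\bar L_{f,1}+\bar L_{f,2}+L_g\bar L_{f,2})\|\boldsymbol{x}_1-\boldsymbol{x}_2\|$, which is precisely $L_F$. Your emphasis on the sharp estimate $\|g(\boldsymbol{x}_1)-g(\boldsymbol{x}_2)\|\le\tfrac{L_g}{\sqrt m}\|\boldsymbol{x}_1-\boldsymbol{x}_2\|$ is indeed the crux: with the looser $L_g\|\boldsymbol{x}_1-\boldsymbol{x}_2\|$ the constant would pick up a spurious $\sqrt m$. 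So relative to the paper, what your route buys is a verifiable, dimension-aware proof of the imported constant rather than an appeal to an external result.
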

\begin{lemma}[Lemma 7 in~\cite{zijiGT}]\label{lemmae}
The relation $a\gamma^{t}\leq \frac{1}{t^2}$ always holds for all $t\in\mathbb{N}^{+}$ and $\gamma\in(0,1)$, with $a=4^{-1}(\ln(\gamma)e)^2$.
\end{lemma}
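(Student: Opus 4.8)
The plan is to collapse this two-parameter inequality into a single-variable unconstrained maximization and dispatch it with elementary calculus. Since $\gamma\in(0,1)$ we have $\ln\gamma<0$, so it is convenient to set $c\triangleq-\ln\gamma>0$; then $\gamma^{t}=e^{t\ln\gamma}=e^{-ct}$ and $a=4^{-1}(e\ln\gamma)^{2}=\tfrac{1}{4}c^{2}e^{2}>0$. Multiplying the claimed inequality $a\gamma^{t}\leq t^{-2}$ through by the positive quantity $4t^{2}e^{-2}$ shows it is equivalent to
\begin{equation}
(ct)^{2}e^{-ct}\leq 4e^{-2}.\nonumber
\end{equation}

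Next I would introduce the substitution $s\triangleq ct$. Because $t\in\mathbb{N}^{+}$ and $c>0$ we have $s>0$, so it suffices to establish the sharper (parameter-free) claim $\phi(s)\triangleq s^{2}e^{-s}\leq 4e^{-2}$ for every real $s>0$, which then automatically covers all admissible pairs $(t,\gamma)$. The function $\phi$ is smooth on $(0,\infty)$ with $\phi'(s)=(2s-s^{2})e^{-s}=s(2-s)e^{-s}$, which is strictly positive on $(0,2)$ and strictly negative on $(2,\infty)$, while $\phi(s)\to 0$ both as $s\to 0^{+}$ and as $s\to\infty$. Hence $\phi$ attains its global maximum at $s=2$ with value $\phi(2)=4e^{-2}$, which is exactly the required bound; undoing the substitutions yields $a\gamma^{t}\leq t^{-2}$ for all $t\in\mathbb{N}^{+}$ and $\gamma\in(0,1)$.

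There is essentially no serious obstacle: the only care needed is tracking the sign of $\ln\gamma$ so that the constant $a$ and the exponent $c$ come out positive, and noticing that after the substitution the inequality becomes uniform in both parameters, so one one-dimensional maximization settles the whole statement. It is worth noting that the estimate is tight — equality holds precisely when $ct=2$, i.e. $t=-2/\ln\gamma$ — which is why the factor $e^{2}$ appears in the definition of $a$ and why the constant cannot be improved.
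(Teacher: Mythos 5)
Your proof is correct: the substitution $s=ct$ with $c=-\ln\gamma>0$ reduces the claim to the sharp bound $s^{2}e^{-s}\leq 4e^{-2}$ for $s>0$, which your calculus argument establishes cleanly (it is equivalent to the standard inequality $xe^{-x}\leq e^{-1}$ applied at $x=ct/2$ and squared). Note that the paper itself does not prove this lemma but imports it from~\cite{zijiGT}, so there is nothing to diverge from; your argument is the natural one. One tiny caveat: since $t$ ranges over $\mathbb{N}^{+}$, equality is attained only when $-2/\ln\gamma$ happens to be a positive integer, though the constant is still unimprovable uniformly over $\gamma\in(0,1)$.
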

\begin{lemma}\label{lemmaexp}
The relationship $\sum_{p=0}^t\frac{\gamma^{t-p}}{(p+1)^v}\leq \frac{b}{(t+1)^v}$ always holds for all $t\in\mathbb{N}^{+}$, $\gamma\in(0,1)$, and $v\in(0,2)$, where the constant $b$ is given by $b=\frac{2^{v+2}}{(\ln(\sqrt{\gamma})e)^2(1-\sqrt{\gamma})}$.
\end{lemma}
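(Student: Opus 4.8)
The plan is to decouple the geometric weight $\gamma^{t-p}$ from the polynomial weight $(p+1)^{-v}$ by splitting the sum at its midpoint, estimating the two halves separately, and then invoking Lemma~\ref{lemmae} to trade the exponential decay of the ``far'' half for the polynomial rate $(t+1)^{-v}$.

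Concretely, fix $t\in\mathbb{N}^{+}$, set $s=\sqrt{\gamma}\in(0,1)$, and split $\sum_{p=0}^{t}$ at $p_{0}=\lceil t/2\rceil$. For the near part $p\ge p_{0}$, I would use $p+1\ge (t+1)/2$, hence $(p+1)^{-v}\le 2^{v}(t+1)^{-v}$, and bound the residual geometric sum by $\sum_{p}\gamma^{t-p}\le(1-\gamma)^{-1}$, giving a contribution at most $\tfrac{2^{v}}{1-\gamma}(t+1)^{-v}$. For the far part $p<p_{0}$, I would use $(p+1)^{-v}\le 1$ together with $t-p\ge t/2$, so that $\gamma^{t-p}\le\gamma^{t/2}=s^{t}$, and summing the geometric tail gives a contribution at most $\tfrac{s^{t}}{1-\gamma}$.

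The remaining step is to convert $s^{t}$ into a multiple of $(t+1)^{-v}$: applying Lemma~\ref{lemmae} with $s$ in place of $\gamma$ yields $s^{t}\le \tfrac{4}{(e\ln s)^{2}}\,t^{-2}$, and since $t\ge 1$ and $v\in(0,2)$ one has $(t+1)^{v}\le(2t)^{v}\le 2^{v}t^{2}$, i.e.\ $t^{-2}\le 2^{v}(t+1)^{-v}$, so $s^{t}\le \tfrac{2^{v+2}}{(e\ln\sqrt{\gamma})^{2}}(t+1)^{-v}$. Adding the two contributions and using $1-\gamma=(1-\sqrt{\gamma})(1+\sqrt{\gamma})\ge 1-\sqrt{\gamma}$ collects everything into the form $b\,(t+1)^{-v}$ with $b=\tfrac{2^{v+2}}{(e\ln\sqrt{\gamma})^{2}(1-\sqrt{\gamma})}$ as claimed (the small residual $2^{v}(1-\sqrt{\gamma})^{-1}$ from the near part is dominated by $b$ for $\sqrt{\gamma}\ge e^{-2/e}$, which is the regime relevant to every use of this lemma, $\gamma$ being of the form $1-\bar w/2$ there; otherwise it is absorbed at the cost of an inessential constant).

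I do not expect a real obstacle: the only substantive input is Lemma~\ref{lemmae}, and the rest is elementary splitting and geometric-series bookkeeping. If a cleaner write-up is wanted, I would instead avoid the split entirely by writing $\gamma^{t-p}=s^{t-p}\cdot s^{t-p}$ and observing that the amplitude $\tfrac{(t+1)^{v}}{(p+1)^{v}}s^{t-p}$ is log-convex in $k=t-p$ (its logarithm has second derivative $v(t+1-k)^{-2}>0$), hence maximized at an endpoint, where it equals $1$ or $(t+1)^{v}s^{t}\le 2^{v+2}(e\ln\sqrt{\gamma})^{-2}$; multiplying by $\sum_{p}s^{t-p}\le(1-\sqrt{\gamma})^{-1}$ then produces the bound in a single line.
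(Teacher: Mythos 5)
Your proposal is sound in substance, but it is really two arguments that sit differently relative to the paper. Your primary midpoint-split argument is a genuinely different decomposition: the paper never splits the sum. Instead it writes $\gamma^{t-p}=\sqrt{\gamma}^{\,t-p}\cdot\sqrt{\gamma}^{\,t-p}$, applies Lemma~\ref{lemmae} (with $\sqrt{\gamma}$) at every index to get the pointwise bound $\frac{\sqrt{\gamma}^{\,t-p}}{(p+1)^v}\leq\frac{1}{a(p+1)^v(t-p)^2}\leq\frac{1}{a\,t^v}$ via the elementary inequality $\frac{1}{(t-p)^2}\leq\bigl(\frac{p+1}{t}\bigr)^v$, and then sums the leftover factor $\sqrt{\gamma}^{\,t-p}$ geometrically, which yields exactly the stated $b$ after converting $t^{-v}$ to $2^v(t+1)^{-v}$. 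Your split buys transparency but, as you concede, only delivers the inequality with constant about $b+\frac{2^v}{1-\gamma}$ (or $2b$ under your condition on $\gamma$), so on its own it does not prove the lemma with the stated constant. Your fallback log-convexity argument, by contrast, is essentially the paper's mechanism in disguise: same square-root factorization, same single use of Lemma~\ref{lemmae}, same geometric series, with the only difference being how the normalized amplitude $\frac{(t+1)^v}{(p+1)^v}\sqrt{\gamma}^{\,t-p}$ is bounded (endpoint maximization by convexity in $k=t-p$ versus the paper's fraction inequality); it recovers the exact $b$ whenever $2^{v+2}\geq(e\ln\sqrt{\gamma})^2$, so if you want the statement verbatim, present that version rather than the split.

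The caveat you flag about the constant is, notably, shared by the paper itself. The $p=t$ term of the sum equals $(t+1)^{-v}$, so the claimed bound can only hold when $b\geq 1$, i.e., when $\gamma$ is not too small; correspondingly, the paper's pointwise bound $\frac{\sqrt{\gamma}^{\,t-p}}{(p+1)^v}\leq\frac{1}{at^v}$ silently fails at $p=t$ unless $a=\tfrac{1}{4}(e\ln\sqrt{\gamma})^2\leq(1+1/t)^v$, which is essentially your threshold $\sqrt{\gamma}\geq e^{-2/e}$. In the regime where the lemma is invoked (it is applied with $\gamma=(1-|\rho_2|)^2$ and its roots in the convergence analysis -- the constant $1-\bar{w}/2$ you mention belongs to the privacy proof, not here) and with all constants ultimately absorbed into $\mathcal{O}(\cdot)$ rates, this is immaterial; but strictly speaking neither your split argument nor the paper's own derivation covers arbitrarily small $\gamma$ with the stated $b$, and your convexity variant makes the needed restriction explicit rather than implicit.
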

\begin{proof}
{\color{blue}For real numbers $a,b,c,d>0$ satisfying $\frac{d}{b}\leq\frac{c}{a}$, the inequality $\frac{d}{b}\leq\frac{c+d}{a+b}\leq\frac{c}{a}$ always holds. Hence, for any $t\in\mathbb{N}^{+}$ and $p\in(0,t)$, we have the inequality $\frac{1}{(t-p)^2}\leq (\frac{p+1}{t})^2$ by setting $a=c=p$, $b=t-p$, and $d=1$.}

By using Lemma~\ref{lemmae} and
the relationship $\frac{1}{(t-p)^2}\leq (\frac{p+1}{t})^2<(\frac{p+1}{t})^{v}$ for any $p\in(0,t)$ and $v\in(0,2)$, we obtain
\begin{equation}
\textstyle\frac{\sqrt{\gamma}^{t-p}}{(p+1)^v}\textstyle\leq \frac{1}{a(p+1)^v(t-p)^2}<\frac{1}{a(p+1)^v}\left(\frac{p+1}{t}\right)^{v}=\frac{1}{at^v},\label{L41}
\end{equation}
with $a\!=\!4^{-1}(\ln(\sqrt{\gamma})e)^2$. Note that~\eqref{L41} still holds when $p\!=\!0$.

Using~\eqref{L41} and the geometric series formula, we arrive at $\sum_{p=0}^{t} \frac{\gamma^{t-p}}{(p+1)^v}\leq \sum_{p=0}^t\sqrt{\gamma}^{t-p}\frac{1}{at^v}\leq\frac{4\times 2^{v}}{(\ln(\sqrt{\gamma})e)^2(1-\sqrt{\gamma})(t+1)^{v}}.$
\end{proof}
\begin{lemma}\label{AppLemma1}
Under Assumptions~\ref{A1}-\ref{A4}, for any $t\in\mathbb{N}^{+}$, we have the following inequality for Algorithm~\ref{algorithm1}:
	\begin{flalign}
		&\textstyle\sum_{i=1}^{m}\mathbb{E}[\|\mathbf{x}_{(i)}^{t+1}-\boldsymbol{1}_{m}\otimes \bar{\mathrm{x}}_{(i)}^{t+1}\|^2]\leq\frac{c_{x1}(\lambda_{x}^{t})^2}{(\lambda_{z}^{t})^2}\mathbb{E}[\|\boldsymbol{\hat{z}}^{t+1}-\boldsymbol{\hat{z}}^{t}\|^2]\nonumber\\
		&\textstyle+(1-|\rho_{2}|)\sum_{i=1}^{m}\mathbb{E}[\|\mathbf{x}_{(i)}^{t}-\boldsymbol{1}_{m}\otimes\bar{\mathrm{x}}_{(i)}^{t}\|^2]\!+\!c_{x2}(\sigma_{x}^{t})^{2},
		\label{AppLemma1results}
		\vspace{-0.2em}
	\end{flalign}
	where
	$c_{x1}\!=\!\frac{c_{u1}}{|\rho_{2}|}$ and $c_{x2}\!=\!\frac{c_{u1}mr(\lambda_{x}^{0})^2\sigma_{z}^2}{|\rho_{2}|(\lambda_{z}^{0})^2}+\frac{mc_{u2}(\lambda_{x}^{0})^2}{|\rho_{2}|}+mn$ with $c_{u1}=4(\sigma_{g,1}^2+L_{g}^2)$ and $c_{u2}=2(\sigma_{f}^2+L_{f}^2)(1+2\sigma_{g,1}^2+2L_{g}^2)$.
\end{lemma}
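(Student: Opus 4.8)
The plan is to bound the consensus error of the optimization-variable estimates $\boldsymbol{x}_{(i)}^{t}$ by exploiting the mixing property of $W$ guaranteed by Assumption~\ref{A3}, and then to absorb the extra error terms — the tracking-variable increment $\boldsymbol{\hat z}^{t+1}-\boldsymbol{\hat z}^{t}$, the DP noise $\boldsymbol{\vartheta}_{i}^{t}$, and the gradient-sampling noise — using nonexpansiveness of the projection, the Lipschitz bounds of Assumption~\ref{A1}, and the variance bounds of Assumption~\ref{A2}. First I would write the update of $\boldsymbol{x}_{(i)}^{t+1}$ in the stacked form for the $i$th coordinate block: applying $\Pi_{\Omega_i}$ coordinatewise to $(I_m+W)\otimes I_{n_i}$ acting on $\boldsymbol{x}_{(i)}^{t}$, plus the DP-noise term $W\otimes I_{n_i}\,\boldsymbol{\vartheta}_{(i)}^{t}$, minus the stepsize-scaled stacked search direction $\boldsymbol{u}_{(i)}^{t}$ (where $u_i^{t}=\nabla_x f_i^{t}(x_i^{t},\tilde y_i^{t})-\nabla g_i^{t}(x_i^{t})\tilde z_i^{t}$, matching the appendix notation). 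Because $\Pi_\Omega$ is $1$-Lipschitz and $\Pi_\Omega(\boldsymbol{1}_m\otimes\bar x_{(i)}^{t})=\boldsymbol{1}_m\otimes\bar x_{(i)}^{t}$ (since $\Omega$ is the Cartesian product and the average of points in the convex set $\Omega_i$ stays in $\Omega_i$), the deviation from the mean contracts under the factor $\|I_m+W-\tfrac{1}{m}\boldsymbol{1}_m\boldsymbol{1}_m^\top\|<1-|\rho_2|$ given by Assumption~\ref{A3}.

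Next I would apply the standard Young's-inequality splitting: $\|a+b\|^2\le(1+\eta)\|a\|^2+(1+\eta^{-1})\|b\|^2$ with $\eta$ chosen so that $(1+\eta)(1-|\rho_2|)^2\le 1-|\rho_2|$ (e.g. $\eta=\tfrac{|\rho_2|}{1-|\rho_2|}$), yielding the leading coefficient $1-|\rho_2|$ on the consensus term. The residual $b$ collects (a) the projection of the DP-noise injection, (b) the stepsize-scaled search direction. For (a), the mean-zero Laplace noise $\boldsymbol{\vartheta}^{t}$ with per-component variance governed by $\sigma_{i,x}^{2}/(t+1)^{2\varsigma_{i,x}}$ contributes the $c_{x2}(\sigma_x^{t})^2$-type term after summing the $n$ components and the $m$ agents (the $mn$ summand in $c_{x2}$ accounts for the raw noise dimension scaling; cf. the variance formula $\mathrm{Var}(\mathrm{Lap}(\nu))=2\nu^2$ from the Notations). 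For (b), I would decompose $u_i^{t}$ into its mean-zero sampling-noise part and its bounded deterministic part: using Assumption~\ref{A2}, $\mathbb{E}[\|\nabla g_i^{t}(x_i^{t})\tilde z_i^{t}-\text{(its mean)}\|^2]$ is controlled by $\sigma_{g,1}^{2}$ times $\|\tilde z_i^{t}\|^2$-type quantities, and the deterministic part is bounded by $L_g^2$, $L_f^2$, $\sigma_f^2$ via Assumption~\ref{A1} and~\ref{A2}; this is exactly where the constants $c_{u1}=4(\sigma_{g,1}^2+L_g^2)$ and $c_{u2}=2(\sigma_f^2+L_f^2)(1+2\sigma_{g,1}^2+2L_g^2)$ arise. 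Crucially, since $\tilde z_i^{t}=\tfrac{1}{\lambda_z^{t}}(z_i^{t+1}-z_i^{t})$, the $\|\tilde z\|^2$ contributions convert to $\tfrac{1}{(\lambda_z^{t})^2}\|\boldsymbol{\hat z}^{t+1}-\boldsymbol{\hat z}^{t}\|^2$ after shifting each $z_i$ by its average (the cross term with $\bar z$ drops because it is common to all agents and thus annihilated by the consensus projection), explaining the factor $(\lambda_x^{t})^2/(\lambda_z^{t})^2$ in front of that term.

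The main obstacle I anticipate is the careful separation in step (b): the search direction $u_i^{t}$ couples the gradient estimate, the aggregative-function tracker $\tilde y_i^{t}$, and the gradient-tracker $\tilde z_i^{t}$, all of which are themselves noisy and time-varying, so one must isolate precisely which pieces can be folded into the (deterministic) $\sigma_x^{t}$-term, which into the $\|\boldsymbol{\hat z}^{t+1}-\boldsymbol{\hat z}^{t}\|^2$-term, and which are bounded outright — without introducing a dependence on $\|\boldsymbol{\hat y}^{t+1}-\boldsymbol{\hat y}^{t}\|^2$ or on the optimization-gap that would not appear in the stated bound. I expect this requires using that $x_i^{t}\in\Omega_i$ compact (so $g_i^{t}$, $\nabla g_i^{t}$ are uniformly bounded), together with the unbiasedness and bounded-variance structure of Assumption~\ref{A2} to pull expectations through, and the decaying-stepsize hypotheses $1>v_x>v_z$ from Theorem~\ref{T1} to ensure $(\lambda_x^{0})^2/(\lambda_z^{0})^2$ and $(\lambda_x^{0})^2$ appear as finite constants. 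Once these three buckets are correctly assigned, taking total expectation and summing over $i\in[m]$ gives exactly~\eqref{AppLemma1results}.
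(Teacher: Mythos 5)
Your overall route is the same as the paper's: use nonexpansiveness of the projection around the in-set point $\boldsymbol{1}_{m}\otimes\bar{x}_{(i)}^{t}$, contract the consensus error through $\tilde{W}=I_{m}+W-\frac{\boldsymbol{1}_{m}\boldsymbol{1}_{m}^{\top}}{m}$ with the Young splitting $\varepsilon=\frac{|\rho_{2}|}{1-|\rho_{2}|}$, kill the noise cross terms by zero mean, and bound $\mathbb{E}[\|\boldsymbol{u}_{(i)}^{t}\|^2]$ by $2(\sigma_f^2+L_f^2)+2(\sigma_{g,1}^2+L_g^2)\mathbb{E}[\|\tilde{z}_{i}^{t}\|^2]$ before converting $\tilde{z}_{i}^{t}$ into the $\boldsymbol{\hat{z}}$-increment. (One step you leave implicit: the left-hand side involves $\bar{x}_{(i)}^{t+1}$, and the paper passes from $\bar{x}_{(i)}^{t}$ to $\bar{x}_{(i)}^{t+1}$ by noting that the average minimizes $\sum_{l}\|x_{l(i)}^{t+1}-x\|^2$; also the noise enters through $W^{0}$, not $W$, since an agent does not perturb its own state in its own update.)

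The genuine flaw is your claim that, after writing $\tilde{z}_{i}^{t}=\frac{1}{\lambda_{z}^{t}}(\hat{z}_{i}^{t+1}-\hat{z}_{i}^{t}+\bar{z}^{t+1}-\bar{z}^{t})$, the contribution of $\bar{z}^{t+1}-\bar{z}^{t}$ "drops because it is common to all agents and thus annihilated by the consensus projection." This is false in this architecture: the search direction enters the consensus error only through $\boldsymbol{u}_{(i)}^{t}=\col(\boldsymbol{0}_{n_i},\cdots,u_{i}^{t},\cdots,\boldsymbol{0}_{n_i})$, which is nonzero in a single block, so a quantity common to all agents inside $u_{i}^{t}$ is not removed by subtracting the block average of $\boldsymbol{x}_{(i)}^{t+1}$; and even under the charitable reading that only the inner-product cross terms cancel (because $\sum_{i}(\hat{z}_{i}^{t+1}-\hat{z}_{i}^{t})=0$), the squared term $\|\bar{z}^{t+1}-\bar{z}^{t}\|^2$ survives and must be bounded. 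The paper does exactly this via $\bar{z}^{t+1}-\bar{z}^{t}=\bar{\zeta}_{w}^{t}+\lambda_{z}^{t}\nabla_{y}\bar{f}^{t}(\boldsymbol{x}^{t},\boldsymbol{\tilde{y}}^{t})$, leading to~\eqref{1AppL8}; that step is precisely the source of the $\frac{2r(\sigma_{z}^{t})^2}{(\lambda_{z}^{t})^2}$ and extra $2(\sigma_f^2+L_f^2)$ pieces, which after substitution produce the term $\frac{c_{u1}mr(\lambda_{x}^{0})^2\sigma_{z}^2}{|\rho_{2}|(\lambda_{z}^{0})^2}$ in $c_{x2}$ and the factor $(1+2\sigma_{g,1}^2+2L_g^2)$ in $c_{u2}$. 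Your accounting of $c_{x2}$ explains only the $mn$ summand, so as written the argument would not recover the stated constants; the fix is simply to bound $\|\bar{z}^{t+1}-\bar{z}^{t}\|^2$ through the $z$-dynamics (DP-noise average plus the tracked gradient) rather than discard it.
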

\begin{proof}
According to the dynamics of $\mathbf{x}_{i}^{t}$ and the projection inequality, for any $i\in[m]$, $j\neq i$ and $\bar{\mathrm{x}}_{(i)}^{t}\in\Omega_{i}$, we have
\begin{equation}
\vspace{-0.2em}
\begin{aligned}
\|x_{i}^{t+1}-\bar{\mathrm{x}}_{(i)}^{t}\|&\textstyle\leq\|x_{i}^{t}+\sum_{j\in\mathcal{N}_{i}}w_{ij}(\mathrm{x}_{j(i)}^{t}+\vartheta_{j(i)}^{t}-x_{i}^{t})\\
&\textstyle\quad-\lambda_{x}^{t}u_{i}^{t}-\bar{\mathrm{x}}_{(i)}^{t}\|,\\
\|\mathrm{x}_{j(i)}^{t+1}-\bar{\mathrm{x}}_{(i)}^{t}\|&\textstyle\leq\|\mathrm{x}_{j(i)}^{t}+\sum_{l\in\mathcal{N}_{j}}w_{jl}(\mathrm{x}_{l(i)}^{t}+\vartheta_{l(i)}^{t}-\mathrm{x}_{j(i)}^{t})\\
&\textstyle\quad-\bar{\mathrm{x}}_{(i)}^{t}\|,~\forall j \neq i. \nonumber
\end{aligned}
\end{equation}
The compact form of the preceding inequalities is
\begin{equation}
\begin{aligned}
&\textstyle\|\mathbf{x}_{(i)}^{t+1}-\boldsymbol{1}_{m}\otimes \bar{\mathrm{x}}_{(i)}^{t}\|\leq\|((I_{m}+W)\otimes I_{n_{i}})\mathbf{x}_{(i)}^{t}\\
&\textstyle\quad+(W^{0}\otimes I_{n_{i}})\boldsymbol{\vartheta}_{(i)}^{t}-\lambda_{x}^{t}\boldsymbol{u}_{(i)}^{t}-\boldsymbol{1}_{m}\otimes \bar{\mathrm{x}}_{(i)}^{t}\|,\label{xdynamics}
\end{aligned}
\vspace{-0.2em}
\end{equation}
where $\boldsymbol{u}_{(i)}^{t}$ is given by $\boldsymbol{u}_{(i)}^{t}=\col(\boldsymbol{0}_{n_{i}},\cdots,u_{i},\cdots,\boldsymbol{0}_{n_{i}})$, with the $i$th block being $u_{i}$ and all others being $0$, and $W^{0}$ is the same as $W$, except that all of its diagonal entries are zero.

By defining $\tilde{W}=I_{m}+W-\frac{\boldsymbol{1}_{m}\boldsymbol{1}_{m}^{T}}{m}$ and using the relation $\bar{\mathrm{x}}_{(i)}^{t}=\frac{\boldsymbol{1}_{m}^{T}\otimes I_{n_{i}}}{m}\mathbf{x}_{(i)}^{t}$, we can rewrite~\eqref{xdynamics} as
\begin{flalign}
&\textstyle\|\mathbf{x}_{(i)}^{t+1}-\boldsymbol{1}_{m}\otimes \bar{\mathrm{x}}_{(i)}^{t}\|^2
\leq\|(W^{0}\otimes I_{n_{i}})\boldsymbol{\vartheta}_{(i)}^{t}\|^2\nonumber\\
&\textstyle\quad+\|(\tilde{W}\otimes I_{n_{i}})(\mathbf{x}_{(i)}^{t}-\boldsymbol{1}_{m}\otimes\bar{\mathrm{x}}_{(i)}^{t})-\lambda_{x}^{t}\boldsymbol{u}_{(i)}^{t}\|^2\nonumber\\
&\textstyle\quad+2\langle(\tilde{W}\otimes I_{n_{i}})\mathbf{x}_{(i)}^{t}-\lambda_{x}^{t}\boldsymbol{u}_{(i)}^{t},(W^{0}\otimes I_{n_{i}})\boldsymbol{\vartheta}_{(i)}^{t}\rangle,
\label{1AppL1}
\end{flalign}
where we have subtracted $(\tilde{W}\otimes I_{n_{i}})(\boldsymbol{1}_{m}\otimes\bar{\mathrm{x}}_{(i)}^{t})$ from the right hand side of~\eqref{xdynamics} due to $(\tilde{W}\otimes I_{n_{i}})(\boldsymbol{1}_{m}\otimes \bar{\mathrm{x}}_{(i)}^{t})=\boldsymbol{0}_{mn_{i}}$.

Based on $\|\tilde{W}\|<1-|\rho_{2}|$ from Assumption~\ref{A3}, the second term on the right hand side of~\eqref{1AppL1} satisfies
\begin{equation}
\begin{aligned}
&\textstyle\|(\tilde{W}\otimes I_{n_{i}})(\mathbf{x}_{(i)}^{t}-\boldsymbol{1}_{m}\otimes\bar{\mathrm{x}}_{(i)}^{t})-\lambda_{x}^{t}\boldsymbol{u}_{(i)}^{t}\|^2\\
&\textstyle< (1-|\rho_{2}|)\|\mathbf{x}_{(i)}^{t}-\boldsymbol{1}_{m}\otimes\bar{\mathrm{x}}_{(i)}^{t}\|^2+\frac{(\lambda_{x}^{t})^2}{|\rho_{2}|}\|\boldsymbol{u}_{(i)}^{t}\|^2,\label{1AppL2}
\end{aligned}
\end{equation}
where in the derivation we have used the inequality $(a+b)^2\leq(1+\varepsilon)a^2+(1+\varepsilon^{-1})b^2$ for any scalars $a$, $b$, and $\varepsilon>0$ (by setting $\varepsilon=\frac{|\rho_{2}|}{1-|\rho_{2}|}$, i.e., $1+\varepsilon=\frac{1}{1-|\rho_{2}|}$ and $1+\varepsilon^{-1}=\frac{1}{|\rho_{2}|}$).

Substituting~\eqref{1AppL2} into~\eqref{1AppL1} and using $\|W^{0}\otimes I_{n_{i}}\|\leq 1$ yield
\begin{flalign}
&\textstyle\|\mathbf{x}_{(i)}^{t+1}\!-\!\boldsymbol{1}_{m}\otimes \bar{\mathrm{x}}_{(i)}^{t}\|^2\!<\!(1\!-\!|\rho_{2}|)\|\mathbf{x}_{(i)}^{t}\!-\!\boldsymbol{1}_{m}\otimes\bar{\mathrm{x}}_{(i)}^{t}\|^2\!+\!\|\boldsymbol{\vartheta}_{(i)}^{t}\|^2\nonumber\\
&\textstyle\quad\!+\!\frac{(\lambda_{x}^{t})^2}{|\rho_{2}|}\|\boldsymbol{u}_{(i)}^{t}\|^2\!+\!2\langle(\tilde{W}\!\otimes\! I_{n_{i}})\mathbf{x}_{(i)}^{t}\!-\!\lambda_{x}^{t}\boldsymbol{u}_{(i)}^{t},(W^{0}\!\otimes\! I_{n_{i}})\boldsymbol{\vartheta}_{(i)}^{t}\rangle.\nonumber
\end{flalign}

Note that when the variables $\mathrm{x}_{l(i)}^{t+1}$ are constrained in a convex set $\Omega_{i}$, their average $\bar{\mathrm{x}}_{(i)}^{t+1}$ also lies in the set $\Omega_{i}$, implying that the average $\bar{\mathrm{x}}_{(i)}^{t+1}$ is the solution to the constrained problem of minimizing $\sum_{l=1}^{m}\|\mathrm{x}_{l(i)}^{t+1}-x\|$ over $x\in\Omega_{i}$. Using this and the relation 
$\sum_{l=1}^{m}\|\mathrm{x}_{l(i)}^{t+1}-x\|^2=\|\mathbf{x}_{(i)}^{t+1}-\boldsymbol{1}_{m}\otimes x\|^2$, we have $\|\mathbf{x}_{(i)}^{t+1}-\boldsymbol{1}_{m}\otimes \bar{\mathrm{x}}_{(i)}^{t+1}\|^2\leq \|\mathbf{x}_{(i)}^{t+1}-\boldsymbol{1}_{m}\otimes \bar{\mathrm{x}}_{(i)}^{t}\|^2$, which implies
\begin{flalign}
&\textstyle\|\mathbf{x}_{(i)}^{t+1}\!-\!\boldsymbol{1}_{m}\!\otimes\! \bar{\mathrm{x}}_{(i)}^{t+1}\|^2\!\leq\!(1\!-\!|\rho_{2}|)\|\mathbf{x}_{(i)}^{t}\!-\!\boldsymbol{1}_{m}\!\otimes\!\bar{\mathrm{x}}_{(i)}^{t}\|^2\!+\!\|\boldsymbol{\vartheta}_{(i)}^{t}\|^2\nonumber\\
&\textstyle+\!\frac{(\lambda_{x}^{t})^2}{|\rho_{2}|}\|\boldsymbol{u}_{(i)}^{t}\|^2\!+\!2\langle(\tilde{W}\!\otimes\! I_{n_{i}})\mathbf{x}_{(i)}^{t}\!-\!\lambda_{x}^{t}\boldsymbol{u}_{(i)}^{t},(W^{0}\!\otimes\! I_{n_{i}})\boldsymbol{\vartheta}_{(i)}^{t}\rangle.\nonumber
\end{flalign}

Taking the expectation on both sides of the preceding inequality and using  $\mathbb{E}[\boldsymbol{\vartheta}_{(i)}^{t}]=0$ from Assumption~\ref{A4} yield
\begin{flalign}
&\textstyle\mathbb{E}[\|\mathbf{x}_{(i)}^{t+1}\!-\!\boldsymbol{1}_{m}\otimes \bar{\mathrm{x}}_{(i)}^{t+1}\|^2]\!\leq\!(1-|\rho_{2}|)\mathbb{E}[\|\mathbf{x}_{(i)}^{t}\!-\!\boldsymbol{1}_{m}\otimes\bar{\mathrm{x}}_{(i)}^{t}\|^2]\nonumber\\
&\textstyle\quad+\frac{(\lambda_{x}^{t})^2}{|\rho_{2}|}\mathbb{E}[\|\boldsymbol{u}_{(i)}^{t}\|^2]+mn_{i}(\sigma_{x}^{t})^{2},
\label{1AppL6}
\end{flalign}
where the DP-noise variance $\sigma_{x}^{t}$ is given by $\sigma_{x}^{t}=\frac{\sigma_{x}}{(t+1)^{\varsigma_{x}}}$ with $\sigma_{x}=\max_{i\in[m]}\{\sigma_{i,x}\}$ and $\varsigma_{x}=\min_{i\in[m]}\{\varsigma_{i,x}\}$.

According to the definition of $\boldsymbol{u}_{(i)}^{t}$ given in~\eqref{xdynamics} and the definition $u_{i}^{t}=\nabla_{x}f_{i}^{t}(x_{i}^{t},\tilde{y}_{i}^{t})-\nabla g_{i}^{t}(x_{i}^{t})\tilde{z}_{i}^{t}$, we have
\begin{equation}
\vspace{-0.2em}
\textstyle\mathbb{E}[\|\boldsymbol{u}_{(i)}^{t}\|^2]\leq 2(\sigma_{f}^2+L_{f}^2)+2(\sigma_{g,1}^2+L_{g}^2)\mathbb{E}[\|\tilde{z}_{i}^{t}\|^2],\label{1AppL7}
\end{equation}
where in the derivation we have used
Assumptions~\ref{A1} and~\ref{A2}.

We proceed to analyze the term $\mathbb{E}[\|\tilde{z}_{i}^{t}\|^2]$ in~\eqref{1AppL7}. Based on the update rule of $z_{i}^{t}$, we have $\bar{z}^{t+1}=\bar{z}^{t}+\bar{\zeta}_{w}^{t}+\lambda_{z}^{t}\nabla_{y}\bar{f}^{t}(\boldsymbol{x}^{t},\boldsymbol{\tilde{y}}^{t})$ with $\bar{\zeta}_{w}^{t}\triangleq\frac{1}{m}\sum_{i=1}^{m}\sum_{j\in\mathcal{N}_{i}}w_{ij}\zeta_{j}^{t}$.

Recalling the definitions $\tilde{z}_{i}^{t}\!=\!\frac{1}{\lambda_{z}^{t}}(z_{i}^{t+1}-z_{i}^{t})$ and $\hat{z}_{i}^{t}=z_{i}^{t}-\bar{z}^{t}$, we have $\tilde{z}_{i}^{t}\!=\!\frac{1}{\lambda_{z}^{t}}(\hat{z}_{i}^{t+1}-\hat{z}_{i}^{t}+\bar{z}^{t+1}-\bar{z}^{t})$, which implies
\begin{equation}
	\textstyle\mathbb{E}[\|\tilde{z}_{i}^{t}\|^2]\leq \frac{2}{(\lambda_{z}^{t})^2}\mathbb{E}[\|\hat{z}_{i}^{t+1}-\hat{z}_{i}^{t}\|^2]+\frac{2r(\sigma_{z}^{t})^2}{(\lambda_{z}^{t})^2}+2(\sigma_{f}^2+L_{f}^2),\nonumber
\end{equation}
where the DP-noise variance $\sigma_{z}^{t}$ is given by $\sigma_{z}^{t}=\frac{\sigma_{z}}{(t+1)^{\varsigma_{z}}}$ with $\sigma_{z}=\max_{i\in[m]}\{\sigma_{i,z}\}$ and $\varsigma_{z}=\min_{i\in[m]}\{\varsigma_{i,z}\}$.

Substituting the preceding inequality into~\eqref{1AppL7}, we obtain
\begin{equation}
\textstyle\mathbb{E}[\|\boldsymbol{u}_{(i)}^{t}\|^2]\leq \frac{c_{u1}}{(\lambda_{z}^{t})^2}\mathbb{E}[\|\hat{z}_{i}^{t+1}-\hat{z}_{i}^{t}\|^2]+\frac{c_{u1}r(\sigma_{z}^{t})^2}{(\lambda_{z}^{t})^2}+c_{u2},\label{1AppL9}
\end{equation}
with $c_{u1}$ and $c_{u2}$ given in the lemma statement.

Further incorporating~\eqref{1AppL9} into~\eqref{1AppL6}, we arrive at~\eqref{AppLemma1results}.
\vspace{-0.2em}
\end{proof}
\begin{lemma}\label{AppLemma2}
Under Assumptions~\ref{A1}-\ref{A4}, for any $t\in\mathbb{N}^{+}$, we have the following inequality for Algorithm~\ref{algorithm1}:
\begin{equation}
\begin{aligned}
&\textstyle \mathbb{E}[\|\boldsymbol{\hat{y}}^{t+1}-\boldsymbol{\hat{y}}^{t}\|^2]\leq(1-|\rho_{2}|)\mathbb{E}[\|\boldsymbol{\hat{y}}^{t}-\boldsymbol{\hat{y}}^{t-1}\|^2]\\
&\textstyle\quad+c_{y1}(\lambda_{y}^{t})^{2}\sum_{i=1}^{m}\mathbb{E}[\|\mathbf{x}_{(i)}^{t-1}-\boldsymbol{1}_{m}\otimes \bar{\mathrm{x}}_{(i)}^{t-1}\|^2]\\
&\textstyle\quad+\frac{c_{y2}(\lambda_{y}^{t})^{2}(\lambda_{x}^{t-1})^2}{(\lambda_{z}^{t-1})^2}\mathbb{E}[\|\boldsymbol{\hat{z}}^{t}-\boldsymbol{\hat{z}}^{t-1}\|^2]+c_{y3}(\sigma_{y}^{t-1})^2,\label{AppLemma2results}
\end{aligned}
\end{equation}
where $c_{y1}\!=\!4L_{g}^2|\rho_{2}|^{-1}$,$c_{y2}\!=\!c_{y1}c_{u1}$, and~$c_{y3}\!=\!2mr+2(\sigma_{g,0}^2+d_{g}^2)mv_{y}^{2}(\lambda_{y}^{0})^2|\rho_{2}|^{-1}+4m\sigma_{g,0}^2(\lambda_{y}^{1})^{2}|\rho_{2}|^{-1}+\frac{1}{2}c_{y1}n(\lambda_{y}^{1})^{2}\sigma_{x}^2+mc_{y1}c_{u1}r(\lambda_{y}^{1})^{2}(\lambda_{x}^{0})^2\sigma_{z}^2(\lambda_{z}^{0})^{-2}+mc_{y1}c_{u2}(\lambda_{y}^{1})^{2}(\lambda_{x}^{0})^2$, with $c_{u1}$ and $c_{u2}$ given in the statement of Lemma~\ref{AppLemma1}.
\end{lemma}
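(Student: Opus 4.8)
The plan is to derive a contractive recursion for the tracking increment $\boldsymbol{\hat y}^{t+1}-\boldsymbol{\hat y}^{t}$ in the disagreement coordinates, in direct analogy with the argument already used for Lemma~\ref{AppLemma1}. First I would rewrite line~4 of Algorithm~\ref{algorithm1} in stacked form: using $w_{ii}=-\sum_{j\in\mathcal{N}_i}w_{ij}$, it reads $\boldsymbol{y}^{t+1}=((I_m+W)\otimes I_r)\boldsymbol{y}^{t}+(W^{0}\otimes I_r)\boldsymbol{\chi}^{t}+\lambda_y^{t}\boldsymbol{g}^{t}(\boldsymbol{x}^{t})$, where $\boldsymbol{g}^{t}(\boldsymbol{x}^{t})=\col(g_1^{t}(x_1^{t}),\dots,g_m^{t}(x_m^{t}))$ and $W^{0}$ equals $W$ with a zeroed diagonal. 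Multiplying by the averaging projector $\Pi=I_m-\tfrac1m\boldsymbol{1}_m\boldsymbol{1}_m^{\top}$ and invoking Assumption~\ref{A3} ($\boldsymbol{1}_m^{\top}W=\boldsymbol{0}_m^{\top}$, $W\boldsymbol{1}_m=\boldsymbol{0}_m$) yields $\boldsymbol{\hat y}^{t+1}=(\tilde W\otimes I_r)\boldsymbol{\hat y}^{t}+(\Pi W^{0}\otimes I_r)\boldsymbol{\chi}^{t}+\lambda_y^{t}(\Pi\otimes I_r)\boldsymbol{g}^{t}(\boldsymbol{x}^{t})$ with $\tilde W=I_m+W-\tfrac1m\boldsymbol{1}_m\boldsymbol{1}_m^{\top}$, where I used $\tilde W\boldsymbol{1}_m=\boldsymbol{0}_m$ to replace $\boldsymbol{y}^{t}$ by $\boldsymbol{\hat y}^{t}$. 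Subtracting the same identity written at time $t-1$ then gives $\boldsymbol{\hat y}^{t+1}-\boldsymbol{\hat y}^{t}=(\tilde W\otimes I_r)(\boldsymbol{\hat y}^{t}-\boldsymbol{\hat y}^{t-1})+(\Pi W^{0}\otimes I_r)(\boldsymbol{\chi}^{t}-\boldsymbol{\chi}^{t-1})+(\Pi\otimes I_r)\big(\lambda_y^{t}\boldsymbol{g}^{t}(\boldsymbol{x}^{t})-\lambda_y^{t-1}\boldsymbol{g}^{t-1}(\boldsymbol{x}^{t-1})\big)$.

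Next I would square this identity and apply $(a+b)^2\leq(1+\varepsilon)a^2+(1+\varepsilon^{-1})b^2$ with $\varepsilon=|\rho_2|/(1-|\rho_2|)$, exactly as in the proof of Lemma~\ref{AppLemma1}; since $\|\tilde W\|<1-|\rho_2|$ (Assumption~\ref{A3}), the first term becomes the contraction $(1-|\rho_2|)\|\boldsymbol{\hat y}^{t}-\boldsymbol{\hat y}^{t-1}\|^2$, while the other two terms pick up the factor $|\rho_2|^{-1}$; a further split $\|u+v\|^2\leq 2\|u\|^2+2\|v\|^2$ then isolates the DP-noise increment from the ``signal'' increment. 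Taking expectations, the fresh Laplace noise $\boldsymbol{\chi}^{t}$ is zero-mean and independent of the state and data by Assumption~\ref{A4}, so the DP-noise part contributes only $\mathbb{E}\|\boldsymbol{\chi}^{t}\|^2+\mathbb{E}\|\boldsymbol{\chi}^{t-1}\|^2\leq 2mr(\sigma_y^{t-1})^2$ through the Laplace variances, with $\sigma_y^{t}=\sigma_y/(t+1)^{\varsigma_y}$, $\sigma_y=\max_i\sigma_{i,y}$, $\varsigma_y=\min_i\varsigma_{i,y}$.

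The core of the proof is the signal increment $\lambda_y^{t}\boldsymbol{g}^{t}(\boldsymbol{x}^{t})-\lambda_y^{t-1}\boldsymbol{g}^{t-1}(\boldsymbol{x}^{t-1})$, which I would split into: (a) $\lambda_y^{t}(\boldsymbol{g}^{t}(\boldsymbol{x}^{t})-\boldsymbol{g}^{t}(\boldsymbol{x}^{t-1}))$, bounded by $L_g\lambda_y^{t}\|\boldsymbol{x}^{t}-\boldsymbol{x}^{t-1}\|$ via the $L_g$-Lipschitzness of each $g_i$ (Assumption~\ref{A1}); (b) $\lambda_y^{t}(\boldsymbol{g}^{t}(\boldsymbol{x}^{t-1})-\boldsymbol{g}^{t-1}(\boldsymbol{x}^{t-1}))$, for which the ERM streaming structure gives the exact identity $g_i^{t}(x)-g_i^{t-1}(x)=\tfrac{1}{t+1}(l(x;\xi_i^{t})-g_i^{t-1}(x))$, so that conditioning on the natural filtration and using the unbiasedness/variance bounds of Assumption~\ref{A2} together with compactness of $\Omega$ (Assumption~\ref{A1}) yields an $\mathcal{O}((\sigma_{g,0}^2+d_g^2)/(t+1)^2)$ expectation bound; and (c) $(\lambda_y^{t}-\lambda_y^{t-1})\boldsymbol{g}^{t-1}(\boldsymbol{x}^{t-1})$, where $|\lambda_y^{t}-\lambda_y^{t-1}|\leq v_y\lambda_y^{t-1}/t$ (mean value theorem on $(t+1)^{-v_y}$) and $\mathbb{E}\|\boldsymbol{g}^{t-1}(\boldsymbol{x}^{t-1})\|^2\leq m(\sigma_{g,0}^2+d_g^2)$ give an $\mathcal{O}(v_y^2(\lambda_y^0)^2(\sigma_{g,0}^2+d_g^2)/t^{2+2v_y})$ bound. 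For term~(a) I still need $\|\boldsymbol{x}^{t}-\boldsymbol{x}^{t-1}\|$: using non-expansiveness of $\Pi_\Omega$ in line~6 together with $x_i^{t-1}\in\Omega_i$, one gets $\|x_i^{t}-x_i^{t-1}\|$ bounded by the $i$th block of the consensus term on $\boldsymbol{x}_{(i)}^{t-1}$, plus the DP-noise term $\sum_{j\in\mathcal{N}_i}w_{ij}\boldsymbol{\vartheta}_{j(i)}^{t-1}$, plus $\lambda_x^{t-1}\|u_i^{t-1}\|$; the consensus term is a zero-row-sum combination of the disagreement vector, hence bounded by $\mathcal{O}(\|\boldsymbol{x}_{(i)}^{t-1}-\boldsymbol{1}_m\otimes\bar x_{(i)}^{t-1}\|)$, the DP term contributes $\mathcal{O}((\sigma_x^{t-1})^2)$ in expectation, and the last term is controlled by inserting the bound~\eqref{1AppL9} on $\mathbb{E}[\|\boldsymbol{u}_{(i)}^{t}\|^2]$ from Lemma~\ref{AppLemma1}, which is precisely what produces the term $\tfrac{(\lambda_x^{t-1})^2}{(\lambda_z^{t-1})^2}\mathbb{E}\|\boldsymbol{\hat z}^{t}-\boldsymbol{\hat z}^{t-1}\|^2$, along with residuals of order $(\lambda_x^{t-1})^2(\sigma_z^{t-1})^2/(\lambda_z^{t-1})^2$ and $c_{u2}(\lambda_x^{t-1})^2$.

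Finally I would multiply the (a)-piece estimate through by $(\lambda_y^{t})^2$, assemble all contributions, and use the stepsize ordering $1>v_x>v_z>v_y>0$ together with the noise-rate conditions of Assumption~\ref{A4} ($\varsigma_y<v_y$, $\varsigma_x<v_x-v_z$, $\varsigma_z<v_z$) to fold every residual term — the stepsize-change and streaming-data terms of order $t^{-2v_y-2}$, the $(\lambda_y^{t})^2(\sigma_x^{t-1})^2$ term, the $(\lambda_y^{t})^2(\lambda_x^{t-1})^2(\sigma_z^{t-1})^2(\lambda_z^{t-1})^{-2}$ term, and the $(\lambda_y^{t})^2(\lambda_x^{t-1})^2c_{u2}$ term — into the single term $c_{y3}(\sigma_y^{t-1})^2$, by first bounding the stepsize-ratio prefactors using $v_z<v_x+\varsigma_z$ and then using $\varsigma_y<v_y$ to check that one retained factor $(\lambda_y^{t})^2$ decays no slower than $(\sigma_y^{t-1})^2$. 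This last bookkeeping — keeping a $(\lambda_y^{t})^2$ factor as ``decay budget'' and verifying that each residual rate dominates $t^{-2\varsigma_y}$, while being careful about which Laplace vectors are independent of which state/data quantities — is where I expect the main effort to lie; extracting the contraction factor $1-|\rho_2|$ and handling the DP-noise cross terms is routine and already rehearsed in Lemma~\ref{AppLemma1}.
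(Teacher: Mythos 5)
Your proposal is correct and follows essentially the same route as the paper's proof: the stacked disagreement recursion for $\boldsymbol{\hat y}^{t+1}-\boldsymbol{\hat y}^{t}$, the $(1-|\rho_{2}|)$ contraction via $\|\tilde W\|<1-|\rho_2|$ and Young's inequality, the $2mr(\sigma_y^{t-1})^2$ noise contribution, the three-way split of the signal increment (Lipschitz-in-$x$, data refresh, stepsize change), and the control of $\|\boldsymbol{x}^{t}-\boldsymbol{x}^{t-1}\|$ through the projection inequality and the bound~\eqref{1AppL9} from Lemma~\ref{AppLemma1}, which is exactly what generates the $\frac{(\lambda_x^{t-1})^2}{(\lambda_z^{t-1})^2}\mathbb{E}[\|\boldsymbol{\hat z}^{t}-\boldsymbol{\hat z}^{t-1}\|^2]$ term. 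One cosmetic point: in piece (a) you invoke $L_g$-Lipschitzness of the empirical $g_i^{t}$, whereas Assumption~\ref{A1} only gives it for the population $g_i$; this is repaired, as the paper does, by adding and subtracting $g_i(x_i^{t})$, $g_i(x_i^{t-1})$ and charging the deviations to the $\mathcal{O}(\sigma_{g,0}^2/t)$ sample-variance term you already use in piece (b).
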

\begin{proof}
According to the update rule of $y_{i}^{t}$ in line 4 of Algorithm~\ref{algorithm1}, we have
\begin{equation}
\vspace{-0.2em}
\begin{aligned}
&\textstyle \boldsymbol{y}^{t+1}-\boldsymbol{1}_{m}\otimes \bar{y}^{t+1}=(\tilde{W}\otimes I_{r})(\boldsymbol{y}^{t}-\boldsymbol{1}_{m}\otimes \bar{y}^{t})\\
&\textstyle\quad+(\Gamma\otimes I_{r})(W^{0}\otimes I_{r})\boldsymbol{\chi}^{t}+\lambda_{y}^{t}(\Gamma\otimes I_{r})\boldsymbol{g}^{t}(\boldsymbol{x}^{t}),\label{2LApp5}
\end{aligned}
\vspace{-0.2em}
\end{equation}
with $\tilde{W}\triangleq I_{m}+W-\frac{\boldsymbol{1}\boldsymbol{1}^{T}}{m}$ and $\Gamma\triangleq I_{m}-\frac{\boldsymbol{1}_{m}\boldsymbol{1}_{m}^{T}}{m}$.

Recalling the definition $\boldsymbol{\hat{y}}^{t}=\boldsymbol{y}^{t}-\boldsymbol{1}_{m}\otimes \bar{y}^{t}$ and using an argument similar to the derivation of~\eqref{1AppL6} yield
\begin{equation}
\vspace{-0.2em}
\begin{aligned}
&\textstyle \mathbb{E}[\|\boldsymbol{\hat{y}}^{t+1}-\boldsymbol{\hat{y}}^{t}\|^2]\leq(1-|\rho_{2}|)\mathbb{E}[\|\boldsymbol{\hat{y}}^{t}-\boldsymbol{\hat{y}}^{t-1}\|^2]\\
&\textstyle\quad+\frac{1}{|\rho_{2}|}\mathbb{E}[\|\lambda_{y}^{t}\boldsymbol{g}^{t}(\boldsymbol{x}^{t})-\lambda_{y}^{t-1}\boldsymbol{g}^{t-1}(\boldsymbol{x}^{t-1})\|^2]+2mr(\sigma_{y}^{t-1})^2,\label{2LApp8}
\end{aligned}
\vspace{-0.2em}
\end{equation}
where the noise variance $\sigma_{y}^{t}$ is given by $\sigma_{y}^{t}=\frac{\sigma_{y}}{(t+1)^{\varsigma_{y}}}$ with $\sigma_{y}=\max_{i\in[m]}\{\sigma_{i,y}\}$ and $\varsigma_{y}=\min_{i\in[m]}\{\varsigma_{i,y}\}$.

The second term on the right hand side of~\eqref{2LApp8} satisfies
\begin{equation}
\vspace{-0.2em}
\begin{aligned}
&\textstyle\mathbb{E}[\|\lambda_{y}^{t}\boldsymbol{g}^{t}(\boldsymbol{x}^{t})-\lambda_{y}^{t-1}\boldsymbol{g}^{t-1}(\boldsymbol{x}^{t-1})\|^2]\\
&\textstyle\leq 2(\lambda_{y}^{t})^2\mathbb{E}[\|\boldsymbol{g}^{t}(\boldsymbol{x}^{t})-\boldsymbol{g}^{t-1}(\boldsymbol{x}^{t-1})\|^2]\\
&\textstyle\quad+2(\lambda_{y}^{t}-\lambda_{y}^{t-1})^2\mathbb{E}[\|\boldsymbol{g}^{t-1}(\boldsymbol{x}^{t-1})\|^2].\label{2LApp9}
\end{aligned}
\vspace{-0.2em}
\end{equation}

To characterize the first term on the right-hand side of~\eqref{2LApp9}, we note that Assumptions~\ref{A1} and~\ref{A2} imply
\begin{equation}
\begin{aligned}
&\textstyle2(\lambda_{y}^{t})^2\mathbb{E}[\|\boldsymbol{g}^{t}(\boldsymbol{x}^{t})-\boldsymbol{g}^{t-1}(\boldsymbol{x}^{t-1})\|^2]\\
&\textstyle\leq 4m\sigma_{g,0}^2(\lambda_{y}^{t})^{2}t^{-1}+2L_{g}^2(\lambda_{y}^{t})^{2}\mathbb{E}[\|\boldsymbol{x}^{t}-\boldsymbol{x}^{t-1}\|^2].\label{2LApp10}
\end{aligned}
\end{equation}
By defining 
$\vartheta_{w(i)}^{t}\triangleq\sum_{j\in\mathcal{N}_{i}}w_{ij}\vartheta_{j(i)}^{t}$ and and
using~\eqref{1AppL9}, the last term on the right hand side of~\eqref{2LApp10} satisfies
\begin{flalign}
&\textstyle\mathbb{E}[\|\boldsymbol{x}^{t}-\boldsymbol{x}^{t-1}\|^2]\leq 2\sum_{i=1}^{m}\mathbb{E}[\|\mathbf{x}_{(i)}^{t-1}-\boldsymbol{1}_{m}\otimes \bar{\mathrm{x}}_{(i)}^{t-1}\|^2]\nonumber\\
&\textstyle\quad+\frac{2c_{u1}(\lambda_{x}^{t-1})^2}{(\lambda_{z}^{t-1})^2}\mathbb{E}[\|\boldsymbol{\hat{z}}^{t}-\boldsymbol{\hat{z}}^{t-1}\|^2]+\frac{2mc_{u1}r(\lambda_{x}^{t-1})^2(\sigma_{z}^{t-1})^2}{(\lambda_{z}^{t-1})^2}\nonumber\\
&\textstyle\quad+2mc_{u2}(\lambda_{x}^{t-1})^2+n(\sigma_{x}^{t-1})^2.\label{xsubx}
\end{flalign}
Substituting~\eqref{xsubx} into~\eqref{2LApp10}, we obtain that the first term on the right hand side of~\eqref{2LApp9} satisfies
\begin{flalign}
&\textstyle2(\lambda_{y}^{t})^2\mathbb{E}[\|\boldsymbol{g}^{t}(\boldsymbol{x}^{t})-\boldsymbol{g}^{t-1}(\boldsymbol{x}^{t-1})\|^2]\leq 4m\sigma_{g,0}^2(\lambda_{y}^{t})^{2}t^{-1}\nonumber\\
&\textstyle\!\!+4L_{g}^2(\lambda_{y}^{t})^{2}\sum_{i=1}^{m}\mathbb{E}[\|\mathbf{x}_{(i)}^{t-1}-\boldsymbol{1}_{m}\otimes \bar{\mathrm{x}}_{(i)}^{t-1}\|^2]\nonumber\\
&\textstyle\!\!+\frac{4L_{g}^2c_{u1}(\lambda_{y}^{t})^{2}(\lambda_{x}^{t-1})^2}{(\lambda_{z}^{t-1})^2}\mathbb{E}[\|\boldsymbol{\hat{z}}^{t}\!-\!\boldsymbol{\hat{z}}^{t-1}\|^2]\!+\!2nL_{g}^2(\lambda_{y}^{t})^{2}(\sigma_{x}^{t-1})^2\nonumber\\
&\textstyle\!\!+\left(\frac{4mL_{g}^2c_{u1}r(\sigma_{z}^{t-1})^2}{(\lambda_{z}^{t-1})^2}\!+\!4mL_{g}^2c_{u2}\right)(\lambda_{y}^{t})^{2}(\lambda_{x}^{t-1})^2.\label{2LApp13}
\end{flalign}

Using the relation $(\lambda_{y}^{t}-\lambda_{y}^{t-1})^2\leq \left(\frac{v_{y}\lambda_{y}^{0}}{t^{1+v_{y}}}\right)^2$ for some $\varepsilon\in(t, t+1)$, the second term on the right hand side of~\eqref{2LApp9} satisfies
\begin{equation}
	\textstyle2(\lambda_{y}^{t}-\lambda_{y}^{t-1})^2\mathbb{E}[\|\boldsymbol{g}^{t-1}(\boldsymbol{x}^{t-1})\|^2]\leq \frac{2(\sigma_{g,0}^2+d_{g}^2)mv_{y}^{2}(\lambda_{y}^{0})^2}{t^{2+2v_{y}}},\label{2LApp14}
\end{equation}
where we have used $\max_{i\in[m]}\sup_{x_{i}\in\Omega_{i}}\{\|g_{i}(x_{i})\|\}\leq d_{g}$ for some $d_{g}>0$. Substituting~\eqref{2LApp13} and~\eqref{2LApp14} into~\eqref{2LApp9}, and further substituting~\eqref{2LApp9} into~\eqref{2LApp8}, we arrive at~\eqref{AppLemma2results}.
\end{proof}
\begin{lemma}\label{AppLemma3}
Under Assumptions~\ref{A1}-\ref{A4}, for any $t\in\mathbb{N}^{+}$, the following inequality always holds for Algorithm~\ref{algorithm1}:
\begin{equation}
\begin{aligned}
&\textstyle \mathbb{E}[\|\boldsymbol{\hat{z}}^{t+1}-\boldsymbol{\hat{z}}^{t}\|^2]\leq(1-|\rho_{2}|)\mathbb{E}[\|\boldsymbol{\hat{z}}^{t}-\boldsymbol{\hat{z}}^{t-1}\|^2]\\
&\textstyle+c_{z1}(\lambda_{z}^{t})^{2}\sum_{i=1}^{m}\mathbb{E}[\|\mathbf{x}_{(i)}^{t-1}-\boldsymbol{1}_{m}\otimes \bar{\mathrm{x}}_{(i)}^{t-1}\|^2]\\
&\textstyle+\frac{c_{z2}(\lambda_{z}^{t})^{2}}{(\lambda_{y}^{t})^2}\mathbb{E}[\|\boldsymbol{\hat{y}}^{t+1}\!-\!\boldsymbol{\hat{y}}^{t}\|^2]\!+\!\frac{c_{z2}(\lambda_{z}^{t})^{2}}{(\lambda_{y}^{t-1})^2}\mathbb{E}[\|\boldsymbol{\hat{y}}^{t}-\boldsymbol{\hat{y}}^{t-1}\|^2]\\
&\textstyle+c_{z3}(\lambda_{x}^{t-1})^2\mathbb{E}[\|\boldsymbol{\hat{z}}^{t}\!-\!\boldsymbol{\hat{z}}^{t-1}\|^2]\!+\!c_{z4}\frac{(\lambda_{z}^{t})^{2}(\sigma_{y}^{t})^2}{(\lambda_{y}^{t})^2}\!+\!2mr(\sigma_{z}^{t-1})^2,\label{AppLemma3results}
\end{aligned}
\end{equation}
where $c_{z1}=4\bar{L}_{f}^2|\rho_{2}|^{-1}$, $c_{z2}=2c_{z1}$, $c_{z3}=c_{z1}c_{u1}$, and $c_{z4}=c_{z1}(\frac{1}{2}n(\lambda_{z}^{1})^{2}(\sigma_{x}^{0})^{2}+mc_{u1}r(\lambda_{x}^{1})^2\sigma_{z}^2+4mr+mc_{u2}(\lambda_{z}^{1})^{2}(\lambda_{x}^{0})^2\!+\!4m(\sigma_{g,0}^2+d_{g}^2)(\lambda_{z}^{1})^{2})+2(\sigma_{f}^2+L_{f}^2)mv_{z}^{2}(\lambda_{z}^{0})^2|\rho_{2}|^{-1}$.
\end{lemma}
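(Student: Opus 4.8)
The plan is to follow the same template used for Lemma~\ref{AppLemma2}, treating the $z$-recursion in line~5 of Algorithm~\ref{algorithm1} exactly as the $y$-recursion was treated, and then expanding the two new coupling terms that appear, namely $\mathbb{E}[\|\boldsymbol{x}^t-\boldsymbol{x}^{t-1}\|^2]$ and $\mathbb{E}[\|\boldsymbol{\tilde y}^t-\boldsymbol{\tilde y}^{t-1}\|^2]$, using results already derived. First I would write the update of $z_i^t$ in stacked form, $\boldsymbol{z}^{t+1}=((I_m+W)\otimes I_r)\boldsymbol{z}^t+(W^0\otimes I_r)\boldsymbol{\zeta}^t+\lambda_z^t\nabla_y\boldsymbol{f}^t(\boldsymbol{x}^t,\boldsymbol{\tilde y}^t)$, subtract the running average $\bar z^t=\tfrac{\boldsymbol{1}_m^\top\otimes I_r}{m}\boldsymbol{z}^t$ using $\boldsymbol{1}_m^\top W=\boldsymbol{0}_m^\top$, and introduce $\Gamma=I_m-\tfrac{\boldsymbol{1}_m\boldsymbol{1}_m^\top}{m}$ to obtain a recursion for $\boldsymbol{\hat z}^{t+1}$ identical in form to~\eqref{2LApp5}. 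Forming $\|\boldsymbol{\hat z}^{t+1}-\boldsymbol{\hat z}^t\|^2$, taking expectations, using $\|\tilde W\|<1-|\rho_2|$ from Assumption~\ref{A3}, Young's inequality with parameter $|\rho_2|/(1-|\rho_2|)$, and $\mathbb{E}[\boldsymbol{\zeta}^t]=0$ from Assumption~\ref{A4} (as in the derivation of~\eqref{2LApp8}) yields $\mathbb{E}[\|\boldsymbol{\hat z}^{t+1}-\boldsymbol{\hat z}^t\|^2]\le(1-|\rho_2|)\mathbb{E}[\|\boldsymbol{\hat z}^t-\boldsymbol{\hat z}^{t-1}\|^2]+\tfrac{1}{|\rho_2|}\mathbb{E}[\|\lambda_z^t\nabla_y\boldsymbol{f}^t(\boldsymbol{x}^t,\boldsymbol{\tilde y}^t)-\lambda_z^{t-1}\nabla_y\boldsymbol{f}^{t-1}(\boldsymbol{x}^{t-1},\boldsymbol{\tilde y}^{t-1})\|^2]+2mr(\sigma_z^{t-1})^2$.

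Next I would split the middle term exactly as in~\eqref{2LApp9}, into $2(\lambda_z^t)^2\mathbb{E}[\|\nabla_y\boldsymbol{f}^t(\boldsymbol{x}^t,\boldsymbol{\tilde y}^t)-\nabla_y\boldsymbol{f}^{t-1}(\boldsymbol{x}^{t-1},\boldsymbol{\tilde y}^{t-1})\|^2]$ and $2(\lambda_z^t-\lambda_z^{t-1})^2\mathbb{E}[\|\nabla_y\boldsymbol{f}^{t-1}(\boldsymbol{x}^{t-1},\boldsymbol{\tilde y}^{t-1})\|^2]$. For the first piece, Assumption~\ref{A1} (the $\bar L_f$-Lipschitzness of $\nabla_y f_i$) together with Assumption~\ref{A2} (the $\sigma_f^2$ variance bound on $\nabla_y h$ and i.i.d.\ sampling, giving an $O(1/t)$ sample-average variance) gives a bound of the form $4\bar L_f^2(\lambda_z^t)^2\big(\mathbb{E}[\|\boldsymbol{x}^t-\boldsymbol{x}^{t-1}\|^2]+\mathbb{E}[\|\boldsymbol{\tilde y}^t-\boldsymbol{\tilde y}^{t-1}\|^2]\big)+O(m(\lambda_z^t)^2/t)$. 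For the second piece, the mean-value argument of~\eqref{lambdayrate} gives $(\lambda_z^t-\lambda_z^{t-1})^2\le(v_z\lambda_z^0)^2t^{-2(1+v_z)}$, while $\|\nabla_y f_i\|\le L_{f,2}$ and the variance bound give $\mathbb{E}[\|\nabla_y\boldsymbol{f}^{t-1}\|^2]\le m(\sigma_f^2+L_f^2)$, producing the $2(\sigma_f^2+L_f^2)mv_z^2(\lambda_z^0)^2|\rho_2|^{-1}$ contribution inside $c_{z4}$.

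Then I would substitute the two already-available expansions. For $\|\boldsymbol{x}^t-\boldsymbol{x}^{t-1}\|^2$ I would use~\eqref{xsubx} verbatim, which converts it into $\sum_i\|\boldsymbol{x}_{(i)}^{t-1}-\boldsymbol{1}_m\otimes\bar x_{(i)}^{t-1}\|^2$, a $\tfrac{(\lambda_x^{t-1})^2}{(\lambda_z^{t-1})^2}\|\boldsymbol{\hat z}^t-\boldsymbol{\hat z}^{t-1}\|^2$ term, and decaying DP-noise/variance terms $n(\sigma_x^{t-1})^2$, $\tfrac{mc_{u1}r(\lambda_x^{t-1})^2(\sigma_z^{t-1})^2}{(\lambda_z^{t-1})^2}$, $mc_{u2}(\lambda_x^{t-1})^2$. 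For $\|\boldsymbol{\tilde y}^t-\boldsymbol{\tilde y}^{t-1}\|^2$, writing $\tilde y_i^t=\tfrac{1}{\lambda_y^t}(\hat y_i^{t+1}-\hat y_i^t+\bar y^{t+1}-\bar y^t)$ as in the analysis preceding~\eqref{1AppL8}, and using~\eqref{bary} together with the bound on $\|\bar y^{t+1}-\bar y^t\|^2$ implicit there (a Laplace term $\propto r(\sigma_y^t)^2/(\lambda_y^t)^2$ and a bounded $\sigma_{g,0}^2+d_g^2$-type term coming from $\|g_i\|\le d_g$), produces $\tfrac{c}{(\lambda_y^t)^2}\|\boldsymbol{\hat y}^{t+1}-\boldsymbol{\hat y}^t\|^2+\tfrac{c}{(\lambda_y^{t-1})^2}\|\boldsymbol{\hat y}^t-\boldsymbol{\hat y}^{t-1}\|^2$ plus $\tfrac{(\sigma_y^t)^2}{(\lambda_y^t)^2}$-type and bounded terms. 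Collecting everything and using $1>v_x>v_z>v_y>0$ to absorb all ratios of decaying stepsizes and noise variances into the constants $c_{z1}=4\bar L_f^2|\rho_2|^{-1}$, $c_{z2}=2c_{z1}$, $c_{z3}=c_{z1}c_{u1}$, and $c_{z4}$ as stated, gives~\eqref{AppLemma3results}.

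I expect the bookkeeping around the $\boldsymbol{\tilde y}$-difference to be the main obstacle: unlike $\|\boldsymbol{x}^t-\boldsymbol{x}^{t-1}\|^2$, which is already packaged in~\eqref{xsubx}, the term $\|\boldsymbol{\tilde y}^t-\boldsymbol{\tilde y}^{t-1}\|^2$ carries $1/(\lambda_y^t)^2$ and $1/(\lambda_y^{t-1})^2$ blow-ups and simultaneously generates the consensus-error differences at \emph{both} consecutive pairs of times $\{t+1,t\}$ and $\{t,t-1\}$. One must check that the $(\lambda_z^t)^2$ prefactor exactly compensates the $1/(\lambda_y^t)^2$ and $1/(\lambda_y^{t-1})^2$ factors, so that the final coefficients are precisely $\tfrac{c_{z2}(\lambda_z^t)^2}{(\lambda_y^t)^2}$ and $\tfrac{c_{z2}(\lambda_z^t)^2}{(\lambda_y^{t-1})^2}$; this is exactly where the ordering $v_z>v_y$ is used. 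The remaining task is routine verification that every residual decaying term is dominated by $2mr(\sigma_z^{t-1})^2$, $c_{z3}(\lambda_x^{t-1})^2\mathbb{E}[\|\boldsymbol{\hat z}^t-\boldsymbol{\hat z}^{t-1}\|^2]$, or $\tfrac{c_{z4}(\lambda_z^t)^2(\sigma_y^t)^2}{(\lambda_y^t)^2}$, with all initial stepsizes/variances folded into the stated constant $c_{z4}$.
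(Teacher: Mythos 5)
Your proposal is correct and follows essentially the same route as the paper: mirror the $\boldsymbol{\hat{y}}$-difference recursion for $\boldsymbol{\hat{z}}$, split the gradient-difference term as in~\eqref{2LApp9}, bound it via $\bar{L}_f$-Lipschitzness into $\mathbb{E}[\|\boldsymbol{x}^{t}-\boldsymbol{x}^{t-1}\|^2]$ (handled by~\eqref{xsubx}) and $\mathbb{E}[\|\boldsymbol{\tilde{y}}^{t}-\boldsymbol{\tilde{y}}^{t-1}\|^2]$ (decomposed through $\boldsymbol{\hat{y}}$ and $\bar{y}$ differences via~\eqref{bary}), and treat the $(\lambda_z^{t}-\lambda_z^{t-1})^2$ piece by the mean-value argument of~\eqref{2LApp14}. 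The remaining steps are the same constant bookkeeping the paper performs in~\eqref{3LApp3}–\eqref{3LApp6}.
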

\begin{proof}
Since the dynamics of $z_{i}^{t}$ are similar to that of $y_{i}^{t}$, we use an argument similar to the derivation of~\eqref{2LApp8} to obtain
\begin{equation}
\vspace{-0.1em}
\begin{aligned}
&\textstyle \mathbb{E}[\|\boldsymbol{\hat{z}}^{t+1}-\boldsymbol{\hat{z}}^{t}\|^2]\leq(1-|\rho_{2}|)\mathbb{E}[\|\boldsymbol{\hat{z}}^{t}-\boldsymbol{\hat{z}}^{t-1}\|^2]+2mr(\sigma_{z}^{t-1})^2\\
&\textstyle\quad+\frac{1}{|\rho_{2}|}\mathbb{E}[\|\lambda_{z}^{t}\nabla_{y}\boldsymbol{f}^{t}(\boldsymbol{x}^{t},\boldsymbol{\tilde{y}}^{t})\!-\!\lambda_{z}^{t-1}\nabla_{y}\boldsymbol{f}^{t-1}(\boldsymbol{x}^{t-1},\boldsymbol{\tilde{y}}^{t-1})\|^2].\label{3LApp1}
\end{aligned}
\vspace{-0.1em}
\end{equation}

Following an argument similar to the derivation of~\eqref{2LApp10}, we can arrive at
\begin{flalign}
&\textstyle\mathbb{E}[\|\nabla_{y}\boldsymbol{f}^{t}(\boldsymbol{x}^{t},\boldsymbol{\tilde{y}}^{t})\!-\!\nabla_{y}\boldsymbol{f}^{t-1}(\boldsymbol{x}^{t-1},\boldsymbol{\tilde{y}}^{t-1})\|^2]\leq 2m\sigma_{f}^2t^{-1}\nonumber\\
&\textstyle\quad+\bar{L}_{f}^2(\mathbb{E}[\|\boldsymbol{x}^{t}-\boldsymbol{x}^{t-1}\|^2]+\mathbb{E}[\|\boldsymbol{\tilde{y}}^{t}-\boldsymbol{\tilde{y}}^{t-1}\|^2]).\label{3LApp3}
\end{flalign}

Using an argument similar to the derivation of~\eqref{2LApp14} yields
\vspace{-0.2em}
\begin{flalign}
&\textstyle\mathbb{E}[\|\boldsymbol{\tilde{y}}^{t}-\boldsymbol{\tilde{y}}^{t-1}\|^2]\leq8m(\sigma_{g,0}^2+d_{g}^2)\!+\! \frac{4}{(\lambda_{y}^{t})^2}\mathbb{E}[\|\boldsymbol{\hat{y}}^{t+1}-\boldsymbol{\hat{y}}^{t}\|^2]\nonumber\\
&\textstyle\!+\!\frac{4}{(\lambda_{y}^{t-1})^2}\mathbb{E}[\|\boldsymbol{\hat{y}}^{t}-\boldsymbol{\hat{y}}^{t-1}\|^2]\!+\!4mr\left(\frac{(\sigma_{y}^{t})^2}{(\lambda_{y}^{t})^2}+\frac{(\sigma_{y}^{t-1})^2}{(\lambda_{y}^{t-1})^2}\right).\label{3LApp5}
\end{flalign}

Substituting~\eqref{xsubx} and~\eqref{3LApp5} into~\eqref{3LApp3}, we can obtain
\vspace{-0.2em}
\begin{flalign}
&\textstyle2(\lambda_{z}^{t})^2\mathbb{E}[\|\nabla_{y}\boldsymbol{f}^{t}(\boldsymbol{x}^{t},\boldsymbol{\tilde{y}}^{t})\!-\!\nabla_{y}\boldsymbol{f}^{t-1}(\boldsymbol{x}^{t-1},\boldsymbol{\tilde{y}}^{t-1})\|^2]\nonumber\\
&\textstyle\leq4m\sigma_{f}^2(\lambda_{z}^{t})^{2}t^{-1}+4\bar{L}_{f}^2(\lambda_{z}^{t})^{2}\sum_{i=1}^{m}\mathbb{E}[\|\mathbf{x}_{(i)}^{t-1}-\boldsymbol{1}_{m}\otimes \bar{\mathrm{x}}_{(i)}^{t-1}\|^2]\nonumber\\
&\textstyle\quad+\frac{8\bar{L}_{f}^2(\lambda_{z}^{t})^{2}}{(\lambda_{y}^{t})^2}\mathbb{E}[\|\boldsymbol{\hat{y}}^{t+1}-\boldsymbol{\hat{y}}^{t}\|^2]\!+\!\frac{8\bar{L}_{f}^2(\lambda_{z}^{t})^{2}}{(\lambda_{y}^{t-1})^2}\mathbb{E}[\|\boldsymbol{\hat{y}}^{t}\!-\!\boldsymbol{\hat{y}}^{t-1}\|^2]\nonumber\\
&\textstyle\quad+4\bar{L}_{f}^2c_{u1}(\lambda_{x}^{t-1})^2\mathbb{E}[\|\boldsymbol{\hat{z}}^{t}-\boldsymbol{\hat{z}}^{t-1}\|^2]+2n\bar{L}_{f}^2(\lambda_{z}^{t})^{2}(\sigma_{x}^{t-1})^2\nonumber\\
&\textstyle\quad+4m\bar{L}_{f}^2c_{u1}r(\lambda_{x}^{t-1})^2(\sigma_{z}^{t-1})^2+16mr\bar{L}_{f}^2\frac{(\lambda_{z}^{t})^{2}(\sigma_{y}^{t})^2}{(\lambda_{y}^{t})^2}\nonumber\\
&\textstyle\quad+4m\bar{L}_{f}^2c_{u2}(\lambda_{z}^{t})^{2}(\lambda_{x}^{t-1})^2+16m(\sigma_{g,0}^2+d_{g}^2)\bar{L}_{f}^2(\lambda_{z}^{t})^{2}.\label{3LApp6}
\end{flalign}

Using an argument similar to the derivation of~\eqref{2LApp14} yields
\begin{equation}
\textstyle(\lambda_{z}^{t}\!-\!\lambda_{z}^{t-1})^2\mathbb{E}[\|\nabla_{y}\boldsymbol{f}^{t-1}(\boldsymbol{x}^{t-1},\boldsymbol{\tilde{y}}^{t-1})\|^2]\!\leq\! \frac{(\sigma_{f}^2+L_{f}^2)mv_{z}^{2}(\lambda_{z}^{0})^2}{t^{2+2v_{z}}}.\nonumber
\end{equation}

Following an argument similar to the derivation of~\eqref{2LApp9}, we  incorporate the preceding inequality and~\eqref{3LApp6} into the second term on the right hand side of~\eqref{3LApp1} to establish Lemma~\ref{AppLemma3}.
\vspace{-0.2em}
\end{proof}
\begin{lemma}\label{consensuslemma}
Under Assumptions~\ref{A1}-\ref{A4}, if the decaying rates of stepsizes satisfy $1>v_{x}>v_{z}>v_{y}>0$, then we have
\begin{flalign}
\textstyle\sum_{i=1}^{m}\mathbb{E}[\|\mathbf{x}_{(i)}^{t}\!-\!\boldsymbol{1}_{m}\otimes \bar{\mathrm{x}}_{(i)}^{t}\|^2]&\textstyle\leq \frac{\bar{C}_{x}}{(t+1)^{2\varsigma_{x}}},\label{ratex}\\
\textstyle\mathbb{E}[\|\boldsymbol{y}^{t+1}-\boldsymbol{y}^{t}-\boldsymbol{1}_{m}\otimes(\bar{y}^{t+1}-\bar{y}^{t})\|^2]&\textstyle\leq\frac{\bar{C}_{y}}{(t+1)^{2\varsigma_{y}}},\label{ratey}\\
\textstyle\mathbb{E}[\|\boldsymbol{z}^{t+1}-\boldsymbol{z}^{t}-\boldsymbol{1}_{m}\otimes(\bar{z}^{t+1}-\bar{z}^{t})\|^2]&\textstyle\leq\frac{\bar{C}_{z}}{(t+1)^{\beta_{z}}},\label{ratez}
\end{flalign}
where $\beta_{z}$ is given by $\beta_{z}=\min\{2v_{z}+2\varsigma_{y}-2v_{y},2\varsigma_{z}\}$ and $\bar{C}_{x}$, $\bar{C}_{y}$, and $\bar{C}_{z}$ are given in~\eqref{4LApp2},~\eqref{4LApp3}, and~\eqref{4LApp4}, respectively.
\vspace{-0.3em}
\end{lemma}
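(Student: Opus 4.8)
The plan is to obtain \eqref{ratex}--\eqref{ratez} from the three coupled recursions already established in Lemmas~\ref{AppLemma1},~\ref{AppLemma2}, and~\ref{AppLemma3}. Write $\mathcal{E}_{x}^{t}=\sum_{i=1}^{m}\mathbb{E}[\|\boldsymbol{x}_{(i)}^{t}-\boldsymbol{1}_{m}\otimes\bar{x}_{(i)}^{t}\|^{2}]$, $\mathcal{E}_{y}^{t}=\mathbb{E}[\|\boldsymbol{\hat{y}}^{t+1}-\boldsymbol{\hat{y}}^{t}\|^{2}]$, and $\mathcal{E}_{z}^{t}=\mathbb{E}[\|\boldsymbol{\hat{z}}^{t+1}-\boldsymbol{\hat{z}}^{t}\|^{2}]$; since $\boldsymbol{\hat{y}}^{t}=\boldsymbol{y}^{t}-\boldsymbol{1}_{m}\otimes\bar{y}^{t}$ and $\boldsymbol{\hat{z}}^{t}=\boldsymbol{z}^{t}-\boldsymbol{1}_{m}\otimes\bar{z}^{t}$, these three quantities are exactly the left-hand sides of \eqref{ratex}, \eqref{ratey}, and \eqref{ratez}. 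Each of the three lemmas has the contractive shape $\mathcal{E}^{t}\leq(1-|\rho_{2}|)\mathcal{E}^{t-1}+(\text{cross-coupling terms})+(\text{DP-noise and }t^{-1}\text{ terms})$, and in each of them every cross-coupling term carries an extra factor $(\lambda_{x}^{t})^{2}$, $(\lambda_{y}^{t})^{2}$, $(\lambda_{z}^{t})^{2}$, or a ratio of these, which under $1>v_{x}>v_{z}>v_{y}>0$ decays polynomially; this is precisely what lets the recursions be disentangled. As a first step I would record a crude uniform bound: $\mathcal{E}_{x}^{t}$ is bounded for all $t$ because $\Omega$ is compact and the projection in line~6 keeps every $\boldsymbol{x}_{i}^{t}$ (hence each estimate $x_{j(i)}^{t}$) inside $\Omega$; substituting this into Lemmas~\ref{AppLemma2}--\ref{AppLemma3} and summing (using the $\mathcal{E}_{y}^{t}$ recursion to eliminate the $\mathcal{E}_{y}^{t}$ that appears on the right of the $\mathcal{E}_{z}^{t}$ recursion), the vanishing of all coupling coefficients gives, for every $t$ past some $T_{0}$, $\mathcal{E}_{y}^{t}+\mathcal{E}_{z}^{t}\leq(1-\tfrac{|\rho_{2}|}{2})(\mathcal{E}_{y}^{t-1}+\mathcal{E}_{z}^{t-1})+M$ with $M$ a constant; since the finitely many earlier terms are finite (each increment of $y_{i}^{t}$, $z_{i}^{t}$ has bounded second moment), this yields $\sup_{t}(\mathcal{E}_{x}^{t}+\mathcal{E}_{y}^{t}+\mathcal{E}_{z}^{t})<\infty$.

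Next I would read off the rates of $\mathcal{E}_{x}^{t}$ and $\mathcal{E}_{y}^{t}$ directly from this uniform bound. Substituting it into Lemma~\ref{AppLemma1} gives a forcing of order $O((t+1)^{-2(v_{x}-v_{z})})+c_{x2}\sigma_{x}^{2}(t+1)^{-2\varsigma_{x}}$, whose slower summand is $(t+1)^{-2\varsigma_{x}}$ precisely because Assumption~\ref{A4} forces $\varsigma_{x}<v_{x}-v_{z}$; unrolling the recursion, bounding the geometric-times-polynomial sum with Lemma~\ref{lemmaexp}, and killing the $(1-|\rho_{2}|)^{t}$ transient with Lemma~\ref{lemmae}, one obtains \eqref{ratex}. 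The same computation applied to Lemma~\ref{AppLemma2}, whose forcing is of order $O((t+1)^{-2v_{y}})+O((t+1)^{-2(v_{y}+v_{x}-v_{z})})+c_{y3}\sigma_{y}^{2}(t+1)^{-2\varsigma_{y}}$ and is dominated by $(t+1)^{-2\varsigma_{y}}$ since $\varsigma_{y}<v_{y}$, gives \eqref{ratey}; both $C_{x}$ and $C_{y}$ come out explicitly in the form produced by Lemmas~\ref{lemmae}--\ref{lemmaexp}.

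For $\mathcal{E}_{z}^{t}$ I would now feed the uniform bound on $\mathcal{E}_{x}^{t}$ together with the rate $\mathcal{E}_{y}^{t}=O((t+1)^{-2\varsigma_{y}})$ just proved into Lemma~\ref{AppLemma3}. The self-coupling term $c_{z3}(\lambda_{x}^{t-1})^{2}\mathcal{E}_{z}^{t-1}$ is merged into the contraction: for all $t\geq T_{0}$ (equivalently, for a small enough initial stepsize $\lambda_{x}^{0}$), $1-|\rho_{2}|+c_{z3}(\lambda_{x}^{t-1})^{2}\leq 1-\tfrac{|\rho_{2}|}{2}$. The residual forcing is of order $O((t+1)^{-2v_{z}})$ from the $\mathcal{E}_{x}^{t-1}$ term, $O((t+1)^{-(2(v_{z}-v_{y})+2\varsigma_{y})})$ from the two $\mathcal{E}_{y}$ terms and the $c_{z4}$ term, and $2mr\sigma_{z}^{2}(t+1)^{-2\varsigma_{z}}$; since $\varsigma_{y}<v_{y}$ and $\varsigma_{z}<v_{z}$, the overall order is $(t+1)^{-\min\{2(v_{z}-v_{y})+2\varsigma_{y},\,2\varsigma_{z}\}}=(t+1)^{-\beta_{z}}$. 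One final pass of Lemmas~\ref{lemmae}--\ref{lemmaexp} over the unrolled recursion delivers \eqref{ratez} with an explicit $C_{z}$.

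The main obstacle is the exponent bookkeeping in the last two steps: one has to verify, using only $1>v_{x}>v_{z}>v_{y}>0$ together with the Assumption~\ref{A4} inequalities $\varsigma_{x}<v_{x}-v_{z}$, $\varsigma_{y}<v_{y}$, $\varsigma_{z}<v_{z}$, that in each of the three recursions the slowest-decaying forcing term is exactly the DP-noise term that sets the target rate, so that no cross-coupling ever degrades that rate, and that the self-coupling factor $c_{z3}(\lambda_{x}^{t-1})^{2}$ in Lemma~\ref{AppLemma3} can indeed be dominated by $|\rho_{2}|/2$ for all large $t$. Once these order comparisons are in place the system decouples in the order $\mathcal{E}_{x}^{t},\mathcal{E}_{y}^{t}\to\mathcal{E}_{z}^{t}$, and everything else reduces to the routine ``contraction plus polynomially decaying forcing $\Rightarrow$ polynomially decaying iterate'' estimate supplied by Lemma~\ref{lemmaexp} (with Lemma~\ref{lemmae} removing the geometric transient).
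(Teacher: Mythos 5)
Your proposal is correct and follows the same overall architecture as the paper's proof: start from the three coupled recursions of Lemmas~\ref{AppLemma1}--\ref{AppLemma3}, establish a preliminary bound on the combined errors, and then bootstrap sequentially (x, then y, then z) via the standard ``contraction plus polynomially decaying forcing'' estimate. The one genuine difference is the preliminary step. The paper sums the three recursions, absorbs the cross-coupling coefficients using $v_x>v_z>v_y$ and suitably chosen initial stepsizes, and applies the contraction lemma once to obtain an explicit polynomial decay exponent $\beta_0=\min\{2v_z+2\varsigma_y-2v_y,2\varsigma_x,2\varsigma_y,2\varsigma_z\}$ for the combined quantity (see~\eqref{4LApp0}), which is then substituted back into each individual recursion; you instead establish only uniform boundedness (compactness of $\Omega$ for the x-consensus error, a crude contraction for the y- and z-increments after eliminating the current-time $\mathbb{E}[\|\boldsymbol{\hat y}^{t+1}-\boldsymbol{\hat y}^{t}\|^2]$ term) and feed that coarser information in. This still works because Assumption~\ref{A4} makes the DP-noise term the slowest-decaying forcing in every recursion even under the cruder bounds: in the x-recursion $2(v_x-v_z)>2\varsigma_x$, in the y-recursion $2v_y>2\varsigma_y$ and $2(v_y+v_x-v_z)>2\varsigma_y$, and in the z-recursion $2v_z>\beta_z$ while $2(v_z-v_y)+2\varsigma_y\geq\beta_z$ by definition, so the rates in \eqref{ratex}--\eqref{ratez} come out identical. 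Your absorption of the self-coupling $c_{z3}(\lambda_x^{t-1})^2\mathbb{E}[\|\boldsymbol{\hat z}^{t}-\boldsymbol{\hat z}^{t-1}\|^2]$ into the contraction factor is a legitimate alternative to the paper's substitution of the $\beta_0$-decay bound for that term. What you trade away is minor: your constants do not coincide with the specific $C_x$, $C_y$, $C_z$ of \eqref{4LApp2}--\eqref{4LApp4} (your $C_z$, for instance, would be built from the diameter-based bound on the x-error rather than from $C_x$), and your argument needs either a ``for all $t\geq T_0$ plus finite transient'' clause or the paper's own ``proper initial stepsizes'' proviso to make the contraction factors uniform in $t$; what you gain is a slightly more elementary first step that avoids tracking the exponent $\beta_0$ altogether.
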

\begin{proof}
We sum both sides of~\eqref{AppLemma1results},~\eqref{AppLemma2results}, and~\eqref{AppLemma3results} to obtain
\begin{equation}
\vspace{-0.2em}
\begin{aligned}
&\textstyle\sum_{i=1}^{m}\!\!\left(\mathbb{E}[\|\mathbf{x}_{(i)}^{t+1}\!-\!\boldsymbol{1}_{m}\otimes \bar{\mathrm{x}}_{(i)}^{t+1}\|^2]\!+\!\frac{|\rho_{2}|}{2}\!\mathbb{E}[\|\mathbf{x}_{(i)}^{t}\!-\!\boldsymbol{1}_{m}\otimes \bar{\mathrm{x}}_{(i)}^{t}\|^2]\right)\nonumber\\
&\textstyle+\left(1-\frac{c_{z2}(\lambda_{z}^{t})^{2}}{(\lambda_{y}^{t})^2}\right)\mathbb{E}[\|\boldsymbol{\hat{y}}^{t+1}-\boldsymbol{\hat{y}}^{t}\|^2]\nonumber\\
&\textstyle+\left(1-\frac{c_{x1}(\lambda_{x}^{t})^2}{(\lambda_{z}^{t})^2}\right)\mathbb{E}[\|\boldsymbol{\hat{z}}^{t+1}-\boldsymbol{\hat{z}}^{t}\|^2]\nonumber\\
&\textstyle\leq \left(1-\frac{|\rho_{2}|}{2}\right)\sum_{i=1}^{m}\mathbb{E}[\|\mathbf{x}_{(i)}^{t}-\boldsymbol{1}_{m}\otimes\bar{\mathrm{x}}_{(i)}^{t}\|^2]\nonumber\\
&\textstyle+(c_{y1}(\lambda_{y}^{t})^{2}+c_{z1}(\lambda_{z}^{t})^{2})\sum_{i=1}^{m}\mathbb{E}[\|\mathbf{x}_{(i)}^{t-1}-\boldsymbol{1}_{m}\otimes \bar{\mathrm{x}}_{(i)}^{t-1}\|^2]\nonumber\\
&\textstyle+\left(1-|\rho_{2}|+\frac{c_{z2}(\lambda_{z}^{t})^{2}}{(\lambda_{y}^{t-1})^2}\right)\mathbb{E}[\|\boldsymbol{\hat{y}}^{t}-\boldsymbol{\hat{y}}^{t-1}\|^2]\nonumber\\
&\textstyle+\left(1\!-\!|\rho_{2}|\!+\!\frac{c_{y2}(\lambda_{y}^{t})^{2}(\lambda_{x}^{t-1})^2}{(\lambda_{z}^{t-1})^2}\!+\!c_{z3}(\lambda_{x}^{t-1})^2\right)\mathbb{E}[\|\boldsymbol{\hat{z}}^{t}-\boldsymbol{\hat{z}}^{t-1}\|^2]\nonumber\\
&\textstyle+c_{x2}(\sigma_{x}^{t})^{2}+c_{y3}(\sigma_{y}^{t-1})^2+c_{z4}\frac{(\lambda_{z}^{t})^{2}(\sigma_{y}^{t})^2}{(\lambda_{y}^{t})^2}+2mr(\sigma_{z}^{t-1})^2.\nonumber
\end{aligned}
\vspace{-0.2em}
\end{equation}

Given that $\lambda_{x}^{t}$, $\lambda_{y}^{t}$, and $\lambda_{z}^{t}$ are decaying sequences and their decaying rates satisfy $v_{x}>v_{z}>v_{y}$, we can choose proper initial stepsizes such that the following inequality holds:
\begin{flalign}
&\textstyle\sum_{i=1}^{m}\!\!\left(\mathbb{E}[\|\mathbf{x}_{(i)}^{t+1}\!-\!\boldsymbol{1}_{m}\otimes \bar{\mathrm{x}}_{(i)}^{t+1}\|^2]\!+\!\frac{|\rho_{2}|}{2}\!\mathbb{E}[\|\mathbf{x}_{(i)}^{t}\!-\!\boldsymbol{1}_{m}\otimes \bar{\mathrm{x}}_{(i)}^{t}\|^2]\right)\nonumber\\
&\textstyle\quad+\left(1-\frac{|\rho_{2}|}{4}\right)\left(\mathbb{E}[\|\boldsymbol{\hat{y}}^{t+1}-\boldsymbol{\hat{y}}^{t}\|^2]+\mathbb{E}[\|\boldsymbol{\hat{z}}^{t+1}-\boldsymbol{\hat{z}}^{t}\|^2]\right)\nonumber\\
&\textstyle\leq \left(1-\frac{|\rho_{2}|}{2}\right)\Big[\sum_{i=1}^{m}\mathbb{E}[\|\mathbf{x}_{(i)}^{t}-\boldsymbol{1}_{m}\otimes \bar{\mathrm{x}}_{(i)}^{t}\|^2]\nonumber\\
&\textstyle\quad+\frac{|\rho_{2}|}{2}\!\sum_{i=1}^{m}\mathbb{E}[\|\mathbf{x}_{(i)}^{t}-\boldsymbol{1}_{m}\otimes \bar{\mathrm{x}}_{(i)}^{t}\|^2]\nonumber\\
&\textstyle\quad+\left(1-\frac{|\rho_{2}|}{4}\right)\left(\mathbb{E}[\|\boldsymbol{\hat{y}}^{t}-\boldsymbol{\hat{y}}^{t-1}\|^2]+\mathbb{E}[\|\boldsymbol{\hat{z}}^{t}-\boldsymbol{\hat{z}}^{t-1}\|^2]\right)\Big]\nonumber\\
&\textstyle \quad+\frac{c_{x2}+c_{y3}+c_{z4}+2mr}{(t+1)^{\min\{2v_{z}+2\varsigma_{y}-2v_{y},2\varsigma_{x},2\varsigma_{y},2\varsigma_{z}\}}}.\label{bili}
\end{flalign}
Applying Lemma 11 in~\cite{zijiGT} to~\eqref{bili}, for any $t>0$, we have
\begin{equation}
\vspace{-0.2em}
\begin{aligned}
&\textstyle\sum_{i=1}^{m}\!\!\left(\mathbb{E}[\|\mathbf{x}_{(i)}^{t+1}\!-\!\boldsymbol{1}_{m}\otimes \bar{\mathrm{x}}_{(i)}^{t+1}\|^2]\!+\!\frac{|\rho_{2}|}{2}\!\mathbb{E}[\|\mathbf{x}_{(i)}^{t}\!-\!\boldsymbol{1}_{m}\otimes \bar{\mathrm{x}}_{(i)}^{t}\|^2]\right)\\
&\textstyle+\left(1\!-\!\frac{|\rho_{2}|}{4}\right)\left(\mathbb{E}[\|\boldsymbol{\hat{y}}^{t+1}-\boldsymbol{\hat{y}}^{t}\|^2]\!+\!\mathbb{E}[\|\boldsymbol{\hat{z}}^{t+1}-\boldsymbol{\hat{z}}^{t}\|^2]\right)\!\leq\! \frac{c_{1}}{(t+1)^{\beta_{0}}},\label{4LApp0}
\end{aligned}
\end{equation}
where $\beta_{0}=\min\{2v_{z}+2\varsigma_{y}-2v_{y}, 2\varsigma_{x},2\varsigma_{y}, 2\varsigma_{z}\}$ and $c_{1}\!=\!b_{0}(\frac{4\beta_{0}}{e\ln(\frac{4}{4-|\rho_{2}|})})^{\beta_{0}}(\frac{a_{0}(2-|\rho_{2}|)}{2b_{0}}\!+\!\frac{4}{|\rho_{2}|})$ with $a_{0}\triangleq\sum_{i=1}^{m}\!(\mathbb{E}[\|\mathbf{x}_{(i)}^{1}\!-\!\boldsymbol{1}_{m}\otimes \bar{\mathrm{x}}_{(i)}^{1}\|^2]+\frac{|\rho_{2}|}{2}\mathbb{E}[\|\mathbf{x}_{(i)}^{0}\!-\!\boldsymbol{1}_{m}\otimes \bar{\mathrm{x}}_{(i)}^{0}\|^2])+(1-\frac{|\rho_{2}|}{4})(\mathbb{E}[\|\boldsymbol{\hat{y}}^{1}-\boldsymbol{\hat{y}}^{0}\|^2]+\mathbb{E}[\|\boldsymbol{\hat{z}}^{1}-\boldsymbol{\hat{z}}^{0}\|^2])$ and $b_{0}=c_{x2}+c_{y3}+c_{z4}+2mr$.

Inequality~\eqref{4LApp0} implies $\mathbb{E}[\|\boldsymbol{\hat{z}}^{t+1}-\boldsymbol{\hat{z}}^{t}\|^2]\leq \frac{4c_{1}}{(4-|\rho_{2}|)(t+1)^{\beta_{0}}}$. Substituting this inequality into~\eqref{AppLemma1results} yields
\begin{equation}
\begin{aligned}
&\textstyle\sum_{i=1}^{m}\mathbb{E}[\|\mathbf{x}_{(i)}^{t+1}-\boldsymbol{1}_{m}\otimes \bar{\mathrm{x}}_{(i)}^{t+1}\|^2]\\
&\textstyle\leq(1-|\rho_{2}|)\sum_{i=1}^{m}\mathbb{E}[\|\mathbf{x}_{(i)}^{t}-\boldsymbol{1}_{m}\otimes\bar{\mathrm{x}}_{(i)}^{t}\|^2]\\
&\textstyle\quad+\frac{4c_{1}c_{x1}(\lambda_{x}^{0})^2}{(4-|\rho_{2}|)(\lambda_{z}^{0})^2(t+1)^{\beta_{0}+2v_{x}-2v_{z}}}+\frac{c_{x2}\sigma_{x}^2}{(t+1)^{2\varsigma_{x}}}.
\label{4LApp1}
\end{aligned}
\end{equation}
Given $v_{x}\!>\!v_{z}\!>\!v_{y}$ from the lemma statement and $v_{x}-v_{z}>\varsigma_{x}$ from Assumption~\ref{A4}, we have $\min\{\beta_{0}+2v_{x}-2v_{z},2\varsigma_{x}\}=2\varsigma_{x}$.

Applying Lemma 11 in~\cite{zijiGT} to~\eqref{4LApp1}, we arrive at
\begin{equation}
\textstyle\sum_{i=1}^{m}\mathbb{E}[\|\mathbf{x}_{(i)}^{t}\!-\!\boldsymbol{1}_{m}\otimes \bar{\mathrm{x}}_{(i)}^{t}\|^2]\leq \frac{\bar{C}_{x}}{(t+1)^{2\varsigma_{x}}},\label{4LApp2}
\end{equation}
where $\bar{C}_{x}$ is given by $\bar{C}_{x}\!=\!b_{1}(\frac{8\varsigma_{x}}{e\ln(\frac{2}{2-|\rho_{2}|})})^{2\varsigma_{x}}(\frac{a_{1}(1-|\rho_{2}|)}{b_{1}}+\frac{2}{|\rho_{2}|})$ with $a_{1}\!=\!\sum_{i=1}^{m}\mathbb{E}[\|\mathbf{x}_{(i)}^{0}-\boldsymbol{1}_{m}\otimes \bar{\mathrm{x}}_{(i)}^{0}\|^2]$, $b_{1}\!=\!\frac{4c_{1}c_{x1}(\lambda_{x}^{0})^2}{(4-|\rho_{2}|)(\lambda_{z}^{0})^2}+c_{x2}\sigma_{x}^{2}$, $c_{1}$ given in~\eqref{4LApp0}, and $c_{x1}$ and $c_{x2}$ given in~\eqref{AppLemma1results}.

By substituting~\eqref{4LApp2} and $\mathbb{E}[\|\boldsymbol{\hat{z}}^{t}-\boldsymbol{\hat{z}}^{t-1}\|^2]\!\leq\! \frac{4c_{1}}{(4-|\rho_{2}|)t^{\beta_{0}}}$ into~\eqref{AppLemma2results} and using Lemma 11 in~\cite{zijiGT}, we have
\begin{equation}
\textstyle \mathbb{E}[\|\boldsymbol{\hat{y}}^{t+1}-\boldsymbol{\hat{y}}^{t}\|^2]\leq\frac{\bar{C}_{y}}{(t+1)^{2\varsigma_{y}}},\label{4LApp3}
\end{equation}
with $\bar{C}_{y}=b_{2}(\frac{8\varsigma_{y}}{e\ln(\frac{2}{2-|\rho_{2}|})})^{2\varsigma_{y}}(\frac{a_{2}(1-|\rho_{2}|)}{b_{2}}+\frac{2}{|\rho_{2}|})$, $a_{2}\!=\!\textstyle \mathbb{E}[\|\boldsymbol{\hat{y}}^{1}-\boldsymbol{\hat{y}}^{0}\|^2]$, $b_{2}\!=\!\bar{C}_{x}c_{y1}(\lambda_{y}^{0})^{2}+c_{y3}\sigma_{y}^2+\frac{4c_{1}c_{y2}(\lambda_{y}^{0})^{2}(\lambda_{x}^{0})^2}{(4-|\rho_{2}|)(\lambda_{z}^{0})^2}$, $c_{1}$ given in~\eqref{4LApp0}, $\bar{C}_{x}$ given in~\eqref{4LApp2}, and $c_{y1}$ through $c_{y3}$ given in~\eqref{AppLemma2results}.

Substituting~\eqref{4LApp2},~\eqref{4LApp3}, and the relation $\mathbb{E}[\|\boldsymbol{\hat{z}}^{t}-\boldsymbol{\hat{z}}^{t-1}\|^2]\leq \frac{4c_{1}}{(4-|\rho_{2}|)t^{\beta_{0}}}$ into~\eqref{AppLemma3results}, we obtain
\begin{equation}
\textstyle \mathbb{E}[\|\boldsymbol{\hat{z}}^{t+1}-\boldsymbol{\hat{z}}^{t}\|^2]\leq\frac{\bar{C}_{z}}{(t+1)^{\min\{2v_{z}+2\varsigma_{y}-2v_{y},2\varsigma_{z}\}}}=\frac{\bar{C}_{z}}{(t+1)^{\beta_{z}}},\label{4LApp4}
\end{equation}
with $\bar{C}_{z}=b_{3}(\frac{4\beta_{z}}{e\ln(\frac{2}{2-|\rho_{2}|})})^{4\beta_{z}}(\frac{a_{3}(1-|\rho_{2}|)}{b_{3}}+\frac{2}{|\rho_{2}|})$, $a_{3}\!=\!\textstyle \mathbb{E}[\|\boldsymbol{\hat{z}}^{1}-\boldsymbol{\hat{z}}^{0}\|^2]$, $b_{3}\!=\!\bar{C}_{x}c_{z1}(\lambda_{z}^{0})^{2}\!+\!\frac{2\bar{C}_{y}c_{z2}(\lambda_{z}^{0})^{2}}{(\lambda_{y}^{0})^2}\!+\!\frac{4c_{1}c_{z3}(\lambda_{x}^{0})^2}{(4-|\rho_{2}|)}+\frac{c_{z4}(\lambda_{z}^{0})^2\sigma_{y}^2}{(\lambda_{y}^{0})^2}+2mr\sigma_{z}^2$, $c_{1}$ given in~\eqref{4LApp0}, $\bar{C}_{x}$ given in~\eqref{4LApp2}, $\bar{C}_{y}$ given in~\eqref{4LApp3}, and $c_{z1}$ through $c_{z3}$ given in~\eqref{AppLemma3results}.
\end{proof}

For the subsequent analysis, we recall the definition of $u_{i}^{t}$ and introduce a new auxiliary variable $\breve{u}_{i}^{t}$ as follows:
\begin{equation}
\begin{aligned}
&\textstyle u_{i}^{t}=\nabla_{x}f_{i}^{t}(x_{i}^{t},\tilde{y}_{i}^{t})+\nabla g_{i}^{t}(x_{i}^{t})\tilde{z}_{i}^{t};\\
&\textstyle \breve{u}_{i}^{t}=\nabla_{x}f_{i}(x_{i}^{t},g(\boldsymbol{x}^{t}))+\nabla g_{i}(x_{i}^{t})\nabla_{y}\bar{f}(\boldsymbol{x}^{t},g(\boldsymbol{x}^{t})).\label{udefinition}
\end{aligned}
\end{equation}

By defining $\mathbf{\bar{x}}^{t}=\col(\bar{\mathrm{x}}_{(1)}^{t},\cdots,\bar{\mathrm{x}}_{(m)}^{t})$, we have
\begin{flalign}
&\textstyle\mathbb{E}[\langle \mathbf{\bar{x}}^{t}-\boldsymbol{x}^{*},\boldsymbol{\breve{u}}^{t}-\boldsymbol{u}^{t}\rangle]=\mathbb{E}[\langle\mathbf{\bar{x}}^{t}-\boldsymbol{x}^{*},\nonumber\\
&\textstyle\nabla_{x}\boldsymbol{f}(\boldsymbol{x}^{t},\boldsymbol{1}_{m}\otimes g(\boldsymbol{x}^{t}))-\nabla_{x}\boldsymbol{f}^{t}(\boldsymbol{x}^{t},\boldsymbol{\tilde{y}}^{t})\rangle]+\mathbb{E}[\langle\mathbf{\bar{x}}^{t}-\boldsymbol{x}^{*},\nonumber\\
&\textstyle\nabla \boldsymbol{g}(\boldsymbol{x}^{t})(\boldsymbol{1}_{m}\otimes \nabla_{y}\bar{f}(\boldsymbol{x}^{t},\boldsymbol{1}_{m}\otimes g(\boldsymbol{x}^{t})))-\nabla \boldsymbol{g}^{t}(\boldsymbol{x}^{t})\boldsymbol{\tilde{z}}^{t}\rangle].\label{1TApp3}
\end{flalign}

We introduce the following Lemma~\ref{firstlemma} and Lemma~\ref{secondlemma} to characterize the two terms on the right hand side of~\eqref{1TApp3}, respectively. 
\begin{lemma}\label{firstlemma}
Under Assumptions~\ref{A1}-\ref{A4}, if the rates of stepsizes satisfy $1>v_{x}>v_{z}$ and $\frac{1}{2}>v_{z}>v_{y}>0$, and the
rate of DP-noise variance $\varsigma_{x}$ satisfies $\varsigma_{x}>v_{y}-\varsigma_{y}$, then we have
\begin{equation}
\begin{aligned}
&\textstyle\mathbb{E}[\langle\mathbf{\bar{x}}^{t}-\boldsymbol{x}^{*},\nabla_{x}\boldsymbol{f}(\boldsymbol{x}^{t},\boldsymbol{1}_{m}\otimes g(\boldsymbol{x}^{t}))-\nabla_{x}\boldsymbol{f}^{t}(\boldsymbol{x}^{t},\boldsymbol{\tilde{y}}^{t})\rangle]\\
&\textstyle\leq \frac{\kappa}{2}\mathbb{E}[\|\mathbf{\bar{x}}^{t}-\boldsymbol{x}^*\|^2]+\frac{b_{1}}{\kappa(t+1)^{\beta_{1}}}+\frac{b_{2}}{(t+1)^{\beta_{2}}},\label{firstlemmaresult}
\end{aligned}
\end{equation}
for any $\kappa>0$, where $\beta_{1}=\min\{2\varsigma_{x},1\}$ and $\beta_{2}= \varsigma_{x}-v_{y}+\varsigma_{y}$. Here, $b_{1}$ and $b_{2}$ are given by $b_{1}=4d_{0}+d_{5}$ and $b_{2}=d_{3}$, respectively, with $d_{0}$ through $d_{5}$ defined as follows:
\begin{align}
d_{0} &\textstyle = \frac{4\bar{L}_{f}^2\mathbb{E}[\|\boldsymbol{\hat{y}}^{1}-\boldsymbol{\hat{y}}^{0}\|^2]}{(\lambda_{y}^{0})^2(\ln((1-|\rho_{2}|)^{2})e)^2};~d_{1}\!\!=\! n\sigma_{x}^2\!+\!\frac{(\lambda_{x}^{0})^2(c_{u1}(\bar{C}_{z}+r\sigma_{z}^2)+c_{u2})}{m(\lambda_{z}^{0})^2}; \notag \\
d_{2}  &\textstyle= \frac{2^{4\varsigma_{y}+3}mr\bar{L}_{f}^2\sigma_{y}^2}{(\ln(1-|\rho_{2}|)e)^2|\rho_{2}|};~\quad
d_{3}= \frac{2^{\beta_{2}+2+\varsigma_{x}}\sqrt{d_{1}d_{2}}}{\lambda_{y}^{0}(\ln(\sqrt{1-|\rho_{2}|})e)^2(1-\sqrt{1-|\rho_{2}|})};\notag \\
d_{4}  &\textstyle=4L_{g}^2(\lambda_{x}^{0})^{2}(\lambda_{y}^{0})^{2}(\lambda_{z}^{0})^{-2}(\bar{C}_{z}c_{u1}+mrc_{u1}\sigma_{z}^2) \notag \\
& \quad +2L_{g}^2(\lambda_{y}^{0})^{2}(2\bar{C}_{x}+n\sigma_{x}^2+2mc_{u2}(\lambda_{x}^{0})^{2})\notag \\
&\quad+2m(2+v_{y}^{2}) (\lambda_{y}^{0})^{2}\sigma_{g,0}^2
+2md_{g}^2v_{y}^{2}(\lambda_{y}^{0})^2; \notag \\
d_{5} &\textstyle= \frac{2^{2\min\{2v_{x}+2\varsigma_{y},2v_{y}+1,2v_{y}+2\varsigma_{x}\}+2}d_{4}\bar{L}_{f}^2}{(\lambda_{y}^{0})^2(\ln(1-|\rho_{2}|)e)^2|\rho_{2}|}.\label{parameter1}
\end{align}
\end{lemma}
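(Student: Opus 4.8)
The plan is to split the mismatch into a stochastic (empirical‑risk) gradient bias and an aggregative‑tracking error, $\nabla_x\boldsymbol f(\boldsymbol x^t,\boldsymbol 1_m\otimes g(\boldsymbol x^t))-\nabla_x\boldsymbol f^t(\boldsymbol x^t,\tilde{\boldsymbol y}^t)=D_1^t+D_2^t$, where $D_1^t=\nabla_x\boldsymbol f(\boldsymbol x^t,\boldsymbol 1_m\otimes g(\boldsymbol x^t))-\nabla_x\boldsymbol f^t(\boldsymbol x^t,\boldsymbol 1_m\otimes g^t(\boldsymbol x^t))$ and $D_2^t=\nabla_x\boldsymbol f^t(\boldsymbol x^t,\boldsymbol 1_m\otimes g^t(\boldsymbol x^t))-\nabla_x\boldsymbol f^t(\boldsymbol x^t,\tilde{\boldsymbol y}^t)$, and to bound $\mathbb{E}[\langle\bar{\boldsymbol x}^t-\boldsymbol x^*,D_j^t\rangle]$ separately. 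For $D_1^t$ I would use Assumption~\ref{A2}: since $\nabla_x\boldsymbol f^t(\boldsymbol x^t,\cdot)$ and $g^t(\boldsymbol x^t)$ are empirical averages of $t+1$ i.i.d. unbiased samples (of $\nabla_x\boldsymbol f(\boldsymbol x^t,\cdot)$ and $g(\boldsymbol x^t)$, with variances $\sigma_f^2$, $\sigma_{g,0}^2$), together with the gradient‑Lipschitz bounds of Assumptions~\ref{A1} and~\ref{A5} to absorb the difference between $g^t(\boldsymbol x^t)$ and $g(\boldsymbol x^t)$, a variance‑reduction estimate in the spirit of Lemma~2 of~\cite{zijiGT} gives $\mathbb{E}[\|D_1^t\|^2]=\mathcal O(1/(t+1))$; then Cauchy--Schwarz and Young's inequality yield $\mathbb{E}[\langle\bar{\boldsymbol x}^t-\boldsymbol x^*,D_1^t\rangle]\le\tfrac{\kappa}{4}\mathbb{E}[\|\bar{\boldsymbol x}^t-\boldsymbol x^*\|^2]+\mathcal O(1/(\kappa(t+1)))$, which accounts for the exponent $1$ in $\beta_1=\min\{2\varsigma_x,1\}$.

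The heart of the argument is $D_2^t$. Using the Lipschitz continuity of the gradients (Assumptions~\ref{A1},~\ref{A5}) gives $\|D_2^t\|\le\bar L_f\|\tilde{\boldsymbol y}^t-\boldsymbol 1_m\otimes g^t(\boldsymbol x^t)\|$, and from line~4 of Algorithm~\ref{algorithm1} together with $\tilde y_i^t=\tfrac1{\lambda_y^t}(y_i^{t+1}-y_i^t)$ one has the exact identity $\tilde y_i^t-g^t(\boldsymbol x^t)=\tfrac1{\lambda_y^t}\big(\hat y_i^{t+1}-\hat y_i^t+\bar\chi_w^t\big)$ with $\bar\chi_w^t=\tfrac1m\sum_i\sum_{j\in\mathcal N_i}w_{ij}\chi_j^t$. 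The crucial difficulty is that this tracking error does \emph{not} decay: because $\varsigma_y<v_y$ in Assumption~\ref{A4}, the factor $1/\lambda_y^t$ outgrows the consensus/noise increments, so $\mathbb{E}[\|D_2^t\|^2]$ in fact grows, roughly like $(t+1)^{2(v_y-\varsigma_y)}$, and a direct Cauchy--Schwarz on $\langle\bar{\boldsymbol x}^t-\boldsymbol x^*,D_2^t\rangle$ is useless even though $\Omega$ is compact. To get around this I would iterate the increment recursion~\eqref{2LApp5}, writing $\hat{\boldsymbol y}^{t+1}-\hat{\boldsymbol y}^t=(\tilde W\otimes I_r)^t(\hat{\boldsymbol y}^1-\hat{\boldsymbol y}^0)+\sum_{k=1}^t(\tilde W\otimes I_r)^{t-k}(\Gamma\otimes I_r)\big[(W^0\otimes I_r)(\boldsymbol\chi^k-\boldsymbol\chi^{k-1})+\lambda_y^k\boldsymbol g^k(\boldsymbol x^k)-\lambda_y^{k-1}\boldsymbol g^{k-1}(\boldsymbol x^{k-1})\big]$, and handle the three resulting pieces separately.

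The initial‑condition piece $\tfrac1{\lambda_y^t}(\tilde W\otimes I_r)^t(\hat{\boldsymbol y}^1-\hat{\boldsymbol y}^0)$ has norm $\le\tfrac{(1-|\rho_2|)^t(t+1)^{v_y}}{\lambda_y^0}\|\hat{\boldsymbol y}^1-\hat{\boldsymbol y}^0\|$; since the geometric term dominates any polynomial (Lemma~\ref{lemmae} with $\gamma=(1-|\rho_2|)^2$), pairing it with $\|\bar{\boldsymbol x}^t-\boldsymbol x^*\|$ and applying Young's inequality gives the $4d_0/(t+1)^{\beta_1}$ contribution. The ``drift'' piece $\lambda_y^k\boldsymbol g^k(\boldsymbol x^k)-\lambda_y^{k-1}\boldsymbol g^{k-1}(\boldsymbol x^{k-1})$ is bounded via the mean‑value estimate for $\lambda_y^k-\lambda_y^{k-1}$, the $L_g$‑Lipschitzness of $g_i$, and inequality~\eqref{xsubx} fed with the consensus rates~\eqref{ratex}--\eqref{ratez} of Lemma~\ref{consensuslemma}; a geometric‑weighted summation (Lemma~\ref{lemmaexp}) followed by multiplication by $1/\lambda_y^t=(t+1)^{v_y}/\lambda_y^0$ leaves a residual decaying like $(t+1)^{-\beta_1}$, the aggregated forcing constant being $d_4$, which enters $d_5$ exactly as in~\eqref{parameter1}.

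The genuinely delicate piece is the DP‑noise forcing $(W^0\otimes I_r)(\boldsymbol\chi^k-\boldsymbol\chi^{k-1})$ (together with the $\tfrac1{\lambda_y^t}\bar\chi_w^t$ term): after the $1/\lambda_y^t$ prefactor its contribution has $L^2$‑norm growing in $t$, so one must not take norms but instead exploit that each $\boldsymbol\chi^k$ is zero‑mean and independent of $\bar{\boldsymbol x}^k-\boldsymbol x^*$ (which depends only on the randomness up to iteration $k-1$): splitting $\bar{\boldsymbol x}^t-\boldsymbol x^*=(\bar{\boldsymbol x}^k-\boldsymbol x^*)+\sum_{j=k}^{t-1}(\bar{\boldsymbol x}^{j+1}-\bar{\boldsymbol x}^j)$, the cross term with $\bar{\boldsymbol x}^k-\boldsymbol x^*$ drops in expectation, while each increment $\bar{\boldsymbol x}^{j+1}-\bar{\boldsymbol x}^j$ is $\mathcal O(\lambda_x^j)$‑small with second moment governed by~\eqref{xsubx} and Lemma~\ref{consensuslemma}, so $\mathbb{E}[\langle\bar{\boldsymbol x}^{j+1}-\bar{\boldsymbol x}^j,\ (\boldsymbol\chi^k\text{ term})\rangle]$ is bounded by Cauchy--Schwarz by $\sqrt{\mathbb{E}\|\bar{\boldsymbol x}^{j+1}-\bar{\boldsymbol x}^j\|^2}\,\sqrt{\mathbb{E}\|\boldsymbol\chi^k\text{ term}\|^2}$, with $d_1$ bounding the first factor and $d_2$ the second. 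Summing the resulting double convolution over $k\le j<t$ with weights $(1-|\rho_2|)^{t-k}$, once more via Lemma~\ref{lemmaexp}, produces a bound $\le\sqrt{d_1 d_2}\cdot\mathrm{const}/(t+1)^{\beta_2}$ with $\beta_2=\varsigma_x-v_y+\varsigma_y$; the hypothesis $\varsigma_x>v_y-\varsigma_y$ is precisely what makes $\beta_2>0$. This last step --- turning the non‑decaying tracking error into a genuinely decaying inner‑product bound by combining the geometric consensus contraction with the martingale/zero‑mean structure of the Laplace noise --- is the step I expect to be the main obstacle. Collecting the bound on $D_1^t$ with the three bounds on $D_2^t$ and using $\tfrac{\kappa}{4}+\tfrac{\kappa}{4}\le\tfrac{\kappa}{2}$ yields~\eqref{firstlemmaresult} with $b_1=4d_0+d_5$ and $b_2=d_3$ as defined in~\eqref{parameter1}.
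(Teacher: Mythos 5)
Your proposal follows essentially the same route as the paper's proof: mean-value linearization of the $\tilde{y}$-mismatch, iterating the $\hat{\boldsymbol{y}}$-increment recursion into an initial-condition piece, a DP-noise forcing and a drift forcing, and—crucially—the telescoping-increment/zero-mean decorrelation argument combined with the geometric-convolution lemmas to convert the non-decaying noise forcing into the $(t+1)^{-\beta_{2}}$ term with $\beta_{2}=\varsigma_{x}-v_{y}+\varsigma_{y}$. The only real difference is the first split: the paper cancels the empirical-gradient bias and the mean-noise term exactly in expectation (unbiasedness of $\nabla h$ and $\mathbb{E}[\frac{1}{m}\sum_{i}\tilde{y}_{i}^{t}-g^{t}(\boldsymbol{x}^{t})]=0$), so only the consensus error $\frac{1}{\lambda_{y}^{t}}(\hat{\boldsymbol{y}}^{t+1}-\hat{\boldsymbol{y}}^{t})$ survives, whereas your $D_{1}$-variance bound introduces an extra $\mathcal{O}(1/(\kappa(t+1)))$ contribution that is absorbed since $\beta_{1}\le 1$ but makes your constants slightly larger than the stated $b_{1}=4d_{0}+d_{5}$, $b_{2}=d_{3}$.
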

\begin{proof}
We characterize the first term on the right hand side of~\eqref{1TApp3} by using the following decomposition:
\begin{equation}
\begin{aligned}
&\textstyle\nabla_{x}f_{i}(x_{i}^{t},g(\boldsymbol{x}^{t}))-\nabla_{x}f_{i}^{t}(x_{i}^{t},\tilde{y}_{i}^{t})\\
&\textstyle=\nabla_{x}f_{i}(x_{i}^{t},g(\boldsymbol{x}^{t}))-\nabla_{x}f_{i}(x_{i}^{t},g^{t}(\boldsymbol{x}^{t}))\\
&\textstyle\quad+\nabla_{x}f_{i}(x_{i}^{t},g^{t}(\boldsymbol{x}^{t}))-\nabla_{x}f_{i}(x_{i}^{t},\frac{1}{m}\sum_{i=1}^{m}\tilde{y}_{i}^{t})\\
&\textstyle\quad+\nabla_{x}f_{i}(x_{i}^{t},\frac{1}{m}\sum_{i=1}^{m}\tilde{y}_{i}^{t})-\nabla_{x}f_{i}(x_{i}^{t},\tilde{y}_{i}^{t})\\
&\textstyle\quad+\nabla_{x}f_{i}(x_{i}^{t},\tilde{y}_{i}^{t})-\nabla_{x}f_{i}^{t}(x_{i}^{t},\tilde{y}_{i}^{t}).\label{1TApp4}
\end{aligned}
\end{equation}
According to Assumption~\ref{A2}, we have $\mathbb{E}[g^{t}(\boldsymbol{x}^{t})-g(\boldsymbol{x}^{t})]=0$. Moreover, Assumption~\ref{A4} imply $\mathbb{E}[\frac{1}{m}\sum_{i=1}^{m}\tilde{y}_{i}^{t}- g^{t}(\boldsymbol{x}^{t})]=0$. Combining these relations and~\eqref{1TApp4} leads to
\begin{equation}
\begin{aligned}
&\textstyle\mathbb{E}[\langle\mathbf{\bar{x}}^{t}-\boldsymbol{x}^{*},\nabla_{x}\boldsymbol{f}(\boldsymbol{x}^{t},\boldsymbol{1}_{m}\otimes g(\boldsymbol{x}^{t}))-\nabla_{x}\boldsymbol{f}^{t}(\boldsymbol{x}^{t},\boldsymbol{\tilde{y}}^{t})\rangle]\\
&\textstyle=\mathbb{E}[\langle\mathbf{\bar{x}}^{t}-\boldsymbol{x}^{*},\nabla_{x}\boldsymbol{f}(\boldsymbol{x}^{t},\boldsymbol{1}_{m}\otimes\frac{1}{m}\sum_{i=1}^{m}\tilde{y}_{i}^{t})\!-\!\nabla_{x}\boldsymbol{f}(\boldsymbol{x}^{t},\boldsymbol{\tilde{y}}^{t})\rangle],\label{1TApp6}
\end{aligned}
\end{equation}
where in the derivation we have used Assumption~\ref{A2}.

By using the mean-value theorem, we have
\begin{equation}
\vspace{-0.2em}
\begin{aligned}
&\textstyle\nabla_{x}\boldsymbol{f}(\boldsymbol{x}^{t},\boldsymbol{1}_{m}\otimes\frac{1}{m}\sum_{i=1}^{m}\tilde{y}_{i}^{t})-\nabla_{x}\boldsymbol{f}(\boldsymbol{x}^{t},\boldsymbol{\tilde{y}}^{t})\\
&\textstyle=\frac{\partial^{2}\boldsymbol{f}(\boldsymbol{x}^{t},\boldsymbol{\tau}^{t})}{\partial x\partial y}(\boldsymbol{1}_{m}\otimes\frac{1}{m}\sum_{i=1}^{m}\tilde{y}_{i}^{t}-\boldsymbol{\tilde{y}}^{t})\\
&\textstyle\triangleq-M_{\boldsymbol{\tau}^{t}}(\frac{\boldsymbol{\hat{y}}^{t+1}-\boldsymbol{\hat{y}}^{t}}{\lambda_{y}^{t}}),\label{Mtau}
\end{aligned}
\vspace{-0.2em}
\end{equation}
where $\boldsymbol{\tau}^{t}\triangleq c(\boldsymbol{1}_{m}\otimes\frac{1}{m}\sum_{i=1}^{m}\tilde{y}_{i}^{t})+(1-c)\boldsymbol{\tilde{y}}^{t}$ for any $c\in(0,1)$. 

Substituting~\eqref{Mtau} into~\eqref{1TApp6} yields
\begin{equation}
\begin{aligned}
&\textstyle\mathbb{E}[\langle\mathbf{\bar{x}}^{t}-\boldsymbol{x}^{*},\nabla_{x}\boldsymbol{f}(\boldsymbol{x}^{t},\boldsymbol{1}_{m}\otimes g(\boldsymbol{x}^{t}))-\nabla_{x}\boldsymbol{f}^{t}(\boldsymbol{x}^{t},\boldsymbol{\tilde{y}}^{t})\rangle]\\
&\textstyle=\mathbb{E}[\langle \mathbf{\bar{x}}^{t}-\boldsymbol{x}^*,-\frac{1}{\lambda_{y}^{t}}M_{\boldsymbol{\tau}^{t}}(\boldsymbol{\hat{y}}^{t+1}-\boldsymbol{\hat{y}}^{t})\rangle].\label{1TApp8}
\end{aligned}
\end{equation}
By using~\eqref{2LApp5} and the definition $\boldsymbol{\hat{y}}^{t}\!=\!\boldsymbol{y}^{t}\!-\!\boldsymbol{1}_{m}\otimes \bar{y}^{t}$, we have
\begin{equation}
\vspace{-0.2em}
\begin{aligned}
&\textstyle \boldsymbol{\hat{y}}^{t+1}-\boldsymbol{\hat{y}}^{t}=(\tilde{W}\otimes I_{r})(\boldsymbol{\hat{y}}^{t}-\boldsymbol{\hat{y}}^{t-1})+K_{1}(\boldsymbol{\chi}^{t}-\boldsymbol{\chi}^{t-1})\\
&\textstyle\quad+K_{2}(\lambda_{y}^{t}\boldsymbol{g}^{t}(\boldsymbol{x}^{t})-\lambda_{y}^{t-1}\boldsymbol{g}^{t-1}(\boldsymbol{x}^{t-1})),\label{1TApp9}
\end{aligned}
\vspace{-0.2em}
\end{equation}
where $K_{1}\triangleq(\Gamma\otimes I_{r})(W^{0}\otimes I_{r})$ and $K_{2}\triangleq\Gamma\otimes I_{r}$.

For the sake of notational simplicity, we define $\Xi_{y}^{t}\triangleq\boldsymbol{\hat{y}}^{t}-\boldsymbol{\hat{y}}^{t-1}$, $\Xi_{\chi}^{t}\triangleq\boldsymbol{\chi}^{t}-\boldsymbol{\chi}^{t-1}$, and $\Xi_{g}^{t}\triangleq\lambda_{y}^{t}\boldsymbol{g}^{t}(\boldsymbol{x}^{t})-\lambda_{y}^{t-1}\boldsymbol{g}^{t-1}(\boldsymbol{x}^{t-1})$. 

By iterating~\eqref{1TApp9} from $1$ to $t$, we have
\begin{equation}
\textstyle\Xi_{y}^{t+1}=(\tilde{W}\otimes I_{r})^{t}\Xi_{y}^{1}+\sum_{p=1}^{t}(\tilde{W}\otimes I_{r})^{t-p}(K_{1}\Xi_{\chi}^{p}+K_{2}\Xi_{g}^{p}).\label{1TApp10}
\end{equation}
Substituting~\eqref{1TApp10} into~\eqref{1TApp8}, we obtain 
\begin{equation}
\vspace{-0.2em}
\begin{aligned}
&\textstyle\mathbb{E}[\langle\mathbf{\bar{x}}^{t}-\boldsymbol{x}^{*},\nabla_{x}\boldsymbol{f}(\boldsymbol{x}^{t},\boldsymbol{1}_{m}\otimes g(\boldsymbol{x}^{t}))-\nabla_{x}\boldsymbol{f}^{t}(\boldsymbol{x}^{t},\boldsymbol{\tilde{y}}^{t})\rangle]\\
&\textstyle=\mathbb{E}[\langle \mathbf{\bar{x}}^{t}-\boldsymbol{x}^*,-\frac{1}{\lambda_{y}^{t}}M_{\boldsymbol{\tau}^{t}}(\tilde{W}\otimes I_{r})^{t}\Xi_{y}^{1}\rangle]\\
&\textstyle\quad+\mathbb{E}[\langle \mathbf{\bar{x}}^{t}-\boldsymbol{x}^*,-\frac{1}{\lambda_{y}^{t}}M_{\boldsymbol{\tau}^{t}}\sum_{p=1}^{t}(\tilde{W}\otimes I_{r})^{t-p}K_{1}\Xi_{\chi}^{p}\rangle]\\
&\textstyle\quad+\mathbb{E}[\langle \mathbf{\bar{x}}^{t}-\boldsymbol{x}^*,-\frac{1}{\lambda_{y}^{t}}M_{\boldsymbol{\tau}^{t}}\sum_{p=1}^{t}(\tilde{W}\otimes I_{r})^{t-p}K_{2}\Xi_{g}^{p}\rangle].\label{1TApp11}
\end{aligned}
\vspace{-0.2em}
\end{equation}

Next, we characterize each item on the right hand side of~\eqref{1TApp11}.

By using Lemma~\ref{lemmae} and the relations  $\|\tilde{W}\|<1-|\rho_{2}|$ and $\max_{i\in[m],t>0}\mathbb{E}[\|M_{\boldsymbol{\tau}^{t}}\|]\leq \bar{L}_{f}$, for any $\kappa>0$, the first term on the right hand side of~\eqref{1TApp11} satisfies
\begin{equation}
\vspace{-0.2em}
\begin{aligned}
&\textstyle\mathbb{E}[\langle \mathbf{\bar{x}}^{t}-\boldsymbol{x}^*,-\frac{1}{\lambda_{y}^{t}}M_{\boldsymbol{\tau}^{t}}(\tilde{W}\otimes I_{r})^{t}\Xi_{y}^{1}\rangle]\\
&\textstyle\quad\leq \frac{\kappa}{4}\mathbb{E}[\|\mathbf{\bar{x}}^{t}-\boldsymbol{x}^*\|^2]+\frac{4d_{0}}{\kappa(t+1)^{2-2v_{y}}},\label{1TApp12}
\end{aligned}
\vspace{-0.2em}
\end{equation}
where $d_{0}=4\bar{L}_{f}^2\mathbb{E}[\|\boldsymbol{\hat{y}}^{1}-\boldsymbol{\hat{y}}^{0}\|^2](\lambda_{y}^{0})^{-2}(\ln((1-|\rho_{2}|)^{2})e)^{-2}$.

The second term on the right hand side of~\eqref{1TApp11} satisfies
\begin{equation}
\begin{aligned}
&\textstyle\mathbb{E}[\langle \mathbf{\bar{x}}^{t}-\boldsymbol{x}^*,-\frac{1}{\lambda_{y}^{t}}M_{\boldsymbol{\tau}^{t}}\sum_{p=1}^{t}(\tilde{W}\otimes I_{r})^{t-p}K_{1}\Xi_{\chi}^{p}\rangle]\\
&\textstyle=\mathbb{E}[\langle \mathbf{\bar{x}}^{t}-\!\mathbf{\bar{x}}^{t-1},-\frac{1}{\lambda_{y}^{t}}M_{\boldsymbol{\tau}^{t}}\sum_{p=1}^{t}(\tilde{W}\otimes I_{r})^{t-p}K_{1}\Xi_{\chi}^{p}\rangle]\\
&\textstyle+\mathbb{E}[\langle \mathbf{\bar{x}}^{t-1}\!-\!\boldsymbol{x}^*,-\frac{1}{\lambda_{y}^{t}}M_{\boldsymbol{\tau}^{t}}\sum_{p=1}^{t}(\tilde{W}\!\otimes\! I_{r})^{t-p}K_{1}\Xi_{\chi}^{p}\rangle].\label{1TApp13}
\end{aligned}
\end{equation}
Since $\mathbf{\bar{x}}^{t-1}-\boldsymbol{x}^*$ and $\Xi_{\chi}^{t}$ are independent, we have 
$\mathbb{E}[\langle \mathbf{\bar{x}}^{t-1}-\boldsymbol{x}^*,-\frac{1}{\lambda_{y}^{t}}M_{\boldsymbol{\tau}^{t}}(\tilde{W}\!\otimes\! I_{r})^{0}K_{1}\Xi_{\chi}^{t}\rangle]=0$. Hence, the second term on the right hand side of~\eqref{1TApp13} can be rewritten as
\begin{equation}
\begin{aligned}
&\textstyle\mathbb{E}[\langle \mathbf{\bar{x}}^{t-1}\!-\!\boldsymbol{x}^*,-\frac{1}{\lambda_{y}^{t}}M_{\boldsymbol{\tau}^{t}}\sum_{p=1}^{t}(\tilde{W}\!\otimes\! I_{r})^{t-p}K_{1}\Xi_{\chi}^{p}\rangle]\\
&\textstyle=\mathbb{E}[\langle \mathbf{\bar{x}}^{t-1}\!-\!\boldsymbol{x}^*,\\
&\textstyle\quad-\frac{1}{\lambda_{y}^{t}}M_{\boldsymbol{\tau}^{t}}(\tilde{W}\!\otimes\! I_{r})\sum_{p=1}^{t-1}(\tilde{W}\!\otimes\! I_{r})^{t-1-p}K_{1}\Xi_{\chi}^{p}\rangle].\label{1TApp14}
\end{aligned}
\end{equation}
Applying an argument similar to the derivation of~\eqref{1TApp14}, we iterate~\eqref{1TApp13} from $0$ to $t$ to obtain
\begin{equation}
\begin{aligned}
&\textstyle\mathbb{E}[\langle \mathbf{\bar{x}}^{t}-\boldsymbol{x}^*,-\frac{1}{\lambda_{y}^{t}}M_{\boldsymbol{\tau}^{t}}\sum_{p=1}^{t}(\tilde{W}\otimes I_{r})^{t-p}K_{1}\Xi_{\chi}^{p}\rangle]\\
&\textstyle=\mathbb{E}[\langle \mathbf{\bar{x}}^{t}-\mathbf{\bar{x}}^{t-1},-\frac{1}{\lambda_{y}^{t}}M_{\boldsymbol{\tau}^{t}}\sum_{p=1}^{t}(\tilde{W}\otimes I_{r})^{t-p}K_{1}\Xi_{\chi}^{p}\rangle]\\
&\textstyle\quad+\mathbb{E}[\langle \mathbf{\bar{x}}^{t-1}-\mathbf{\bar{x}}^{t-2},\\
&\textstyle\quad-\frac{1}{\lambda_{y}^{t}}M_{\boldsymbol{\tau}^{t}}(\tilde{W}\otimes I_{r})\sum_{p=1}^{t-1}(\tilde{W}\otimes I_{r})^{t-1-p}K_{1}\Xi_{\chi}^{p}\rangle]\\
&\textstyle\quad~~\vdots\\
&\textstyle\quad+\mathbb{E}[\langle \mathbf{\bar{x}}^{1}-\mathbf{\bar{x}}^{0},-\frac{1}{\lambda_{y}^{t}}M_{\boldsymbol{\tau}^{t}}(\tilde{W}\otimes I_{r})^{t-1}K_{1}\Xi_{\chi}^{1}\rangle],\label{1TApp15}
\end{aligned}
\end{equation}
where we used $\mathbb{E}[\langle \mathbf{\bar{x}}^{0}\!-\!\boldsymbol{x}^{*},-\frac{1}{\lambda_{y}^{t}}M_{\boldsymbol{\tau}^{t}}(\tilde{W}\!\otimes\! I_{r})^{t-1}K_{1}\Xi_{\chi}^{1}\rangle]=0$ based on  $\mathbf{\bar{x}}^{0}-\boldsymbol{x}^{*}$ and $\Xi_{\chi}^{1}$ being independent.

The Cauchy-Schwartz inequality implies that $\mathbb{E}[\langle a,b\rangle]\leq \sqrt{\mathbb{E}[\|a\|^2]\mathbb{E}[\|b\|^2]}$ holds for any random variables $a,b\in\mathbb{R}^{n}$. Hence, the first term on the right hand side of~\eqref{1TApp15} satisfies
\begin{equation}
\begin{aligned}
&\textstyle \mathbb{E}[\langle \mathbf{\bar{x}}^{t}-\mathbf{\bar{x}}^{t-1},-\frac{1}{\lambda_{y}^{t}}M_{\boldsymbol{\tau}^{t}}\sum_{p=1}^{t}(\tilde{W}\otimes I_{r})^{t-p}K_{1}\Xi_{\chi}^{p}\rangle]\\
&\textstyle\leq \sqrt{\mathbb{E}[\|\mathbf{\bar{x}}^{t}-\mathbf{\bar{x}}^{t-1}\|^2]}\\
&\textstyle\quad\times\frac{1}{\lambda_{y}^{t}}\sqrt{\mathbb{E}[\|M_{\boldsymbol{\tau}^{t}}\sum_{p=1}^{t}(\tilde{W}\otimes I_{r})^{t-p}K_{1}\Xi_{\chi}^{p}\|^2]}.\label{1TApp16}
\end{aligned}
\end{equation}

By defining $\mathbf{\bar{x}}^{t}=\col(\bar{x}_{(1)}^{t},\cdots,\bar{x}_{(m)}^{t})\in\mathbb{R}^{n}$ and using the dynamics of $x_{i}^{t}$ in Algorithm~\ref{algorithm1}, we have
\begin{equation}
\textstyle\mathbb{E}[\|\mathbf{\bar{x}}^{t}-\mathbf{\bar{x}}^{t-1}\|^2]\leq n(\sigma_{x}^{t-1})^2+\frac{(\lambda_{x}^{t-1})^2}{m^2}\mathbb{E}[\|\boldsymbol{u}^{t-1}\|^2].\label{1TApp17}
\end{equation}
Substituting~\eqref{1AppL9} and~\eqref{ratez} into~\eqref{1TApp17}, we have
\begin{equation}
\begin{aligned}
&\textstyle \mathbb{E}[\|\mathbf{\bar{x}}^{t}-\mathbf{\bar{x}}^{t-1}\|^2]\leq\frac{n\sigma_{x}^2}{t^{2\varsigma_{x}}}+\frac{\bar{C}_{z}c_{u1}(\lambda_{x}^{0})^2}{m^2(\lambda_{z}^{0})^2t^{\beta_{z}+2v_{x}-2v_{z}}}\\
&\textstyle\quad+\frac{c_{u1}r(\lambda_{x}^{0})^2\sigma_{z}^2}{m(\lambda_{z}^{0})^2t^{2v_{x}+2\varsigma_{z}-2v_{z}}}+\frac{c_{u2}(\lambda_{x}^{0})^2}{mt^{2v_{x}}}\leq \frac{d_{1}}{t^{2\varsigma_{x}}}\leq \frac{2^{2\varsigma_{x}}d_{1}}{(t+1)^{2\varsigma_{x}}},\label{1TApp18}
\end{aligned}
\end{equation}
with $d_{1}=n\sigma_{x}^2+m^{-1}(\lambda_{x}^{0})^2(\lambda_{z}^{0})^{-2}(c_{u1}(\bar{C}_{z}+r\sigma_{z}^2)+c_{u2})$. Here, in the derivation, we have used the definition $\beta_{z}=\min\{2v_{z}+2\varsigma_{y}-2v_{y},2\varsigma_{z}\}$ and the relation $v_{x}-v_{y}>v_{x}-v_{z}>\varsigma_{x}$.

Using $\|\tilde{W}\|\leq 1-|\rho_{2}|$, $\max_{i\in[m],t>0}\mathbb{E}[\|M_{\boldsymbol{\tau}^{t}}\|]\leq \bar{L}_{f}$, and $\|K_{1}\|\leq 1$ yields that the second term on the right hand side of~\eqref{1TApp16} satisfies
\begin{equation}
\begin{aligned}
&\textstyle\mathbb{E}[\|M_{\boldsymbol{\tau}^{t}}\sum_{p=1}^{t}(\tilde{W}\otimes I_{r})^{t-p}K_{1}\Xi_{\chi}^{p}\|^2]\\
&\textstyle\leq 2mr\bar{L}_{f}^2\sum_{p=1}^{t}((1-|\rho_{2}|)^{2})^{t-p}(\sigma_{y}^{p-1})^2.\label{1TApp19}
\end{aligned}
\end{equation}
Combining Lemma \ref{lemmaexp} and~\eqref{1TApp19} and using $(t+1)^{2\varsigma_{y}}\leq 2^{2\varsigma_{y}}t^{2\varsigma_{y}}$ for any $t>0$, we have
\begin{equation}
\begin{aligned}
&\textstyle\mathbb{E}[\|M_{\boldsymbol{\tau}^{t}}\sum_{p=1}^{t}(\tilde{W}\otimes I_{r})^{t-p}K_{1}\Xi_{\chi}^{p}\|^2]\\
&\textstyle\leq 2^{2\varsigma_{y}+1}mr\bar{L}_{f}^2\sigma_{y}^2\sum_{p=1}^{t}\frac{((1-|\rho_{2}|)^{2})^{t-p}}{(p+1)^{2\varsigma_{y}}}\leq \frac{d_{2}}{(t+1)^{2\varsigma_{y}}},\label{1TApp20}
\end{aligned}
\end{equation}
with $d_{2}=2^{4\varsigma_{y}+3}(\ln(1-|\rho_{2}|)e)^{-2}|\rho_{2}|^{-1}mr\bar{L}_{f}^2\sigma_{y}^2$.

Incorporating~\eqref{1TApp18} and~\eqref{1TApp20} into~\eqref{1TApp16}, we obtain
\begin{equation}
\begin{aligned}
\textstyle\mathbb{E}[\langle \mathbf{\bar{x}}^{t}\!-\!\mathbf{\bar{x}}^{t-1},-\frac{M_{\boldsymbol{\tau}^{t}}}{\lambda_{y}^{t}}\sum_{p=1}^{t}(\tilde{W}\otimes I_{r})^{t-p}K_{1}\Xi_{\chi}^{p}\rangle]\leq \frac{\sqrt{2^{2\varsigma_{x}}d_{1}d_{2}}}{\lambda_{y}^{0}(t+1)^{\beta_{2}}},\nonumber
\end{aligned}
\end{equation}
with $\beta_{2}=\varsigma_{x}-v_{y}+\varsigma_{y}$.

By applying an argument similar to the derivation of the preceding inequality to the remaining terms on the right hand side of~\eqref{1TApp15}, we have that the second term on the right hand side of~\eqref{1TApp11} satisfies 
\begin{equation}
\begin{aligned}
&\textstyle \mathbb{E}[\langle \mathbf{\bar{x}}^{t}-\boldsymbol{x}^*,-\frac{1}{\lambda_{y}^{t}}M_{\boldsymbol{\tau}^{t}}\sum_{p=1}^{t}(\tilde{W}\otimes I_{r})^{t-p}K_{1}\Xi_{\chi}^{p}\rangle]\\
&\textstyle\leq \sum_{p=1}^t\frac{\sqrt{2^{2\varsigma_{x}}d_{1}d_{2}}(1-|\rho_{2}|)^{t-p}}{\lambda_{y}^{0}(p+1)^{\beta_{2}}}\leq\frac{d_{3}}{(t+1)^{\beta_{2}}},\label{1TApp22}
\end{aligned}
\end{equation}
where $\beta_{2}$ is given by $\beta_{2}=\varsigma_{x}-v_{y}+\varsigma_{y}$ and $d_{3}$ is given by $d_{3}=2^{\beta_{2}+2+\varsigma_{x}}(\lambda_{y}^{0})^{-1}\sqrt{d_{1}d_{2}}(\ln(\sqrt{1-|\rho_{2}|})e)^{-2}(1\!-\!\sqrt{1-|\rho_{2}|})^{-1}$.

By using the Young's inequality,  for any $\kappa>0$, the third term on the right hand side of~\eqref{1TApp11} satisfies 
\begin{equation}
\begin{aligned}
&\textstyle\mathbb{E}[\langle \mathbf{\bar{x}}^{t}-\boldsymbol{x}^*,-\frac{1}{\lambda_{y}^{t}}M_{\boldsymbol{\tau}^{t}}\sum_{p=1}^{t}(\tilde{W}\otimes I_{r})^{t-p}K_{2}\Xi_{g}^{p}\rangle]\\
&\textstyle\leq \frac{\kappa}{4}\mathbb{E}[\|\mathbf{\bar{x}}^{t}-\boldsymbol{x}^*\|^2]+\frac{\bar{L}_{f}^2\sum_{p=1}^{t}((1-|\rho_{2}|)^{2})^{t-p}\mathbb{E}[\|\Xi_{g}^{p}\|^2]}{\kappa(\lambda_{y}^{t})^2}.\label{1TApp23}
\end{aligned}
\end{equation}

According to the definition of $\Xi_{g}^{t}$, we combine~\eqref{ratex} and~\eqref{ratez} with~\eqref{2LApp9} to obtain
\begin{flalign}
&\textstyle\mathbb{E}[\|\Xi_{g}^{p}\|^2]\leq \frac{4\bar{C}_{x}L_{g}^2(\lambda_{y}^{0})^{2}}{p^{2v_{y}+2\varsigma_{x}}}+\frac{4\bar{C}_{z}L_{g}^2c_{u1}(\lambda_{y}^{0})^{2}(\lambda_{x}^{0})^2}{(\lambda_{z}^{0})^2p^{\beta_{z}+2v_{y}+2v_{x}-2v_{z}}}\nonumber\\
&\textstyle+\frac{4m\sigma_{g,0}^2(\lambda_{y}^{0})^{2}}{p^{1+2v_{y}}}+\frac{2nL_{g}^2(\lambda_{y}^{0})^{2}\sigma_{x}^2}{p^{2v_{y}+2\varsigma_{x}}}+\frac{4mL_{g}^2c_{u1}r(\lambda_{y}^{0})^{2}(\lambda_{x}^{0})^2\sigma_{z}^2}{(\lambda_{z}^{0})^2p^{2v_{y}+2v_{x}+2\varsigma_{z}-2v_{z}}}\nonumber\\
&\textstyle\!+\!\frac{4mL_{g}^2c_{u2}(\lambda_{y}^{0})^{2}(\lambda_{x}^{0})^2}{p^{2v_{y}+2v_{x}}}\!+\!\frac{2(\sigma_{g,0}^2+d_{g}^2)mv_{y}^{2}(\lambda_{y}^{0})^2}{p^{2+2v_{y}}}\!\leq\! \frac{2^{\alpha_{1}}d_{4}}{(p+1)^{\alpha_{1}}},\label{gammag}
\end{flalign}
where $\alpha_{1}$ is given by $\alpha_{1}=\min\{2v_{x}+2\varsigma_{y},2v_{y}+1,2v_{y}+2\varsigma_{x}\}$ and $d_{4}=4L_{g}^2(\lambda_{x}^{0})^{2}(\lambda_{y}^{0})^{2}(\lambda_{z}^{0})^{-2}(\bar{C}_{z}c_{u1}+mrc_{u1}\sigma_{z}^2)+2L_{g}^2(\lambda_{y}^{0})^{2}(2\bar{C}_{x}+n\sigma_{x}^2+2mc_{u2}(\lambda_{x}^{0})^{2})+2m(2+v_{y}^{2}) (\lambda_{y}^{0})^{2}\sigma_{g,0}^2
+2md_{g}^2v_{y}^{2}(\lambda_{y}^{0})^2$.

By substituting~\eqref{gammag} into~\eqref{1TApp23} and using Lemma~\ref{lemmaexp}, we have that the third term on the right hand side of~\eqref{1TApp11} satisfies 
\begin{equation}
\begin{aligned}
&\textstyle\mathbb{E}[\langle \mathbf{\bar{x}}^{t}-\boldsymbol{x}^*,-\frac{1}{\lambda_{y}^{t}}M_{\boldsymbol{\tau}^{t}}\sum_{p=1}^{t}(\tilde{W}\otimes I_{r})^{t-p}K_{2}\Xi_{g}^{p}\rangle]\\
&\textstyle\leq \frac{\kappa}{4}\mathbb{E}[\|\mathbf{\bar{x}}^{t}-\boldsymbol{x}^*\|^2]+\frac{d_{5}}{\kappa(t+1)^{\beta_{1}}},\label{1TApp24}
\end{aligned}
\end{equation}
where $\beta_{1}$ is given by $\beta_{1}\!=\!\min\{2\varsigma_{x},1\}$ and $d_{5}$ is given by $d_{5}\!=\!2^{2\alpha_{1}+2}(\lambda_{y}^{0})^{-1}d_{4}(\ln(\sqrt{1-|\rho_{2}|})e)^{-2}(1\!-\!\sqrt{1-|\rho_{2}|})^{-1}\bar{L}_{f}^2$.

By substituting~\eqref{1TApp12},~\eqref{1TApp22},~\eqref{1TApp24} into~\eqref{1TApp11} and using the relation $2-2v_{y}>1$, we arrive at~\eqref{firstlemmaresult}.
\end{proof}
\begin{lemma}\label{secondlemma}
Under Assumptions~\ref{A1}-\ref{A4}, if the rates of stepsizes satisfy $1>v_{x}>v_{z}$ and $\frac{1}{2}>v_{z}>v_{y}>0$, and the rates of DP-noise variances satisfy $\varsigma_{x}>\max\{v_{y}-\varsigma_{y},v_{z}-\varsigma_{z}\}$, $\varsigma_{y}>v_{y}-\frac{1}{2}$, and $\varsigma_{z}>v_{z}-\frac{1}{2}$, then we have
\begin{flalign}
&\textstyle \mathbb{E}[\langle\mathbf{\bar{x}}^{t}\!-\!\boldsymbol{x}^{*},\nabla \boldsymbol{g}(\boldsymbol{x}^{t})(\boldsymbol{1}\!\otimes\! \nabla_{y}\bar{f}(\boldsymbol{x}^{t},\boldsymbol{1}\!\otimes\! g(\boldsymbol{x}^{t})))\!-\!\nabla \boldsymbol{g}^{t}(\boldsymbol{x}^{t})\boldsymbol{\tilde{z}}^{t}\rangle]\nonumber\\
&\textstyle\leq \frac{(7+2(\sigma_{g,1}^2\!+L_{g}^2))\kappa}{4}\mathbb{E}[\|\mathbf{\bar{x}}^{t}\!-\!\boldsymbol{x}^*\|^2]\!+\!\frac{b_{3}}{\kappa(t+1)^{\beta_{1}}}\!+\!\frac{b_{4}}{(t+1)^{\beta_{3}}},\label{secondlemmaresult}
\end{flalign}
for any $\kappa>0$, where $\beta_{1}=\min\{2\varsigma_{x},1\}$ and $\beta_{3}=\min\{\varsigma_{x}-v_{y}+\varsigma_{y},\varsigma_{x}-v_{z}+\varsigma_{z},\frac{1}{2}-v_{y}+\varsigma_{y},\frac{1}{2}-v_{z}+\varsigma_{z}\}$. Here, the constants $b_{3}$ and $b_{4}$ are given by  $b_{3}=m\sigma_{g,1}^2L_{f}^2+mL_{g}^2\sigma_{f}^2+d_{8}+d_{10}+d_{14}+d_{16}+d_{21}+d_{28}$ and $b_{4}=d_{9}+d_{12}+d_{18}+d_{25}$, respectively, with $d_{6}$ through $d_{28}$ defined as follows:
\begin{align}
&\textstyle \! d_{6}= 2\bar{C}_{x}+\frac{2c_{u1}(\lambda_{x}^{0})^2(\bar{C}_{z}+mr\sigma_{z}^2)}{(\lambda_{z}^{0})^{2}}+2mc_{u2}(\lambda_{x}^{0})^2+n\sigma_{x}^2; \notag \\
&\textstyle\! d_{7}= 2md_{\Omega}^2\sigma_{g,1}^2+2^{2\varsigma_{x}+1}d_{6}d_{\Omega}^2\bar{L}_{g}^2+2^{2\varsigma_{x}+1}d_{1}L_{g}^2; \notag \\
&\textstyle\!  d_{8}= 4d_{0}+d_{5};~d_{9}= \frac{2^{\min\{\frac{1}{2}-v_{y}+\varsigma_{y},\varsigma_{x}-v_{y}+\varsigma_{y}\}+2}\sqrt{d_{7}d_{2}}}{\lambda_{y}^{0}(\ln(\sqrt{1-|\rho_{2}|})e)^2(1-\sqrt{1-|\rho_{2}|})}; \notag \\
&\textstyle\!  d_{10}=\frac{16(\sigma_{g,1}^2+L_{g}^2)\mathbb{E}[\|\boldsymbol{\hat{z}}^{1}-\boldsymbol{\hat{z}}^{0}\|^2]}{(\lambda_{z}^{0})^2(\ln((1-|\rho_{2}|)^{2})e)^2}; \quad
d_{11}= \frac{2^{4\varsigma_{z}+3}mr\sigma_{z}^2}{(\ln(1-|\rho_{2}|)e)^2|\rho_{2}|};\notag \\
&\textstyle\!  d_{12}=\frac{2^{\min\{\frac{1}{2}-v_{z}+\varsigma_{z},\varsigma_{x}-v_{z}+\varsigma_{z}\}+2}\sqrt{d_{7}d_{11}}}{\lambda_{z}^{0}(\ln(\sqrt{1-|\rho_{2}|})e)^2(1-\sqrt{1-|\rho_{2}|})};\notag \\
&\textstyle\!  d_{13}=2(\lambda_{z}^{0})^2(m\sigma_{f}^2+mv_{z}^2L_{f}^2+2^{2\varsigma_{x}}d_{6}\bar{L}_{f}^2);\notag \\
&\textstyle\!  d_{14}=\frac{2^{2\min\{2\varsigma_{x}+2v_{z},1+2v_{z}\}+2}d_{13}(\sigma_{g,1}^2+L_{g}^2)}{(\lambda_{z}^{0})^2(\ln(1-|\rho_{2}|)e)^2|\rho_{2}|}; \notag \\
&\textstyle\!  d_{15}=2(\lambda_{z}^{0})^2(m\sigma_{g,0}^2+2^{2\varsigma_{x}}d_{6}L_{g}^2);\notag \\
&\textstyle\!  d_{16}=\frac{d_{14}d_{15}\bar{L}_{f}^2}{d_{13}};\quad d_{17}=\frac{2^{2\varsigma_{y}+2}mr\sigma_{y}^2(1+2^{2\varsigma_{y}})\bar{L}_{f}^2(\lambda_{z}^{0})^2}{(\lambda_{y}^{0})^2(\ln(1-|\rho_{2}|)e)^2|\rho_{2}|};\notag \\
&\textstyle\!  d_{18}=\frac{2^{\min\{\frac{1}{2}-v_{y}+\varsigma_{y},\varsigma_{x}-v_{y}+\varsigma_{y}\}+2}\sqrt{d_{7}d_{17}}}{\lambda_{z}^{0}(\ln(\sqrt{1-|\rho_{2}|})e)^2(1-\sqrt{1-|\rho_{2}|})};d_{19}\!=\!\frac{16}{(\ln((1-|\rho_{2}|)^2)e)^2};\notag \\
&\textstyle\!  d_{20}=\frac{2^{2v_{z}+4}d_{19}\lambda_{z}^{0} }{(1-|\rho_{2}|)^2(\ln(1-|\rho_{2}|)e)^2|\rho_{2}|}; \notag \\
&\textstyle\!  d_{21}=\frac{d_{20}(\sigma_{g,1}^2+L_{g}^2)\bar{L}_{f}^2\mathbb{E}[\|\frac{1}{\lambda_{y}^{2}}(\boldsymbol{\hat{y}}^{2}-\boldsymbol{\hat{y}}^{1})\!-\frac{1}{\lambda_{y}^{1}}(\boldsymbol{\hat{y}}^{1}-\boldsymbol{\hat{y}}^{0})\|^2]}{(\lambda_{z}^{0})^2};\notag \\
&\textstyle\! d_{22}=\frac{4^{3}m\sigma_{y}^2}{(\lambda_{y}^{0})^2};
d_{23}=\frac{2^{3}(1-|\rho_2|)^{\frac{1}{1-\sqrt{1-|\rho_2|}}-\frac{1}{2}}\bar{L}_{f}\lambda_{z}^{0}\sqrt{d_{22}}}{(1-\sqrt{1-|\rho_2|})(\ln((1-|\rho_{2}|)^{\frac{1}{4}})e)^2(1-(1-|\rho_{2}|)^{\frac{1}{4}})};\notag \\
&\textstyle\! d_{24}=\frac{d_{23}\sqrt{d_{7}}}{\lambda_{z}^{0}\sqrt{1-|\rho_{2}|}};d_{25}\!=\!\frac{\sqrt{d_{7}}d_{23}}{\lambda_{z}^{0}}\!+\!\frac{ 2^{\varsigma_{y}-v_{y}+v_{z}\!+\!\frac{\min\{2\varsigma_{x},1\}}{2}+2}d_{24}}{(\ln((1-|\rho_{2}|)^{\frac{1}{4}})e)^2(1-(1-|\rho_{2}|)^{\frac{1}{4}})};\notag \\
&\textstyle\!  d_{26}=\frac{(1+2^{3})2^{3}d_{3}}{(\lambda_{y}^{0})^2};d_{27}=\frac{d_{26}d_{23}^2(\ln((1-|\rho_{2}|)^{\frac{1}{4}})e)^4)}{2d_{22}(\ln(\sqrt{1-|\rho_{2}|})e)^2(1-\sqrt{1-|\rho_{2}|})};\notag\\
&\textstyle\! d_{28}=d_{27}(\sigma_{g,1}^2+L_{g}^2)\mu^{-1} (\lambda_{z}^{0})^{-2}.
\end{align}
\end{lemma}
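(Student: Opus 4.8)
The plan is to mirror the argument of Lemma~\ref{firstlemma}, now pushed one layer deeper to the tracking variable $\boldsymbol{\tilde{z}}^{t}$. First I would split the target inner product by adding and subtracting $\nabla \boldsymbol{g}^{t}(\boldsymbol{x}^{t})(\boldsymbol{1}_{m}\otimes \nabla_{y}\bar{f}(\boldsymbol{x}^{t},\boldsymbol{1}_{m}\otimes g(\boldsymbol{x}^{t})))$, obtaining: (a) $(\nabla \boldsymbol{g}(\boldsymbol{x}^{t})-\nabla \boldsymbol{g}^{t}(\boldsymbol{x}^{t}))$ applied to the true average gradient, whose conditional expectation vanishes since $\nabla l(x;\xi)$ is unbiased (Assumption~\ref{A2}); (b) $\nabla \boldsymbol{g}^{t}(\boldsymbol{x}^{t})$ applied to $\boldsymbol{1}_{m}\otimes\nabla_{y}\bar{f}(\boldsymbol{x}^{t},\boldsymbol{1}_{m}\otimes g(\boldsymbol{x}^{t}))-\boldsymbol{1}_{m}\otimes\nabla_{y}\bar{f}^{t}(\boldsymbol{x}^{t},\boldsymbol{\tilde{y}}^{t})$, a bias/consensus term handled by the telescoping-plus-mean-value-theorem decomposition of \eqref{1TApp4}--\eqref{Mtau} (the gaps between $\nabla_{y}\bar{f}$ evaluated at $g(\boldsymbol{x}^{t})$, at $g^{t}(\boldsymbol{x}^{t})$, at $\frac{1}{m}\sum_{i}\tilde{y}_{i}^{t}$, and at $\boldsymbol{\tilde{y}}^{t}$ collapsing to a $\bar{L}_{f}\|\boldsymbol{\hat{y}}^{t+1}-\boldsymbol{\hat{y}}^{t}\|/\lambda_{y}^{t}$ contribution), the only new feature being the factor $\nabla \boldsymbol{g}^{t}(\boldsymbol{x}^{t})$ whose mean-square norm is bounded by $\sigma_{g,1}^{2}+L_{g}^{2}$ through Assumption~\ref{A2}; and (c) the genuine $z$-consensus error $\nabla \boldsymbol{g}^{t}(\boldsymbol{x}^{t})(\boldsymbol{1}_{m}\otimes\nabla_{y}\bar{f}^{t}(\boldsymbol{x}^{t},\boldsymbol{\tilde{y}}^{t})-\boldsymbol{\tilde{z}}^{t})$. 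Using line~5 of Algorithm~\ref{algorithm1} and the identity behind \eqref{Mtau}, the factor $\boldsymbol{1}_{m}\otimes\nabla_{y}\bar{f}^{t}-\boldsymbol{\tilde{z}}^{t}$ equals $-(\boldsymbol{\hat{z}}^{t+1}-\boldsymbol{\hat{z}}^{t})/\lambda_{z}^{t}$ up to the averaging projection $\Gamma$.

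For term (c) I would iterate the $\boldsymbol{\hat{z}}$-recursion --- the $z$-analogue of \eqref{1TApp9}, namely $\boldsymbol{\hat{z}}^{t+1}-\boldsymbol{\hat{z}}^{t}=(\tilde{W}\otimes I_{r})(\boldsymbol{\hat{z}}^{t}-\boldsymbol{\hat{z}}^{t-1})+K_{1}(\boldsymbol{\zeta}^{t}-\boldsymbol{\zeta}^{t-1})+K_{2}\Xi_{\nabla f}^{t}$ with $\Xi_{\nabla f}^{t}\triangleq\lambda_{z}^{t}\nabla_{y}\boldsymbol{f}^{t}(\boldsymbol{x}^{t},\boldsymbol{\tilde{y}}^{t})-\lambda_{z}^{t-1}\nabla_{y}\boldsymbol{f}^{t-1}(\boldsymbol{x}^{t-1},\boldsymbol{\tilde{y}}^{t-1})$ --- from $1$ to $t$, exactly as in \eqref{1TApp10}. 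This yields a transient piece $\propto(\tilde{W})^{t}$ (source of $d_{10}$), a DP-noise piece $\sum_{p=1}^{t}(\tilde{W})^{t-p}K_{1}(\boldsymbol{\zeta}^{p}-\boldsymbol{\zeta}^{p-1})$, and a gradient-increment piece $\sum_{p=1}^{t}(\tilde{W})^{t-p}K_{2}\Xi_{\nabla f}^{p}$. The DP-noise piece I would treat with the telescoping trick of \eqref{1TApp13}--\eqref{1TApp15}: write $\boldsymbol{\bar{x}}^{t}-\boldsymbol{x}^{*}=(\boldsymbol{\bar{x}}^{t}-\boldsymbol{\bar{x}}^{t-1})+(\boldsymbol{\bar{x}}^{t-1}-\boldsymbol{x}^{*})$, use that the $p=t$ noise is independent of $\boldsymbol{\bar{x}}^{t-1}-\boldsymbol{x}^{*}$ so that cross term is zero, recurse, and bound the surviving $\langle\boldsymbol{\bar{x}}^{p}-\boldsymbol{\bar{x}}^{p-1},\cdot\rangle$ terms via Cauchy--Schwarz, the increment bound $\mathbb{E}[\|\boldsymbol{\bar{x}}^{p}-\boldsymbol{\bar{x}}^{p-1}\|^{2}]\le d_{1}p^{-2\varsigma_{x}}$ of \eqref{1TApp18}, $\mathbb{E}[\|\boldsymbol{\zeta}^{p}-\boldsymbol{\zeta}^{p-1}\|^{2}]\le 2mr(\sigma_{z}^{p-1})^{2}$, and the geometric--polynomial convolution bound of Lemma~\ref{lemmaexp}; this produces $d_{11},d_{12}$, and the same steps run with $\boldsymbol{\hat{y}}$-increments (carried in from term (b)) produce $d_{8},d_{9}$.

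The gradient-increment piece is the \emph{crux}. Here $\mathbb{E}[\|\Xi_{\nabla f}^{p}\|^{2}]$ must itself be estimated from \eqref{3LApp1}--\eqref{3LApp5}: $\mathbb{E}[\|\nabla_{y}\boldsymbol{f}^{p}-\nabla_{y}\boldsymbol{f}^{p-1}\|^{2}]$ is controlled by $p^{-1}$, by $\mathbb{E}[\|\boldsymbol{x}^{p}-\boldsymbol{x}^{p-1}\|^{2}]\le d_{6}p^{-2\varsigma_{x}}$ (via \eqref{xsubx}, \eqref{ratex}, \eqref{ratez}), and by $\mathbb{E}[\|\boldsymbol{\tilde{y}}^{p}-\boldsymbol{\tilde{y}}^{p-1}\|^{2}]$ (via \eqref{3LApp5}), the last of which still contains the \emph{second-order} increment $\frac{1}{\lambda_{y}^{p}}(\boldsymbol{\hat{y}}^{p+1}-\boldsymbol{\hat{y}}^{p})-\frac{1}{\lambda_{y}^{p-1}}(\boldsymbol{\hat{y}}^{p}-\boldsymbol{\hat{y}}^{p-1})$ alongside the consensus rates \eqref{ratey}; with these substituted, $\mathbb{E}[\|\Xi_{\nabla f}^{p}\|^{2}]$ decays polynomially. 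Young's inequality then peels off a $\kappa\|\boldsymbol{\bar{x}}^{t}-\boldsymbol{x}^{*}\|^{2}/4$ share, while Lemma~\ref{lemmaexp} applied to $\sum_{p}((1-|\rho_{2}|)^{2})^{t-p}\mathbb{E}[\|\Xi_{\nabla f}^{p}\|^{2}]$ (and to the additional convolutions generated by the second-order $\boldsymbol{\hat{y}}$-increments, whose own recursion must be unrolled once more, the source of $d_{19}$--$d_{28}$) delivers the $d_{13}$--$d_{28}$ contributions. Adding (a)$=0$, (b), and (c), collecting the $\|\boldsymbol{\bar{x}}^{t}-\boldsymbol{x}^{*}\|^{2}$ prefactors exactly into $(7+2(\sigma_{g,1}^{2}+L_{g}^{2}))\kappa/4$, and reading off the smallest residual exponents as $\beta_{1}=\min\{2\varsigma_{x},1\}$ and $\beta_{3}=\min\{\varsigma_{x}-v_{y}+\varsigma_{y},\varsigma_{x}-v_{z}+\varsigma_{z},\frac{1}{2}-v_{y}+\varsigma_{y},\frac{1}{2}-v_{z}+\varsigma_{z}\}$, gives \eqref{secondlemmaresult}.

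The main obstacle I anticipate is the bookkeeping of these nested increments: since $\boldsymbol{\tilde{z}}^{t}$ is driven by $\nabla_{y}\boldsymbol{f}^{t}(\boldsymbol{x}^{t},\boldsymbol{\tilde{y}}^{t})$, the $\boldsymbol{\hat{z}}$-recursion inherits \emph{both} an $x$-consensus term and a $\boldsymbol{\tilde{y}}$-increment term, the latter being a second difference of $\boldsymbol{\hat{y}}$ divided by $\lambda_{y}$, so one must verify it still decays with a strictly positive exponent --- which is precisely where $\frac{1}{2}>v_{z}>v_{y}$, $\varsigma_{x}>\max\{v_{y}-\varsigma_{y},v_{z}-\varsigma_{z}\}$, $\varsigma_{y}>v_{y}-\frac{1}{2}$, and $\varsigma_{z}>v_{z}-\frac{1}{2}$ get consumed. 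Keeping all the Young's-inequality coefficients collinear so the final $\|\boldsymbol{\bar{x}}^{t}-\boldsymbol{x}^{*}\|^{2}$ prefactor is exactly the stated constant, and confirming that every one of the two dozen residual terms carries a positive power of $t+1$ in its denominator, is the delicate part; the profusion of constants $d_{6},\dots,d_{28}$ is a direct symptom of it.
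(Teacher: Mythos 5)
Your overall route is the paper's own: decompose the inner product into sampling-error and tracking-error pieces, unroll the $\boldsymbol{\hat{z}}$- (and nested $\boldsymbol{\hat{y}}$-) recursions as in \eqref{1TApp9}--\eqref{1TApp10}, kill the DP-noise cross terms by the independence/telescoping argument, control the surviving increments via Cauchy--Schwarz together with \eqref{1TApp18}, \eqref{2TApp13} and Lemma~\ref{lemmaexp}, and peel off $\|\boldsymbol{\bar{x}}^{t}-\boldsymbol{x}^{*}\|^{2}$ shares with Young's inequality. The one genuine flaw is your treatment of term (a): you assert that $\mathbb{E}[\langle\boldsymbol{\bar{x}}^{t}-\boldsymbol{x}^{*},(\nabla \boldsymbol{g}(\boldsymbol{x}^{t})-\nabla \boldsymbol{g}^{t}(\boldsymbol{x}^{t}))(\boldsymbol{1}_{m}\otimes\nabla_{y}\bar{f}(\boldsymbol{x}^{t},\boldsymbol{1}_{m}\otimes g(\boldsymbol{x}^{t})))\rangle]$ vanishes because $\nabla l(x;\xi)$ is unbiased. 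It does not: $\nabla g_{i}^{t}(x_{i}^{t})=\frac{1}{t+1}\sum_{k=0}^{t}\nabla l(x_{i}^{t};\xi_{i}^{k})$ is evaluated at the iterate $x_{i}^{t}$, which is itself a function of $\xi_{i}^{0},\dots,\xi_{i}^{t-1}$, so conditioned on the past only the $k=t$ summand carries fresh randomness and the conditional mean of $\nabla g_{i}^{t}(x_{i}^{t})-\nabla g_{i}(x_{i}^{t})$ is the nonzero deviation of the realized empirical average from $\nabla g_{i}$ at the data-dependent point; unbiasedness in Assumption~\ref{A2} holds at a fixed argument and does not survive this composition. The paper accordingly does not set this term to zero: it bounds it in \eqref{2TApp1} by Young's inequality together with the $O(1/(t+1))$ variance of the $(t+1)$-sample average, and that is precisely where one $\kappa/4$ share of the final prefactor and the $m\sigma_{g,1}^{2}L_{f}^{2}$ summand of $b_{3}$ originate.

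The slip is also visible in your own bookkeeping: if (a) were exactly zero you would land on at most $(6+2(\sigma_{g,1}^{2}+L_{g}^{2}))\kappa/4$ and a $b_{3}$ without $m\sigma_{g,1}^{2}L_{f}^{2}$, yet you claim to recover the stated constants exactly, so your accounting of the Young shares cannot close as written. The repair is cheap --- replace ``vanishes'' by the Young-plus-variance bound --- and the remainder of your outline (the $\Xi_{x}^{t}$-type increment bounds implicit in your use of $d_{7}$, the second-difference $\boldsymbol{\hat{y}}$ unrolling that generates $d_{19}$--$d_{28}$, and the consumption of the stepsize and noise-rate conditions to keep every residual exponent positive) matches the paper's proof. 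One smaller point: $\boldsymbol{1}_{m}\otimes\nabla_{y}\bar{f}^{t}(\boldsymbol{x}^{t},\boldsymbol{\tilde{y}}^{t})-\boldsymbol{\tilde{z}}^{t}=-\frac{1}{\lambda_{z}^{t}}(\boldsymbol{\hat{z}}^{t+1}-\boldsymbol{\hat{z}}^{t})-\boldsymbol{1}_{m}\otimes\frac{\bar{\zeta}_{w}^{t}}{\lambda_{z}^{t}}$, so your term (c) also carries the averaged DP noise $\bar{\zeta}_{w}^{t}/\lambda_{z}^{t}$; it is mean-zero and independent of $\boldsymbol{\bar{x}}^{t}-\boldsymbol{x}^{*}$, hence harmless, but it must be stated (the paper absorbs it into the second term of \eqref{2TApp0} via $\mathbb{E}[\bar{\zeta}_{w}^{t}]=0$).
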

\begin{proof}
To characterize the second term on the right hand side of~\eqref{1TApp3}, we use the following decomposition:
\begin{flalign}
&\textstyle\nabla g_{i}(x_{i}^{t})\nabla_{y}\bar{f}(\boldsymbol{x}^{t},\boldsymbol{1}_{m}\otimes g(\boldsymbol{x}^{t}))-\nabla g_{i}^{t}(x_{i}^{t})\tilde{z}_{i}^{t}\nonumber\\
&\textstyle=\left(\nabla g_{i}(x_{i}^{t})-\nabla g_{i}^{t}(x_{i}^{t})\right)\nabla_{y}\bar{f}(\boldsymbol{x}^{t},\boldsymbol{1}_{m}\otimes g(\boldsymbol{x}^{t}))\nonumber\\
&\textstyle+\nabla g_{i}^{t}(x^t_{i(i)})\left(\nabla_{y}\bar{f}^{t}(\boldsymbol{x}^{t},\boldsymbol{1}_{m}\otimes g(\boldsymbol{x}^{t}))-\frac{1}{m}\sum_{i=1}^{m}\tilde{z}_{i}^{t}\right)\nonumber\\
&\textstyle+\nabla g_{i}^{t}(x_{i}^{t})\left(\nabla_{y}\bar{f}(\boldsymbol{x}^{t},\boldsymbol{1}_{m}\!\otimes\! g(\boldsymbol{x}^{t}))\!-\!\nabla_{y}\bar{f}^{t}(\boldsymbol{x}^{t},\boldsymbol{1}_{m}\!\otimes\! g(\boldsymbol{x}^{t}))\right)\nonumber\\
&\textstyle+\nabla g_{i}^{t}(x_{i}^{t})\left(\frac{1}{m}\sum_{i=1}^{m}\tilde{z}_{i}^{t}-\tilde{z}_{i}^{t}\right).\label{2TApp0}
\end{flalign}

By using the Young's inequality, for any $\kappa>0$, the first term on the right hand side of~\eqref{2TApp0} satisfies
\begin{equation}
\begin{aligned}
&\textstyle\mathbb{E}[\langle \mathbf{\bar{x}}^{t}\!-\!\boldsymbol{x}^*,(\nabla \boldsymbol{g}(\boldsymbol{x}^{t})\!-\!\nabla \boldsymbol{g}^{t}(\boldsymbol{x}^{t}))\\
&\textstyle\quad\times(\boldsymbol{1}_{m}\!\otimes\!\nabla_{y}\bar{f}(\boldsymbol{x}^{t},\boldsymbol{1}_{m}\otimes g(\boldsymbol{x}^{t})))\rangle]\\
&\textstyle\leq\frac{\kappa}{4}\mathbb{E}[\| \mathbf{\bar{x}}^{t}-\boldsymbol{x}^*\|^2]+\frac{m\sigma_{g,1}^2L_{f}^2}{\kappa(t+1)}.\label{2TApp1}
\end{aligned}
\end{equation}

Using $\frac{1}{m}\sum_{i=1}^{m}\tilde{z}_{i}^{t}=\frac{1}{\lambda_{z}^{t}}(\bar{\zeta}_{w}^{t}+\lambda_{z}^{t}\nabla_{y}\bar{f}^{t}(\boldsymbol{x}^{t},\boldsymbol{\tilde{y}}^{t}))$ and  $\mathbb{E}[ \bar{\zeta}_{w}^{t}]=0$, the second term on the right hand side of~\eqref{2TApp0} satisfies
\begin{equation}
\begin{aligned}
&\textstyle\mathbb{E}[\langle \mathbf{\bar{x}}^{t}\!-\!\boldsymbol{x}^*,\nabla \boldsymbol{g}^{t}(\boldsymbol{x}^t) (\boldsymbol{1}_{m}\otimes\nabla_{y}\bar{f}^{t}(\boldsymbol{x}^{t},\boldsymbol{1}_{m}\otimes g(\boldsymbol{x}^{t})))\\
&\textstyle\quad-\nabla \boldsymbol{g}^{t}(\boldsymbol{x}^t)(\boldsymbol{1}_{m}\otimes\frac{1}{m}\sum_{i=1}^{m}\tilde{z}_{i}^{t})\rangle]\\
&\textstyle=\mathbb{E}[\langle \nabla \boldsymbol{g}^{t}(\boldsymbol{x}^t)^{\top}(\mathbf{\bar{x}}^{t}\!-\!\boldsymbol{x}^*),\\
&\textstyle \quad\boldsymbol{1}_{m}\otimes \nabla_{y}\bar{f}^{t}(\boldsymbol{x}^{t},\boldsymbol{1}_{m}\!\otimes\! g(\boldsymbol{x}^{t}))-\boldsymbol{1}_{m}\!\otimes\!\nabla_{y}\bar{f}^{t}(\boldsymbol{x}^{t},\boldsymbol{\tilde{y}}^{t})\rangle].
\label{2TApp2}
\end{aligned}
\end{equation}

To characterize the right hand side of~\eqref{2TApp2}, we first estimate an upper bound on $\mathbb{E}[\|\Xi_{x}^{t}-\Xi_{x}^{t-1}\|^2]$ with denoting $\Xi_{x}^{t}\triangleq\nabla \boldsymbol{g}^{t}(\boldsymbol{x}^{t})^{\top}(\mathbf{\bar{x}}^{t}-\boldsymbol{x}^*)$. We have the following decomposition:
\begin{flalign}
&\textstyle \mathbb{E}[\|\Xi_{x}^{t}-\Xi_{x}^{t-1}\|^2]=\mathbb{E}[\|\nabla \boldsymbol{g}^{t}(\boldsymbol{x}^{t})^{\top}(\mathbf{\bar{x}}^{t}-\boldsymbol{x}^*)\nonumber\\
&\textstyle\quad-\nabla \boldsymbol{g}^{t-1}(\boldsymbol{x}^{t-1})^{\top}(\mathbf{\bar{x}}^{t-1}-\boldsymbol{x}^*)\|^2]\nonumber\\
&\textstyle\leq\frac{2m\sigma_{g,1}^2d_{\Omega}^2}{t+1}\!+\!2d_{\Omega}^2\bar{L}_{g}^2\mathbb{E}[\|\boldsymbol{x}^{t}-\boldsymbol{x}^{t-1}\|^2]\nonumber\\
&\textstyle\quad+2L_{g}^2\mathbb{E}[\|\mathbf{\bar{x}}^{t}-\mathbf{\bar{x}}^{t-1}\|^2],\label{2TApp10}
\end{flalign}
where in the derivation we have used that the diameter of the compact set $\Omega$ is denoted as $d_{\Omega}$.

By using the relation $\|\boldsymbol{x}^{t}-\boldsymbol{x}^{t-1}\|^2=\sum_{i=1}^{m}\mathbb{E}[\|x_{i}^{t}-x_{i}^{t-1}\|^2]$ and incorporating~\eqref{ratex} and~\eqref{ratez} into~\eqref{xsubx}, the second term on the right hand side of~\eqref{2TApp10} satisfies
\begin{equation}
\begin{aligned}
&\textstyle \mathbb{E}[\|\boldsymbol{x}^{t}-\boldsymbol{x}^{t-1}\|^2]\leq \frac{2\bar{C}_{x}}{t^{2\varsigma_{x}}}+\frac{2c_{u1}\bar{C}_{z}(\lambda_{x}^{0})^2}{(\lambda_{z}^{0})^2t^{\beta_{z}+2v_{x}-2v_{z}}}\\
&\textstyle\quad+\frac{2mc_{u1}r(\lambda_{x}^{0})^2\sigma_{z}^2}{(\lambda_{z}^{0})^2t^{2v_{x}+2\varsigma_{z}-2v_{z}}}\!+\!\frac{2mc_{u2}(\lambda_{x}^{0})^2}{t^{2v_{x}}}\!+\!\frac{n\sigma_{x}^2}{t^{2\varsigma_{x}}}\leq\frac{2^{2\varsigma_{x}}d_{6}}{(t+1)^{2\varsigma_{x}}},\label{2TApp12}
\end{aligned}
\end{equation}
where the positive constant $d_{6}$ is given by $d_{6}=2\bar{C}_{x}+2c_{u1}(\bar{C}_{z}+mr\sigma_{z}^2)(\lambda_{x}^{0})^2(\lambda_{z}^{0})^{-2}+2mc_{u2}(\lambda_{x}^{0})^2+n\sigma_{x}^2$.

Substituting~\eqref{1TApp18} and~\eqref{2TApp12} into~\eqref{2TApp10} yields
\begin{equation}
\begin{aligned}
\textstyle \mathbb{E}[\|\Xi_{x}^{t}-\Xi_{x}^{t-1}\|^2]&\textstyle\leq\frac{2m\sigma_{g,1}^2d_{\Omega}^2}{t+1}+\frac{d_{\Omega}^2\bar{L}_{g}^22^{2\varsigma_{x}+1}d_{6}}{(t+1)^{2\varsigma_{x}}}\\
&\textstyle\quad+\frac{2^{2\varsigma_{x}+1}d_{1}L_{g}^2}{(t+1)^{2\varsigma_{x}}}\leq\frac{d_{7}}{(t+1)^{\beta_{1}}},\label{2TApp13}
\end{aligned}
\end{equation}
where $\beta_{1}$ is given by $\beta_{1}=\min\{2\varsigma_{x},1\}$ and $d_{7}$ is given by $d_{7}=2md_{\Omega}^2\sigma_{g,1}^2+2^{2\varsigma_{x}+1}d_{6}d_{\Omega}^2\bar{L}_{g}^2+2^{2\varsigma_{x}+1}d_{1}L_{g}^2$.

By using an argument similar to the derivation of~\eqref{firstlemmaresult}, except that $\mathbb{E}[|\mathbf{\bar{x}}^{t}-\mathbf{\bar{x}}^{t-1}|^2]$ in~\eqref{1TApp16} is replaced with $\mathbb{E}[\|\Xi_{x}^{t}-\Xi_{x}^{t-1}\|^2]$, the right hand side of~\eqref{2TApp2} satisfies
\begin{equation}
\begin{aligned}
&\textstyle \mathbb{E}[\langle \nabla \boldsymbol{g}^{t}(\boldsymbol{x}^t)^{\top}(\mathbf{\bar{x}}^{t}\!-\!\boldsymbol{x}^*),\\
&\textstyle \quad\boldsymbol{1}_{m}\otimes \nabla_{y}\bar{f}^{t}(\boldsymbol{x}^{t},\boldsymbol{1}_{m}\!\otimes\! g(\boldsymbol{x}^{t}))-\boldsymbol{1}_{m}\!\otimes\!\nabla_{y}\bar{f}^{t}(\boldsymbol{x}^{t},\boldsymbol{\tilde{y}}^{t})\rangle]\\
&\textstyle\leq \frac{\kappa(\sigma_{g,1}^2+L_{g}^2)}{2}\mathbb{E}[\|\mathbf{\bar{x}}^{t}-\boldsymbol{x}^*\|^2]+\frac{d_{8}}{\kappa(t+1)^{\beta_{1}}}+\frac{d_{9}}{(t+1)^{\alpha_{2}}},\label{2TApp30}
\end{aligned}
\end{equation}
where $\beta_{1}=\min\{1,2\varsigma_{x}\}$, $\alpha_{2}=\min\{\frac{1}{2}-v_{y}+\varsigma_{y},\varsigma_{x}-v_{y}+\varsigma_{y}\}$, $d_{8}=4d_{0}+d_{5}$, and $d_{9}=2^{\alpha_{2}+2}\sqrt{d_{2}d_{7}}(\ln(\sqrt{1-|\rho_{2}|})e)^{-2}(1-\sqrt{1-|\rho_{2}|})^{-1}(\lambda_{y}^{0})^{-1}$.

Following an argument similar to the derivation of~\eqref{2TApp1}, for any $\kappa\!>\!0$, the third term on the right hand side of~\eqref{2TApp0} satisfies
\begin{flalign}
&\textstyle\mathbb{E}[\langle \mathbf{\bar{x}}^{t}-\boldsymbol{x}^*, \nabla \boldsymbol{g}^{t}(\boldsymbol{x}^{t})(\boldsymbol{1}_{m}\otimes\nabla_{y}\bar{f}(\boldsymbol{x}^{t},\boldsymbol{1}_{m}\otimes g(\boldsymbol{x}^{t})))\nonumber\\
&\textstyle\quad-\nabla \boldsymbol{g}^{t}(\boldsymbol{x}^{t})(\boldsymbol{1}_{m}\otimes(\nabla_{y}\bar{f}^{t}(\boldsymbol{x}^{t},\boldsymbol{1}_{m}\otimes g(\boldsymbol{x}^{t}))))\rangle]\nonumber\\
&\textstyle\leq \frac{\kappa}{4}\mathbb{E}[\| \mathbf{\bar{x}}^{t}-\boldsymbol{x}^*\|^2]+\frac{1}{\kappa}\mathbb{E}[\|\nabla \boldsymbol{g}^{t}(\boldsymbol{x}^{t})\|^2]\nonumber\\
&\textstyle\quad\times m\mathbb{E}[\|\nabla_{y}\bar{f}(\boldsymbol{x}^{t},\boldsymbol{1}_{m}\otimes g(\boldsymbol{x}^{t}))-\nabla_{y}\bar{f}^{t}(\boldsymbol{x}^{t},\boldsymbol{1}_{m}\otimes g(\boldsymbol{x}^{t}))\|^2]\nonumber\\
&\textstyle\leq\frac{\kappa}{4}\mathbb{E}[\| \mathbf{\bar{x}}^{t}-\boldsymbol{x}^*\|^2]+\frac{m(\sigma_{g,1}^2+L_{g}^2)\sigma^2_{f,1}}{\kappa(t+1)},\label{2TApp3}
\end{flalign}
where we have used Assumptions~\ref{A1} and~\ref{A2} in the last inequality.

For the sake of notational simplicity, we define $\Xi_{z}^{t}\triangleq\boldsymbol{\hat{z}}^{t}-\boldsymbol{\hat{z}}^{t-1}$,  $\Xi_{\zeta}^{t}\triangleq\boldsymbol{\zeta}^{t}-\boldsymbol{\zeta}^{t-1}$, and $\Xi_{f}^{t}\triangleq\lambda_{z}^{t}\nabla_{y}\boldsymbol{f}^{t}(\boldsymbol{x}^{t},\boldsymbol{\tilde{y}}^{t})-\lambda_{z}^{t-1}\nabla_{y}\boldsymbol{f}^{t-1}(\boldsymbol{x}^{t-1},\boldsymbol{\tilde{y}}^{t-1})$. By using an argument similar to the derivation of~\eqref{1TApp11}, we obtain
\begin{flalign}
&\textstyle \mathbb{E}[\langle \mathbf{\bar{x}}^{t}-\boldsymbol{x}^{*},\nabla \boldsymbol{g}^{t}(\boldsymbol{x}^{t})\left(\boldsymbol{1}_{m}\otimes\frac{1}{m}\sum_{i=1}^{m}\tilde{z}_{i}^{t}-\boldsymbol{\tilde{z}}^{t}\right)\rangle]\nonumber\\
&\textstyle=\mathbb{E}[\langle \mathbf{\bar{x}}^{t}\!-\!\boldsymbol{x}^*,-\frac{1}{\lambda_{z}^{t}}\nabla \boldsymbol{g}^{t}(\boldsymbol{x}^{t})(\tilde{W}\!\otimes\! I_{r})^{t}\Xi_{z}^{1}\rangle]\nonumber\\
&\textstyle+\mathbb{E}[\langle \mathbf{\bar{x}}^{t}\!-\!\boldsymbol{x}^*,-\frac{1}{\lambda_{z}^{t}}\nabla \boldsymbol{g}^{t}(\boldsymbol{x}^{t})\sum_{p=1}^{t}(\tilde{W}\!\otimes\! I_{r})^{t-p}K_{1}\Xi_{\zeta}^{p}\rangle]\nonumber\\
&\textstyle+\mathbb{E}[\langle \mathbf{\bar{x}}^{t}\!-\!\boldsymbol{x}^*,-\frac{1}{\lambda_{z}^{t}}\nabla \boldsymbol{g}^{t}(\boldsymbol{x}^{t})\!\sum_{p=1}^{t}(\tilde{W}\!\otimes\! I_{r})^{t-p}K_{2}\Xi_{f}^{p}\rangle].\label{2TApp6}
\end{flalign}

Using Lemma~\ref{lemmae} and $\|\tilde{W}\|\!<\!1\!-\!|\rho_{2}|$, for any $\kappa>\!0$, we have
\begin{equation}
\vspace{-0.2em}
\begin{aligned}
&\textstyle\mathbb{E}[\langle \mathbf{\bar{x}}^{t}-\boldsymbol{x}^*,-\frac{1}{\lambda_{z}^{t}}\nabla \boldsymbol{g}^{t}(\boldsymbol{x}^{t})(\tilde{W}\otimes I_{r})^{t}\Xi_{z}^{1}\rangle]\\
&\textstyle\leq \frac{\kappa}{4}\mathbb{E}[\|\mathbf{\bar{x}}^{t}-\boldsymbol{x}^*\|^2]+\frac{4d_{10}}{\kappa(t+1)^{2-2v_{z}}},\label{2TApp7}
\end{aligned}
\vspace{-0.2em}
\end{equation}
where $d_{10}$ is given by $d_{10}\!=\!4(\sigma_{g,1}^2+L_{g}^2)\mathbb{E}[\|\boldsymbol{\hat{z}}^{1}-\boldsymbol{\hat{z}}^{0}\|^2](\ln((1-|\rho_{2}|)^{2})e)^{-2}(\lambda_{z}^{0})^{-2}.$

Using an argument similar to the derivation of~\eqref{1TApp22}, except that $\mathbb{E}[|\mathbf{\bar{x}}^{t}-\mathbf{\bar{x}}^{t-1}|^2]$ in~\eqref{1TApp16} is replaced with $\mathbb{E}[|\Xi_{x}^{t}-\Xi_{x}^{t-1}|^2]$, the second term on the right hand side of~\eqref{2TApp6} satisfies
\begin{equation}
\vspace{-0.2em}
\begin{aligned}
&\textstyle\mathbb{E}[\langle \mathbf{\bar{x}}^{t}-\boldsymbol{x}^*,-\frac{1}{\lambda_{z}^{t}}\nabla \boldsymbol{g}^{t}(x^{t})\sum_{p=1}^{t}(\tilde{W}\otimes I_{r})^{t-p}K_{1}\Xi_{\zeta}^{p}\rangle]\\
&\textstyle\leq \sum_{p=1}^{t}\frac{\sqrt{d_{7}d_{11}}(1-|\rho_{2}|)^{t-p}}{\lambda_{z}^{0}(t+1)^{\alpha_{3}}}\leq \frac{d_{12}}{(t+1)^{\alpha_{3}}},\label{2TApp16}
\end{aligned}
\vspace{-0.2em}
\end{equation}
where $\alpha_{3}\!=\!\min\{\frac{1}{2}-v_{z}+\varsigma_{z},\varsigma_{x}-v_{z}+\varsigma_{z}\}$ and $d_{12}\!=\!2^{\alpha_{3}+2}\sqrt{d_{7}d_{11}}(\ln(\sqrt{1-|\rho_{2}|})e)^{-2}(1\!-\!\sqrt{1-|\rho_{2}|})^{-1}(\lambda_{z}^{0})^{-1}$ with $d_{11}\!=\!2^{4\varsigma_{z}+3}mr\sigma_{z}^2(\ln(1-|\rho_{2}|)e)^{-2}|\rho_{2}|^{-1}.$

Next we characterize the third term on the right hand side of~\eqref{2TApp6} by using the following decomposition:
\begin{flalign}
&\textstyle \mathbb{E}[\langle \mathbf{\bar{x}}^{t}-\boldsymbol{x}^*,-\frac{1}{\lambda_{z}^{t}}\nabla \boldsymbol{g}^{t}(\boldsymbol{x}^{t})\sum_{p=1}^{t}(\tilde{W}\otimes I_{r})^{t-p}K_{2}\Xi_{f}^{p}\rangle]\nonumber\\
&\textstyle=\mathbb{E}[\langle \mathbf{\bar{x}}^{t}-\boldsymbol{x}^*,-\frac{1}{\lambda_{z}^{t}}\nabla \boldsymbol{g}^{t}(\boldsymbol{x}^{t})\sum_{p=1}^{t}(\tilde{W}\otimes I_{r})^{t-p}K_{2}\Upsilon_{f}^{p}\rangle]\nonumber\\
&\textstyle\quad+\mathbb{E}\big[\big\langle \mathbf{\bar{x}}^{t}-\boldsymbol{x}^*,-\frac{1}{\lambda_{z}^{t}}\nabla \boldsymbol{g}^{t}(\boldsymbol{x}^{t})\sum_{p=1}^{t}(\tilde{W}\otimes I_{r})^{t-p}K_{2}\nonumber\\
&\textstyle\quad\times\lambda_{z}^{p}\left(\nabla_{y}\boldsymbol{f}(\boldsymbol{x}^{p-1},\boldsymbol{\tilde{y}}^{p})-\nabla_{y}\boldsymbol{f}(\boldsymbol{x}^{p-1},\boldsymbol{\tilde{y}}^{p-1})\right)\big\rangle\big],\label{2TApp17}
\end{flalign}
where $\Upsilon_{f}^{p}\triangleq \Xi_{f}^{p}-\lambda_{z}^{p}\left(\nabla_{y}\boldsymbol{f}(\boldsymbol{x}^{p-1},\boldsymbol{\tilde{y}}^{p})-\nabla_{y}\boldsymbol{f}(\boldsymbol{x}^{p-1},\boldsymbol{\tilde{y}}^{p-1})\right).$

By using $\|\tilde{W}\|<1-|\rho_{2}|$ and $\|K_{2}\|=1$, for any $\kappa>0$, the first term on the right hand side of~\eqref{2TApp17} satisfies
\begin{equation}
\vspace{-0.2em}
\begin{aligned}
&\textstyle \mathbb{E}[\langle \mathbf{\bar{x}}^{t}-\boldsymbol{x}^*,-\frac{1}{\lambda_{z}^{t}}\nabla \boldsymbol{g}^{t}(\boldsymbol{x}^{t})\sum_{p=1}^{t}(\tilde{W}\otimes I_{r})^{t-p}K_{2}\Upsilon_{f}^{p}\rangle]\\
&\textstyle\leq \frac{\kappa}{4}\mathbb{E}[\|\mathbf{\bar{x}}^{t}-\boldsymbol{x}^*\|^2]\\
&\textstyle\quad+\frac{(\sigma_{g,1}^2+L_{g}^2)}{\kappa(\lambda_{z}^{t})^2}\sum_{p=1}^{t}((1-|\rho_{2}|)^{2})^{t-p}\mathbb{E}[\|\Upsilon_{f}^{p}\|^2].\label{2TApp18}
\end{aligned}
\vspace{-0.2em}
\end{equation}
According to the definition of $\Upsilon_{f}^{p}$ given in~\eqref{2TApp17} and $\Xi_{f}^{p}=\lambda_{z}^{p}\nabla_{y}\boldsymbol{f}^{p}(\boldsymbol{x}^{p},\boldsymbol{\tilde{y}}^{p})-\lambda_{z}^{p-1}\nabla_{y}\boldsymbol{f}^{p-1}(\boldsymbol{x}^{p-1},\boldsymbol{\tilde{y}}^{p-1})$, we have
\begin{flalign}
&\textstyle \mathbb{E}[\|\Upsilon_{f}^{p}\|^2]=\mathbb{E}[\|\lambda_{z}^{p}\left(\nabla_{y}\boldsymbol{f}^{p}(\boldsymbol{x}^{p},\boldsymbol{\tilde{y}}^{p})-\nabla_{y}\boldsymbol{f}(\boldsymbol{x}^{p},\boldsymbol{\tilde{y}}^{p})\right)\nonumber\\
&\textstyle\quad-\lambda_{z}^{p-1}\left(\nabla_{y}\boldsymbol{f}^{p-1}(\boldsymbol{x}^{p-1},\boldsymbol{\tilde{y}}^{p-1})-\nabla_{y}\boldsymbol{f}(\boldsymbol{x}^{p-1},\boldsymbol{\tilde{y}}^{p-1})\right)\nonumber\\
&\textstyle\quad-(\lambda_{z}^{p-1}-\lambda_{z}^{p})\nabla_{y}\boldsymbol{f}(\boldsymbol{x}^{p-1},\boldsymbol{\tilde{y}}^{p-1})\nonumber\\
&\textstyle\quad+\lambda_{z}^{p}\left(\nabla_{y}\boldsymbol{f}(\boldsymbol{x}^{p},\boldsymbol{\tilde{y}}^{p})-\nabla_{y}\boldsymbol{f}(\boldsymbol{x}^{p-1},\boldsymbol{\tilde{y}}^{p})\right)\|^2].\label{2TApp19}
\end{flalign}
By substituting~\eqref{2TApp12} into~\eqref{2TApp19}, we have
\begin{equation}
\begin{aligned}
&\textstyle\mathbb{E}[\|\Upsilon_{f}^{p}\|^2]\leq \frac{m\sigma_{f}^2(\lambda_{z}^{0})^2}{(p+1)^{1+2v_{z}}}+\frac{m\sigma_{f}^2(\lambda_{z}^{0})^2}{p^{1+2v_{z}}}+\frac{2mL_{f}^2v_{z}^2(\lambda_{z}^{0})^2}{p^{2+2v_{z}}}\\
&\textstyle\quad+\frac{\bar{L}_{f}^2(\lambda_{z}^{0})^22^{2\varsigma_{x}+1}d_{6}}{(p+1)^{2\varsigma_{x}+2v_{z}}}\leq \frac{2^{\alpha_{4}}d_{13}}{(p+1)^{\alpha_{4}}},\label{2TApp21}
\end{aligned}
\end{equation}
where  $\alpha_{4}=\min\{2\varsigma_{x}+2v_{z},1+2v_{z}\}$ and $d_{13}=2(\lambda_{z}^{0})^2(m\sigma_{f}^2+mv_{z}^2L_{f}^2+2^{2\varsigma_{x}}d_{6}\bar{L}_{f}^2)$.

By substituting~\eqref{2TApp21} into~\eqref{2TApp18} and using Lemma~\ref{lemmaexp}, the first term on the right hand side of~\eqref{2TApp17} satisfies
\begin{equation}
\begin{aligned}
&\textstyle \mathbb{E}[\langle \mathbf{\bar{x}}^{t}-\boldsymbol{x}^*,-\frac{1}{\lambda_{z}^{t}}\nabla \boldsymbol{g}^{t}(\boldsymbol{x}^{t})\sum_{p=1}^{t}(\tilde{W}\otimes I_{r})^{t-p}K_{2}\Upsilon_{f}^{p}\rangle]\\
&\textstyle\leq \frac{\kappa}{4}\mathbb{E}[\|\mathbf{\bar{x}}^{t}-\boldsymbol{x}^*\|^2]+ \frac{d_{14}}{\kappa(t+1)^{\beta_{1}}},\label{2TApp22}
\end{aligned}
\end{equation}
where $\beta_{1}=\min\{2\varsigma_{x},1\}$ and $d_{14}=\frac{2^{2\alpha_{4}+2}d_{13}(\sigma_{g,1}^2+L_{g}^2)}{(\lambda_{z}^{0})^2(\ln(1-|\rho_{2}|)e)^2|\rho_{2}|}$.

We proceed to characterize the second term on the right hand side of~\eqref{2TApp17}. By using the mean-value theorem, we have
\begin{equation}
\begin{aligned}
&\textstyle\nabla_{y}\boldsymbol{f}(\boldsymbol{x}^{p-1},\boldsymbol{\tilde{y}}^{p})-\nabla_{y}\boldsymbol{f}(\boldsymbol{x}^{p-1},\boldsymbol{\tilde{y}}^{p-1})\\
&\textstyle=\frac{\partial^2\boldsymbol{f}(\boldsymbol{x}^{p-1},\tau^{p})}{\partial y^2}(\boldsymbol{\tilde{y}}^{p}-\boldsymbol{\tilde{y}}^{p-1})\triangleq M_{\tau^{p}}(\boldsymbol{\tilde{y}}^{p}-\boldsymbol{\tilde{y}}^{p-1}),\label{2TApp23}
\end{aligned}
\end{equation}
where $\tau^{p}=c\boldsymbol{\tilde{y}}^{p}+(1-c)\boldsymbol{\tilde{y}}^{p-1}$ for any $c\in(0,1)$.

By using the relation $\boldsymbol{\tilde{y}}^{t}=\frac{1}{\lambda_{y}^{t}}(\boldsymbol{\hat{y}}^{t+1}-\boldsymbol{\hat{y}}^{t})+\frac{1}{\lambda_{y}^{t}}(\boldsymbol{1}_{m}\otimes(\bar{y}^{t+1}-\bar{y}^{t}))$, we have
\begin{flalign}
&\textstyle \boldsymbol{\tilde{y}}^{p}-\boldsymbol{\tilde{y}}^{p-1}=\boldsymbol{1}_{m}\otimes\left(\frac{1}{\lambda_{y}^{p}}(\bar{y}^{p+1}-\bar{y}^{p})-\frac{1}{\lambda_{y}^{p-1}}(\bar{y}^{p}\!-\!\bar{y}^{p-1})\right)\nonumber\\
&\textstyle \quad+\frac{1}{\lambda_{y}^{p}}(\boldsymbol{\hat{y}}^{p+1}-\boldsymbol{\hat{y}}^{p})-\frac{1}{\lambda_{y}^{p-1}}(\boldsymbol{\hat{y}}^{p}-\boldsymbol{\hat{y}}^{p-1}),\label{2TApp230}
\end{flalign}
where $\bar{y}^{p+1}-\bar{y}^{p}$ satisfies $\bar{y}^{p+1}-\bar{y}^{p}\textstyle=\bar{\chi}_{w}^{p}+\lambda_{y}^{p}g^{p}(\boldsymbol{x}^{p})$ based on the dynamics of $y_{i}^{t}$ in Algorithm~\ref{algorithm1}.

The first term on the right hand side of~\eqref{2TApp230} satisfies
\begin{flalign}
&\textstyle \mathbb{E}\left[\left\langle \mathbf{\bar{x}}^{t}-\boldsymbol{x}^*,-\frac{1}{\lambda_{z}^{t}}\nabla \boldsymbol{g}^{t}(\boldsymbol{x}^{t})\sum_{p=1}^{t}(\tilde{W}\otimes I_{r})^{t-p}K_{2}\lambda_{z}^{p}M_{\tau^{p}}\right.\right.\nonumber\\
&\left.\left.\textstyle\quad\times\boldsymbol{1}_{m}\otimes\left(\frac{1}{\lambda_{y}^{p}}(\bar{y}^{p+1}-\bar{y}^{p})-\frac{1}{\lambda_{y}^{p-1}}(\bar{y}^{p}\!-\!\bar{y}^{p-1})\right)\right\rangle\right]\nonumber\\
&=\textstyle \mathbb{E}\left[\left\langle \mathbf{\bar{x}}^{t}-\boldsymbol{x}^*,-\frac{1}{\lambda_{z}^{t}}\nabla \boldsymbol{g}^{t}(\boldsymbol{x}^{t})\sum_{p=1}^{t}(\tilde{W}\otimes I_{r})^{t-p}K_{2}\lambda_{z}^{p}M_{\tau^{p}}\right.\right.\nonumber\\
&\left.\left.\textstyle\quad\times \boldsymbol{1}_{m}\otimes(g^{p}(\boldsymbol{x}^{p})-g^{p-1}(\boldsymbol{x}^{p-1}))\right\rangle\right]\nonumber\\
&\textstyle\quad+\mathbb{E}\left[\left\langle \mathbf{\bar{x}}^{t}-\boldsymbol{x}^*,-\frac{1}{\lambda_{z}^{t}}\nabla \boldsymbol{g}^{t}(\boldsymbol{x}^{t})\sum_{p=1}^{t}(\tilde{W}\otimes I_{r})^{t-p}K_{2}\lambda_{z}^{p}M_{\tau^{p}}\right.\right.\nonumber\\
&\left.\left.\textstyle\quad\times \left(\boldsymbol{1}_{m}\otimes\left(\frac{\bar{\chi}_{w}^{p}}{\lambda_{y}^{p}}-\frac{\bar{\chi}_{w}^{p-1}}{\lambda_{y}^{p-1}}\right)\right)\right\rangle\right].\label{2TApp26}
\end{flalign}

By using an argument similar to the derivation of~\eqref{2TApp22}, the first term on the right hand side of~\eqref{2TApp26} satisfies
\begin{flalign}
&\textstyle \mathbb{E}[\langle \mathbf{\bar{x}}^{t}-\boldsymbol{x}^*,-\frac{1}{\lambda_{z}^{t}}\nabla \boldsymbol{g}^{t}(\boldsymbol{x}^{t})\sum_{p=1}^{t}(\tilde{W}\otimes I_{r})^{t-p}K_{2}\lambda_{z}^{p}M_{\tau^{p}}\nonumber\\
&\textstyle\quad\times \boldsymbol{1}_{m}\otimes(g^{p}(\boldsymbol{x}^{p})-g^{p-1}(\boldsymbol{x}^{p-1}))\rangle]\nonumber\\
&\textstyle\leq \frac{\kappa}{4}\mathbb{E}[\|\mathbf{\bar{x}}^{t}-\boldsymbol{x}^*\|^2]+\frac{d_{16}}{\kappa(t+1)^{\beta_{1}}},\label{2TApp29}
\end{flalign}
where $\beta_{1}=\min\{2\varsigma_{x},1\}$ and  $d_{16}=\frac{2^{2\alpha_{4}+2}d_{15}\bar{L}_{f}^2(\sigma_{g,1}^2+L_{g}^2)}{(\lambda_{z}^{0})^2(\ln(1-|\rho_{2}|)e)^2|\rho_{2}|}$ with $d_{15}=2(\lambda_{z}^{0})^2(m\sigma_{g,0}^2+2^{2\varsigma_{x}}d_{6}L_{g}^2)$.

By defining $\Upsilon_{\chi}^{p}\triangleq\boldsymbol{1}_{m}\otimes(\frac{\bar{\chi}_{w}^{p}}{\lambda_{y}^{p}}-\frac{\bar{\chi}_{w}^{p-1}}{\lambda_{y}^{p-1}})$, we have $\mathbb{E}[\|\Upsilon_{\chi}^{p}\|^2]\leq \frac{(1+2^{2\varsigma_{y}})mr\sigma_{y}^2}{(\lambda_{y}^{0})^2(p+1)^{2\varsigma_{y}-2v_{y}}}$, which implies that the second term on the right hand side of~\eqref{2TApp26} satisfies 
\begin{flalign}
&\textstyle \mathbb{E}[\langle \mathbf{\bar{x}}^{t}-\boldsymbol{x}^*,-\frac{1}{\lambda_{z}^{t}}\nabla \boldsymbol{g}^{t}(\boldsymbol{x}^{t})\sum_{p=1}^{t}(\tilde{W}\otimes I_{r})^{t-p}K_{2}M_{\tau^{p}}\lambda_{z}^{p}\Upsilon_{\chi}^{p}\rangle]\nonumber\\
&\textstyle \leq \sum_{p=1}^{t}\frac{\sqrt{d_{7}d_{17}}(1-|\rho_{2}|)^{t-p}}{\lambda_{z}^{0}(t+1)^{\alpha_{2}}}\leq \frac{d_{18}}{(t+1)^{\alpha_{2}}},\label{2TApp32}
\end{flalign}
where $\alpha_{2}$ is given by $\alpha_{2}=\min\{\frac{1}{2}-v_{y}+\varsigma_{y},\varsigma_{x}-v_{y}+\varsigma_{y}\}$ and $d_{18}$ is given by $d_{18}=\frac{2^{\alpha_{2}+2}\sqrt{d_{7}d_{17}}}{\lambda_{z}^{0}(\ln(\sqrt{1-|\rho_{2}|})e)^2(1-\sqrt{1-|\rho_{2}|})}$ with $d_{17}=\frac{2^{2\varsigma_{y}+2}mr\sigma_{y}^2(1+2^{2\varsigma_{y}})\bar{L}_{f}^2(\lambda_{z}^{0})^2}{(\lambda_{y}^{0})^2(\ln(1-|\rho_{2}|)e)^2|\rho_{2}|}$.

Substituting~\eqref{2TApp29} and~\eqref{2TApp32} into~\eqref{2TApp26} yields
\begin{flalign}
&\textstyle \mathbb{E}\left[\left\langle \mathbf{\bar{x}}^{t}-\boldsymbol{x}^*,-\frac{1}{\lambda_{z}^{t}}\nabla \boldsymbol{g}^{t}(\boldsymbol{x}^{t})\sum_{p=1}^{t}(\tilde{W}\otimes I_{r})^{t-p}K_{2}\lambda_{z}^{p}M_{\tau^{p}}\right.\right.\nonumber\\
&\left.\left.\textstyle\quad\times\boldsymbol{1}_{m}\otimes\left(\frac{1}{\lambda_{y}^{p}}(\bar{y}^{p+1}-\bar{y}^{p})-\frac{1}{\lambda_{y}^{p-1}}(\bar{y}^{p}\!-\!\bar{y}^{p-1})\right)\right\rangle\right]\nonumber\\
&\textstyle \leq \frac{\kappa}{4}\mathbb{E}[\|\mathbf{\bar{x}}^{t}-\boldsymbol{x}^*\|^2]+\frac{d_{16}}{\kappa(t+1)^{\beta_{1}}}+\frac{d_{18}}{(t+1)^{\alpha_{2}}}.\label{2TApp33}
\end{flalign}

Next we characterize the second and third terms on the right hand side of~\eqref{2TApp230}. According to the dynamics in~\eqref{2LApp5}, we have
\begin{flalign}
&\textstyle \frac{1}{\lambda_{y}^{p}}(\boldsymbol{\hat{y}}^{p+1}\!-\!\boldsymbol{\hat{y}}^{p})\!-\!\frac{1}{\lambda_{y}^{p-1}}(\boldsymbol{\hat{y}}^{p}\!-\!\boldsymbol{\hat{y}}^{p-1})\!=\!\frac{1}{\lambda_{y}^{p}}(\tilde{W}\!\otimes\! I_{r})(\boldsymbol{\hat{y}}^{p}-\boldsymbol{\hat{y}}^{p-1})\nonumber\\
&\textstyle\quad-\frac{1}{\lambda_{y}^{p-1}}(\tilde{W}\otimes I_{r})(\boldsymbol{\hat{y}}^{p-1}-\boldsymbol{\hat{y}}^{p-2})+\Theta_{\chi}^{p}+\Theta_{g}^{p},\label{2TApp35}
\end{flalign}
where $\Theta_{\chi}^{p}$ and $\Theta_{g}^{p}$ are defined as
\begin{equation}
\begin{aligned}
&\textstyle \Theta_{\chi}^{p}\triangleq\frac{1}{\lambda_{y}^{p}}K_{1}(\boldsymbol{\chi}^{p}-\boldsymbol{\chi}^{p-1})-\frac{1}{\lambda_{y}^{p-1}}K_{1}(\boldsymbol{\chi}^{p-1}-\boldsymbol{\chi}^{p-2});\\
&\textstyle \Theta_{g}^{p}\triangleq\frac{1}{\lambda_{y}^{p}}K_{2}(\lambda_{y}^{p}\boldsymbol{g}^{p}(\boldsymbol{x}^{p})-\lambda_{y}^{p-1}\boldsymbol{g}^{p-1}(\boldsymbol{x}^{p-1}))\\
&\textstyle\quad-\frac{1}{\lambda_{y}^{p-1}}K_{2}(\lambda_{y}^{p-1}\boldsymbol{g}^{p-1}(\boldsymbol{x}^{p-1})-\lambda_{y}^{p-2}\boldsymbol{g}^{p-2}(\boldsymbol{x}^{p-2})).\label{2TApp36}
\end{aligned} 
\end{equation}

Iterating~\eqref{2TApp35} from $2$ to $p$ yields
\begin{equation}
\begin{aligned}
&\textstyle \frac{1}{\lambda_{y}^{p}}(\boldsymbol{\hat{y}}^{p+1}-\boldsymbol{\hat{y}}^{p})-\frac{1}{\lambda_{y}^{p-1}}(\boldsymbol{\hat{y}}^{p}-\boldsymbol{\hat{y}}^{p-1})\\
&\textstyle=(\tilde{W}\otimes I_{r})^{p-1}\big(\frac{1}{\lambda_{y}^{2}}(\boldsymbol{\hat{y}}^{2}-\boldsymbol{\hat{y}}^{1})-\frac{1}{\lambda_{y}^{1}}(\boldsymbol{\hat{y}}^{1}-\boldsymbol{\hat{y}}^{0})\big)\\
&\textstyle\quad+\sum_{k=2}^{p}(\tilde{W}\otimes I_{r})^{p-k}\big(\Theta_{\chi}^{k}+\Theta_{g}^{k}\big).\label{2TApp37}
\end{aligned}
\end{equation}
According to~\eqref{2TApp37}, we have that the second and third terms on the right hand side of~\eqref{2TApp230} satisfy
\begin{flalign}
&\textstyle \mathbb{E}[\langle \mathbf{\bar{x}}^{t}-\boldsymbol{x}^*,-\frac{1}{\lambda_{z}^{t}}\nabla \boldsymbol{g}^{t}(\boldsymbol{x}^{t})\sum_{p=1}^{t}(\tilde{W}\otimes I_{r})^{t-p}K_{2}\lambda_{z}^{p}M_{\tau^{p}}\nonumber\\
&\textstyle\quad\times(\frac{1}{\lambda_{y}^{p}}(\boldsymbol{\hat{y}}^{p+1}-\boldsymbol{\hat{y}}^{p})-\frac{1}{\lambda_{y}^{p-1}}(\boldsymbol{\hat{y}}^{p}-\boldsymbol{\hat{y}}^{p-1}))\rangle]\nonumber\\
&\textstyle=\mathbb{E}[\langle \mathbf{\bar{x}}^{t}-\boldsymbol{x}^*,-\frac{1}{\lambda_{z}^{t}}\nabla \boldsymbol{g}^{t}(\boldsymbol{x}^{t})\sum_{p=1}^{t}(\tilde{W}\otimes I_{r})^{t-p}K_{2}M_{\tau^{p}}\nonumber\\
&\textstyle\quad\times\lambda_{z}^{p}\left((\tilde{W}\otimes I_{r})^{p-1}\big(\frac{1}{\lambda_{y}^{2}}(\boldsymbol{\hat{y}}^{2}-\boldsymbol{\hat{y}}^{1})-\frac{1}{\lambda_{y}^{1}}(\boldsymbol{\hat{y}}^{1}-\boldsymbol{\hat{y}}^{0})\big)\right)\nonumber\\
&\textstyle\quad+\mathbb{E}[\langle \mathbf{\bar{x}}^{t}-\boldsymbol{x}^*,-\frac{1}{\lambda_{z}^{t}}\nabla \boldsymbol{g}^{t}(\boldsymbol{x}^{t})\sum_{p=1}^{t}(\tilde{W}\otimes I_{r})^{t-p}K_{2}M_{\tau^{p}}\nonumber\\
&\textstyle\quad\times\lambda_{z}^{p}\sum_{k=2}^{p}(\tilde{W}\otimes I_{r})^{p-k}(\Theta_{\chi}^{k}+\Theta_{g}^{k})\rangle].\label{2TApp38}
\end{flalign}

Lemma~\ref{lemmae} implies $((1-|\rho_{2}|)^2)^{p}\leq \frac{16}{(\ln((1-|\rho_{2}|)^2)e)^2(p+1)^2}\triangleq\frac{d_{19}}{(p+1)^2}$. Then, we use Lemma~\ref{lemmaexp} to obtain
\begin{equation}
\begin{aligned}
&\textstyle\sum_{p=1}^{t}\lambda_{z}^{p}((1-|\rho_{2}|)^2)^{t-1}=\frac{\lambda_{z}^{0}\sum_{p=1}^{t}((1-|\rho_{2}|)^2)^{t-p}((1-|\rho_{2}|)^2)^{p}}{(1-|\rho_{2}|)^2(p+1)^{2v_{z}}} \\
&\textstyle \leq \sum_{p=1}^{t}((1-|\rho_{2}|)^2)^{t-p}\frac{d_{19}\lambda_{z}^{0}}{(1-|\rho_{2}|)^2(p+1)^{2v_{z}+2}}\leq \frac{d_{20}}{(t+1)^{2v_{z}+2}},\nonumber
\end{aligned}
\end{equation}
with $d_{20}=\frac{2^{2v_{z}+4}d_{19}\lambda_{z}^{0} }{(1-|\rho_{2}|)^2(\ln(1-|\rho_{2}|)e)^2|\rho_{2}|}$.

Using the preceding inequality and the Young's inequality, for any $\kappa\!>\!0$, the first term on the right hand side of~\eqref{2TApp38} satisfies
\begin{flalign}
&\textstyle \mathbb{E}[\langle \mathbf{\bar{x}}^{t}-\boldsymbol{x}^*,-\frac{1}{\lambda_{z}^{t}}\nabla \boldsymbol{g}^{t}(\boldsymbol{x}^{t})\sum_{p=1}^{t}(\tilde{W}\otimes I_{r})^{t-p}K_{2}M_{\tau^{p}}\nonumber\\
&\textstyle\quad\times\lambda_{z}^{p}\left((\tilde{W}\otimes I_{r})^{p-1}\big(\frac{1}{\lambda_{y}^{2}}(\boldsymbol{\hat{y}}^{2}-\boldsymbol{\hat{y}}^{1})-\frac{1}{\lambda_{y}^{1}}(\boldsymbol{\hat{y}}^{1}-\boldsymbol{\hat{y}}^{0})\big)\right)\nonumber\\
&\textstyle \leq \frac{\kappa}{4}\mathbb{E}[\| \mathbf{\bar{x}}^{t}-\boldsymbol{x}^*\|^2]+\frac{d_{21}}{\kappa(t+1)^{2}},\label{2TApp391}
\end{flalign}
where $d_{21}=\frac{d_{20}(\sigma_{g,1}^2+L_{g}^2)\bar{L}_{f}^2\mathbb{E}[\|\frac{1}{\lambda_{y}^{2}}(\boldsymbol{\hat{y}}^{2}-\boldsymbol{\hat{y}}^{1})-\frac{1}{\lambda_{y}^{1}}(\boldsymbol{\hat{y}}^{1}-\boldsymbol{\hat{y}}^{0})\|^2]}{(\lambda_{z}^{0})^2}$.

To characterize the second term on the right hand side of~\eqref{2TApp38}, we define an auxiliary variable $M^{k}\triangleq\sum_{p=k}^{t}\lambda_{z}^{p}(\tilde{W}\otimes I_{r})^{t-p}K_{2}M_{\tau^{p}}(\tilde{W}\otimes I_{r})^{p-k}$. Then, we have $\sum_{p=1}^{t}\lambda_{z}^{p}(\tilde{W}\otimes I_{r})^{t-p}K_{2}M_{\tau^{p}}\sum_{k=2}^{p}(\tilde{W}\otimes I_{r})^{p-k}\Theta_{\chi}^{k}=\sum_{k=2}^{t}M^{k}\Theta_{\chi}^{k}$ and the following relationship:
\begin{equation}
\begin{aligned}
&\textstyle \mathbb{E}[\langle \mathbf{\bar{x}}^{t}-\boldsymbol{x}^*,-\frac{1}{\lambda_{z}^{t}}\nabla \boldsymbol{g}^{t}(\boldsymbol{x}^{t})\sum_{p=1}^{t}(\tilde{W}\otimes I_{r})^{t-p}K_{2}M_{\tau^{p}}\\
&\textstyle\quad\times\lambda_{z}^{p}\sum_{k=2}^{p}(\tilde{W}\otimes I_{r})^{p-k}\Theta_{\chi}^{k}\rangle]\\
&\textstyle=\mathbb{E}[\langle \Xi_{x}^{t}-\Xi_{x}^{t-1},-\frac{1}{\lambda_{z}^{t}}\sum_{k=2}^{t}M^{k}\Theta_{\chi}^{k}\rangle]\\
&\textstyle\quad+\mathbb{E}[\langle \Xi_{x}^{t-1}-\Xi_{x}^{t-2},-\frac{1}{\lambda_{z}^{t}}\sum_{k=2}^{t}M^{k}\Theta_{\chi}^{k}\rangle]\\
&\textstyle\quad+\mathbb{E}[\langle \Xi_{x}^{t-2}-\Xi_{x}^{t-3},-\frac{1}{\lambda_{z}^{t}}\sum_{k=2}^{t-1}M^{k}\Theta_{\chi}^{k}\rangle]\\
&\textstyle\quad~~\vdots\\
&\textstyle\quad+\mathbb{E}[\langle \Xi_{x}^{2}-\Xi_{x}^{1},-\frac{1}{\lambda_{z}^{t}}\sum_{k=2}^{3}M^{k}\Theta_{\chi}^{k}\rangle]\\
&\textstyle\quad+\mathbb{E}[\langle \Xi_{x}^{1}-\Xi_{x}^{0},-\frac{1}{\lambda_{z}^{t}}M^{2}\Theta_{\chi}^{2}\rangle],\label{2TApp40}
\end{aligned}
\end{equation}
where in the derivation we have used the fact that $\Xi_{x}^{t-p}-\Xi_{x}^{t-p-1}$ and $M^{k}\Theta_{\chi}^{k}$ are independent for any $k\!\geq\! t-p+1$, and hence, $\mathbb{E}[\langle \Xi_{x}^{t-p}-\Xi_{x}^{t-p-1},-\frac{1}{\lambda_{z}^{t}}\sum_{k=2}^{t}M^{k}\Theta_{\chi}^{k}\rangle]=\mathbb{E}[\langle \Xi_{x}^{t-p}-\Xi_{x}^{t-p-1},-\frac{1}{\lambda_{z}^{t}}\sum_{k=2}^{t-p+1}M^{k}\Theta_{\chi}^{k}\rangle]$ holds for any $p\geq 1$.

To characterize the right hand side of~\eqref{2TApp40}, we first estimate an upper bound on $\|M^{k}\|$. According to the definition $M^{k}=\sum_{p=k}^{t}\lambda_{z}^{p}(\tilde{W}\otimes I_{r})^{t-p}K_{2}M_{\tau^{p}}(\tilde{W}\otimes I_{r})^{p-k}$, we have
\begin{equation}
\begin{aligned}
\|M^{k}\|&\textstyle\leq \bar{L}_{f}\lambda_{z}^{k}(1-|\rho_{2}|)^{t-k}(t-k+1)\\
&\textstyle \leq M_{0}\bar{L}_{f}(1-|\rho_{2}|)^{\frac{t-k}{2}}\lambda_{z}^{k},\label{2TApp41}
\end{aligned}
\end{equation}
where in the derivation we have used the relation $\sum_{p=k}^{t}\lambda_{z}^{p}\leq\lambda_{z}^{k}(t-k+1)$ and the inequality $(1-|\rho_{2}|)^{\frac{t-k}{2}}(t-k+1)\leq (1-|\rho_2|)^{\frac{c_{2}}{1-c_{2}}-\frac{1}{2}}\frac{c_{2}}{1-c_{2}}\triangleq M_{0}$ with $c_{2}=\sqrt{1-|\rho_2|}$.

Next, we estimate an upper bound on $\mathbb{E}[\|\Theta_{\chi}^{k}\|^2]$. According to the definition of $\Theta_{\chi}^{k}$ given in~\eqref{2TApp36}, we have $\mathbb{E}[\|\Theta_{\chi}^{k}\|^2]\leq\frac{d_{22}}{(k+1)^{2\varsigma_{y}-2v_{y}}}$ with $d_{22}\!=\!\frac{4^{2\varsigma_{y}+1}m\sigma_{y}^2}{(\lambda_{y}^{0})^2}$, which leads to
\begin{equation}
\textstyle \mathbb{E}[\|\sum_{k=2}^{t}M^{k}\Theta_{\chi}^{k}\|]\leq \frac{d_{23}}{(k+1)^{\varsigma_{y}-v_{y}+v_{z}}},\label{2TApp430}
\end{equation}
with $d_{23}=\frac{2^{\varsigma_{y}-v_{y}+v_{z}+2}M_{0}\bar{L}_{f}\lambda_{z}^{0}\sqrt{d_{22}}}{(\ln((1-|\rho_{2}|)^{\frac{1}{4}})e)^2(1-(1-|\rho_{2}|)^{\frac{1}{4}})}$.

By using~\eqref{2TApp430} and~\eqref{2TApp13}, the first term on the right hand side of~\eqref{2TApp40} satisfies
\begin{equation}
\textstyle \mathbb{E}[\langle \Xi_{x}^{t}-\Xi_{x}^{t-1},-\frac{1}{\lambda_{z}^{t}}\sum_{k=2}^{t}M^{k}\Theta_{\chi}^{k}\rangle]\leq \frac{\sqrt{d_{7}}d_{23}}{\lambda_{z}^{0}(t+1)^{\alpha_{2}}},\label{2TApp43}
\end{equation}
with $\alpha_{2}=\min\{\frac{1}{2}-v_{y}+\varsigma_{y},\varsigma_{x}-v_{y}+\varsigma_{y}\}$.

By using an argument similar to the derivation of~\eqref{2TApp43}, the remaining terms on the right hand side of~\eqref{2TApp40} satisfy
\begin{equation}
\begin{aligned}
&\textstyle \mathbb{E}[\langle \Xi_{x}^{t-p}-\Xi_{x}^{t-p-1},-\frac{1}{\lambda_{z}^{t}}\sum_{k=2}^{t-p+1}M^{k}\Theta_{\chi}^{k}\rangle]\\
&\textstyle \leq \frac{d_{24}(t+1)^{v_{z}}(1-|\rho_{2}|)^{\frac{p}{2}}}{(t-p+1)^{\varsigma_{y}-v_{y}+v_{z}+\frac{\beta_{1}}{2}}},~\forall p\in[1,t-2], \label{2TApp45}
\end{aligned}
\end{equation}
with $d_{24}=\frac{ 2^{\varsigma_{y}-v_{y}+v_{z}+2}\sqrt{d_{7}}M_{0}\bar{L}_{f}\lambda_{z}^{0}\sqrt{d_{22}}}{\lambda_{z}^{0}\sqrt{1-|\rho_{2}|}(\ln((1-|\rho_{2}|^{\frac{1}{4}}))e)^2(1-1-|\rho_{2}|^{\frac{1}{4}})}.$

Combining~\eqref{2TApp43}, and~\eqref{2TApp45} with~\eqref{2TApp40}, we arrive at
\begin{flalign}
&\textstyle \mathbb{E}[\langle \mathbf{\bar{x}}^{t}-\boldsymbol{x}^*,-\frac{1}{\lambda_{z}^{t}}\nabla \boldsymbol{g}^{t}(\boldsymbol{x}^{t})\sum_{p=1}^{t}(\tilde{W}\otimes I_{r})^{t-p}K_{2}M_{\tau^{p}}\nonumber\\
&\textstyle\times\lambda_{z}^{p}\sum_{k=2}^{p}(\tilde{W}\otimes I_{r})^{p-k}\Theta_{\chi}^{k}\rangle]\leq \frac{\sqrt{d_{7}}d_{23}}{\lambda_{z}^{0}(t+1)^{\alpha_{2}}}\nonumber\\
&\textstyle+d_{24}(t+1)^{v_{z}}\sum_{p=1}^{t-1}\frac{(\sqrt{1-|\rho_{2}|})^{p}}{(t-p+1)^{\varsigma_{y}-v_{y}+v_{z}+\frac{\beta_{1}}{2}}}\!\leq\! \frac{d_{25}}{(t+1)^{\alpha_{2}}},\label{2TApp460}
\end{flalign}
where in the derivation we used $\mathbb{E}[\langle \Xi_{x}^{1}-\Xi_{x}^{0},-\frac{1}{\lambda_{z}^{t}}M^{2}\Theta_{\chi}^{2}\rangle]\leq \frac{d_{24}(t+1)^{v_{z}}(1-|\rho_{2}|)^{\frac{t-1}{2}}}{2^{\varsigma_{y}-v_{y}+v_{z}+\frac{\beta_{1}}{2}}}$ and $\sum_{p=1}^{t-1}\frac{a^{p}}{(t-p+1)^{b}}=\sum_{p=1}^{t-1}\frac{a^{t-p}}{(p+1)^{b}}$ for any $a,b>0$. Here, $\alpha_{2}$ and $d_{25}$ are given by $\alpha_{2}=\min\{\frac{1}{2}-v_{y}+\varsigma_{y},\varsigma_{x}-v_{y}+\varsigma_{y}\}$ and $d_{25}=\frac{\sqrt{d_{7}}d_{23}}{\lambda_{z}^{0}}+\frac{d_{24} 2^{\varsigma_{y}-v_{y}+v_{z}+\frac{\beta_{1}}{2}+2}}{(\ln((1-|\rho_{2}|)^{\frac{1}{4}})e)^2(1-(1-|\rho_{2}|)^{\frac{1}{4}})}$, respectively.

By using the Young's inequality, for any $\kappa>0$, the last term on the right hand side of~\eqref{2TApp38} satisfies
\begin{flalign}
&\textstyle\mathbb{E}[\langle \mathbf{\bar{x}}^{t}-\boldsymbol{x}^*,-\frac{1}{\lambda_{z}^{t}}\nabla \boldsymbol{g}^{t}(\boldsymbol{x}^{t})\sum_{p=1}^{t}(\tilde{W}\otimes I_{r})^{t-p}K_{2}M_{\tau^{p}}\nonumber\\
&\textstyle\quad\times\lambda_{z}^{p}\sum_{k=2}^{p}(\tilde{W}\otimes I_{r})^{p-k}\Theta_{g}^{k}\rangle]\leq \frac{\kappa}{4}\mathbb{E}[\|\mathbf{\bar{x}}^{t}-\boldsymbol{x}^*\|^2]\nonumber\\
&\textstyle\quad+\frac{(\sigma_{g,1}^2+L_{g}^2)}{\kappa (\lambda_{z}^{t})^2}\sum_{k=2}^{t}\mathbb{E}[\|M^{k}\Theta_{g}^{k}\|^2].\label{2TApp46}
\end{flalign}

Incorporating~\eqref{gammag} and~\eqref{2TApp41} into the last term on the right hand side of~\eqref{2TApp46} and using the definition of $\Theta_{g}^{k}$ given in~\eqref{2TApp36} yield
\begin{equation}
\begin{aligned}
&\textstyle\sum_{k=2}^{t}\mathbb{E}[\|M^{k}\Theta_{g}^{k}\|^2]\leq M_{0}^2\bar{L}_{f}^2(\lambda_{z}^{0})^2\\
&\textstyle\times\sum_{k=2}^{t}(1-|\rho_{2}|)^{t-k}\frac{d_{26}}{(k+1)^{\beta_{1}+2v_{z}}}\leq \frac{d_{27}}{(t+1)^{\alpha_{4}}},\label{2TApp52}
\end{aligned}
\end{equation}
where $\alpha_{4}=\min\{2\varsigma_{x}+2v_{z},1+2v_{z}\}$ and $d_{27}=\frac{ 2^{\alpha_{4}+2}d_{26}M_{0}^2\bar{L}_{f}^2(\lambda_{z}^{0})^2}{(\ln(\sqrt{1-|\rho_{2}|})e)^2(1-\sqrt{1-|\rho_{2}|})}$ with $d_{26}\!=\!\frac{(1+2^{\alpha_{1}})2^{\alpha_{1}}d_{3}}{(\lambda_{y}^{0})^2}$ and $\alpha_{1}=\min\{2v_{x}+2\varsigma_{y},2v_{y}+1,2v_{y}+2\varsigma_{x}\}$.

Further substituting~\eqref{2TApp52} into~\eqref{2TApp46} yields the last term on the right hand side of~\eqref{2TApp38} satisfying
\begin{equation}
\begin{aligned}
&\textstyle\mathbb{E}[\langle \mathbf{\bar{x}}^{t}-\boldsymbol{x}^*,-\frac{1}{\lambda_{z}^{t}}\nabla \boldsymbol{g}^{t}(\boldsymbol{x}^{t})\sum_{p=1}^{t}(\tilde{W}\otimes I_{r})^{t-p}K_{2}M_{\tau^{p}}\\
&\textstyle\quad\!\times\!\lambda_{z}^{p}\sum_{k=2}^{p}(\tilde{W}\!\otimes\! I_{r})^{p-k}\Theta_{g}^{k}\rangle]\!\leq\! \frac{\kappa}{4}\mathbb{E}[\|\mathbf{\bar{x}}^{t}\!-\!\boldsymbol{x}^*\|^2]\!+\!\frac{d_{28}}{\kappa(t+1)^{\beta_{1}}},\label{2TApp53}
\end{aligned}
\end{equation}
where $\beta_{1}=\min\{1,2\varsigma_{x}\}$ and $d_{28}=\frac{d_{27}(\sigma_{g,1}^2+L_{g}^2)}{\mu (\lambda_{z}^{0})^2}.$

Substituting~\eqref{2TApp391},~\eqref{2TApp460}, and~\eqref{2TApp53} into~\eqref{2TApp38}, we arrive at
\begin{flalign}
&\textstyle \mathbb{E}[\langle \mathbf{\bar{x}}^{t}-\boldsymbol{x}^*,-\frac{1}{\lambda_{z}^{t}}\nabla \boldsymbol{g}^{t}(\boldsymbol{x}^{t})\sum_{p=1}^{t}(\tilde{W}\otimes I_{r})^{t-p}K_{2}\lambda_{z}^{p}M_{\tau^{p}}\nonumber\\
&\textstyle\quad\times(\frac{1}{\lambda_{y}^{p}}(\boldsymbol{\hat{y}}^{p+1}-\boldsymbol{\hat{y}}^{p})-\frac{1}{\lambda_{y}^{p-1}}(\boldsymbol{\hat{y}}^{p}-\boldsymbol{\hat{y}}^{p-1}))\rangle]\nonumber\\
&\textstyle\leq \frac{\kappa}{2}\mathbb{E}[\| \mathbf{\bar{x}}^{t}-\boldsymbol{x}^*\|^2]+\frac{d_{21}+d_{28}}{\kappa(t+1)^{\beta_{1}}}+\frac{d_{25}}{(t+1)^{\alpha_{2}}},\label{2TApp54}
\end{flalign}
where $\beta_{1}\!=\!\min\{1,2\varsigma_{x}\}$ and $\alpha_{2}\!=\!\min\{\frac{1}{2}-v_{y}+\varsigma_{y},\varsigma_{x}-v_{y}+\varsigma_{y}\}$.

By using~\eqref{2TApp23},~\eqref{2TApp230},~\eqref{2TApp33}, and~\eqref{2TApp54}, we have that the second term on the right hand side of~\eqref{2TApp17} satisfies
\begin{equation}
\begin{aligned}
&\textstyle \mathbb{E}\big[\big\langle \mathbf{\bar{x}}^{t}-\boldsymbol{x}^*,-\frac{1}{\lambda_{z}^{t}}\nabla \boldsymbol{g}^{t}(\boldsymbol{x}^{t})\sum_{p=1}^{t}(\tilde{W}\otimes I_{r})^{t-p}K_{2}\\
&\textstyle\quad\times\lambda_{z}^{p}\left(\nabla_{y}\boldsymbol{f}(\boldsymbol{x}^{p-1},\boldsymbol{\tilde{y}}^{p})-\nabla_{y}\boldsymbol{f}(\boldsymbol{x}^{p-1},\boldsymbol{\tilde{y}}^{p-1})\right)\big\rangle\big]\\
&\textstyle \leq \frac{3\kappa}{4}\mathbb{E}[\|\mathbf{\bar{x}}^{t}-\boldsymbol{x}^*\|^2]+\frac{d_{16}+d_{21}+d_{28}}{\kappa(t+1)^{\beta_{1}}}+\frac{d_{18}+d_{25}}{(t+1)^{\alpha_{2}}}.\label{2TApp55}
\end{aligned}
\end{equation}

Further substituting~\eqref{2TApp22} and~\eqref{2TApp55} into~\eqref{2TApp17}, we have that the third term on the right hand side of~\eqref{2TApp6} satisfies
\begin{equation}
\begin{aligned}
&\textstyle \mathbb{E}[\langle \mathbf{\bar{x}}^{t}-\boldsymbol{x}^*,-\frac{1}{\lambda_{z}^{t}}\nabla \boldsymbol{g}^{t}(\boldsymbol{x}^{t})\sum_{p=1}^{t}(\tilde{W}\otimes I_{r})^{t-p}K_{2}\Xi_{f}^{p}\rangle]\\
&\textstyle \leq \kappa\mathbb{E}[\|\mathbf{\bar{x}}^{t}-\boldsymbol{x}^*\|^2]+\frac{d_{14}+d_{16}+d_{21}+d_{28}}{\kappa(t+1)^{\beta_{1}}}+\frac{d_{18}+d_{25}}{(t+1)^{\alpha_{2}}}.\label{2TApp56}
\end{aligned}
\end{equation}

Substituting~\eqref{2TApp7},~\eqref{2TApp16}, and~\eqref{2TApp56} into~\eqref{2TApp6}, we have that the fourth term on the right hand side of~\eqref{2TApp0} satisfies
\begin{flalign}
&\textstyle \sum_{i=1}^{m}\mathbb{E}[\langle \bar{\mathrm{x}}_{(i)}^{t}-x_{i}^*,\nabla g_{i}^{t}(x_{i}^{t})\left(\frac{1}{m}\sum_{i=1}^{m}\tilde{z}_{i}^{t}-\tilde{z}_{i}^{t}\right)\rangle]\nonumber\\
&\textstyle\leq \frac{5\kappa}{4}\mathbb{E}[\|\mathbf{\bar{x}}^{t}-\boldsymbol{x}^*\|^2]+\frac{d_{29}}{\kappa(t+1)^{\beta_{1}}}+\frac{d_{30}}{(t+1)^{\beta_{3}}},\label{2TApp57}
\end{flalign}
where $\beta_{1}$, $\beta_{3}$, $d_{29}$, and $d_{30}$ are given by $\beta_{1}=\min\{1,2\varsigma_{x}\}$, $\beta_{3}=\min\{\varsigma_{x}-v_{y}+\varsigma_{y},\varsigma_{x}-v_{z}+\varsigma_{z},\frac{1}{2}-v_{y}+\varsigma_{y},\frac{1}{2}-v_{z}+\varsigma_{z}\}$, $d_{29}=d_{10}+d_{14}+d_{16}+d_{21}+d_{28}$, and $d_{30}=d_{12}+d_{18}+d_{25}$, respectively.

Combining~\eqref{2TApp1},~\eqref{2TApp30},~\eqref{2TApp3}, and~\eqref{2TApp57} with~\eqref{2TApp0}, the second term on the right hand side of~\eqref{1TApp3} satisfies
\begin{flalign}
&\textstyle \mathbb{E}[\langle\mathbf{\bar{x}}^{t}-\boldsymbol{x}^{*},\nabla \boldsymbol{g}(\boldsymbol{x}^{t})\nonumber\\
&\textstyle\quad\times(\boldsymbol{1}_{m}\otimes \nabla_{y}\bar{f}(\boldsymbol{x}^{t},\boldsymbol{1}_{m}\otimes g(\boldsymbol{x}^{t}))-\nabla \boldsymbol{g}^{t}(\boldsymbol{x}^{t})\boldsymbol{\tilde{z}}^{t})\rangle]\nonumber\\
&\textstyle \leq \frac{(7+2(\sigma_{g,1}^2+L_{g}^2))\kappa}{4}\mathbb{E}[\|\mathbf{\bar{x}}^{t}-\boldsymbol{x}^*\|^2]+\frac{m\sigma_{g,1}^2L_{f}^2+mL_{g}^2\sigma_{f}^2}{\kappa(t+1)}\nonumber\\
&\textstyle\quad +\frac{d_{8}+d_{29}}{\kappa(t+1)^{\beta_{1}}}+\frac{d_{9}}{(t+1)^{\alpha_{2}}}+\frac{d_{30}}{(t+1)^{\beta_{3}}}\nonumber\\
&\textstyle\leq \frac{(7\!+\!2(\sigma_{g,1}^2\!+\!L_{g}^2))\kappa}{4}\mathbb{E}[\|\mathbf{\bar{x}}^{t}\!-\!\boldsymbol{x}^*\|^2]\!+\!\frac{b_{3}}{\kappa(t+1)^{\beta_{1}}}\!+\!\frac{b_{4}}{(t+1)^{\beta_{3}}},\label{2TApp58}
\end{flalign}
where $b_{3}$, $b_{4}$, $\beta_{1}$, and $\beta_{3}$ are given in the lemma statement. 
\end{proof}

By substituting~\eqref{firstlemmaresult} and~\eqref{secondlemmaresult} into~\eqref{1TApp3}, we obtain
\begin{flalign}
\textstyle \mathbb{E}[\langle \mathbf{\bar{x}}^{t}-\boldsymbol{x}^{*},\boldsymbol{\breve{u}}^{t}-\boldsymbol{u}^{t}\rangle]&\textstyle\leq\frac{(9+2(\sigma_{g,1}^2\!+L_{g}^2))\kappa}{4}\mathbb{E}[\|\mathbf{\bar{x}}^{t}-\boldsymbol{x}^{*}\|^2]\nonumber\\
&\textstyle\quad+\frac{b_{5}}{\kappa(t+1)^{\beta_{1}}}+\frac{b_{6}}{(t+1)^{\beta_{3}}},\label{uhatbound}
\end{flalign}
where $\beta_{1}=\min\{2\varsigma_{x},1\}$, $\beta_{3}=\min\{\varsigma_{x}-v_{y}+\varsigma_{y},\varsigma_{x}-v_{z}+\varsigma_{z},\frac{1}{2}-v_{y}+\varsigma_{y},\frac{1}{2}-v_{z}+\varsigma_{z}\}$, $b_{5}=b_{1}+b_{3}$, and $b_{6}=b_{2}+b_{4}$, with $b_{1}$ and $b_{2}$ given in~\eqref{firstlemmaresult} and $b_{3}$ and $b_{4}$ given in~\eqref{secondlemmaresult}.

\subsection{Proof of Theorem~\ref{T1}}
(i) Convergence rate when $F(\boldsymbol{x})$ is strongly convex.

By defining $\vartheta_{w(i)}^{t}\triangleq\sum_{j\in\mathcal{N}_{i}}w_{ij}\vartheta_{j(i)}^{t}$ and using the projection inequality, we have 
\begin{equation}
\textstyle\|\bar{\mathrm{x}}_{(i)}^{t+1}- x_{i}^{*}\|^2\leq\|\bar{\mathrm{x}}_{(i)}^{t}+\bar{\vartheta}_{w(i)}^{t}-\frac{\lambda_{x}^{t}}{m}u_{i}^{t}-x_{i}^{*}\|.\label{T11}
\end{equation}
Based on the definitions of $\boldsymbol{u}^{t}$ and $\boldsymbol{x}^{t}$, we have
\begin{flalign}
&\textstyle\mathbb{E}[\|\mathbf{\bar{x}}^{t+1}- \boldsymbol{x}^{*}\|^2]\leq \mathbb{E}[\|\mathbf{\bar{x}}^{t}- \boldsymbol{x}^{*}\|^2]+\frac{(\lambda_{x}^{t})^2}{m^2}\mathbb{E}[\|\boldsymbol{u}^{t}\|^2]\nonumber\\
&\textstyle\quad+n(\sigma_{x}^{t})^2-\frac{2\lambda_{x}^{t}}{m}\mathbb{E}[\langle\mathbf{\bar{x}}^{t}- \boldsymbol{x}^{*},\boldsymbol{u}^{t}\rangle].\label{1tApp1}
\end{flalign}

Using the relations $\mathbb{E}[\|\boldsymbol{u}_{(i)}^{t}\|^2]\!=\! \mathbb{E}[\|u_{i}^{t}\|^2]$ and $\mathbb{E}[\|\boldsymbol{u}^{t}\|^2]\!=\! \sum_{i=1}^{m}\mathbb{E}[\|u_{i}^{t}\|^2]$, we substitute~\eqref{ratez} into~\eqref{1AppL9} to obtain
\begin{equation}
\textstyle \frac{(\lambda_{x}^{t})^2}{m^2}\mathbb{E}[\|\boldsymbol{u}^{t}\|^2]\leq \frac{c_{2}}{(t+1)^{\varrho_{1}}},\label{1tApp2}
\end{equation}
where $\varrho_{1}=\min\{2v_{x}+2\varsigma_{y}-2v_{y},2v_{x}+2\varsigma_{z}-2v_{z}\}$ and $c_{2}=\frac{\bar{C}_{z}c_{u1}(\lambda_{x}^{0})^2}{m^2(\lambda_{z}^{0})^2}+\frac{c_{u1}r(\lambda_{x}^{0})^2\sigma_{z}^2}{m(\lambda_{z}^{0})^2}+\frac{c_{u2}(\lambda_{x}^{0})^2}{m}$.

The $\mu$-strong convexity of $F(x)$ implies that the third term on the right hand side of~\eqref{1tApp1} satisfies
\begin{equation}
\begin{aligned}
&\textstyle-\frac{2\lambda_{x}^{t}}{m}\mathbb{E}[\langle\mathbf{\bar{x}}^{t}- \boldsymbol{x}^{*},\boldsymbol{u}^{t}\rangle]\leq-\frac{\lambda_{x}^{t}\mu}{m}\mathbb{E}[\|\mathbf{\bar{x}}^{t}-\boldsymbol{x}^*\|^2]\\
&\textstyle\quad+\frac{2\lambda_{x}^{t}}{m}\mathbb{E}[\langle\mathbf{\bar{x}}^{t}\!-\!\boldsymbol{x}^*,\boldsymbol{\breve{u}}^{t}-\boldsymbol{u}^{t}\rangle].\label{1tApp4}
\end{aligned}
\end{equation}

By substituting~\eqref{uhatbound} into~\eqref{1tApp4} and further substituting~\eqref{1tApp2} and~\eqref{1tApp4} into~\eqref{1tApp1}, we obtain
\begin{equation}
\begin{aligned}
&\textstyle\mathbb{E}[\|\mathbf{\bar{x}}^{t+1}- \boldsymbol{x}^{*}\|^2]\\
&\textstyle\leq \left(1-\frac{\lambda_{x}^{t}\mu}{m}+\frac{(9+2(\sigma_{g,1}^2\!+L_{g}^2))\lambda_{x}^{t}\kappa}{2m}\right)\mathbb{E}[\|\mathbf{\bar{x}}^{t}- \boldsymbol{x}^{*}\|^2]\\
&\textstyle+\frac{2b_{5}\lambda_{x}^{t}}{\kappa m(t+1)^{\beta_{1}}}+\frac{2b_{6}\lambda_{x}^{t}}{ m(t+1)^{\beta_{3}}}+\frac{c_{2}}{(t+1)^{\varrho_{1}}}+\frac{n\sigma_{x}^2}{(t+1)^{2\varsigma_{x}}}.\label{1tApp5}
\end{aligned}
\end{equation}
By letting $\kappa=\frac{\mu}{9+2(\sigma_{g,1}^2\!+L_{g}^2)}$, inequality~\eqref{1tApp5} reduces to
\begin{equation}
\vspace{-0.2em}
\textstyle\mathbb{E}[\|\mathbf{\bar{x}}^{t+1}\!-\! \boldsymbol{x}^{*}\|^2]\!\leq\! (1-\frac{\lambda_{x}^{t}\mu}{2m})\mathbb{E}[\|\mathbf{\bar{x}}^{t}\!-\! \boldsymbol{x}^{*}\|^2]\!+\!\frac{c_{3}}{(t+1)^{\varrho_{2}}},\label{1tApp6}
\end{equation}
where $\varrho_{2}=\min\{2\varsigma_{x},v_{x}+\frac{1}{2}-v_{y}+\varsigma_{y},v_{x}+\frac{1}{2}-v_{z}+\varsigma_{z}\}$ and $c_{3}=c_{2}+n\sigma_{x}^2+\frac{2b_{5}\lambda_{x}^{0}}{\mu m(9+2(\sigma_{g,1}^2\!+L_{g}^2))}+\frac{2b_{6}\lambda_{x}^{0}}{m}.$

Using the relations $\varsigma_{y}>v_{y}-\frac{1}{2}$,~$\varsigma_{z}>v_{z}-\frac{1}{2}$, and $\varsigma_{x}>\frac{v_{x}}{2}$ from the theorem statement, we have $\varrho_{2}>v_{x}$. Therefore, according to~Lemma 4 in~\cite{Tailoring}, there exists a constant $c_{4}\triangleq\frac{\lambda_{x}^{0}\mu}{2mc_{3}}\max\{\mathbb{E}[\|\mathbf{\bar{x}}^{0}-\boldsymbol{x}^*\|^2],\frac{2mc_{3}}{\lambda_{x}^{0}\mu-2m(\varrho_{2}-v_{x})}\}$ such that the following inequality always holds:
\begin{equation}
\textstyle\mathbb{E}[\|\mathbf{\bar{x}}^{t}\!-\! \boldsymbol{x}^{*}\|^2]\leq \frac{2mc_{3}c_{4}}{\lambda_{x}^{0}\mu(t+1)^{\varrho_{2}-v_{x}}}=\frac{2mc_{3}c_{4}}{\lambda_{x}^{0}\mu(t+1)^{\beta}},\label{1tApp7}
\end{equation}
where $\beta=\min\{2\varsigma_{x}-v_{x},\frac{1}{2}-v_{y}+\varsigma_{y},\frac{1}{2}-v_{z}+\varsigma_{z}\}$.

\begin{equation}
\begin{aligned}
&\textstyle \mathbb{E}[\|\mathbf{x}_{i}^{t}-\boldsymbol{x}^*\|^2]\leq 2\mathbb{E}[\|\mathbf{x}_{i}^{t}\!-\!\mathbf{\bar{x}}^{t}\|^2]+2\mathbb{E}[\|\mathbf{\bar{x}}^{t}\!-\! \boldsymbol{x}^{*}\|^2]\\
&\textstyle \leq 2\sum_{i=1}^{m}\mathbb{E}[\|\mathbf{x}_{(i)}^{t}-\boldsymbol{1}_{m}\otimes \bar{\mathrm{x}}_{(i)}^{t}\|^2]+2\mathbb{E}[\|\mathbf{\bar{x}}^{t}\!-\! \boldsymbol{x}^{*}\|^2]\\
&\textstyle \leq \frac{2\bar{C}_{x}+4mc_{3}c_{4}(\lambda_{x}^{0}\mu^{-1})}{(t+1)^{\beta}},\label{strongconvexresult}
\end{aligned}
\end{equation}
which implies~\eqref{Theorem1results1} with $\bar{C}_{x}$ given in~\eqref{ratex}, $c_{3}$ given in~\eqref{1tApp6}, and $c_{4}$ given in the paragraph above~\eqref{1tApp7}.
\vspace{0.3em}

(ii) Convergence rate when $F(\boldsymbol{x})$ is general convex.

Recalling the definition of $\breve{u}_{i}^{t}$ given in~\eqref{udefinition}, we have $\sum_{i=1}^{m}\breve{u}_{i}^{t}=\nabla F(\boldsymbol{x}^{t})$. Using this and the convexity of $F$, the last term on the right hand side of~\eqref{1tApp1} satisfies
\begin{equation}
\begin{aligned}
&\textstyle -\frac{2\lambda_{x}^{t}}{m}\mathbb{E}[\langle\mathbf{\bar{x}}^{t}- \boldsymbol{x}^{*},\boldsymbol{u}^{t}\rangle]\leq  -\frac{2\lambda_{x}^{t}}{m}\mathbb{E}[F(\boldsymbol{x}^{t})-F(\boldsymbol{x}^{*})]\\
&\textstyle\quad-\frac{2\lambda_{x}^{t}}{m}\mathbb{E}[\langle\mathbf{\bar{x}}^{t}\!-\! \boldsymbol{x}^{t},\boldsymbol{\breve{u}}^{t}\rangle]\!+\!\frac{2\lambda_{x}^{t}}{m}\mathbb{E}[\langle\mathbf{\bar{x}}^{t}- \boldsymbol{x}^{*},\boldsymbol{\breve{u}}^{t}-\boldsymbol{u}^{t}\rangle].\label{2tApp2}
\end{aligned}
\end{equation}

Substituting~\eqref{ratex} and~\eqref{uhatbound} into~\eqref{2tApp2} and letting $\kappa=\frac{2a_{t}}{(9+2(\sigma_{g,1}^2\!+L_{g}^2))\lambda_{x}^{t}}$ for any $a_{t}>0$, we obtain
\begin{equation}
\begin{aligned}
&\textstyle -\frac{2\lambda_{x}^{t}}{m}\mathbb{E}[\langle\mathbf{\bar{x}}^{t}- \boldsymbol{x}^{*},\boldsymbol{u}^{t}\rangle]\leq  -\frac{2\lambda_{x}^{t}}{m}\mathbb{E}[F(\boldsymbol{x}^{t})-F(\boldsymbol{x}^{*})]\\
&\textstyle\quad  +\frac{a_{t}}{m}\mathbb{E}[\|\mathbf{\bar{x}}^{t}-\boldsymbol{x}^*\|^2]+\frac{(9+2(\sigma_{g,1}^2\!+L_{g}^2))b_{5}(\lambda_{x}^{t})^2}{a_{t}m(t+1)^{\beta_{1}}}\\
&\textstyle\quad+\frac{2b_{6}\lambda_{x}^{t}}{m(t+1)^{\beta_{3}}}+\frac{c_{5}}{(t+1)^{\varsigma_{x}+v_{x}}}.\label{2tApp3}
\end{aligned}
\end{equation}

Further substituting~\eqref{2tApp3} into~\eqref{1tApp1} leads to
\begin{equation}
\vspace{-0.2em}
\begin{aligned}
&\textstyle\mathbb{E}[\|\mathbf{\bar{x}}^{t+1}-\boldsymbol{x}^*\|^2]\leq -\frac{2\lambda_{x}^{t}}{m}\mathbb{E}[F(\boldsymbol{x}^{t})-F(\boldsymbol{x}^{*})]\\
&\textstyle\quad+(1+\frac{a_{t}}{m})\mathbb{E}[\|\mathbf{\bar{x}}^{t}-\boldsymbol{x}^*\|^2]+\Phi_{t},\label{2tApp4}
\end{aligned}
\vspace{-0.2em}
\end{equation}
where the term $\Phi_{t}$ is given by
\begin{equation}
\vspace{-0.2em}
\begin{aligned}
&\textstyle \Phi_{t}=\frac{(\lambda_{x}^{t})^2}{m^2}\mathbb{E}[\|\boldsymbol{u}^{t}\|^2]+n(\sigma_{x}^{t})^2+\frac{(9+2(\sigma_{g,1}^2\!+L_{g}^2))b_{5}(\lambda_{x}^{t})^2}{a_{t}m(t+1)^{\beta_{1}}}\\
&\textstyle\quad+\frac{2b_{6}\lambda_{x}^{t}}{m(t+1)^{\beta_{3}}}+\frac{c_{5}}{(t+1)^{\varsigma_{x}+v_{x}}}.\label{2tApp5}
\end{aligned}
\vspace{-0.2em}
\end{equation}

Since $F(\boldsymbol{x}^{t})\geq F(\boldsymbol{x}^{*})$ always holds, we drop the negative term 
$-\frac{2\lambda_{x}^{t}}{m}\mathbb{E}[F(\boldsymbol{x}^{t})-F(\boldsymbol{x}^{*})]$ in~\eqref{2tApp4} to obtain
\begin{equation}
\vspace{-0.2em}
\begin{aligned}
&\textstyle \mathbb{E}[\|\mathbf{\bar{x}}^{t+1}-\boldsymbol{x}^*\|^2]\leq (1+\frac{a_{t}}{m})\mathbb{E}[\|\mathbf{\bar{x}}^{t}-\boldsymbol{x}^*\|^2]+\Phi_{t}\\
&\textstyle \leq \left(\prod_{t=0}^{T}(1+a_{t})\right)\left(\mathbb{E}[\|\mathbf{\bar{x}}^{0}-\boldsymbol{x}^*\|^2]+\sum_{t=0}^{T}\Phi_{t}\right).\label{2tApp6}
\end{aligned}
\vspace{-0.2em}
\end{equation}
Using $\ln(1+u)\leq u$ holding for any $u>0$ and letting $a_{t}=\frac{1}{(t+1)^{s}}$ with $1<s<2v_{x}$ and $a_{0}=1$, we have
\begin{equation}
\vspace{-0.2em}
\begin{aligned}
&\textstyle \ln\left(\prod_{t=0}^{T}(1+a_{t})\right)=\sum_{t=0}^{T}\ln(1+a_{t})\leq \sum_{t=0}^{T}a_{t}\nonumber\\
&\textstyle\leq \sum_{t=0}^{T}\frac{1}{(t+1)^{s}}\leq 1+\int_{1}^{\infty}\frac{1}{x^{s}}dx\leq 1+\frac{1}{s-1},\label{2tApp7}
\end{aligned}
\vspace{-0.2em}
\end{equation}
which implies $\prod_{t=0}^{T}(1+a_{t})\leq e^{\frac{s}{s-1}}$. Then, inequality~\eqref{2tApp4} can be rewritten as follows:
\begin{equation}
\vspace{-0.2em}
\textstyle \mathbb{E}[\|\mathbf{\bar{x}}^{t+1}\!-\!\boldsymbol{x}^*\|^2]\leq  e^{\frac{s}{s-1}}\left(\mathbb{E}[\|\mathbf{\bar{x}}^{0}\!-\!\boldsymbol{x}^*\|^2]\!+\!\sum_{t=0}^{T}\Phi_{t}\right).\label{2tApp8}
\end{equation}
Next, we estimate an upper bound on $\sum_{t=0}^{T}\Phi_{t}$. Substituting~\eqref{1tApp2} into~\eqref{2tApp5} and using the following inequality:
\begin{equation}
\vspace{-0.2em}
\textstyle \sum_{t=0}^{T}\frac{1}{(t+1)^{s}}\leq 1+\int_{1}^{\infty}\frac{1}{x^{s}}dx\leq\frac{s}{s-1},\label{2tApp10}
\end{equation}
for any $1<s<2v_{x}$, we obtain
\begin{flalign}
&\textstyle \sum_{t=0}^{T}\Phi_{t}\leq c_{2}\max\{\frac{2v_{x}+2\varsigma_{y}-2v_{y}}{2v_{x}+2\varsigma_{y}-2v_{y}-1},\frac{2v_{x}+2\varsigma_{z}-2v_{z}}{2v_{x}+2\varsigma_{z}-2v_{z}-1}\}\nonumber\\
&\textstyle+\frac{n\sigma_{x}^2(2\varsigma_{x})}{2\varsigma_{x}-1}+\frac{(9+2(\sigma_{g,1}^2\!+L_{g}^2))b_{5}(\lambda_{x}^{0})^2(1+2v_{x}-s)}{m(2v_{x}-s)}\!+\!\frac{c_{5}(v_{x}+\varsigma_{x})}{v_{x}+\varsigma_{x}-1}\nonumber\\
&\textstyle+\frac{2b_{6}\lambda_{x}^{0}}{m}\max\{\frac{v_{x}+\frac{1}{2}-v_{y}+\varsigma_{y}}{v_{x}+\frac{1}{2}-v_{y}+\varsigma_{y}-1},\frac{v_{x}+\frac{1}{2}-v_{z}+\varsigma_{z}}{v_{x}+\frac{1}{2}-v_{z}+\varsigma_{z}-1}\}\triangleq\frac{c_{6}}{m}.\label{2tApp11}
\end{flalign}

Substituting~\eqref{2tApp11} into~\eqref{2tApp8}, we arrive at
\begin{equation}
\textstyle \mathbb{E}[\|\mathbf{\bar{x}}^{t+1}-\boldsymbol{x}^*\|^2]\leq  e^{\frac{s}{s-1}}\left(\mathbb{E}[\|\mathbf{\bar{x}}^{0}-\boldsymbol{x}^*\|^2]+\frac{c_{6}}{m}\right).\label{2tApp12}
\end{equation}

We proceed to sum both sides of~\eqref{2tApp4} from $0$ to $T$ ($T$ can be any positive integer):
\begin{flalign}
&\textstyle\sum_{t=0}^{T}\frac{2\lambda_{x}^{t}}{m}\mathbb{E}[F(\boldsymbol{x}^{t})-F(\boldsymbol{x}^{*})]\leq -\sum_{t=0}^{T}\mathbb{E}[\|\mathbf{\bar{x}}^{t+1}- \boldsymbol{x}^{*}\|^2]\nonumber\\
&\textstyle\quad+\sum_{t=0}^{T}(1+\frac{a_{t}}{m})\mathbb{E}[\|\mathbf{\bar{x}}^{t}- \boldsymbol{x}^{*}\|^2]+\sum_{t=0}^{T}\Phi_{t}.\label{2tApp13}
\end{flalign}
The first and second terms on the right hand side of~\eqref{2tApp13} can be simplified as follows:
\begin{flalign}
&\textstyle \sum_{t=0}^{T}(1+\frac{a_{t}}{m})\mathbb{E}[\|\mathbf{\bar{x}}^{t}- \boldsymbol{x}^{*}\|^2]-\sum_{t=0}^{T}\mathbb{E}[\|\mathbf{\bar{x}}^{t+1}- \boldsymbol{x}^{*}\|^2]\nonumber\\
&\textstyle \leq \frac{a_{0}}{m}\mathbb{E}[\|\mathbf{\bar{x}}^{0}- \boldsymbol{x}^{*}\|^2]+\sum_{t=1}^{T}\frac{a_{t}}{m}\mathbb{E}[\|\mathbf{\bar{x}}^{t}- \boldsymbol{x}^{*}\|^2]\nonumber\\
&\textstyle\quad+\mathbb{E}[\|\mathbf{\bar{x}}^{0}- \boldsymbol{x}^{*}\|^2]-\mathbb{E}[\|\mathbf{\bar{x}}^{T+1}- \boldsymbol{x}^{*}\|^2]\leq\frac{c_{7}}{m},\label{2tApp14}
\end{flalign}
with $\frac{c_{7}}{m}=(\frac{se^{\frac{s}{s-1}}}{m(s-1)}+1+\frac{1}{m})\mathbb{E}[\|\mathbf{\bar{x}}^{0}-\boldsymbol{x}^{*}\|^2]+\frac{c_{6}s}{m(s-1)}$. Here, we have used~\eqref{2tApp10} and~\eqref{2tApp12} in the last inequality.

Substituting~\eqref{2tApp11} and~\eqref{2tApp14} into~\eqref{2tApp13} yields the inequality $\sum_{t=0}^{T}2\lambda_{x}^{t}\mathbb{E}[F(\boldsymbol{x}^{t})-F(\boldsymbol{x}^{*})]\leq c_{6}+c_{7}$.

By using~\eqref{ratex} and Assumption~\ref{A1}, we have
\begin{flalign}
&\textstyle\sum_{t=0}^{T}2\lambda_{x}^{t}\mathbb{E}[F(\mathbf{x}_{i}^{t})-F(\boldsymbol{x}^{*})]=\textstyle\sum_{t=0}^{T}2\lambda_{x}^{t}\mathbb{E}\left[F(\mathbf{x}_{i}^{t})-F(\mathbf{\bar{x}}^{t}),\right.\nonumber\\
&\left.\textstyle\quad+F(\mathbf{\bar{x}}^{t})-F(\boldsymbol{x}^{t})\right]+\textstyle\sum_{t=0}^{T}2\lambda_{x}^{t}\mathbb{E}[F(\boldsymbol{x}^{t})-F(\boldsymbol{x}^{*})]\nonumber\\
&\textstyle\leq \sum_{t=0}^{T}4\lambda_{x}^{t}L_{f}\sum_{i=1}^{m}\mathbb{E}[\|\mathbf{x}_{(i)}^{t}-\boldsymbol{1}_{m}\otimes \bar{\mathrm{x}}_{(i)}^{t}\|^2]+c_{6}+c_{7}\nonumber\\
&\textstyle\leq \frac{4\lambda_{0}^{t}L_{f}\sqrt{\bar{C}_{x}}(v_{x}+\varsigma_{x})}{v_{x}+\varsigma_{x}-1}+c_{6}+c_{7}\triangleq c_{8}.\nonumber
\end{flalign}
Further using $\lambda_{x}^{T}\leq \lambda_{x}^{t}$ holding for any $t\in[0,T]$, we have $\frac{1}{T+1}\sum_{t=0}^{T}\mathbb{E}[F(\mathbf{x}_{i}^{t})-F(\boldsymbol{x}^{*})]\leq \frac{c_{8}}{2\lambda_{x}^{0}(T+1)^{1-v_{x}}}$.
\vspace{0.2em}

(iii) Convergence rate when $F(\boldsymbol{x})$ is nonconvex.

To prove the convergence of Algorithm~\ref{algorithm1} for nonconvex objective functions, we first need to estimate an upper bound on $\mathbb{E}[\langle \nabla F(\mathbf{\bar{x}}^{t})-\nabla F(\boldsymbol{x}^{*}),\boldsymbol{\breve{u}}^{t}-\boldsymbol{u}^{t}\rangle]$. 

By using an argument similar to the derivation of~\eqref{1TApp3}, we use the following decomposition:
\begin{flalign}
&\textstyle\mathbb{E}[\langle \nabla F(\mathbf{\bar{x}}^{t})-\nabla F(\boldsymbol{x}^{*}),\boldsymbol{\breve{u}}^{t}-\boldsymbol{u}^{t}\rangle]=\mathbb{E}[\langle\nabla F(\mathbf{\bar{x}}^{t})\!-\!\nabla F(\boldsymbol{x}^{*}),\nonumber\\
&\textstyle\quad\nabla_{x}\boldsymbol{f}(\boldsymbol{x}^{t},\boldsymbol{1}_{m}\otimes g(\boldsymbol{x}^{t}))-\nabla_{x}\boldsymbol{f}^{t}(\boldsymbol{x}^{t},\boldsymbol{\tilde{y}}^{t})\rangle]\nonumber\\
&\textstyle\quad+\mathbb{E}[\langle\nabla F(\mathbf{\bar{x}}^{t})-\nabla F(\boldsymbol{x}^{*}),\nabla \boldsymbol{g}(\boldsymbol{x}^{t})\nonumber\\
&\textstyle\quad\times(\boldsymbol{1}_{m}\otimes \nabla_{y}\bar{f}(\boldsymbol{x}^{t},\boldsymbol{1}_{m}\otimes g(\boldsymbol{x}^{t}))-\nabla \boldsymbol{g}^{t}(\boldsymbol{x}^{t})\boldsymbol{\tilde{z}}^{t})\rangle].\label{3TApp3}
\end{flalign}

By using an argument similar to the derivation of~\eqref{firstlemmaresult}, the first term on the right hand side of~\eqref{3TApp3} satisfies
\begin{flalign}
&\textstyle\mathbb{E}[\langle\nabla F(\mathbf{\bar{x}}^{t})\!-\!\nabla F(\boldsymbol{x}^{*}),\!\nabla_{x}\boldsymbol{f}(\boldsymbol{x}^{t},\!\boldsymbol{1}_{m}\!\otimes\! g(\boldsymbol{x}^{t}))\!-\!\nabla_{x}\boldsymbol{f}^{t}(\boldsymbol{x}^{t},\boldsymbol{\tilde{y}}^{t})\rangle]\nonumber\\
&\textstyle\leq \frac{\kappa}{2}\mathbb{E}[\|\nabla F(\mathbf{\bar{x}}^{t})\!-\!\nabla F(\boldsymbol{x}^{*})\|^2]+\frac{b_{1}}{\kappa(t+1)^{\beta_{1}}}+\frac{L_{F}b_{2}}{(t+1)^{\beta_{2}}},\label{firstlemmaresult1}
\end{flalign}
where $b_{1}$ and $b_{2}$ are the same as those given in~\eqref{firstlemmaresult}.

We characterize the second term on the right hand side of~\eqref{3TApp3} by estimating an upper bound on the term $\mathbb{E}[\|\Xi_{x}^{\prime t}-\Xi_{x}^{\prime t-1}\|]$ with $\Xi_{x}^{\prime t}\triangleq \nabla \boldsymbol{g}^{t}(\boldsymbol{x}^{t})^{\top}\nabla F(\mathbf{\bar{x}}^{t})$. By using an argument similar to the derivation of~\eqref{2TApp10}, we have
\begin{flalign}
&\textstyle \mathbb{E}[\|\Xi_{x}^{\prime t}-\Xi_{x}^{\prime t-1}\|^2]\nonumber\\
&\textstyle\leq \mathbb{E}[\|(\nabla \boldsymbol{g}^{t}(\boldsymbol{x}^{t})-\nabla \boldsymbol{g}(\boldsymbol{x}^{t}))^{\top}\nabla F(\mathbf{\bar{x}}^{t})\|^2]\nonumber\\
&\textstyle\quad+\mathbb{E}[\|(\nabla \boldsymbol{g}^{t-1}(\boldsymbol{x}^{t-1})-\nabla \boldsymbol{g}(\boldsymbol{x}^{t-1}))^{\top}\nabla F(\mathbf{\bar{x}}^{t-1})\|^2]\nonumber\\
&\textstyle\quad+2\mathbb{E}[\|(\nabla \boldsymbol{g}(\boldsymbol{x}^{t-1})-\nabla \boldsymbol{g}(\boldsymbol{x}^{t}))^{\top}\nabla F(\mathbf{\bar{x}}^{t})\|]\nonumber\\
&\textstyle\quad+2\mathbb{E}[\|\nabla F(\mathbf{\bar{x}}^{t})-\nabla F(\mathbf{\bar{x}}^{t-1})\|^2\|\nabla \boldsymbol{g}(\boldsymbol{x}^{t-1})\|^2]\nonumber\\
&\textstyle\leq  \frac{2m\sigma_{g,1}^2L_{f}^2}{t}+2\bar{L}_{g}^2L_{f}^2\mathbb{E}[\|\boldsymbol{x}^{t}-\boldsymbol{x}^{t-1}\|^2]\nonumber\\
&\textstyle\quad+2L_{g}^2L_{F}^2\mathbb{E}[\|\mathbf{\bar{x}}^{t}-\mathbf{\bar{x}}^{t-1}\|^2].\label{3tApp1}
\end{flalign}
Substituting~\eqref{1TApp18} and~\eqref{2TApp12} into~\eqref{2TApp10} yields
\begin{equation}
\begin{aligned}
\textstyle \mathbb{E}[\|\Xi_{x}^{\prime t}-\Xi_{x}^{\prime t-1}\|^2]&\textstyle\leq\frac{2m\sigma_{g,1}^2L_{f}^2}{t+1}+\frac{L_{f}^2\bar{L}_{g}^22^{2\varsigma_{x}+1}d_{6}}{(t+1)^{2\varsigma_{x}}}\\
&\textstyle\quad+\frac{2^{2\varsigma_{x}+1}d_{1}L_{g}^2L_{F}^2}{(t+1)^{2\varsigma_{x}}}\leq\frac{d_{7}^{\prime}}{(t+1)^{\beta_{1}}},\label{3TApp2}
\end{aligned}
\end{equation}
with  $d_{7}^{\prime}=2m\sigma_{g,1}^2L_{f}^2+2^{2\varsigma_{x}+1}d_{6}L_{f}^2\bar{L}_{g}^2+2^{2\varsigma_{x}+1}d_{1}L_{g}^2L_{F}$ and $\beta_{1}=\min\{2\varsigma_{x},1\}$.

By using an argument similar to the derivation of~\eqref{secondlemmaresult} with replacing an upper bound on $\mathbb{E}[|\Xi_{x}^{t}-\Xi_{x}^{ t-1}|^2]$ with the one provided in~\eqref{3TApp2}, we obtain that the second term on the right hand side of~\eqref{3TApp3} satisfies
\begin{flalign}
&\textstyle \mathbb{E}[\langle\nabla F(\mathbf{\bar{x}}^{t})-\nabla F(\boldsymbol{x}^{*}),\nabla \boldsymbol{g}(\boldsymbol{x}^{t})\nonumber\\
&\textstyle\quad\times(\boldsymbol{1}_{m}\otimes \nabla_{y}\bar{f}(\boldsymbol{x}^{t},\boldsymbol{1}_{m}\otimes g(\boldsymbol{x}^{t}))-\nabla \boldsymbol{g}^{t}(\boldsymbol{x}^{t})\boldsymbol{\tilde{z}}^{t})\rangle]\nonumber\\
&\textstyle\leq \frac{(7+2(\sigma_{g,1}^2\!+L_{g}^2))\kappa}{4}\mathbb{E}[\|\nabla F(\mathbf{\bar{x}}^{t})-\nabla F(\boldsymbol{x}^{*})\|^2]\nonumber\\
&\textstyle\quad+\frac{b_{3}}{\kappa(t+1)^{\beta_{1}}}\!+\!\frac{b_{4}^{\prime}}{(t+1)^{\beta_{3}}},\label{secondlemmaresult2}
\end{flalign}
where $b_{4}^{\prime}$ is the same as that in~\eqref{secondlemmaresult} except that $d_{7}$ is replaced with $d_{7}^{\prime}=2m\sigma_{g,1}^2L_{f}^2+2^{2\varsigma_{x}+1}d_{6}L_{f}^2\bar{L}_{g}^2+2^{2\varsigma_{x}+1}d_{1}L_{g}^2L_{F}$. 

By substituting~\eqref{firstlemmaresult1} and~\eqref{secondlemmaresult2} into~\eqref{3TApp3}, we obtain
\begin{equation}
\begin{aligned}
&\textstyle \mathbb{E}[\langle \nabla F(\mathbf{\bar{x}}^{t})-\nabla F(\boldsymbol{x}^{*}),\boldsymbol{\breve{u}}^{t}-\boldsymbol{u}^{t}\rangle]\\
&\textstyle\leq\frac{(9+2(\sigma_{g,1}^2\!+L_{g}^2))\kappa}{4}\mathbb{E}[\|\nabla F(\mathbf{\bar{x}}^{t})-\nabla F(\boldsymbol{x}^{*})\|^2]\\
&\textstyle\quad+\frac{b_{5}}{\kappa(t+1)^{\beta_{1}}}+\frac{b_{6}^{\prime}}{(t+1)^{\beta_{3}}},\label{uhatbound2}
\end{aligned}
\end{equation}
with $\beta_{1}=\min\{2\varsigma_{x},1\}$, $\beta_{3}=\min\{\varsigma_{x}-v_{y}+\varsigma_{y},\varsigma_{x}-v_{z}+\varsigma_{z},\frac{1}{2}-v_{y}+\varsigma_{y},\frac{1}{2}-v_{z}+\varsigma_{z}\}$, $b_{5}$ given in~\eqref{uhatbound}, and $b_{6}^{\prime}=L_{F}b_{2}+b_{4}^{\prime}$.  

The Lipschitz continuity of $F(\boldsymbol{x})$ implies
\begin{equation}
\begin{aligned}
&\textstyle \mathbb{E}[F(\mathbf{\bar{x}}^{t+1})-F(\mathbf{\bar{x}}^{t})]\leq \mathbb{E}[\langle\nabla F(\mathbf{\bar{x}}^{t}), \mathbf{\bar{x}}^{t+1}-\mathbf{\bar{x}}^{t}\rangle]\\
&\textstyle\quad+\frac{L_{F}^2}{2}\mathbb{E}[\|\mathbf{\bar{x}}^{t+1}-\mathbf{\bar{x}}^{t}\|^2].\label{3tApp3}
\end{aligned}
\end{equation}
The first term on the right hand side of~\eqref{3tApp3} satisfies
\begin{flalign}
&\textstyle \mathbb{E}[\langle\nabla F(\mathbf{\bar{x}}^{t}), \mathbf{\bar{x}}^{t+1}-\mathbf{\bar{x}}^{t}\rangle]=-\mathbb{E}[\langle\nabla F(\mathbf{\bar{x}}^{t}), \frac{\lambda_{x}^{t}}{m}\boldsymbol{u}^{t}\rangle]\nonumber\\
&\textstyle\leq \frac{\lambda_{x}^{t}}{m}\mathbb{E}[\langle\nabla F(\mathbf{\bar{x}}^{t}),\boldsymbol{\breve{u}}^{t}- \boldsymbol{u}^{t}\rangle]-\frac{\lambda_{x}^{t}}{m}\mathbb{E}[\langle\nabla F(\mathbf{\bar{x}}^{t}),\nabla F(\mathbf{\bar{x}}^{t})\rangle]\nonumber\\
&\textstyle \quad-\frac{\lambda_{x}^{t}}{m}\mathbb{E}[\langle\nabla F(\mathbf{\bar{x}}^{t}),\boldsymbol{\breve{u}}^{t}-\nabla F(\mathbf{\bar{x}}^{t})\rangle].\label{3tApp4}
\end{flalign}
By using the Young's inequality, for any $\kappa>0$, the last term on the right hand side of~\eqref{3tApp4} satisfies
\begin{equation}
\begin{aligned}
&\textstyle \mathbb{E}[\langle\nabla F(\mathbf{\bar{x}}^{t}),\boldsymbol{\breve{u}}^{t}-\nabla F(\mathbf{\bar{x}}^{t})\rangle]\\
&\textstyle\leq \frac{\kappa}{4}\mathbb{E}[\|\nabla F(\mathbf{\bar{x}}^{t})\|^2]+\frac{1}{\kappa}\mathbb{E}[\|\boldsymbol{\breve{u}}^{t}-\nabla F(\mathbf{\bar{x}}^{t})\|^2].\label{3tApp5}
\end{aligned}
\end{equation}
By using the definition of $\breve{u}_{i}^{t}$ in~\eqref{udefinition}, we have
\begin{flalign}
&\textstyle \mathbb{E}[\|\boldsymbol{\breve{u}}^{t}-\nabla F(\mathbf{\bar{x}}^{t})\|^2]\nonumber\\
&\textstyle\leq (2\bar{L}_{f}^2(1+L_{g}^2)(1+2L_{g}^2)+4\bar{L}_{g}^2L_{f}^2)\mathbb{E}[\|\boldsymbol{x}^{t}-\mathbf{\bar{x}}^{t}\|^2]\nonumber\\
&\textstyle\leq\frac{c_{9}}{(t+1)^{2\varsigma_{x}}},\label{3tApp6}
\end{flalign}
where $c_{9}$ is given by $c_{9}=\bar{C}_{x}(2\bar{L}_{f}^2(1+L_{g}^2)(1+2L_{g}^2)+4\bar{L}_{g}^2L_{f}^2)$ with $\bar{C}_{x}$ given in~\eqref{ratex}. Here, we have used $\mathbb{E}[\|\boldsymbol{x}^{t}-\mathbf{\bar{x}}^{t}\|^2]\leq \sum_{i=1}^{m}\mathbb{E}[\|\boldsymbol{x}_{(i)}^{t}-\boldsymbol{1}_{m}\otimes \bar{x}_{(i)}^{t}\|^2]\leq\frac{\bar{C}_{x}}{(t+1)^{2\varsigma_{x}}}$ in the last inequality.

By substituting~\eqref{3tApp6} into~\eqref{3tApp5} and further substituting~\eqref{uhatbound2} and~\eqref{3tApp5} into~\eqref{3tApp4}, we arrive at
\begin{equation}
\begin{aligned}
&\textstyle \mathbb{E}[\langle\nabla F(\mathbf{\bar{x}}^{t}), \mathbf{\bar{x}}^{t+1}-\mathbf{\bar{x}}^{t}\rangle]\leq -\frac{\lambda_{x}^{t}}{2m}\mathbb{E}[\|\nabla F(\mathbf{\bar{x}}^{t})\|^2]\\
&\textstyle\quad
+\frac{b_{5}\lambda_{x}^{t}}{m\kappa(t+1)^{\beta_{1}}}+\frac{b_{6}^{\prime}\lambda_{x}^{t}}{m(t+1)^{\beta_{3}}}+\frac{c_{9}}{\kappa(t+1)^{2\varsigma_{x}}}.\label{3tApp7}
\end{aligned}
\end{equation}

Substituting~\eqref{3tApp7} into~\eqref{3tApp3} and letting $\kappa\!=\!\frac{1}{5+\sigma_{g,1}^2+L_{g}^2}$ yield
\begin{equation}
\textstyle \mathbb{E}[F(\mathbf{\bar{x}}^{t+1})-F(\mathbf{\bar{x}}^{t})]\leq -\frac{\lambda_{x}^{t}}{2m}\mathbb{E}[\|\nabla F(\mathbf{\bar{x}}^{t})\|^2]+\Phi_{t},\label{3tApp8}
\end{equation}
where the term $\Phi_{t}$ is given by
\begin{equation}
\Phi_{t}\textstyle=\frac{b_{5}(5+\sigma_{g,1}^2+L_{g}^2)\lambda_{x}^{0}}{m(t+1)^{\beta_{1}+v_{x}}}\!+\!\frac{b_{6}^{\prime}\lambda_{x}^{0}}{m(t+1)^{\beta_{3}+v_{x}}}\!+\!\frac{c_{9}(5+\sigma_{g,1}^2+L_{g}^2)+2^{2\varsigma_{x}}d_{1}L_{F}^2}{(t+1)^{2\varsigma_{x}}}.\label{3tApp9}
\end{equation}

Summing both sides of~\eqref{3tApp8} from $0$ to $T$ and using the relationship $F(\boldsymbol{x}^*)\leq F(\mathbf{\bar{x}}^{T+1})$, we obtain
\begin{equation}
\textstyle\sum_{t=0}^{T}\!\lambda_{x}^{t}\mathbb{E}[\|\nabla F(\mathbf{\bar{x}}^{t})\|^2] \!\leq\! 2m\mathbb{E}[F(\mathbf{\bar{x}}^{0})-F(\boldsymbol{x}^{*})]+2m\sum_{t=0}^{T}\Phi_{t}.\nonumber
\end{equation}

Combining the preceding inequality and the relationship $\|\nabla F(\mathbf{x}_{i}^{t})\|^2\leq 2\|\nabla F(\mathbf{x}_{i}^{t})-\nabla F(\mathbf{\bar{x}}^{t})\|^2+2\|\nabla F(\mathbf{\bar{x}}^{t})\|^2$ yields
\begin{equation}
\begin{aligned}
&\textstyle \textstyle\sum_{t=0}^{T}\lambda_{x}^{t}\mathbb{E}[\|\nabla F(\mathbf{x}_{i}^{t})\|^2] \leq 4m\mathbb{E}[F(\mathbf{\bar{x}}^{0})-F(\boldsymbol{x}^{*})]\\
&\textstyle\!+\!4m\sum_{t=0}^{T}\Phi_{t}\!+\!2\sum_{t=0}^{T}\lambda_{x}^{t}\mathbb{E}[\|\nabla F(\mathbf{x}_{i}^{t})\!-\!\nabla F(\mathbf{\bar{x}}^{t})\|^2].\label{3tApp10}
\end{aligned}
\end{equation}

By using~\eqref{ratex} and~\eqref{3tApp9} the second and third terms on the right hand side of~\eqref{3tApp10} satisfy
\begin{flalign}
&\textstyle 4m\sum_{t=0}^{T}\Phi_{t}\!+\!2\sum_{t=0}^{T}\lambda_{x}^{t}\mathbb{E}[\|\nabla F(\mathbf{x}_{i}^{t})\!-\!\nabla F(\mathbf{\bar{x}}^{t})\|^2]\nonumber\\
&\textstyle\leq \frac{2\lambda_{x}^{0}L_{F}\bar{C}_{x}(v_{x}+2\varsigma_{x})}{v_{x}+2\varsigma_{x}-1}+\frac{4b_{5}(5+\sigma_{g,1}^2+L_{g}^2)\lambda_{x}^{0}(1+v_{x})}{v_{x}}\nonumber\\
&\textstyle\quad +4b_{6}^{\prime}\lambda_{x}^{0}\max\{\frac{v_{x}+\frac{1}{2}-v_{y}+\varsigma_{y}}{v_{x}+\frac{1}{2}-v_{y}+\varsigma_{y}-1},\frac{v_{x}+\frac{1}{2}-v_{z}+\varsigma_{z}}{v_{x}+\frac{1}{2}-v_{z}+\varsigma_{z}-1}\}\nonumber\\
&\textstyle\quad+\frac{4mc_{9}(5+\sigma_{g,1}^2+L_{g}^2)2\varsigma_{x}}{2\varsigma_{x}-1}+\frac{m2^{2\varsigma_{x}}d_{1}L_{F}^2\varsigma_{x}+2}{2\varsigma_{x}-1}\triangleq c_{10}.\label{3tApp11}
\end{flalign}

We define $c_{11}=4m\mathbb{E}[F(\mathbf{\bar{x}}^{0})-F(\boldsymbol{x}^{*})]$ and substitute~\eqref{3tApp11} into~\eqref{3tApp10} to obtain $\frac{1}{T+1}\sum_{t=0}^{T}\mathbb{E}[\|\nabla F(\mathbf{x}_{i}^{t})\|^2]\leq\frac{c_{10}+c_{11}}{\lambda_{x}^{0}(T+1)^{1-v_{x}}}$.

{\color{blue}\subsection{Proof of Corollary~\ref{complexity}}
	For any $i\in[m]$, the per-iteration computational cost of Algorithm~\ref{algorithm1} at iteration $t$ is $(2r+n_{i}r+n_{i})(t+1)$. Hence, the computational complexity of Algorithm~\ref{algorithm1} over $T$ iterations is $\frac{(2r+n_{i}r+n_{i})(T-1)T}{2}\approx\mathcal{O}(n_{i}rT^2)$. Furthermore, for any $i\in[m]$, the per-iteration communication cost of Algorithm~\ref{algorithm1} is $(2r+n)d_{i}$. Hence, the communication complexity of Algorithm~\ref{algorithm1} over $T$ iterations is $\mathcal{O}((2r+n)d_{i}T)$.}
\vspace{-0.8em}
\subsection{Proof of Corollary~\ref{tradeoff}}
By applying the inequality $\textstyle\sum_{t=1}^{T}\frac{1}{(t+1)^{r}}\leq \int_{0}^{T}\frac{1}{(t+1)^{r}}dx=\frac{1}{1-r}((T+1)^{1-r}-1)$ to~\eqref{4T19}, we can obtain
{\color{blue}\begin{equation}
		\textstyle\epsilon_{i}\leq \frac{\sqrt{2}C_{i,x}}{\sigma_{i,x}(v_{x}-v_{z}-\varsigma_{i,x})}+\frac{\sqrt{2}C_{i,y}}{\sigma_{i,x}(v_{y}-\varsigma_{i,y})}+\frac{\sqrt{2}C_{i,z}}{\sigma_{i,x}(v_{z}-\varsigma_{i,z})},\label{TC2}
	\end{equation}
	where $C_{i,x}$, $C_{i,y}$, and $C_{i,z}$ are given in~\eqref{4T16},~\eqref{4T14}, and~\eqref{4T17}, respectively.}

We denote the given cumulative privacy budget as $\hat{\epsilon}_{i}>0$. According to~\eqref{TC2}, we can choose DP-noise parameters as $\sigma_{i,x}\!=\!\frac{3\sqrt{2}C_{i,x}}{(v_{x}-v_{z}-\varsigma_{i,x})\hat{\epsilon}_{i}}$, $\sigma_{i,y}\!=\!\frac{3\sqrt{2}C_{i,y}}{(v_{y}-\varsigma_{i,y})\hat{\epsilon}_{i}}$, and $\sigma_{i,z}\!=\!\frac{3\sqrt{2}C_{i,z}}{(v_{z}-\varsigma_{i,z})\hat{\epsilon}_{i}}$. It is clear that a smaller $\hat{\epsilon}_{i}$ leads to larger $\sigma_{i,x}$, $\sigma_{i,y}$, and $\sigma_{i,z}$.

Next, we analyze the convergence rate of Algorithm~\ref{algorithm1} under a strongly convex $F(x)$. According to~\eqref{strongconvexresult}, we have $\mathbb{E}[\|\mathbf{x}_{i}^{t}-\boldsymbol{x}^*\|^2]\leq \frac{2\bar{C}_{x}+4mc_{3}c_{4}(\lambda_{x}^{0}\mu^{-1})}{(t+1)^{\beta}}$. The definitions of $\bar{C}_{x}$, $c_{3}$, and $c_{4}$ (which are given in~\eqref{ratex},~\eqref{1tApp6}, and the paragraph above~\eqref{1tApp7}, respectively) imply that they are positively correlated with the DP-noise parameters $\sigma_{i,x}^2$, $\sigma_{i,y}^2$, and $\sigma_{i,z}^2$, which are all inversely proportional to $\hat{\epsilon}_{i}^2$. Therefore, we obtain $\mathbb{E}[\|\mathbf{x}_{i}^{t}-\boldsymbol{x}^*\|^2]\leq \mathcal{O}(\frac{T^{-\beta}}{\hat{\epsilon}_{i}^2})$.

Similarly, the convergence rates of Algorithm~\ref{algorithm1} under a convex and nonconvex $F(\boldsymbol{x})$ satisfies $ \frac{c_{8}}{2\lambda_{x}^{0}(T+1)^{1-v_{x}}}$ and $\frac{c_{10}+c_{11}}{\lambda_{x}^{0}(T+1)^{1-v_{x}}}$, respectively. Given that $c_{8}$ and $c_{10}+c_{11}$ are both positively correlated with DP-noise parameters $\sigma_{x}^2$, $\sigma_{y}^2$, and $\sigma_{z}^2$ which are all inversely proportional to $\hat{\epsilon}_{i}^2$, we arrive at the result in Corollary~\ref{tradeoff}. 
\bibliographystyle{ieeetr}  
\bibliography{arixvbib}
\vspace{-4em}

\end{document}